\definecolor{darkred}{RGB}{220,50,0}
\definecolor{lightblue}{rgb}{.80,.85,1}
\definecolor{darkgreen}{RGB}{0,100,0}
\definecolor{firebrick}{RGB}{178,34,34}
\definecolor{salmon}{RGB}{250,128,114}
\definecolor{turquoise}{RGB}{0,128,114}
\definecolor{turquoise2}{RGB}{0,180,140}
\definecolor{darkorchid}{rgb}{0.60,0.20,0.80}
\theoremstyle{plain}
\newtheorem{Defremark2}[theorem]{\textbf{Definition and Remark}}
\newcommand{\Andre}[1]{{\color{turquoise2} #1}}
\newcommand{\SavedComment}[1]{}
\DeclareMathOperator*{\argmin}{argmin}
\DeclareMathOperator*{\arccosh}{arccosh}
\DeclareMathOperator{\ax}{{ax}}
\newcommand{\ud}{\mathrm{d}}
\newcommand\R{\mathbb{R}}
\newcommand{\Sphere}{\mathbb{S}}
\newcommand{\M}{\mathcal{M}}
\newcommand{\N}{\mathcal{N}}
\newcommand{\Su}{{\mathcal{S}}}
\newcommand{\lfs}{\mathrm{lfs}}
\newcommand{\Tan}{\mathrm{Tan}}
\newcommand{\Nor}{\mathrm{Nor}}
\newcommand{\rch}{\mathrm{rch}}
\newcommand{\Dual}{\mathrm{Dual}} 
\newcommand{\hanka}[1]{{\color{blue}{[#1]}}} 
\newcommand{\reach}{\ensuremath{\mathcal{R}}\xspace}
\newcommand{\norm}[1]{\left\|{#1}\right\|}
\newcommand{\betaAp}{\varepsilon}
\newcommand{\rchcl}{\mathrm{rch}^{\operatorname{cl}}}
\newcommand{\curvlowbnd}{{\Lambda_{\ell}}}
\newcommand{\curvlowbndn}{{\abs{\curvlowbnd}}}
\newcommand{\Ninteger}{\mathbb{N}}
\newcommand{\Hull}{\mathcal{CH}}
\newcommand{\MinAtZero}{X_{\min}^0}
\newcommand{\CplOffset}{\mathsf{C} }
\newcommand{\abs}[1]{{\left\lvert #1 \right\rvert}}
\newcommand{\pairf}[1]{{\tiny $p_{#1}$}}
\newcommand{\pairs}[1]{{\tiny $\tilde{p}_{#1}$}}
\newcommand{\defunder}[1]{\underset{\text{def.}}{#1} \:}
\newenvironment{Altproof}[1][{}]{
  \begin{trivlist}\item[]\textit{Alternative proof #1}\quad}%
  {\hfill\hspace*{\fill}~$\square$\end{trivlist}}
\newenvironment{SketchProof}[1][{}]{
  \begin{trivlist}\item[]\textit{Sketch of proof #1}\quad}%
  {\hfill\hspace*{\fill}~$\square$\end{trivlist}}
\title{
Tight Bounds for the Learning of Homotopy {\`a} la Niyogi, Smale, and Weinberger for Subsets of Euclidean Spaces and of Riemannian Manifolds
}
\titlerunning{Learning Homotopy in Euclidean Spaces and Riemannian Manifolds
}
\author{Dominique Attali}{Universit{\'e} Grenoble Alpes, CNRS, Grenoble INP, GIPSA-lab\\{[Grenoble, France]} }{Dominique.Attali@grenoble-inp.fr}{}{}
\author{Hana Dal Poz Kou\v{r}imsk\'a}{IST Austria \\{[Klosterneuburg, Austria]}}{hana.kourimska@ist.ac.at}{https://orcid.org/0000-0001-7841-0091}{} 
\author{Christopher Fillmore}{IST Austria \\{[Klosterneuburg, Austria]}}{christopher.fillmore@ist.ac.at}{https://orcid.org/0000-0001-7631-2885}{}
\author{Ishika Ghosh}{IST Austria \\{[Klosterneuburg, Austria]} \\ Michigan State University \\ {[East Lansing, USA]} }{ghoshis3@msu.edu}{https://orcid.org/0000-0002-7901-5912}{}
\author{Andr{\'e} Lieutier}{No affiliation\\{[Aix-en-Provence, France]}}{andre.lieutier@gmail.com }{}{}
\author{Elizabeth Stephenson}{IST Austria \\{[Klosterneuburg, Austria]}}{elizabeth.stephenson@ist.ac.at}{https://orcid.org/0000-0002-6862-208X}{}
\author{
Mathijs Wintraecken}{Inria Sophia Antipolis, Universit{\'e} C{\^o}te d'Azur\\{[Sophia Antipolis, France]}  }{m.h.m.j.wintraecken@gmail.com}{https://orcid.org/0000-0002-7472-2220}{Supported by the European Union's Horizon 2020 research and innovation programme under the Marie Sk{\l}odowska-Curie grant agreement No. 754411, the Austrian science fund (FWF) grant No. M-3073, and the welcome package from IDEX of the Universit{\'e} C{\^o}te d'Azur. }
\authorrunning{
	D. Attali, H. Dal Poz Kou\v{r}imsk\'a, C. Fillmore, I. Ghosh, A. Lieutier, E. Stephenson, and M. Wintraecken
} 
\keywords{Homotopy, Inference, Sets of positive reach}
\begin{document}
\maketitle

\begin{abstract} 
In this article we extend and strengthen the seminal work by Niyogi, Smale, and Weinberger on the learning of the homotopy type from a sample of an underlying space.  
In their work, Niyogi, Smale, and Weinberger studied samples of $C^2$ 
manifolds with positive reach embedded in $\mathbb{R}^d$. We extend their results in the following ways:

	\begin{itemize}
		\item As the ambient space we consider both $\mathbb{R}^d$ and Riemannian manifolds with lower bounded sectional curvature.
		\item In both types of ambient spaces, we study sets of positive reach --- a significantly more general setting than $C^2$ manifolds --- as well as general manifolds of positive reach.
		\item The sample $P$ of a set (or a manifold) $\Su$ of positive reach may be noisy. We work with two one-sided Hausdorff distances --- $\varepsilon$ and $\delta$ --- between $P$ and $\Su$. We provide tight bounds in terms of $\varepsilon$ and $\delta$, that guarantee that there exists a parameter $r$ such that the union of balls 
		of radius $r$ centred at the sample $P$ deformation-retracts 
		to $\Su$. We exhibit their tightness by an explicit construction. 
	\end{itemize}

	We carefully distinguish the roles of $\delta$ and $\varepsilon$. This is not only essential to achieve tight bounds, but also sensible in practical situations, since it allows one to adapt the bound according to sample density and the amount of noise present in the sample separately. 
\end{abstract}

\section{Introduction}\label{section:introduction}

Can we infer the topology of a set if we are only given partial geometric information about it? 
Under which conditions is such inference possible?

These questions were first motivated by the shape reconstruction of objects in 3-dimensional Euclidean space. There, the partial geometric information was represented by a finite, in general noisy, set of points obtained from photogrammetric or lidar measurements
 \cite{amenta2000simple,berger2017survey,boissonnat1984geometric,boissonnat1988shape,chazal2008smooth}.
 
More recently, the same questions have arisen 
in the context of learning and topological data analysis (TDA). In these fields, one seeks to recover a (relatively) low-dimensional support of 
a probability measure in a high-dimensional 
space, given a (finite) data set drawn from this probability measure \cite{boissonnat2018geometric, chazal2011geometric, edelsbrunner2010computational, dey2022computational}. 
Assuming the support is a manifold, one calls this process {manifold learning} \cite{pless2009survey}.

In \cite{niyogi2008}, Niyogi, Smale, and Weinberger showed that, given a $C^2$ manifold of positive reach\footnote{ 
We recall that the reach of a closed subset in Euclidean space is the distance from the set to its medial axis. In turn, the medial axis of a set consists of those points in Euclidean space that do not have a unique closest point on the set. Both notions are defined in Definition~\ref{def:medial_axis}.
} {embedded in Euclidean space} and a sufficiently dense point sample on (or near) the manifold, the union of balls of certain radii centred on the point sample captures the homotopy type of the manifold. 
By the nerve theorem~\cite{edelsbrunner2010computational}, the homotopy type of the union of balls is shared by the \v{C}ech complex  \cite{bjorner1995topological,edelsbrunner1994triangulating} 
and $\alpha$-complex \cite{edelsbrunner2011alpha}
of the point sample. From these complexes we can then learn the topological information such as the homology groups of the underlying manifold. 
Niyogi, Smale, and Weinberger's homotopy learning result has led to numerous generalizations including \cite{antasik2022sampling, ATTALI2013448, chazal2009sampling, kim2019homotopy, WANG2020101606}. 

In this article, we revisit the work of Niyogi, Smale, and Weinberger, generalizing the settings of their work in various ways.
	
The first generalization is in terms of ambient space --- we consider both the Euclidean space $\mathbb{R}^d$ and Riemannian manifolds with bounded sectional curvature. To this end, we introduce a new version of the reach in the Riemannian setting inspired by the cut locus (see Definition~\ref{definition:ReachCutLocus}).

The second generalization lies in the types of sets we study --- we consider sets of positive reach and manifolds of positive reach. Sets of positive reach need not be manifolds --- in fact, they can have varying dimensions (see for example Figure~\ref{fig:set_covering}).
Manifolds with positive reach are $C^{1,1}$ smooth\footnote{Topologically embedded manifolds with positive reach are $C^{1,1}$ embedded \cite{Federer, lytchak2004geometry, lytchak2005almost, rataj2019curvature, StructureRataj}.}, i.e., differentiable with Lipschitz derivative. This is a significantly larger family of sets in comparison to $C^2$ manifolds with positive reach, considered by Niyogi, Smale, and Weinberger.

As in the work of Niyogi, Smale, and Weinberger, our settings consist of a set (or a manifold) $\Su$ of positive reach and its sample $P$. We distinguish two sample quality parameters --- sample density $\varepsilon$ and sample noisiness $\delta$, which we encode using one-sided Hausdorff distances between $P$ and $\Su$. We provide explicit conditions on $\varepsilon$ and $\delta$, under which there exists a parameter $r$ such that the union of balls of radius $r$ centred at the sample $P$ deformation-retracts to $\Su$. This result expands on the work of Niyogi, Smale, and Weinberger, who considered the cases $\delta=0$ and $\delta = \varepsilon$ only, and only achieved tight bounds in the latter case (see Figure~\ref{fig:graphCcr}).

Furthermore, given a set of positive reach $\Su$ and its sample $P$, we identify an interval of radii $r$ (equation \eqref{EQ:InvervalrSetPosReach}) for which the union of balls of radius $r$ centred at the sample $P$ deformation-retracts to $\Su$. Thus, we provide a \emph{guarantee} for a successful homotopy inference of the set $\Su$ from the sample $P$. Moreover, we show that for a specific choice of $\Su$ and $P$ (see Propositions \ref{prop:counterexample_set}, \ref{prop:counterexample_mfld}, \ref{prop:counterexample_set_Riemann}, and \ref{prop:counterexample_mfld_Riemann}), the homotopy of $\Su$ is not inferrable from $P$ if our conditions on $\varepsilon$ and $\delta$ are not satisfied, proving that our bounds are, in terms of $\varepsilon$ and $\delta$, tight.

\begin{figure}[!h]
  \begin{center}
    \def\svgwidth{0.495\linewidth}
\begingroup%
  \makeatletter%
  \providecommand\color[2][]{%
    \errmessage{(Inkscape) Color is used for the text in Inkscape, but the package 'color.sty' is not loaded}%
    \renewcommand\color[2][]{}%
  }%
  \providecommand\transparent[1]{%
    \errmessage{(Inkscape) Transparency is used (non-zero) for the text in Inkscape, but the package 'transparent.sty' is not loaded}%
    \renewcommand\transparent[1]{}%
  }%
  \providecommand\rotatebox[2]{#2}%
  \newcommand*\fsize{\dimexpr\f@size pt\relax}%
  \newcommand*\lineheight[1]{\fontsize{\fsize}{#1\fsize}\selectfont}%
  \ifx\svgwidth\undefined%
    \setlength{\unitlength}{223.23493903bp}%
    \ifx\svgscale\undefined%
      \relax%
    \else%
      \setlength{\unitlength}{\unitlength * \real{\svgscale}}%
    \fi%
  \else%
    \setlength{\unitlength}{\svgwidth}%
  \fi%
  \global\let\svgwidth\undefined%
  \global\let\svgscale\undefined%
  \makeatother%
  \begin{picture}(1,0.79824365)%
    \lineheight{1}%
    \setlength\tabcolsep{0pt}%
    \put(0,0){\includegraphics[width=\unitlength,page=1]{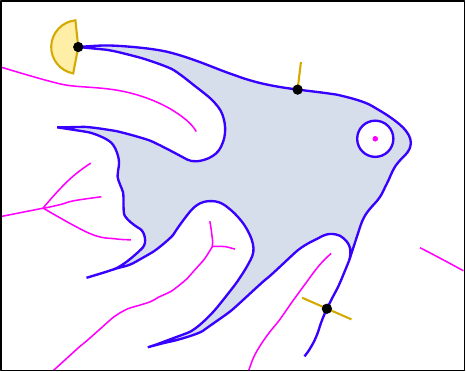}}%
    \put(0.18568681,0.73956671){\makebox(0,0)[lt]{\lineheight{1.25}\smash{\begin{tabular}[t]{l}$p_0$\end{tabular}}}}%
    \put(0.65869666,0.68499685){\makebox(0,0)[lt]{\lineheight{1.25}\smash{\begin{tabular}[t]{l}$p_1$\end{tabular}}}}%
    \put(0.65086666,0.38710738){\makebox(0,0)[lt]{\lineheight{1.25}\smash{\begin{tabular}[t]{l}$p_2$\end{tabular}}}}%
    \put(0.77375199,0.11942352){\makebox(0,0)[lt]{\lineheight{1.25}\smash{\begin{tabular}[t]{l}$p_3$\end{tabular}}}}%
    \put(0,0){\includegraphics[width=\unitlength,page=2]{fish-like-shape-info.pdf}}%
  \end{picture}%
\endgroup%
\hfill
    \def\svgwidth{0.495\linewidth}
    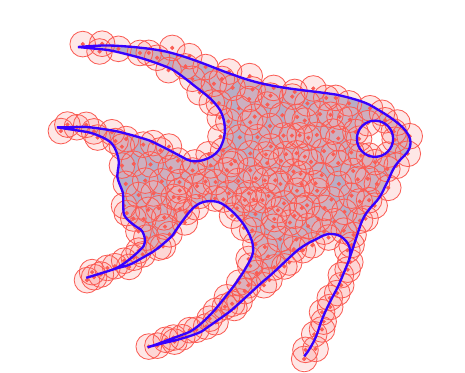
	\end{center}
	
	\caption{{\small Left: A fish shaped set $\Su$ of positive reach (in
			blue). Its medial axis (in purple) is at a positive
			distance. For $0 \leq i \leq 3$, we also represent the
			normal cone of $p_i$ with respect to $\Su$ (after an
			intersection with a small disk and a translation to $p_i$). The normal cone of the point $p_2$ is $p_2$ itself. 
			Right:
			The set $\Su$ with a sample $P$ and a thickening of $P$. We see that the thickening has the same homotopy type as $\Su$.
      }
	}
	\label{fig:set_covering}
\end{figure}

\section{State-of-the-art}

\subsection{Sets of positive reach} 

Our extension of Niyogi, Smale, and Weinberger's result to sets of positive reach --- as well as improvement of their results on manifolds --- relies on the work of Federer \cite{Federer}, which Niyogi, Smale, and Weinberger have not cited. In particular, we use Federer's generalization of normal spaces to normal cones (see Figure \ref{fig:set_covering} (left)
for a pictorial introduction and Appendix~\ref{sec:Euclidean_definitions} for a full definition) and his different characterizations of the normal cone as a key building block.
We recall the relevant results from Federer's work in Appendix~\ref{sec:Euclidean_definitions}.
 
We note that the reach can be estimated from a sample \cite{EddieOptimal, EstReach, berenfeld2022estimating,cholaquidis2023universally,cotsakis2024computable}.   
 
Subsets of positive reach of Riemannian manifolds were studied extensively by Kleinjohann \cite{kleinjohann1980convexity,kleinjohann1981nachste} and Bangert \cite{bangert1982sets} in generalization of Federer's theory \cite{Federer} for subsets of Euclidean space. 
Boissonnat and Wintraecken investigated yet another definition of the reach for subsets of Riemannian manifolds in \cite{ReachSubmanifolds}.

\subsection{Homotopy learning}

For some particular cases, the best previously known bounds on the distance between a manifold (or a set) of positive reach and its sample that guarantee successful homotopy inference, can be found in \cite{ATTALI2013448} and \cite{niyogi2008}.
Attali \emph{et al.} \cite{ATTALI2013448}, Chazal \emph{et al.} \cite{chazal2009sampling}, and Kim \emph{et al.} \cite{kim2019homotopy} expanded homotopy learning to 
even more general{ subsets of Euclidean space, such as subsets with positive $\mu$-reach}. 
Their proofs are, however, different from ours, more involved, and {their bounds are not shown to be tight.}

\subsection{Manifold and stratification learning} 

Although this article focuses on homotopy learning, our work should also be seen as part of recent developments in manifold learning \cite{EddieManSTOC, aamari2018stability, fefferman2018fitting, fefferman2019fitting, fefferman2020reconstruction, sober2020manifold}. 
The goal of this field is to reconstruct a manifold from a `reasonable' sample lying on or near it --- at least up to a homeomorphism, but usually an ambient isotopy. 

At the moment work is ongoing to expand {this strategy} to more general spaces --- see {for example the work of Aamari \emph{et al.}} \cite{EddieBoundary} on manifolds with boundary. 

Although inferring the homotopy of a manifold is simpler than manifold {learning}, the 
sets 
we consider are more general than manifolds or manifolds with boundary. The extension of 
learning 
from subsets of Euclidean space to subsets of Riemannian manifolds also departs from the usual track. We are only aware of one 
work in computational geometry and topology which operates  
within this context, namely \cite{chazal2013persistence}. {These are the first steps in the developing field of stratification learning.} Homotopy inference in the hyperbolic space was considered in \cite{antasik2022sampling}.

\section{Contribution} 
\subsection{Subsets of Euclidean space}
Let $\M$ denote a manifold of positive reach, $\Su$ a set of positive
reach and let $P$ be a sample.  All sets are assumed to be compact
unless stated otherwise. We denote the reach of a set $\mathcal{X}$ by
$\rch(\mathcal{X})$ and let $\reach$ be a non-negative real number such
that $\reach \leq \rch(\Su)$ (resp. $\reach \leq \rch(\M)$). 

We denote the bound on the one-sided Hausdorff distance\footnote{We recall that the one sided Hausdorff distance from $X$ to $Y$, denoted by $d_H^o(X;Y)$, is the smallest $\rho$ such that {$Y$ is covered by the union of balls of radius $\rho$ centred at $X$, that is, $Y \subseteq \bigcup_{x \in X} B(x,\rho)$.} } from $P$ to $\Su$ (resp. $\M$) by $\varepsilon$, and the one-sided Hausdorff distance from $\Su$ (resp. $\M$) to $P$ by $\delta$.

{In this article we establish} 
conditions on $\varepsilon$ and $\delta$ which, if satisfied, guarantee the existence of a radius $r>0$ such that the union 
of balls of radius $r$ centred at the sample $P$ deformation-retracts onto $\M$ (resp. $\Su$). 
{The set of pairs $(\varepsilon,\delta)$ that satisfy these conditions is depicted in Figure \ref{fig:graphCcr} {on the left}. The precise conditions are given in Propositions \ref{theorem:HomotopyNoiselessPositiveReach} and \ref{theorem:DeformRetractsTheoremForManifolds}. } 

\begin{figure}[!h]
\begin{center}
\includegraphics[width=0.70\textwidth]{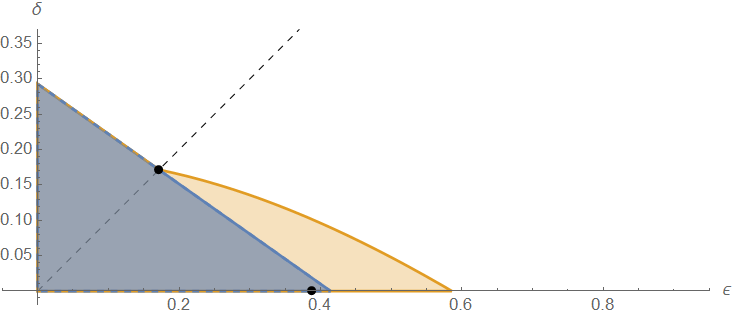}
\includegraphics[width=0.70\textwidth]{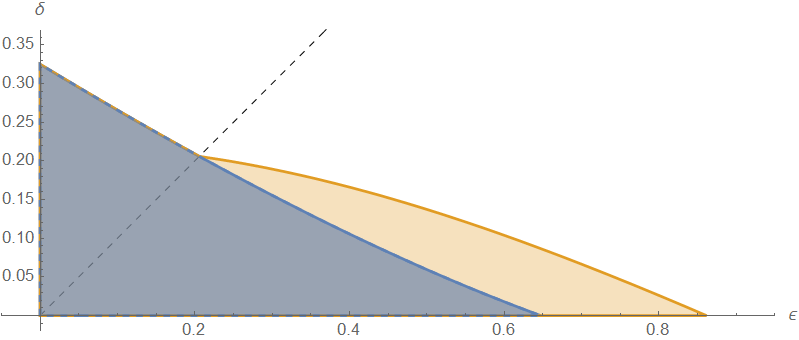}
\includegraphics[width=0.70\textwidth]{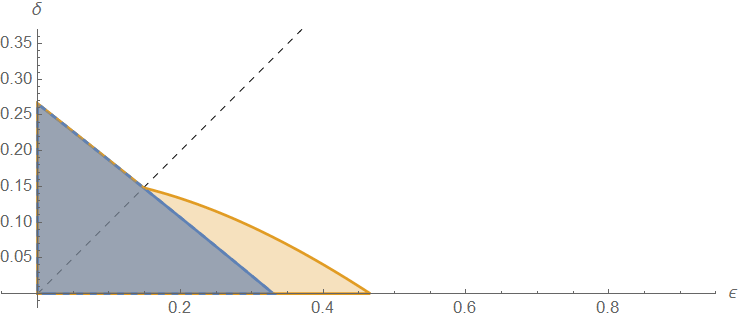}
\end{center}
\caption{\small
  The blue-gray region bounded by the blue dashed curve represents the
  set of pairs $(\varepsilon,\delta)$ for which there exists a radius
  $r$ such that the union of balls of radius $r$ centred at $P$
  captures the homotopy type of a set of positive reach $\reach =
  1$. The equivalent region for a manifold of reach $\reach = 1$ is
  depicted in yellow and is a superset of the previous one. The two
  regions coincide above the diagonal $\delta = \varepsilon$.
  The bounds for the Euclidean setting are indicated on top, for an
  ambient manifold with positive curvature bound ($\curvlowbnd =+2$)
  in the middle, and for an ambient manifold with negative curvature
  bound ($\curvlowbnd =-2$) bottom.  In the top picture, the black
  points indicate the bounds that were known to Niyogi, Smale, and
  Weinberger.}
\label{fig:graphCcr}
\end{figure}

Distinguishing the two one-sided Hausdorff distances seems natural to the authors, because in measurements one would expect the measurement error $\delta$ (with the exception of some small number of outliers) to be often smaller than the sampling density $\varepsilon$. Similar assumptions seem to be common in the learning community, see e.g. \cite{levin1998approximation}. Niyogi, Smale, and Weinberger \cite{niyogi2008} also made similar assumptions on the support of the measure from which they sampled.

We only consider samples for which we have precise bounds on $\varepsilon$ and $\delta$. In \cite{niyogi2008}, the authors also consider a setting where the point sample is drawn from a distribution centred on the manifold. They still recover the homotopy type of the underlying manifold with high probability. Our results can be applied to improve the bounds also in this context. However, we have not discussed this in detail, since combining both results is straightforward.

We stress that in \cite{boissonnat2018geometric, niyogi2008}, and \cite{WANG2020101606}, the authors use $\varepsilon/2$ instead of our $\varepsilon$. We also stress that $\varepsilon$ and $\delta$ have precisely opposite meanings in \cite{kim2019homotopy} compared to this paper.

Our conditions on $\varepsilon$ and $\delta$ 
are optimal for sets of dimension at least $2$ in the following sense: if the  conditions are not satisfied, 
we can construct a set of positive reach $\Su$ (resp. manifold $\M$) and a sample $P$, such that
{there is no $r\geq 0$ for which }the union of balls of radius $r$ centred at $P$ would have the same homology as $\Su$ (resp. $\M$). These constructions are explained in Section~\ref{sec:Euclidean_tightness}.

We would like to emphasize that for noiseless samples, (that is, when $\delta=0$,)  both the constant $\left(\sqrt{2}-1\right)$ (for general sets of positive reach), and the constant $\left(2-\sqrt{2}\right)$ (for manifolds) compare favourably with the previously best known constant  $\tfrac{1}{2} \sqrt{\tfrac{3}{5}}$ from \cite{niyogi2008}  for manifolds.\footnote{It should be noted that in \cite{niyogi2008} $r$ was not considered as a variable, but set equal to $2 \varepsilon$, which (at least partially) explains the suboptimal result in that paper.} 

In Proposition 7.1 of \cite{niyogi2008}, one encounters the condition $\varepsilon < (3-\sqrt{8})  \reach$ for a particular case of the setting we consider,
namely when the sampling condition is expressed through an upper bound $\varepsilon$ on the Hausdorff distance ($\delta= \varepsilon$ in our setting).
The same constant $3-\sqrt{8}$ appears independently in  \cite[Theorem 4]{attali:hal-00427035} for general sets of positive reach.
Our results (Propositions \ref{prop:counterexample_set} and \ref{prop:counterexample_mfld}) show that this bound is optimal when $\delta= \varepsilon$, both for general sets of positive reach and for manifolds.

To contrast the two related results in \cite{niyogi2008}, for $\delta= 0$ and $\delta= \varepsilon$ respectively, with our bounds, we portray them as black dots in Figure \ref{fig:graphCcr}.

Homotopy reconstruction of manifolds with boundary has been studied in \cite[Theorem 3.2] {WANG2020101606}, assuming lower bounds on both the reach of the manifold and the reach of its boundary.
We also improve on this result by treating a manifold with boundary as a particular case of a set of positive reach, while 
our bounds only depend on the reach of the set itself and not the one of its boundary.

{
\subsection{Subsets of Riemannian manifolds}
In the second part of this article we extend the homotopy reconstruction results to sets $\Su$
and manifolds $\M$ 
of positive reach embedded in a Riemannian manifold 
whose sectional curvatures\footnote{We recall (one of) the (equivalent) definition(s) of sectional curvatures of the Riemannian manifold {$\N$}:
For a point $p\in \N$ let $\Pi \subseteq T_p\N$ be a two dimensional plane in the  tangent space to $p$ at $\N$.
If $U\subseteq \Pi$ is a sufficiently small neighbourhood of $p$ in $\Pi$, then $\exp_p(U)$ is a surface. The Gauss curvature of this surface at $p$ is the sectional curvature of $\N$ at $p$ for the directions that span $\Pi$.} 
are bounded. 

Also in this Riemannian setting we find tight\footnote{When the curvature of the ambient manifold is positive we face a subtle issue because the manifold has a small volume. In that case, the meaning of optimality becomes less straightforward.} 
	bounds on the one-sided Hausdorff distances $\varepsilon$ and $\delta$ between $\Su$ (resp. $\M$) and its sample $P$. The set of pairs $(\varepsilon,\delta)$ that satisfy these conditions is depicted in Figure \ref{fig:graphCcr} (centre and right). The precise bounds are given in Propositions \ref{theorem:HomotopyPositiveReachRiemannian} and \ref{theorem:DeformRetractsTheoremForManifolds_Riemann}.

The main pillar of this part of our work is comparison theory. We recall the most essential definitions and results in Appendix \ref{sec:RecapToponogov}, and refer to \cite{berger2003panoramic, buser1981, chavel2006riemannian, cheeger2008comparison, Gromoll,  Karcher2} for further reading. 

For the extension to the Riemannian setting we also formulate a new generalization of the reach. To establish some of its properties, we use results on the gradient of the distance function \cite{albano2013singular}, see also \cite{LIEUTIERhomotopytype}. These results in turn require non-smooth analysis \cite{Clarke1990} and semi-concave functions \cite{albano1999structural}. We refer to Appendix \ref{section:CutLocusIsSingularSet} for discussion. 

In computer vision, many papers have argued in favour of using Riemannian manifolds as the main setting without embedding the Riemannian manifold in Euclidean space. 
In particular, symmetric positive definite matrices and Grassmannians form the natural stage for some data~\cite{vemulapalli2015riemannian, zhu2018towards}. Symmetric positive definite matrices occur as diffusion tensors \cite{pennec2006riemannian} (used in e.g. magnetic resonance imaging), in image segmentation \cite{goh2008clustering, rathi2007segmenting}, and in texture classification \cite{tuzel2006region}, while Grassmanians are used in image matching and recognition \cite{hamm2008grassmann, harandi2011graph}. {Although it is possible to embed these manifolds in Euclidean space, it} 
 is not natural and would increase the dimensionality significantly. 
  In \cite{zhang2016efficient}, time-series obtained from observations of dynamical systems are encoded as positive semi-definite matrices, produced by forming Hankel matrices and taking their Gram matrices. Thus, the problem of analysing time-series data is transformed into the problem of analysing point set data on a Riemannian manifold, namely the one formed by semi-positive definite matrices.

\section{Results for subsets of the Euclidean space}\label{sec:OverviewBoundsEuclideanCase}
\subsection{Setting}
We denote the closed ball in Euclidean space centred at a point $p$ with radius $r$ by $B(p,r)$. 

	\begin{tcolorbox} The thickening of a set $A \subseteq \R^d$ by parameter $r>0$ is denoted by $A^{\boxplus r}$, that is, 
		\[ A^{ \boxplus r} := \bigcup_{a\in A} B(a,r).\] 
	\end{tcolorbox}

\begin{remark}
	We use the notation $A^{ \boxplus r}$ to remind the reader of the Minkowski sum. It is indeed true that in $\R^d$, $A^{ \boxplus r} = A\oplus B(0,r)$. However, the above notation is also well-defined for subsets of manifolds, whereas the Minkowski sum is not.
\end{remark}

While working with subsets of the Euclidean space (Section \ref{sec:OverviewBoundsEuclideanCase} and Appendix \ref{sec:Euclidean_setting}) we assume the following:
\begin{tcolorbox} 
\begin{restatable}{assumption1}{assumptionEuclideanSetting}
\label{assumption}
	We work with a closed set $\Su \subseteq \R^d$ with positive reach $\rch(\Su)$, and let $\reach>0$ be a constant satisfying $\reach \leq \rch(\Su)$. 
	Furthermore, we consider a set $P\subseteq \R^d$, such that the one-sided Hausdorff distance from $P$ to $\Su$ is at most $\delta$, and the one-sided Hausdorff distance from $\Su$ to $P$ is at most $\varepsilon$. 
	That is,
	\begin{equation*} 
		\Su \subseteq P^{\boxplus \varepsilon}  
		\qquad \text{and} \qquad P \subseteq \Su^{\boxplus \delta} . 
	\end{equation*} 
	We assume that $\delta, \varepsilon <\reach$.
	If the set $\Su$ is a submanifold of $\R^d$, we denote it by $\M$.
\end{restatable}
\end{tcolorbox} 
For most applications the assumption $\delta \leq \varepsilon$ seems natural, but we do not need this. However, when $\Su = \M$, we achieve better bounds when $\delta \leq \varepsilon$. See Remark \ref{remark.WhenDeltaGreaterThanEpsilon} for more details.

\subsection{The geometric argument}\label{sec:Euclidean_geometric_argument}

We show that if the thickening $P^{\boxplus r}  = \bigcup_{p \in P } B(p, r)$ 
covers a sufficiently large thickening of $\Su$ --- quantified by parameter $\alpha$ --- and the parameter $r$ is not too big, $P^{\boxplus r} $  
deformation-retracts to $\Su$.

We start by recalling that the normal cone at a point $p$ of a set of positive reach is the set of directions such that if you move from $p$ in that direction the closest point projection will remain $p$. For a definition we refer to Definition \ref{def:4.3and4.4Fed}.  

\begin{theorem}\label{theorem:geometric_argument}
	Assume that a parameter $\alpha>0$ is small enough, so that the $\alpha$-neighbourhood $\Su^{\boxplus \alpha} $ of the set $\Su$ is contained in $P^{\boxplus r} $. In other words,
	\begin{align}  
	\Su^{\boxplus \alpha}  \subseteq P^{\boxplus r}. 
	\label{eq:Tube} 
	\end{align} 
If, moreover, 
\begin{equation}\label{equation:R2TooSmallToCNormalLineS}
r^2 \leq  (\reach -  \delta)^2 - (\reach - \alpha)^2,
\end{equation}
then, for any point $q\in\Su$, the intersection $(q+\Nor(q, \Su)) \cap B(q,\reach) \cap  P^{\boxplus r} $ of the normal cone $q+\Nor(q, \Su)$, the ball $B(q,\reach)$, and the union of balls $P^{\boxplus r}$, is star-shaped, with the point $q$ as its `centre'. Furthermore, $P^{\boxplus r}$ deformation-retracts onto $\Su$ along the closest point projection.
\end{theorem}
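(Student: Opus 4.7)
The plan is to first prove the star-shape statement for each $q \in \Su$, and then derive the deformation retraction as a direct consequence via the closest-point projection $\pi_\Su$. For the star-shape, inclusion into the normal cone $q + \Nor(q, \Su)$ and into $B(q, \reach)$ is preserved along any segment from $q$ for free, so for $x = q + t\nu$ in the intersection (with $\nu \in \Nor(q, \Su)$ unit and $t \in (0, \reach]$) the only real task is to show each $y_s := q + s\nu$, $s \in [0, t]$, lies in $P \oplus B(r)$. I split the parameter range. For $s \in [0, \alpha]$ the point $y_s$ lies within distance $\alpha$ of $q \in \Su$, hence $y_s \in \Su \oplus B(\alpha) \subseteq P \oplus B(r)$ by \eqref{eq:Tube}. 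For $s \in [\alpha, t]$ (nontrivial only when $t \geq \alpha$) I reuse the same $p \in P$ with $\|x - p\| \leq r$: the function $s \mapsto \|y_s - p\|^2$ is convex in $s$, so its maximum on $[\alpha, t]$ is attained at an endpoint; since $\|y_t - p\|^2 = \|x - p\|^2 \leq r^2$, it suffices to show $\|y_\alpha - p\|^2 \leq r^2$.

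Proving this last inequality is the main obstacle and is precisely where hypothesis \eqref{equation:R2TooSmallToCNormalLineS} gets consumed. The key ingredient is Federer's ball property: because $\nu \in \Nor(q, \Su)$ and $\rch(\Su) \geq \reach$, the open ball $B(q + \reach\nu, \reach)^\circ$ does not meet $\Su$; hence, since $p \in \Su \oplus B(\delta)$, one has $p \notin B(q + \reach\nu, \reach - \delta)^\circ$. Setting $u := p - q$ and expanding $\|p - (q + \reach\nu)\|^2 \geq (\reach - \delta)^2$ gives
\[
2\reach \langle \nu, u \rangle \;\leq\; \|u\|^2 + 2\reach\delta - \delta^2 .
\]
Independently, $\|x - p\|^2 \leq r^2$ rearranges to $\|u\|^2 \leq r^2 - t^2 + 2t\langle \nu, u\rangle$. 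Substituting the latter into $\|y_\alpha - p\|^2 = \alpha^2 - 2\alpha \langle \nu, u\rangle + \|u\|^2$ reduces the target to a linear upper bound on $\langle \nu, u\rangle$, which the Federer inequality (after eliminating $\|u\|^2$ in the same way) supplies. The resulting sufficient condition is $r^2 + 2\reach\delta - \delta^2 \leq \alpha\reach + t(\reach - \alpha)$; its right-hand side is nondecreasing in $t \geq \alpha$, so the worst case is $t = \alpha$, which rearranges to exactly $r^2 \leq (\reach - \delta)^2 - (\reach - \alpha)^2$, i.e.\ hypothesis \eqref{equation:R2TooSmallToCNormalLineS}.

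Given the star-shape, the deformation retraction is almost formal. The radius hypothesis in particular forces $r + \delta \leq \reach$, so $P \oplus B(r) \subseteq \Su \oplus B(\reach)$ sits inside the tubular region where $\pi_\Su$ is single-valued and continuous by Federer's theory. Define $H : (P \oplus B(r)) \times [0,1] \to \R^d$ by $H(x, \lambda) := (1 - \lambda)\, x + \lambda\, \pi_\Su(x)$. For each $x$, Federer's theory also yields $x - \pi_\Su(x) \in \Nor(\pi_\Su(x), \Su)$, so the straight line from $x$ to $\pi_\Su(x)$ is a segment inside $(\pi_\Su(x) + \Nor(\pi_\Su(x), \Su)) \cap B(\pi_\Su(x), \reach)$. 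Applying the already-proved star-shape at $q = \pi_\Su(x)$, this segment lies in $P \oplus B(r)$, so $H$ takes values in $P \oplus B(r)$, restricts to the identity on $\Su$, and is the desired deformation retraction.
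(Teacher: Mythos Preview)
Your proof is correct. It is not the paper's main argument, but it is essentially the alternative proof the authors give in Appendix~\ref{app:Alternative_proof}: both are direct (not by contradiction), both split the segment at $s=\alpha$, handle $[0,\alpha]$ via the tube inclusion~\eqref{eq:Tube}, and handle $[\alpha,t]$ by showing that the \emph{same} ball $B(p,r)$ containing $x$ already contains the whole sub-segment, using Federer's tangent-ball exclusion at $z=q+\reach\nu$ to constrain $p$. The only difference in execution is that the appendix argues geometrically that the foot of the perpendicular from $p$ to the ray lies within distance $\alpha$ of $q$ (so $\|y_s-p\|$ is monotone on $[\alpha,t]$ and bounded by its value at $s=t$), whereas you bound $\langle\nu,u\rangle$ algebraically from the two quadratic constraints and check the endpoint $s=\alpha$ via convexity; the hypothesis~\eqref{equation:R2TooSmallToCNormalLineS} is consumed identically in both. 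The paper's \emph{main} proof, by contrast, runs by contradiction: it posits a first re-entry point along the normal ray, notes that the center $p'$ of the offending ball must lie in the half-space beyond that point and outside $B(z,\reach-\delta)$, and computes the minimal distance from that region to the ray to be at least $\sqrt{(\reach-\delta)^2-(\reach-\alpha)^2}\geq r$. One small point worth making explicit in your write-up: your chain of inequalities divides by $\reach-t$, so the boundary case $t=\reach$ deserves a word (it forces $r\geq\reach-\delta$, which together with~\eqref{equation:R2TooSmallToCNormalLineS} gives $\alpha=\reach$, and then the $[0,\alpha]$ case already covers everything).
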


\begin{remark}
The statement of Theorem~\ref{theorem:geometric_argument} does not use the hypothesis $\Su \subseteq P^{\boxplus \varepsilon}$ from the Universal Assumption \ref{assumption}. 
\end{remark}

We refer to Figure \ref{fig:OverviewProof} for a pictorial overview of the proof of Theorem \ref{theorem:geometric_argument}. Further in the paper, we express the parameter $\alpha$ in terms of $r$ and the quality parameters $\varepsilon$ and $\delta$. The expression differs depending on whether $\Su$ is a set or a manifold of positive reach. Inserting the appropriate expression into bound~\eqref{equation:R2TooSmallToCNormalLineS} yields the final bounds on $\varepsilon$ and $\delta$ (see Propositions \ref{theorem:HomotopyNoiselessPositiveReach} and \ref{theorem:DeformRetractsTheoremForManifolds}).

\begin{figure}[h!]
	\centering
	\begin{subfigure}[b]{0.445\textwidth}
		\centering
		\includegraphics[width=\textwidth]{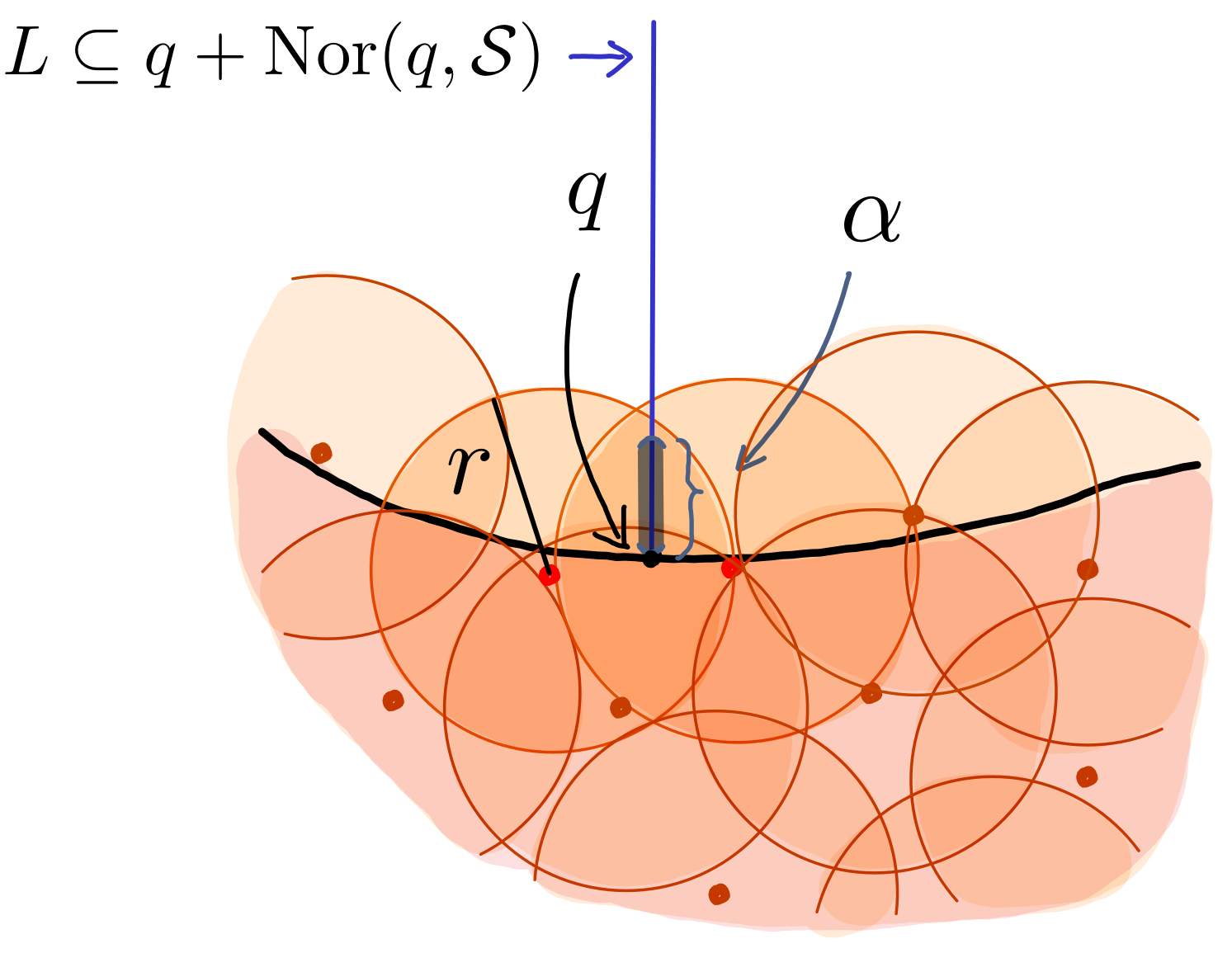}
		\subcaption{Any point in $q+\Nor(q,\Su)$ a distance less than $\alpha$ from $\Su$ is covered by $P^{\boxplus r}$.
		}
		\label{fig:2a}
	\end{subfigure}
	\hfill
	\begin{subfigure}[b]{0.505\textwidth}
		\centering 
		\includegraphics[width=0.92\textwidth]{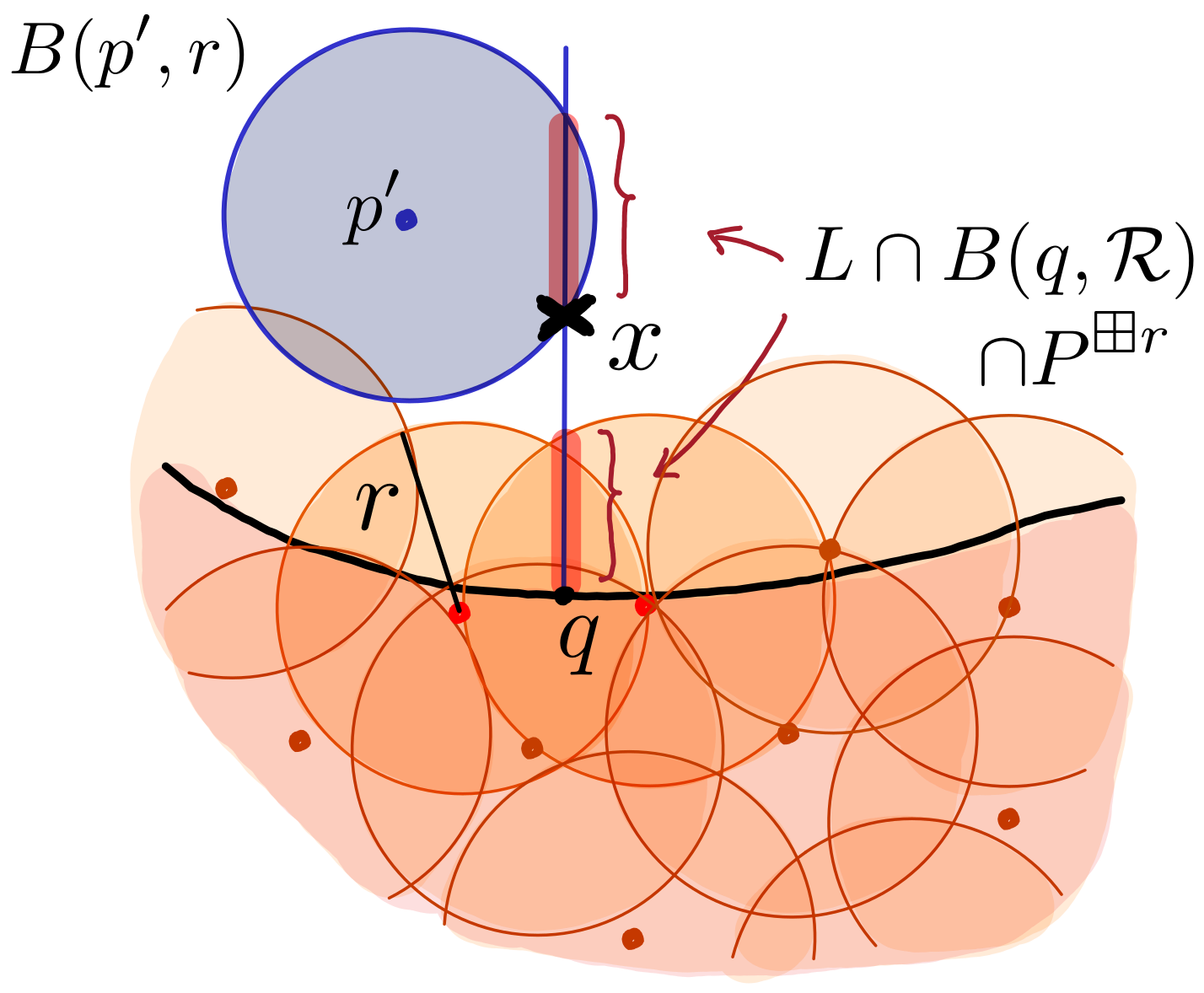}
		\subcaption{If $(q+ \Nor(q,\Su)) \cap P^{\boxplus r}$ is not star-shaped there exists a point $x$ where the segment $L$ reenters a ball $B(p',r)$ (in blue) in $P^{\boxplus r}$ after having left $P^{\boxplus r}$ closer to $q$. 
	}
		\label{fig:2b}
	\end{subfigure}
	\vskip\baselineskip
	\begin{subfigure}[b]{0.445\textwidth} 
		\centering 
		\includegraphics[width=\textwidth]{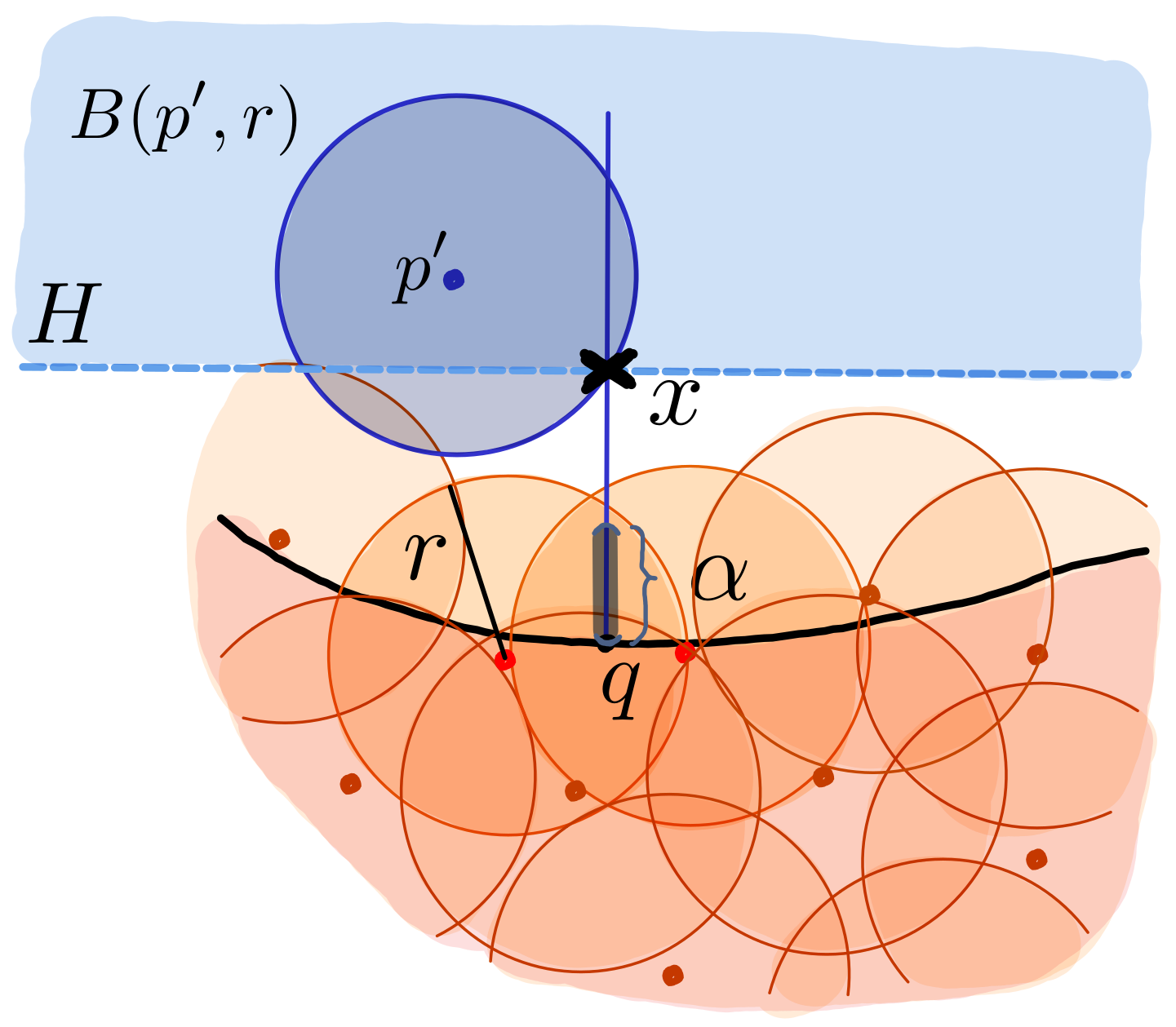}
		\subcaption{The centre $p'$ of the ball $B(p',r)$ lies inside the half-space~$H$. The half-space~$H$ lies at least a distance $\alpha$ from $q$.}
		\label{fig:2c}
	\end{subfigure}
	\hfill
	\begin{subfigure}[b]{0.505\textwidth}   
		\centering 
		\includegraphics[width=\textwidth]{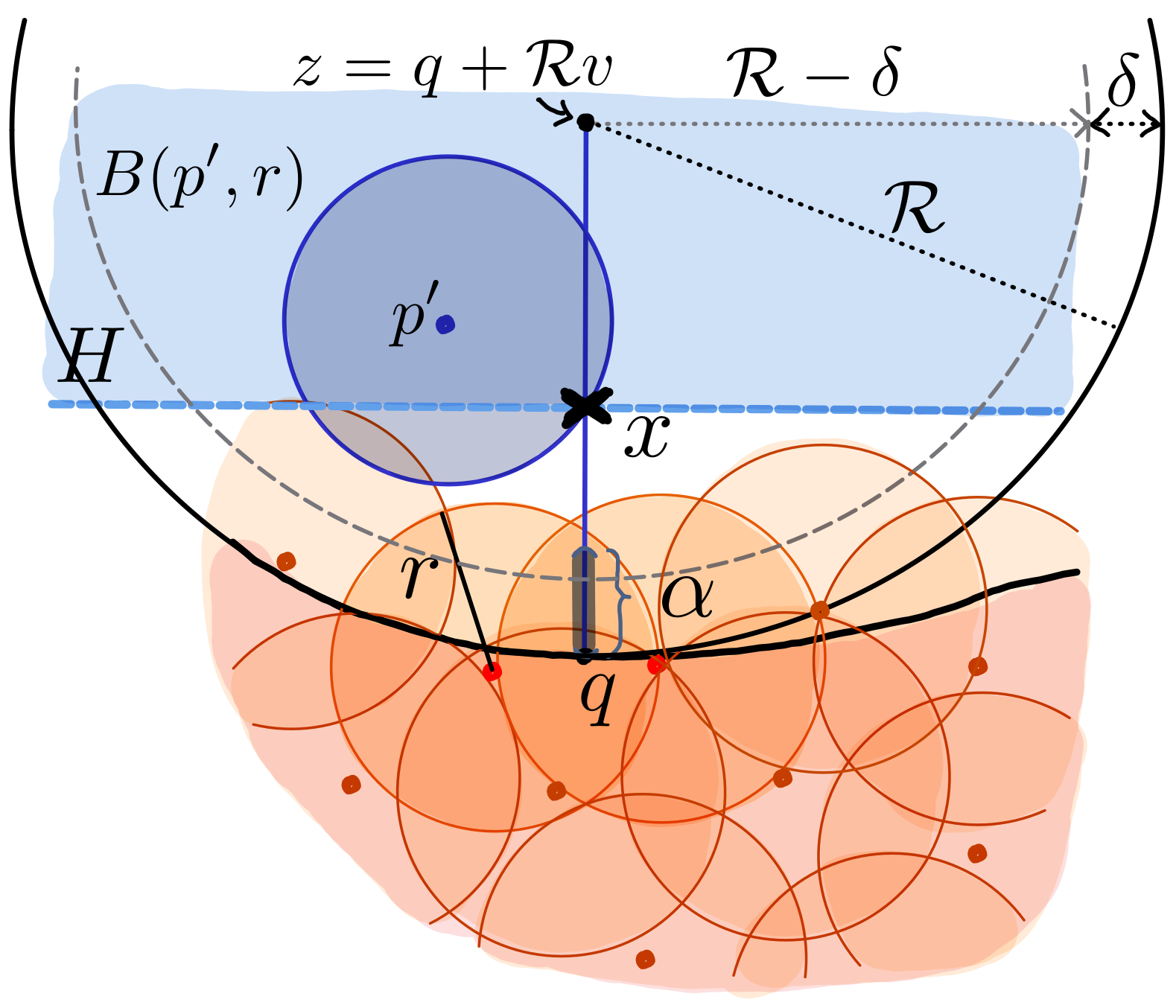}
		\subcaption{The ball of radius $\reach$ is `tangent' to the set~$\Su$, thus it cannot contain any point of~$\Su$ in its interior. {Since the distance between $p'$ and $\Su$ is bounded by $\delta$, $p'$ has to lie outside of the ball of radius $\reach-\delta$.}
		This contradicts the fact that $p'$ lies in the half-space~$H$ and is not too far from the normal space $q+\Nor(q,\Su)$.}%
		\label{fig:2d}
	\end{subfigure}
	\caption[]
	{\small A pictorial overview of the proof. The pink shaded region represents a part of the set $\Su$, the union of balls $P^{\boxplus r}$ 	is coloured orange. The thickened blue segment shows those points of the segment $L$ that lie a distance less than $\alpha$ from $\Su$. Per assumption, this segment is contained in the union of balls $P^{\boxplus r}$. } 
	\label{fig:OverviewProof}
\end{figure}
	
\begin{proof}[Proof of Theorem \ref{theorem:geometric_argument}]
	We prove the claim by contradiction. 
	For any point $q\in\Su$, the set $(q+\Nor(q, \Su))\cap \left(\Su^{\boxplus \alpha } \right)$ is contained in the union of balls $P^{\boxplus r}$. In Figure~\ref{fig:2a}, we illustrate this for the case where the set $q+\Nor(q, \Su)$ consists of one ray.
	Assume that there exists a point $q \in \Su$ and a vector $v \in \Nor(q, \Su)$, with $\|v\|=1$, such that the intersection 
	of $P^{\boxplus r}$	with the segment 
	\[
	L\defunder{=} \{ q+ \lambda v \mid \lambda \in [0, \reach)\}
	\]
	consists of several connected components (as illustrated in Figure \ref{fig:2b}). Thanks to Equation~\eqref{eq:Tube}, the connected component that contains $q$ has length at least $\alpha$. Let $x$ be first point along $L$, seen from $q$, lying inside a connected component of $ \left(P^{\boxplus r} 
	\right) \cap L$ that does not contain $q$. 
	Then $x$ lies at the intersection of the segment $L$ and a ball $B(p', r)$, 
	with $p'\in P$. {We have $\|x-q\| \geq \alpha$.}
	Furthermore, the point $p'$ is 
	contained in the open half-space $H$ orthogonal to the vector $v$, that does not
	contain $q$, and whose boundary contains $x$. We stress that if $p'$ lies on the boundary of $H$ then the line $L$ is tangent to the sphere $\partial B(p', r)$, which is still compatible with star-shapedness.
	The situation is illustrated in Figure \ref{fig:2c}.

{Let $z=q+\reach v$ be the open endpoint of $L$.} Since, by Theorem~\ref{Fed4.8.12} (\cite[Theorem 4.8 (12)]{Federer}), the intersection $\Su \cap  B(z, \reach)^\circ$ is empty {and the distance between $p'$ and $\Su$ is bounded by $\delta$}, we know that $p' \notin B(z, \reach- \delta)^\circ$. Thus,
	\[
	p' \in  A \defunder{=} H \cap (\R^d \setminus B(z, \reach- \delta)^\circ).
	\]
	\begin{figure}[!h]
		\begin{center}
			\includegraphics[width=0.55\textwidth]{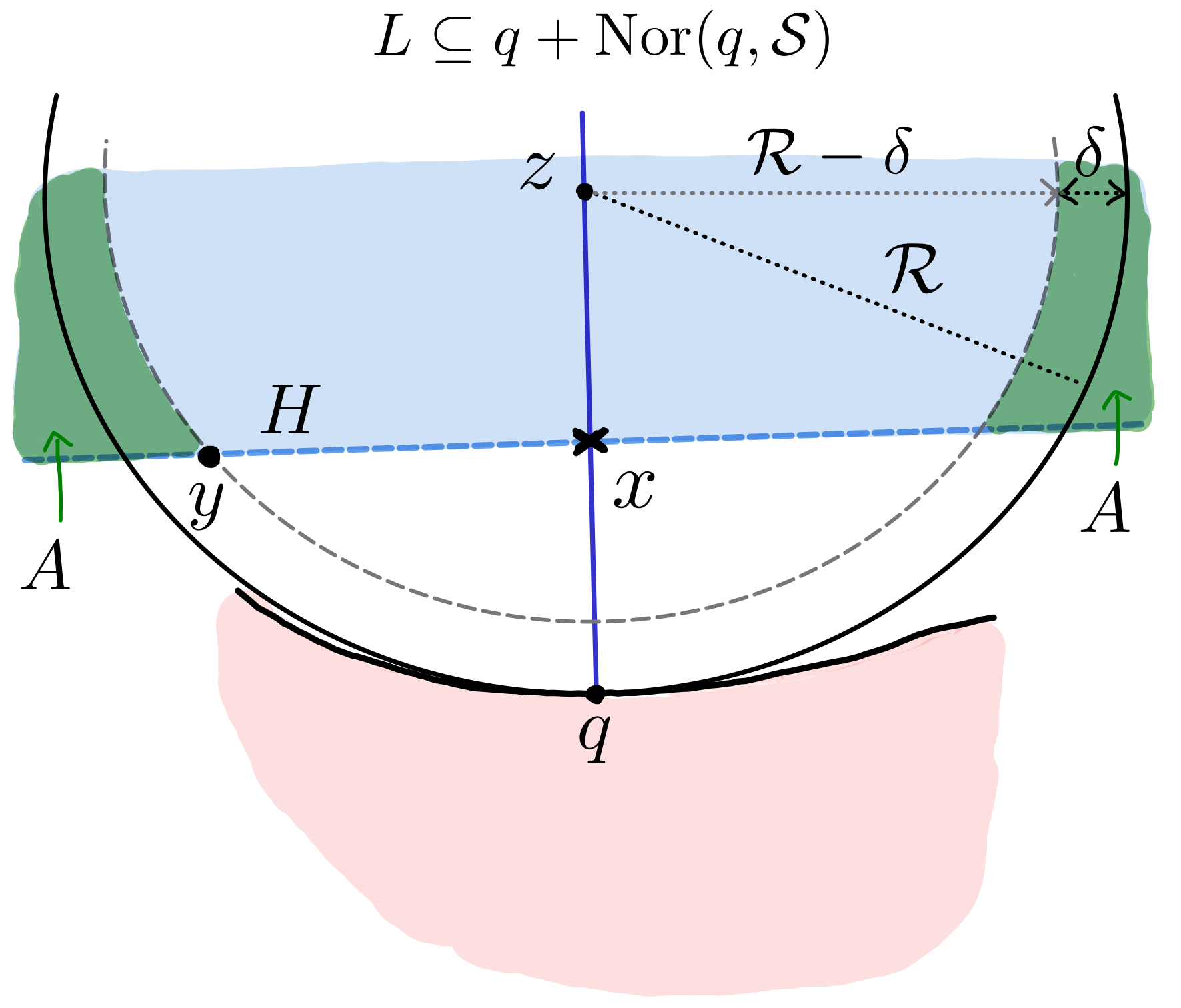}
		\end{center}
		\caption{
			The centre of the ball creating a new connected component along one direction in the normal cone $q+ \Nor(q,\Su)$ (in blue) is constrained to belong to the set $A$ (in green). The set $\Su$ is coloured pink, the half-plane $H$ in light blue.
		} 
		\label{fig:OverviewProofNoisy}
	\end{figure}
 The sphere $\partial B(z,\reach-\delta)$ has a
      non-empty intersection with the plane $\partial H$. Indeed, the
      sphere passes through point $q + \delta v$ which does not belong
      to $H$ while its centre $z$ belongs to $H$; see Figure
      \ref{fig:2d}. We can thus pick a point $y$ 
			in the
      intersection $\partial H \cap \partial B(z,\reach-\delta)$.  By the
    Pythagorean theorem, the minimal squared distance between $A$ and
    $L$ is:
   {
    \[
      \inf_{\substack{a \in A \\ \ell\in L}} \|a-\ell\|^2 = \|x-y\|^2 = \|z-y\|^2 - \left( \|z-q\| - \|x-q\| \right)^2 \geq (\reach -  \delta)^2 - (\reach - \alpha)^2,
    \]}
	as illustrated in Figure \ref{fig:OverviewProofNoisy}. Hence, if
	\begin{equation}
	r^2  \leq   (\reach -  \delta)^2 - (\reach - \alpha)^2,
	\tag{\ref{equation:R2TooSmallToCNormalLineS}} 
	\end{equation}
	the ball $B(p', r)$ does not intersect $L$. Therefore, $L \cap  (P^{\boxplus r} 
	) $ cannot have more than one connected component. The set $(q+\Nor(q, \Su)) \cap B(q,\reach) \cap  (P^{\boxplus r} 
	) $ is thus star-shaped with centre $q$.

Since $r$ satisfies Equation \eqref{equation:R2TooSmallToCNormalLineS},
we deduce that $\delta + r < \reach$, and thus
\[
P^{\boxplus r}  \subseteq \left(\Su^{\boxplus \reach}\right) ^\circ. \]
Thanks to this, the fact that the set $(q+\Nor(q, \Su)) \cap B(q,\reach) \cap  (P^{\boxplus r} 
) $ is star-shaped with centre $q$, and Theorem~\ref{Fed4.8.12}, the map
\begin{align*} 
\mathcal{H}: & [0,1]\times (P^{\boxplus r}) \to P^{\boxplus r},
\\
&(t, x) \mapsto (1-t) x + t \pi_{\Su} (x),
\end{align*}
is well-defined.

Furthermore, since $\Su$ has positive reach, then, thanks to Theorem~\ref{Fed4.8.8} (\cite[Theorem 4.8 (8)]{Federer}), the projection $\pi_{\Su}$ is (Lipschitz) continuous. Thus, the map $\mathcal{H}$ is a deformation retract from the union of balls $P^{\boxplus r}$ to the set $\Su$.	
\end{proof}
In Appendix \ref{app:Alternative_proof}, we provide an alternative proof of Theorem \ref{theorem:geometric_argument}, similar to an argument presented in \cite{chazal2008smooth}.

\subsection{Bounds on the sampling parameters}
Recall that throughout the paper we assume the Universal Assumption \ref{assumption}.
For sets of positive reach, we obtain the following bounds on the quality parameters $\varepsilon$ and $\delta$:
\begin{restatable}{proposition}{HomotopyNoiselessPositiveReach}
\label{theorem:HomotopyNoiselessPositiveReach}
If $\varepsilon$ and $ \delta$ satisfy 
\begin{equation}\label{equation:BoundOnR0}
\varepsilon + \sqrt{2} \, \delta \leq  (\sqrt{2}  - 1) \reach,
\end{equation}
there exists a radius $r>0$ such that the union of balls  {$P^{\boxplus r} = \bigcup_{p\in P} B(p,r)$} deformation-retracts onto $\Su$ along the closest point projection. In particular, $r$ can be chosen as:
\begin{equation} 
 r \in  \left  [ \frac{1}{2} \left(\reach+\varepsilon  -  \sqrt{\Delta}\right),  
\frac{1}{2} \left(\reach+\varepsilon +  \sqrt{\Delta}\right)  \right  ],
\label{EQ:InvervalrSetPosReach} 
\end{equation}
where $ \Delta=  2(\reach-\delta)^2 - (\reach+\varepsilon)^2 $.
\end{restatable}

\begin{restatable}{remark}{ExtendedInterval}\label{remark:ExtendedInterval}
The interval for $r$ as given in \eqref{EQ:InvervalrSetPosReach} can be slightly extended to 
  \begin{equation}
    \label{equation:intervalForPositiveReach}
    r \in  \left  [ \frac{1}{2} \left(\reach+\varepsilon  -  \sqrt{\Delta}\right),  
      \sqrt{\frac{1}{2} (\reach-\delta)^2 + \frac{1}{2} (\reach+\varepsilon) \sqrt{\Delta}}  \right  ],
  \end{equation} as we show in an alternative proof of Proposition \ref{theorem:HomotopyNoiselessPositiveReach} in Appendix~\ref{app:Alternative_proof}. It is not obvious that even this improved bound is tight.
\end{restatable}

If the set is a manifold, the bounds on $\varepsilon$ and $\delta$ can be improved as follows:
\begin{restatable}{proposition}{DeformRetractsTheoremForManifolds}
\label{theorem:DeformRetractsTheoremForManifolds}
	If $\varepsilon$ and $  \delta$ satisfy
\begin{equation}\label{equation:BoundOnEpsilon2}
(\reach - \delta)^2 - \varepsilon^2   \geq   \left(4\sqrt{2}  - 5\right) \reach^2
\end{equation}
and $\delta \leq \varepsilon$, there exists a radius $r>0$ such that the union of balls {$P^{\boxplus r} $} deformation-retracts onto $\M$ along the closest point projection. 
The radius $r$ can be chosen as in \eqref{Bounds_r_For_Manifold_Case}. 
\end{restatable}

\begin{figure}[h!]
\begin{subfigure}[t]{0.99\textwidth}
\centering 
\includegraphics[width=0.8\textwidth]{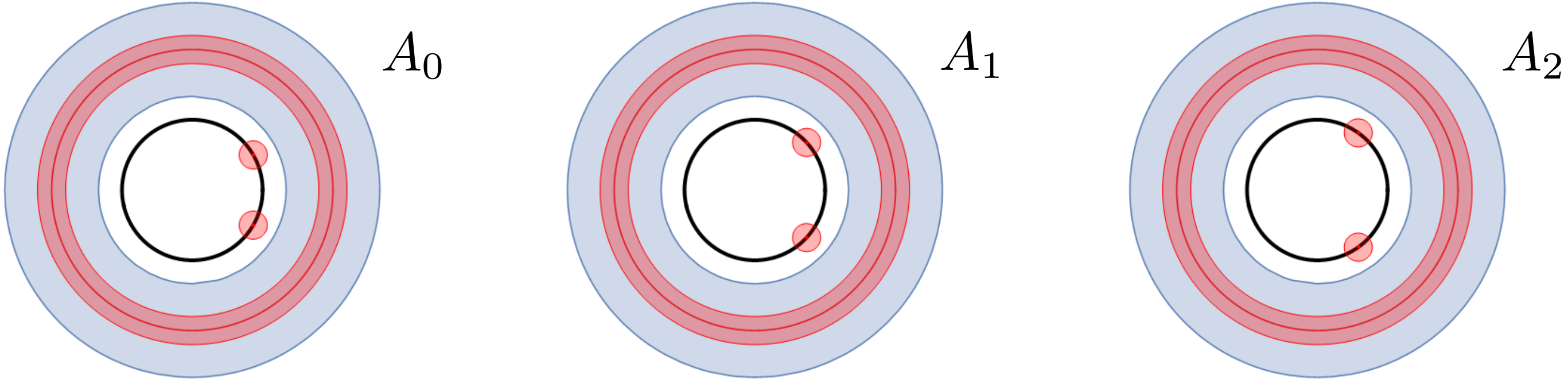}
\subcaption{
{At first, the thickening of the sample has three connected components per annulus. The thickening thus has three times as many connected components as the set $\Su$.}
}
\end{subfigure}
\newline
\begin{subfigure}[t]{0.99\textwidth}
\begin{center}
\includegraphics[width=0.8\textwidth]{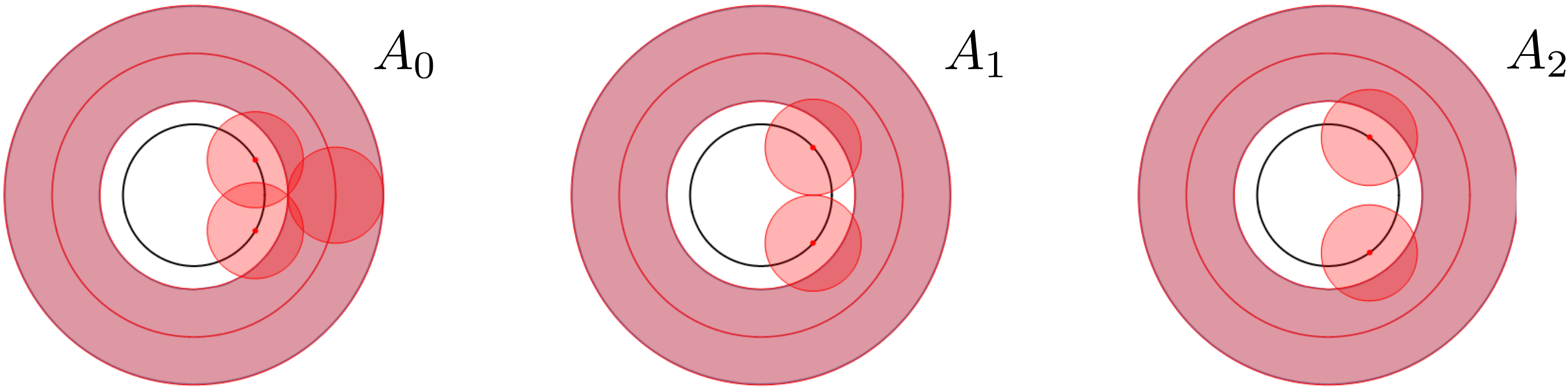}
\end{center}
\subcaption{
{	As the radius of the thickening grows, the connected components merge. However, at all times there exists an additional cycle at one of the annuli (annulus $A_1$ in this case).}
}
\end{subfigure}
\newline
\begin{subfigure}[t]{0.99\textwidth}
\begin{center}
\includegraphics[width=0.8\textwidth]{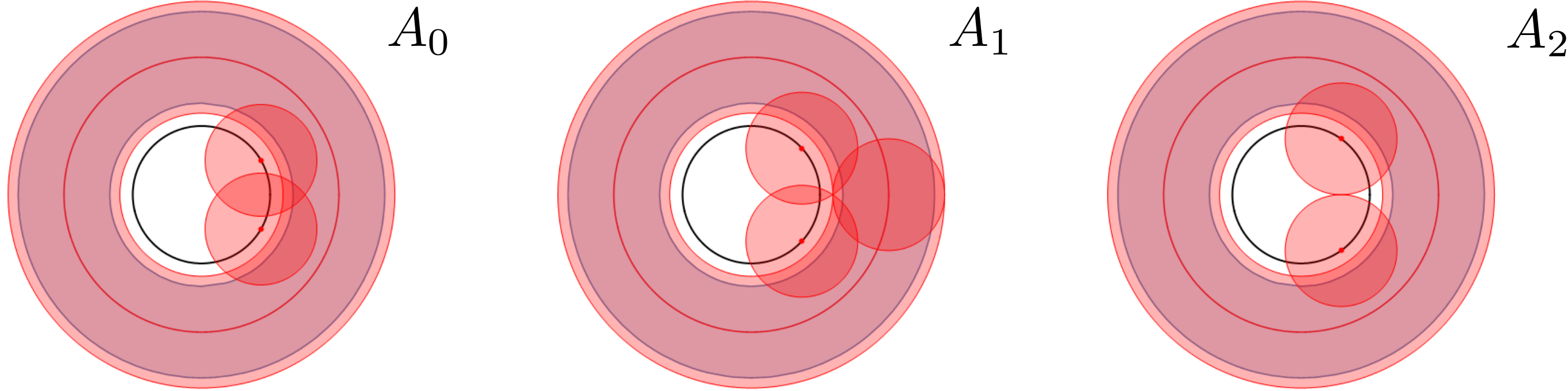}
\end{center} 
\subcaption{
{At the moment when the cycle at annulus $A_1$ vanishes, another cycle is formed at annulus $A_2$.}
}
\end{subfigure}
	\caption{A pictorial explanation of why $P^{\boxplus r}$ {never has} the 
	  homotopy type {of the set $\Su$}. We only depict three annuli in the sequence of $A_i$s.	The set $\Su$ is in blue, the sample $P$ in red, and the thickening {of $P$} in pink.
      {The black circles indicate the location of the two isolated sample points of $P$ associated to each annulus.}
    }
	\label{fig:tight_set_nbhdsTeaser}
\end{figure}

{Both in Propositions \ref{theorem:HomotopyNoiselessPositiveReach} and \ref{theorem:DeformRetractsTheoremForManifolds}, the interval for $r$ tends to $[0,\reach]$ as $\varepsilon$ and $\delta$ tend to zero. }

\subsection{Tightness of the bounds on the sampling parameters}
\label{sec:Euclidean_tightness}

Our sampling criteria for homotopy inference of sets of positive reach are tight in the following sense: 
\begin{restatable}{proposition}{counterexampleSet}
\label{prop:counterexample_set}
Suppose that the dimension $d$ of the ambient space $\R^d$ satisfies $d\geq 2$, and the one-sided Hausdorff distances $\varepsilon$ and $\delta$ fail to satisfy bound~\eqref{equation:BoundOnR0}. Then there exists a set $\Su$ of positive reach and a sample $P$ that satisfy Universal Assumption~\ref{assumption}, while the homology of the union of balls {$P^{\boxplus r}$}  
does not equal the homology of $\Su$ for any $r$.
\end{restatable}


We construct the set $\Su$ and the sample $P$ explicitly {in
  $\R^2$}.  The set $\Su$ consists of {a finite family of annuli
  $A_i$, the first three of which are depicted in Figure
  \ref{fig:tight_set_nbhdsTeaser}}. The sample $P$ is the union of a
circle and two points for every annulus.
In Figure \ref{fig:tight_set_nbhdsTeaser}, we illustrate that the thickening of the sample never captures the homotopy type of the set $\Su$.
The details of the construction and the proof 
{of Proposition \ref{prop:counterexample_set} are 
{provided} in Section \ref{sec:TightnessSetsOfPosReach}.

\begin{figure}[!ht]
  	\centering
\begingroup%
  \makeatletter%
  \providecommand\color[2][]{%
    \errmessage{(Inkscape) Color is used for the text in Inkscape, but the package 'color.sty' is not loaded}%
    \renewcommand\color[2][]{}%
  }%
  \providecommand\transparent[1]{%
    \errmessage{(Inkscape) Transparency is used (non-zero) for the text in Inkscape, but the package 'transparent.sty' is not loaded}%
    \renewcommand\transparent[1]{}%
  }%
  \providecommand\rotatebox[2]{#2}%
  \newcommand*\fsize{\dimexpr\f@size pt\relax}%
  \newcommand*\lineheight[1]{\fontsize{\fsize}{#1\fsize}\selectfont}%
  \ifx\svgwidth\undefined%
    \setlength{\unitlength}{261.98081533bp}%
    \ifx\svgscale\undefined%
      \relax%
    \else%
      \setlength{\unitlength}{\unitlength * \real{\svgscale}}%
    \fi%
  \else%
    \setlength{\unitlength}{\svgwidth}%
  \fi%
  \global\let\svgwidth\undefined%
  \global\let\svgscale\undefined%
  \makeatother%
  \begin{picture}(1,0.41384019)%
    \lineheight{1}%
    \setlength\tabcolsep{0pt}%
    \put(0,0){\includegraphics[width=\unitlength,page=1]{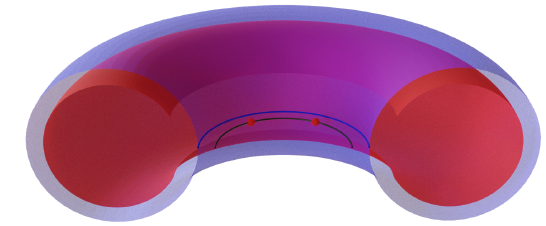}}%
    \put(-0.00260001,0.12808967){\color[rgb]{0.21568627,0,1}\makebox(0,0)[lt]{\lineheight{1.25}\smash{\begin{tabular}[t]{l}$T_i$\end{tabular}}}}%
    \put(0.1069469,0.12808967){\color[rgb]{1,0,0}\makebox(0,0)[lt]{\lineheight{1.25}\smash{\begin{tabular}[t]{l}$C_i$\end{tabular}}}}%
    \put(0.44699666,0.15100328){\color[rgb]{1,0,0}\makebox(0,0)[lt]{\lineheight{1.25}\smash{\begin{tabular}[t]{l}$p_i$\end{tabular}}}}%
    \put(0.56437818,0.14982875){\color[rgb]{1,0,0}\makebox(0,0)[lt]{\lineheight{1.25}\smash{\begin{tabular}[t]{l}$\tilde{p}_i$\end{tabular}}}}%
    \put(0.38436367,0.10229101){\makebox(0,0)[lt]{\lineheight{1.25}\smash{\begin{tabular}[t]{l}$C_i'$\end{tabular}}}}%
  \end{picture}%
\endgroup%

	\caption{
		The (half of the) torus $T_i$ depicted in blue; the sample --- the set $C_i$ and the points $p_i$ and $\tilde{p}_i$ --- in red.
		In black we indicate the circle $C'_{i}$ {on which the points $p_i$ and $\tilde{p}_i$ lie}. The closest point projection of this circle onto $\M$ is indicated in blue. 
	}
	\label{fig:TightManifold1NTeaser}
\end{figure}


\begin{restatable}{proposition}{counterexampleMfld}
\label{prop:counterexample_mfld}
Suppose that the dimension $d$ of the ambient space $\R^d$ satisfies $d\geq 3$, the one-sided Hausdorff distances $\varepsilon$ and $\delta$ fail to satisfy bound~\eqref{equation:BoundOnEpsilon2}, and $\delta \leq \varepsilon$. Then there exists a manifold $\M$ of positive reach and a sample $P$ that satisfy Universal Assumption~\ref{assumption}, while the homology of the union of balls $P^{\boxplus r}  $ 
does not equal the homology of $\M$ for any~$r$.
\end{restatable}

We again construct the manifold $\M$ and the sample $P$ explicitly, {this time in $\R^3$}. The manifold $\M$ is {the union of a finite family of} tori {$T_i$}.
{The sample $P$ consists of one set $C_i$ and one pair of points $\{p_i,\tilde{p}_i\}$ for each torus $T_i$. The set $C_i$ is constructed by taking a copy of $T_i$, decreasing the minor radius and cutting out a part close to the axis of revolution.}
{We illustrate the manifold {$\M = \bigcup_{i} T_i$} together with the sample {$P = \bigcup_{i} C_i \cup \{p_i,\tilde{p}_i\}$} in Figure~\ref{fig:TightManifold1NTeaser}, {and sketch why the underlying homology is not captured in Figure \ref{Fig:CycleManifold}.} The proof of Proposition~\ref{prop:counterexample_mfld} as well as details on the construction are provided in Section~\ref{sec:optimality_mflds}.}

A video animating our construction has been submitted to the Media Exposition at Computational Geometry Week 2024 \cite{NSWVideoSoCG2024}.

\begin{figure}[h!]
  \begin{subfigure}[t]{0.99\textwidth}
    \def\svgwidth{0.99\linewidth}
    \centering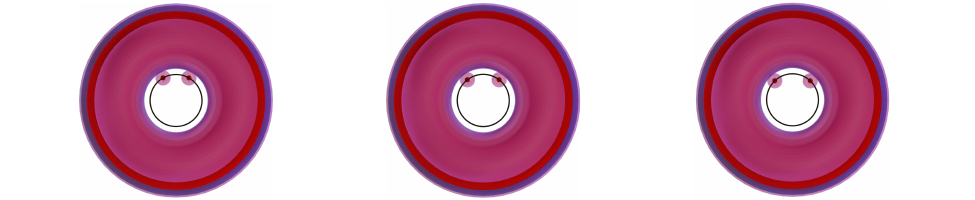
\subcaption{At first, the balls around the points $p_i$ and $\tilde{p}_i$ do not intersect the thickening of the set $C_i$, and thus the number of connected components of the thickening (in pink) of $P$ is different from the number of components of the manifold. }
\end{subfigure}
\newline
\begin{subfigure}[t]{0.99\textwidth}
  \def\svgwidth{0.99\linewidth}
    \centering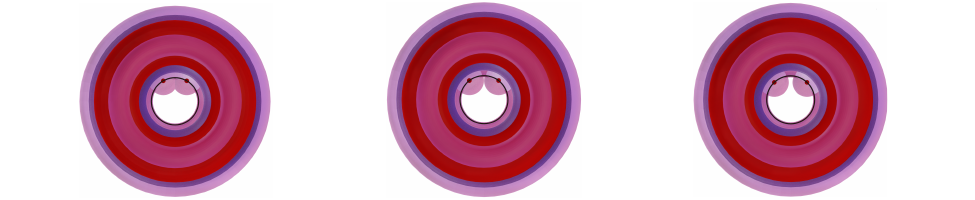
\subcaption{Then we create a (or possibly multiple) spurious cycle(s) for the first torus in the sequence (on the left).  }
\end{subfigure}
\newline
\begin{subfigure}[t]{0.99\textwidth}
  \def\svgwidth{0.99\linewidth}
  \centering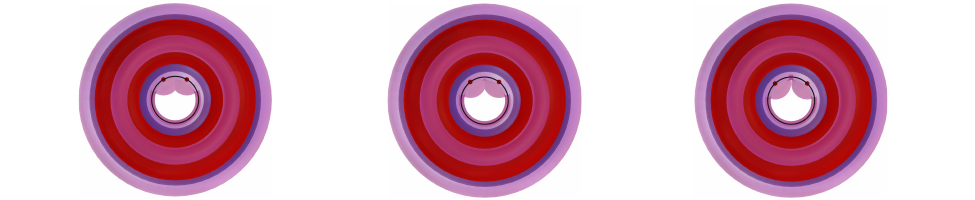
\subcaption{{By the time the spurious cycles at the first torus have disappeared, others have been created at the second torus.} This process is then repeated for all tori in the sequence as $r$ increases.} 
\end{subfigure}
\caption{The construction for manifolds imitates the construction for general sets of positive reach as much as possible. The manifold $\M$ is depicted in blue, the sample $P$ in red, and the thickening in pink. {We only display the part of objects below a horizontal clipping plane.}
}
\label{Fig:CycleManifold}
\end{figure}  

\begin{remark}
For simplicity, the sets constructed, see Figures \ref{Fig:CycleManifold} and \ref{fig:tight_set_nbhdsTeaser} (or Examples \ref{example:set} and \ref{example:mfld} in the appendix for details), are not connected. However, in each construction one can glue the connected components together in a way that preserves the reach, and the resulting examples still yield Propositions 
\ref{prop:counterexample_set} and \ref{prop:counterexample_mfld}. See Figure \ref{fig:OneConnectedComponent} for a sketch of the modification needed.
\end{remark}

\begin{remark}
	Propositions \ref{prop:counterexample_set} and \ref{prop:counterexample_mfld} show that the bounds~\eqref{equation:BoundOnR0} and~\eqref{equation:BoundOnEpsilon2} are tight in (ambient) dimensions $d\geq 2$, resp. $d\geq 3$. We did not construct similar examples in lower dimensions. Nevertheless, our intuition is that, in these cases, the bounds~\eqref{equation:BoundOnR0} and~\eqref{equation:BoundOnEpsilon2} can be improved further.
\end{remark}

	\begin{figure}[!h]
		\begin{center}
	\includegraphics[width=0.35\textwidth]{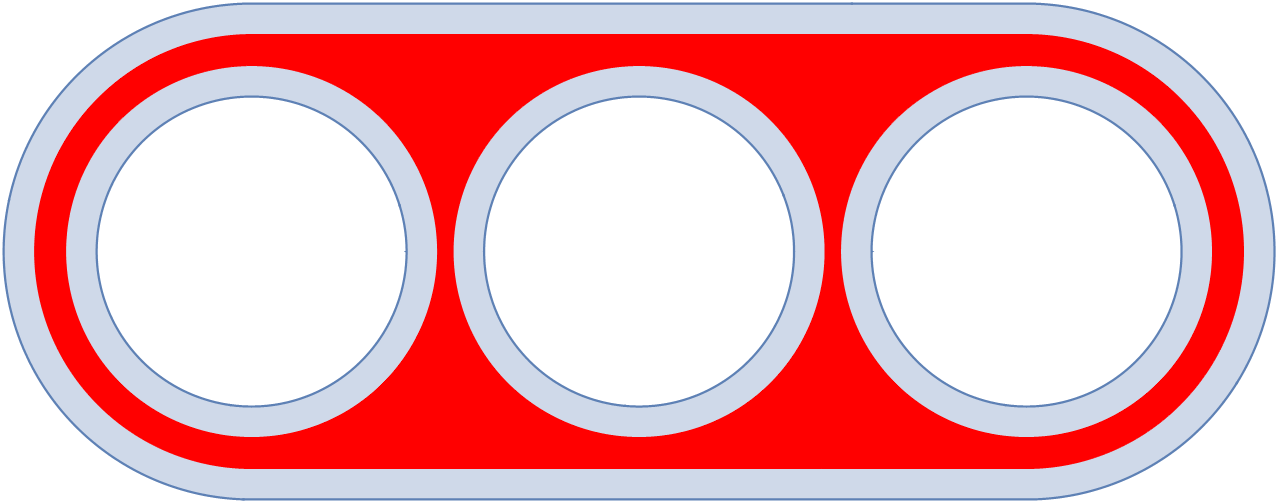}
	$\phantom{44444}$		
	\includegraphics[width=0.49\textwidth]{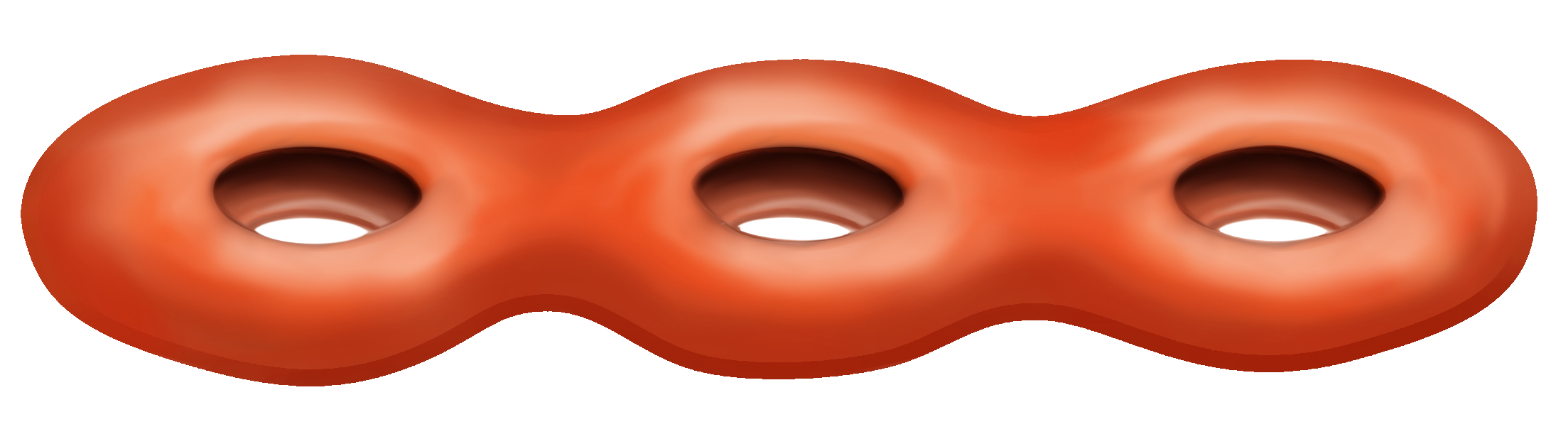}
		\end{center}
		\caption{The connected variants of our sets $\Su$ and $\M$ are a topological disc with $k$ holes and a genus $k$ surface. On the left we sketch both the sample and the set of positive reach, on the right we only give the sample for the manifold setting because of visualization constraints.  
		}
		\label{fig:OneConnectedComponent}
	\end{figure}

\section{Results for subsets of Riemannian manifolds}
\label{sec:OverviewBoundsRiemannianCase}
\subsection{Setting}

In the second part of this paper we consider subsets of a ($C^2$)
Riemannian manifold $\N$.  {In this Riemannian} setting we denote
(geodesic) balls with radius $r>0$ centred at a point $p\in \N$ by
$B(p,r)$, and write $A^{\boxplus r} = \bigcup_{a \in A} B(a,r)$ for
the union of (geodesic) balls of radius $r$ centred at a subset
$A\subseteq \N$. {Similarly, the one-sided Hausdorff distance from $X
  \subseteq \N$ to $Y \subseteq \N$ is defined as the smallest $\rho$
  such that the union of (geodesic) balls of radius $\rho$ centered at
  $X$ covers $Y$.}

To be able to proceed as in the Euclidean setting and state tight bounds on the sampling parameters, we need a notion of the reach in the Riemannian setting. To this end, we introduce a new definition, inspired by the cut locus (which is defined for example in \cite{berger2003panoramic}):
\begin{restatable}[Cut locus]{definition}{SetCutLocus}
\label{definition:SetCutLocus}
Given a closed subset $\Su \subseteq \N$, the {\em cut locus} of $\Su$ is the set $\operatorname{cl}_{\N}(\Su)$
of points $p \in \N$ for which there are at least $2$ geodesics of minimal length from $p$ to some point in $\Su$.
\end{restatable}

\begin{restatable}[Cut locus reach]{definition}{ReachCutLocus}
\label{definition:ReachCutLocus}
The {\em cut locus reach} $\rchcl_{\N}(\Su)$ of a closed set $\Su \subseteq \N$ is the infimum of distances between $\Su$ and its cut locus $\operatorname{cl}_{\N}(\Su)$:
\[
\rchcl_{\N}(\Su) \defunder{=} \inf_{\substack{ p \in \Su, \\ q  \in \operatorname{cl}_{\N}(\Su), }} d_\N(p,q).
\]
\end{restatable}

Our definition is a refinement of the notion 
used by Bangert and Kleinjohann~\cite{bangert1982sets, kleinjohann1980convexity, kleinjohann1981nachste}, as well as the reach defined in \cite{ReachSubmanifolds}. We explain why our new definition is appropriate for the learning of topological features in Appendix~\ref{sec:reach_history}.
Using the new extension of the reach we assume the following conditions, which resemble the ones in the Euclidean setting closely: 
	\begin{tcolorbox} 
\begin{restatable}{assumption2}{assumptionRiemannianSetting}
\label{assumption:RiemannianSetting}
We work with a closed set $\Su \subseteq
	\N$ with positive cut locus reach $\rchcl_{\N} (\Su)$, and let $\reach>0$ be a constant
	satisfying $\reach \leq \rchcl_{\N} (\Su)$. Furthermore, we consider a
	set $P\subseteq \N$, such that the one-sided Hausdorff distance 
	from $P$ to $\Su$ is at most
	$\delta$, and the one-sided Hausdorff distance 
	from $\Su$ to $P$ is at most
	$\varepsilon$. That is,
	$\Su \subseteq P^{\boxplus \varepsilon}$  
	and 
	$P \subseteq \Su^{\boxplus \delta}$.  
	We assume that $\delta, \varepsilon <\reach$.
    
	We also assume that the sectional curvatures of the manifold $\N$ are lower bounded {by a constant $\curvlowbnd \in \mathbb{R}$.} 
	When $\curvlowbnd>0$ and $\Su= \M$ is a manifold, 
        we can safely assume, thanks to Lemma~\ref{lemma:ManifoldHaveReachLessThanPiOverTwo}, that $\reach \leq \frac{\pi}{2 \sqrt{\curvlowbnd}}$. 
\end{restatable}
\end{tcolorbox} 
{This assumption is used in Section~\ref{sec:OverviewBoundsRiemannianCase} and Appendix~\ref{sec:Riemannian_setting}.}

\subsection{Bounds on the sampling parameters}
Also in the Riemannian setting we 
provide (tight) bounds that the sample $P$ needs to satisfy in order to be able to infer homotopy. For sets of positive (cut locus) reach, we obtain the following bounds on $\varepsilon$ and $\delta$:  
\begin{restatable}{proposition}{HomotopyPositiveReachRiemannian}
\label{theorem:HomotopyPositiveReachRiemannian}
If $\varepsilon$ and $ \delta$ satisfy
\begin{align}
2 \cos \left(\sqrt{\curvlowbnd}  ( \reach - \delta)\right) - \cos  \left(\sqrt{\curvlowbnd}  ( \reach + \varepsilon)\right)  & \leq  1 && \text{if $\curvlowbnd> 0$},
\nonumber \\
{\sqrt{2} (\reach - \delta)  - (\reach + \varepsilon) } &\leq 0 && \text{if $\curvlowbnd= 0$},
\label{equation:BoundOnR0Riemannian} \\
2 \cosh \left(\sqrt{\curvlowbndn}  ( \reach - \delta)\right) - \cosh  \left(\sqrt{\curvlowbndn}  ( \reach + \varepsilon)\right)   &\geq  1 && \text{if $\curvlowbnd< 0$}, 
\nonumber
\end{align}
there exists a radius $r>0$ such that the union of balls $P^{\boxplus r}$ 
deformation-retracts onto $\Su$ along the closest point projection. In particular, $r$ can be chosen as:
\begin{equation} 
 r = \frac{1}{2} \left(\reach+\varepsilon  \right).
\label{EQ:InvervalrSetPosReach_Riemann} 
\end{equation}
\end{restatable}

If the set is a manifold, the bounds on $\varepsilon$ and $\delta$ can be improved as follows:
\begin{restatable}{proposition}{DeformRetractsTheoremForManifoldsRiemann}
  \label{theorem:DeformRetractsTheoremForManifolds_Riemann}
  Let $\tilde x = \sqrt{\curvlowbndn} x$. For $\delta \leq \varepsilon$ satisfying
  \begin{align}
    & \left(2\cos \tilde{\varepsilon}\cos \tilde{\reach}-3\cos \left(\tilde{\reach}-\tilde{\delta} \right)\right)^2\leq  \left( \frac{\cos \tilde{\varepsilon}  - \cos \left(\tilde{\reach}-\tilde{\delta}\right) \cos \tilde{\reach}}{\sin \tilde{\reach}} \right)^2+ \cos^2 \left(\tilde{\reach}-\tilde{\delta}\right)    
		\nonumber 
	 \\
     & \omit\hfill \textrm{\emph{if $\curvlowbnd > 0$,}\hspace{7mm}}  \label{equation:BoundOnEpsilon2_Riemann}
    \\
    & (\reach - \delta)^2 - \varepsilon^2   \geq   \left(4\sqrt{2}  - 5\right) \reach^2 \nonumber \\
     & \omit\hfill \textrm{\emph{if $\curvlowbnd = 0$,}\hspace{7mm}} 
		\tag{\ref{equation:BoundOnEpsilon2}}
		\\
    & 2 \cosh \tilde{\varepsilon}\cosh \tilde{\reach}\leq 3\cosh \left(\tilde{\reach}-\tilde{\delta}\right) \qquad\text{and}\qquad\nonumber\\
    & \cosh^2 \left(\tilde{\reach}-\tilde{\delta}\right)\leq\left(\frac{\cosh \tilde{\varepsilon}  - \cosh \left(\tilde{\reach}-\tilde{\delta}\right) \cosh \tilde{\reach}}{\sinh \tilde{\reach}}\right)^2+ \left(2\cosh \tilde{\varepsilon}\cosh \tilde{\reach}-3\cosh \left(\tilde{\reach}-\tilde{\delta} \right)\right)^2
	\nonumber
   \\  & \omit\hfill \textrm{\emph{if $\curvlowbnd < 0$,}\hspace{7mm}}
	\label{equation:BoundOnEpsilon2_RiemannHyper}
  \end{align}
  there exists a radius
  $r>0$ such that $P^{\boxplus r}$ deformation-retracts onto $\M$
  along the (geodesic) closest point projection $\pi_\M$.
The interval from which $r$ can be chosen can be recovered from \eqref{eq:OtherFprmForF}, \eqref{Bounds_r_For_Manifold_Case}, and \eqref{eq:OtherFprmForG} respectively. 
\end{restatable}

{ 
The computation of \v{C}ech complexes in a Riemannian manifold can be difficult (depending on the manifold). Fortunately, we can avoid this step and still recover the homology: 
\begin{remark}
The results of Chazal and co-authors~\cite{chazal2008towards} on the interleaving between the \v{C}ech and Rips complexes extend to the Riemannian setting. By combining their results with the results of this paper, one can recover the homology type of a subset of positive reach of a Riemannian manifold using persistent homology of Rips complexes. 
\end{remark} 
The Rips complex is easier to calculate than the \v{C}ech complex, since the calculation only involves distances between pairs of points. 
}

\subsection{Tightness of the bounds on the sampling parameters}
We exhibit the tightness of the bounds on $\varepsilon$ and $\delta$ from Propositions \ref{theorem:HomotopyPositiveReachRiemannian} and \ref{theorem:DeformRetractsTheoremForManifolds_Riemann} by constructions of examples in (simply connected) spaces of constant curvature. 
These constructions are similar to the Euclidean setting --- they also consist of annuli and tori, see Figure \ref{fig:sequenceOfAnnuliOnTheSphereTeaser}. However, due to the curvature of the ambient manifold, the proof of the tightness of the bounds is significantly more involved (see Appendix~\ref{sec:Riemannian_tightness}). 

\begin{figure}[!h]
		\begin{center}
		  \includegraphics[width=0.5\textwidth]{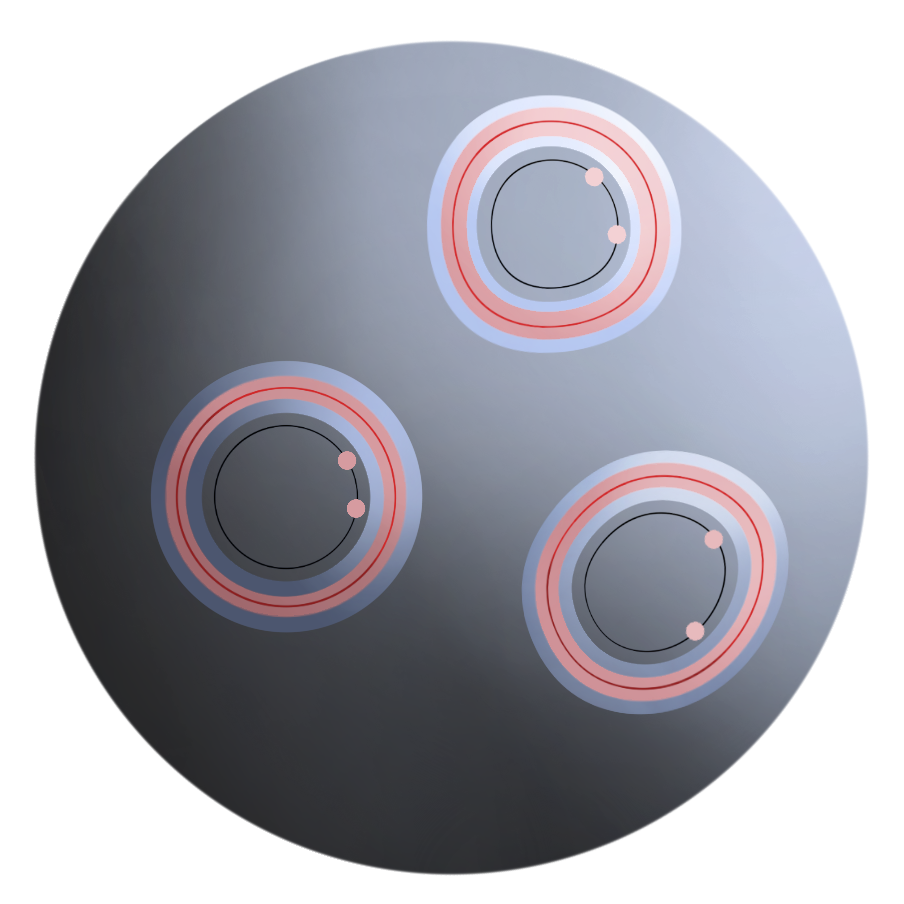} 
		\end{center}
		\caption{
	  The construction for sets of positive reach on a manifold with (constant) positive curvature (the sphere). For a detailed version of the figure see Figure \ref{fig:sequenceOfAnnuliOnTheSphere}.
		}
		\label{fig:sequenceOfAnnuliOnTheSphereTeaser}
	\end{figure}

\section{Future work} 
This article leaves several important questions unanswered. We mention three.

First of all, we consider the union of balls {centered on a sample $P$} 
 whose homotopy type is equal to that of the {\v{C}ech} complex {of $P$}
and, when the ambient space is a Riemannian manifold, the radius of balls 
is smaller than the convexity radius. 
} 

It would be interesting to see if our work would help understanding the same question for Rips complexes. For related work see e.g. \cite{adamaszek2017vietoris,HenryRips1, hausmann1995vietoris, latschev2001vietoris}. 

Second, we consider sets embedded in Riemannian manifolds whose sectional curvature is lower bounded. 
A natural question is under which conditions do our results generalize to a larger class of metric spaces with lower bounded curvatures. 

{
The generalized gradient of the distance function and its flow have been used to generalize results on subsets of positive reach in Euclidean space to subsets with positive $\mu$-reach and weak feature size \cite{chazal2009sampling, cl2005lambda, chazal2005weak, chazal2008towards}. Our work on the cut locus reach makes it possible to extend the notations of positive $\mu$-reach and  weak feature size to Riemannian manifolds. It is expected that many of the main results from the Euclidean setting still hold with minor modifications in this more general context. 
}

\phantomsection
\addcontentsline{toc}{section}{Bibliography}
\bibliography{geomrefs}

\appendix
\section*{Appendix I: The technical statements and proofs}

{The two sections in this part of our paper are structured in the same way. In the first section (Section \ref{sec:Euclidean_setting}), we deal with subsets of Euclidean space, in the second (Section \ref{sec:Riemannian_setting}) with subsets of Riemannian manifolds. In each section, we first introduce necessary definitions and recall the general setting (Sections \ref{sec:Euclidean_definitions} and \ref{sec:Riemannian_definitions}). In Sections \ref{sec:Euclidean_geometric_argument} and \ref{sec:Riemannian_geometric_argument}
we consider a set $\Su$, its sample $P$, and use a geometric argument to establish a condition on the thickening parameter $r>0$ that guarantees that the thickening $P^{\boxplus r}$ of the sample $P$ deformation-retracts to the set $\Su$. In the following sections (Sections \ref{sec:bounds} and \ref{sec:boundsRiemannian}) we show that if the sampling parameters $\varepsilon$ and $\delta$ of the sample satisfy certain bounds, the condition on the thickening parameter is never satisfied. We carefully distinguish between subsets (Sections~\ref{sec:bounds_Euclidean_sets} and~\ref{sec:bounds_Riemannian_sets}) and submanifolds (Sections~\ref{sec:bounds_Euclidean_mflds} and~\ref{sec:bounds_Riemannian_mflds}), for which we obtain sharper bounds. Finally (Sections \ref{section:optimality} and \ref{sec:Riemannian_tightness}), we construct explicit counterexamples to prove that our bounds on the sampling parameters are tight. 
}

\section{Subsets of the Euclidean space}
\label{sec:Euclidean_setting}
\subsection{Definitions and setting}
\label{sec:Euclidean_definitions}


In this section we revise the notions and results by Federer~\cite{Federer}. We assume that $\Su \subset \mathbb{R}^d$ is a closed set, and denote the closest point projection on $\Su$ by $\pi_\Su$.



At first, we define the medial axis, the local feature size, and the reach of the set $\Su$:
\begin{definition}\label{def:medial_axis}
	The medial axis $\ax(\Su)$ of a closed set $\Su$ is the set of points in the ambient Euclidean space that do not have a unique closest point on $\Su$. The distance from a point $p$ to the medial axis is called the local feature size $\lfs(p)$. Finally, the (minimal) distance between $\ax(\Su)$ and $\Su$ is the reach $\rch(\Su)$ of $\Su$:
	\[ \lfs(p) = \inf_{q\in \ax(\Su)} \norm{p-q}, \qquad  \rch(\Su) = \inf_{p\in\Su}\lfs(p).\]
\end{definition} 
For example, the medial axis of an ellipse in the Euclidean plane is the (open) segment connecting the two focal points, and the reach is the distance from (one of) the focal point(s) to the ellipse.

Next, we introduce the normal cone. We denote the scalar product in $\R^d$ by $\langle .,. \rangle$.
\begin{definition}[Definitions 4.3 and 4.4 of \cite{Federer}] \label{def:4.3and4.4Fed} 
	If $\Su \subseteq \R^d$ and $p\in \Su$, then the generalized
	tangent space $\Tan(p,\Su)$
	is the set of all tangent vectors of $\Su$ at $p$. 
	It consists of all those $u \in \R^d$, such that either $u=0$ or for every $\varepsilon>0$ there exists a point $q \in  \Su$ with
	\begin{align}
		0<&\|q-p\|<\varepsilon &\textrm{and} & &\left\| \frac{q-p}{\|q-p\|}- \frac{u}{\|u\|} \right\| < \varepsilon.
		\nonumber 
	\end{align}
	The normal cone of $\Su$ at $p$ is the set 
	\begin{align*}
		\Nor(p,\Su)
	\end{align*}
	of all vectors $v \in \R^d$ such that  $\langle v,u \rangle \leq 0$ for all
		$u \in \Tan(p,\Su)$.
\end{definition}

We illustrate the medial axis and a few normal cones in Figure~\ref{fig:set_covering} (left). The normal cone is indeed a cone, geometrically speaking: 


\begin{Defremark2}[{\cite[Remark 4.5]{Federer}}] 
	A subset $C \subseteq \R^d$ is a \emph{convex cone} if and only if for all $x,y \in C$ and $\lambda>0$ we have $x+y \in C$ and $\lambda x \in C$. For every set $A \subseteq \R^d$, its dual
	\[ 
	\Dual (A) = \{ v \mid   \langle v,u \rangle \leq 0 \textrm{ for all } u \in A \}, 
	\]   
	is a closed convex cone. The double dual, $\Dual (\Dual (A))$, is the smallest closed convex cone that contains the set $A$.  
	The~set $\Nor(p,\Su)$ 
	is therefore a {convex} cone. 
	 
	The~generalized tangent space $\Tan (p , \Su)$, on the other hand, is only closed and positively homogeneous, but not necessarily convex. That is, if $v \in \Tan (p , \Su)$, $ \lambda v \in \Tan (p , \Su)$ for all $\lambda \in \R_{\geq 0}$. The~space $\Tan (p , \Su)$ is a convex cone if the set $\Su$ has positive reach, as we will see below. 
\end{Defremark2}

With these definitions in place we present the following two theorems, that form the core of the proof of our statement on deformation retraction of the set $\Su$ (Theorem~\ref{theorem:geometric_argument}).
\begin{theorem}[{\cite[Theorem 4.8 (8)]{Federer}}]
	\label{Fed4.8.8} 
	Let $\ell$ and $\reach$ satisfy $0 < \ell < \reach<\infty$ and $\rch(\Su) \geq \reach$. Then any points $x,y \in \R^d \backslash \ax (\Su)$ with
\[d(x,\Su) \leq \ell \qquad\text{and}\qquad  d(y,\Su) \leq \ell\]
	satisfy  
	\begin{align} 
		\| \pi_\Su (x) - \pi_\Su (y) \| \leq \frac{\reach}{\reach - \ell}  \| x-y \| . 
		\nonumber
	\end{align} 
\end{theorem} 


\begin{theorem}[{\cite[Theorem 4.8 (12)]{Federer}}] 
	\label{Fed4.8.12}
Let $p \in \Su$. Then for any number $\ell$ satisfying $\lfs (p)> \ell > 0 $, the normal cone equals
\begin{align}
	\Nor ( p,\Su) = \{ \lambda v \mid \lambda\geq 0 , \| v \| = \ell ,  \pi_{\Su} (p+v) =p\} .
	\nonumber
\end{align}
$\Tan(p,\Su)$ is the convex cone dual to $\Nor(p,\Su)$, and, for any vector $u\in {\Tan(p,\Su)}$,
\begin{align}
	\lim_{t \to 0^+} t^{-1} d(p +t u, \Su) =0.
	\nonumber
\end{align}

\end{theorem} 

Finally, we recall the setting we assume for the remainder of Section~\ref{sec:Euclidean_setting}:

	\begin{tcolorbox} 
\assumptionEuclideanSetting*
\end{tcolorbox}

\subsection{Bounds on the sampling parameters} 
\label{sec:bounds}

In this section we first compute the bounds on the size $\alpha$ of the neighbourhood $\Su^{\boxplus \alpha}$ 
covered by the union of balls $\bigcup_{p \in P } B(p,r) =  P^{\boxplus r}$ 
in terms of $\varepsilon, \delta$, and $r$. 
We then combine these bounds with Equation \eqref{equation:R2TooSmallToCNormalLineS} to infer (optimal) upper bounds on $\varepsilon$ and $\delta$, for which there exists a radius $r$ such that the deformation retract from $P^{\boxplus r }$
to $\Su$ is possible.
We do so first for sets of positive reach and then for manifolds. Somewhat counter-intuitively, it turns out to be easier to determine the bounds for sets of positive reach.

\subsubsection{Sets of positive reach}\label{sec:bounds_Euclidean_sets}
 
For sets of positive reach, the bound on $\alpha$ is almost trivial. Nevertheless, it is tight, as we will see in Section \ref{section:optimality}.


\begin{lemma}\label{lem:Bounds_alpha_sets_pos_reach} 
Suppose that $\Su \subseteq P^{\boxplus \varepsilon}$ 
for some $\varepsilon \geq 0$. 
Then, for all $\alpha \leq r - \varepsilon$, the $\alpha$-neigbourhood $\Su^{\boxplus \alpha} $ 
of $\Su$ is contained in the union of balls $P^{\boxplus r}$. 
		That is,
    \[
    \Su^{\boxplus \alpha} 
		\subseteq P^{\boxplus r}. 
    \]
  \end{lemma}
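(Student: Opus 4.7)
The plan is to prove this by a direct triangle inequality argument, exploiting only the definition of the one-sided Hausdorff distance and the monotonicity of Minkowski sums. No geometric subtlety about the reach or the normal cone is needed here — the lemma is purely metric.

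First, I would take an arbitrary point $y \in \Su \oplus B(\alpha)$ and unpack what this means: there exists a point $s \in \Su$ with $\|y - s\| \leq \alpha$. Then, using the hypothesis $\Su \subseteq P \oplus B(\varepsilon)$, I would produce a sample point $p \in P$ with $\|s - p\| \leq \varepsilon$. The triangle inequality gives
\[
\|y - p\| \leq \|y - s\| + \|s - p\| \leq \alpha + \varepsilon,
\]
so the assumption $\alpha \leq r - \varepsilon$ yields $\|y - p\| \leq r$, i.e. $y \in B(p,r) \subseteq P \oplus B(r)$.

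Equivalently, and more concisely, the argument can be packaged using the associativity of Minkowski sums: taking Minkowski sums with $B(\alpha)$ preserves inclusions, so
\[
\Su \oplus B(\alpha) \;\subseteq\; (P \oplus B(\varepsilon)) \oplus B(\alpha) \;=\; P \oplus B(\varepsilon + \alpha) \;\subseteq\; P \oplus B(r),
\]
where the last inclusion uses $\varepsilon + \alpha \leq r$.

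There is no real obstacle here; the only thing worth flagging is that the bound is tight — any $y$ sitting exactly at distance $\alpha$ from $\Su$ through a point $s$ that is itself at distance $\varepsilon$ from $P$ saturates the triangle inequality, which foreshadows the optimality discussion in Section~\ref{section:optimality}.
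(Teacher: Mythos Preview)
Your proof is correct and essentially identical to the paper's: the paper gives the one-line Minkowski sum chain $\Su \oplus B(\alpha) \subseteq (P \oplus B(\varepsilon))\oplus B(\alpha) \subseteq P\oplus B(\varepsilon+\alpha) \subseteq P \oplus B(r)$, which is exactly your second formulation. Your first formulation simply unpacks this at the level of points via the triangle inequality.
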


\begin{proof} 
{ 
	Writing out the definition we see that the $\boxplus$ operation is additive. For any set $A\subseteq\R^d$: 
	\begin{align}
		(A^{\boxplus r_1})^{\boxplus r_2} &= \bigcup_{a' \in  A^{\boxplus r_1} } B(a', r_2)
		\nonumber
		\\
		& = \bigcup_{a' \in  \bigcup_{a \in A} B(a ,r_1 ) } B(a', r_2)
		\nonumber
		\\
		&\subseteq \bigcup_ {a \in A} B(a , r_1+r_2) \tag{by the triangle inequality}
		\\
		&= A^{\boxplus (r_1+r_2)}.
		\label{eqq:additivityBoxPlus}
	\end{align}
	So indeed,
	\begin{align}
		\Su^{\boxplus \alpha} &\subseteq (P ^{\boxplus \varepsilon })^{\boxplus \alpha} 
		\tag{because $\Su  \subseteq P ^{\boxplus \varepsilon }$}
		\\ 
		&\subseteq P^{\boxplus (\varepsilon+\alpha)} \tag{by \eqref{eqq:additivityBoxPlus}} 
		\\
		&\subseteq P^{\boxplus r} .
		\tag{because by assumption $\alpha \leq r - \varepsilon$}
	\end{align} 
}
\end{proof}

\begin{remark}\label{remark:OffsetISAddiiveInGeodesicSpaces}
The statement of Lemma \ref{lem:Bounds_alpha_sets_pos_reach} holds in any metric space. Writing $B(a, r)$ for a metric ball with radius $r$ centred at a point $a$, and $A^{\boxplus r} = \bigcup_{a \in A} B(a, r)$ for the thickening of a set $A$ in the metric space, we see from the proof of Lemma \ref{lem:Bounds_alpha_sets_pos_reach} that
\[  (A^{\boxplus r_1})^{\boxplus r_2} \subseteq A^{\boxplus (r_1+r_2)},\]	
 with equality if the metric space is geodesic.
\end{remark}
	
From Lemma~\ref{lem:Bounds_alpha_sets_pos_reach}, we derive the bounds on $\varepsilon$ and $\delta$ (in terms of $\reach$).

\HomotopyNoiselessPositiveReach*

\begin{proof} 
We combine the bound from Lemma \ref{lem:Bounds_alpha_sets_pos_reach} with the conditions of Theorem \ref{theorem:geometric_argument}. More precisely,  
inserting $\alpha= r -\varepsilon $ in Equation \eqref{equation:R2TooSmallToCNormalLineS} yields that
\begin{equation} \label{eq:triangle_ineq}
r^2 +(\reach - r +\varepsilon)^2 \leq  (\reach -  \delta)^2.
\end{equation} 

 Using the abc-formula for quadratic equations, this is equivalent to 
\[ r \in  \left [\frac{1}{2} \left(\reach+\varepsilon  -  \sqrt{\Delta}\right),  
\frac{1}{2} \left(\reach+\varepsilon +  \sqrt{\Delta}\right)  \right ],
\] 
where 
\[ \Delta= 2 \delta ^2+\reach^2-4 \delta  \reach-2 \reach \varepsilon -\varepsilon ^2 = 2(\reach-\delta)^2 - (\reach+\varepsilon)^2 \]  
is the discriminant. 
This interval is non-empty if the discriminant is non-negative, that is, if $\varepsilon + \sqrt{2} \, \delta \leq  (\sqrt{2}  - 1) \reach$.   
\end{proof} 

\ExtendedInterval*

\begin{remark}
The parameter $\delta$ is not necessarily smaller than $\varepsilon$, even if this would be natural in most applications.
\end{remark}

\subsubsection{Manifolds with positive reach}
\label{sec:bounds_Euclidean_mflds}
In this section, we show that the bounds from Proposition \ref{theorem:HomotopyNoiselessPositiveReach} can be improved further if the set of positive reach is a manifold.
In Lemma \ref{lem:Bounds_alpha_sets_pos_reach}, we used the triangle inequality to set~$\alpha = r- \varepsilon$.

If~$\Su$ is a manifold, however,  
the parameter $\alpha$ 
can be increased using more subtle arguments than the triangle inequality:
Manifolds with positive reach are $C^{1,1}$ smooth\footnote{Topologically embedded manifolds with positive reach are $C^{1,1}$ embedded \cite{Federer, 
lytchak2004geometry, lytchak2005almost, rataj2019curvature, StructureRataj}.}, i.e., differentiable with Lipschitz derivative.  
Moreover, Federer's normal cone $\Nor(q,\M)$ (Definition \ref{def:4.3and4.4Fed}) coincides at every point $q \in \M$ with the `classical' normal space $N_q\M$ of an $n$-dimensional submanifold $\M$ of $\R^d$.
In particular, the tangent and normal cones of manifolds of positive reach are $n$- and $(d-n)$-dimensional linear spaces, respectively, that are not only dual, but also orthogonal.

In Lemma \ref{lemma:BoundOnTubularCoverForManifolds}, we establish a lower bound for the parameter $\alpha$ in the case that $\Su=\M$ is a manifold. This bound is tight, as we will see in Section \ref{section:optimality}. 
 
\begin{lemma}\label{lemma:BoundOnTubularCoverForManifolds}
Suppose that $\M \subseteq P^{\boxplus \varepsilon}$ 
for some
$\varepsilon \geq 0$. Then, for any $r\geq\alpha\geq 0$ satisfying
{
		\begin{equation}\label{equation:ConditionTubularCoverForManifold}
		r^2 \geq    \alpha^2 + \frac{\alpha}{\reach} \left( \reach^2 + \varepsilon^2 - (\reach- \delta)^2 \right)   + \varepsilon^2,
		\end{equation}
}
		the $\alpha$-neighbourhood $\M^{\boxplus \alpha}$  
		of $\M$ is contained in the thickening $P^{\boxplus r}$. 
		That is,
		\[
		\M^{\boxplus \alpha } 
		\subseteq P^{\boxplus r}.  
		\]
		\end{lemma}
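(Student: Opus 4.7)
The plan is to show that, for an arbitrary $x \in \M \oplus B(\alpha)$, the point of $P$ guaranteed by the sampling hypothesis to lie near $q := \pi_\M(x)$ is already within distance $r$ of $x$. Writing $x = q + \alpha_x v$ with $v$ a unit normal at $q$ and $\alpha_x := \|x-q\| \leq \alpha$, the inclusion $\M \subseteq P \oplus B(\varepsilon)$ yields $p \in P$ with $\|p-q\| \leq \varepsilon$. A naive triangle inequality only gives $\|x-p\| \leq \alpha + \varepsilon$, which would merely recover Lemma~\ref{lem:Bounds_alpha_sets_pos_reach}; the idea is to sharpen this using the manifold structure, specifically the fact that both $v$ and $-v$ lie in the (linear) normal space $\Nor(q,\M)$.

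Next, I will apply Federer's Theorem 4.8(12) to the two unit normals $v$ and $-v$, obtaining that both open balls $B(q+\reach v,\reach)^\circ$ and $B(q-\reach v,\reach)^\circ$ are disjoint from $\M$. Since $d(p,\M)\le\delta$, neither shrunk open ball $B(q\pm\reach v,\reach-\delta)^\circ$ can contain $p$. Decomposing $p-q = s\, v + w$ with $w\perp v$ and setting $t := \|w\|$, the condition $\|p-q\|\le\varepsilon$ reads $s^2+t^2\le\varepsilon^2$, while the sphere constraint coming from the ball centred at $q-\reach v$ expands to $s^2+t^2+2\reach s\ge \delta^2-2\reach\delta$. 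Eliminating $s^2+t^2$ between the two inequalities yields the key bound
\[
-s \;\leq\; \frac{\varepsilon^2 + 2\reach\delta - \delta^2}{2\reach}.
\]

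A Pythagorean computation in the plane spanned by $v$ and $w$ then gives $\|x-p\|^2 = (\alpha_x - s)^2 + t^2 \leq \alpha_x^2 - 2\alpha_x s + \varepsilon^2$. Inserting the bound on $-s$, using monotonicity in $\alpha_x\ge 0$, and rewriting via the identity $\varepsilon^2 + 2\reach\delta - \delta^2 = \reach^2 + \varepsilon^2 - (\reach-\delta)^2$ produces, at $\alpha_x = \alpha$, exactly the right-hand side of~\eqref{equation:ConditionTubularCoverForManifold}; the hypothesis then forces $\|x-p\|\le r$, hence $x \in P\oplus B(r)$. The main conceptual obstacle is the appeal to the second medial ball $B(q-\reach v,\reach)^\circ$: this is where the manifold assumption is genuinely used, since for a general set of positive reach only one of the two balls on each normal line is guaranteed to avoid the set, and this extra constraint is precisely what drives the quantitative improvement over Lemma~\ref{lem:Bounds_alpha_sets_pos_reach}.
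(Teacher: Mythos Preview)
Your proof is correct and follows the same overall strategy as the paper's: pick $p\in P$ with $\|p-q\|\le\varepsilon$, invoke a medial ball via Federer's Theorem~4.8(12) together with $d(p,\M)\le\delta$, and combine the resulting constraint with $\|p-q\|^2\le\varepsilon^2$ through a Pythagorean computation. The execution differs in one detail: the paper decomposes $p-q=\lambda_T u_T+\lambda_N u_N$ along the \emph{tangent/normal} splitting (with $\lambda_N\ge0$), uses the single medial ball $B(q+\reach u_N,\reach)^\circ$ on the side of $p$'s normal component to bound $\lambda_N$, and then handles an arbitrary unit normal $v$ via the worst-case step $(\lambda_N u_N-\alpha v)^2\le(\lambda_N+\alpha)^2$. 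You instead decompose $p-q=s\,v+w$ along the specific direction $v=(x-q)/\|x-q\|$ and its orthogonal complement, and use the medial ball on the \emph{opposite} side, $B(q-\reach v,\reach)^\circ$, to bound $-s$ directly. Your route avoids the extra Cauchy--Schwarz-type step and is arguably slightly more streamlined; the paper's route makes the tangent/normal geometry more visible. Both identify the manifold hypothesis at the same point: one needs the normal cone at $q$ to be a linear space so that the required normal direction ($u_N$ for the paper, $-v$ for you) is available. One minor remark: you announce both balls $B(q\pm\reach v,\reach)^\circ$ but only ever use the one at $q-\reach v$; the other can be dropped without loss.
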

\begin{proof}
Given a point $q \in \M$, the tangent{ space} 
$T_q\M$ and the normal{ space}
 $N_q\M$ are orthogonal vector spaces satisfying $T_q\M \times N_q\M = \R^d$, where $\times$ denotes the 
direct product.
Since $\M \subseteq P^{\boxplus \varepsilon}$, the intersection $P\cap B(q,\varepsilon)$ is non-empty. Let $p \in P\cap B(q,\varepsilon)$.

\begin{figure}[!h]
\begin{center}
\includegraphics[width=0.5\textwidth]{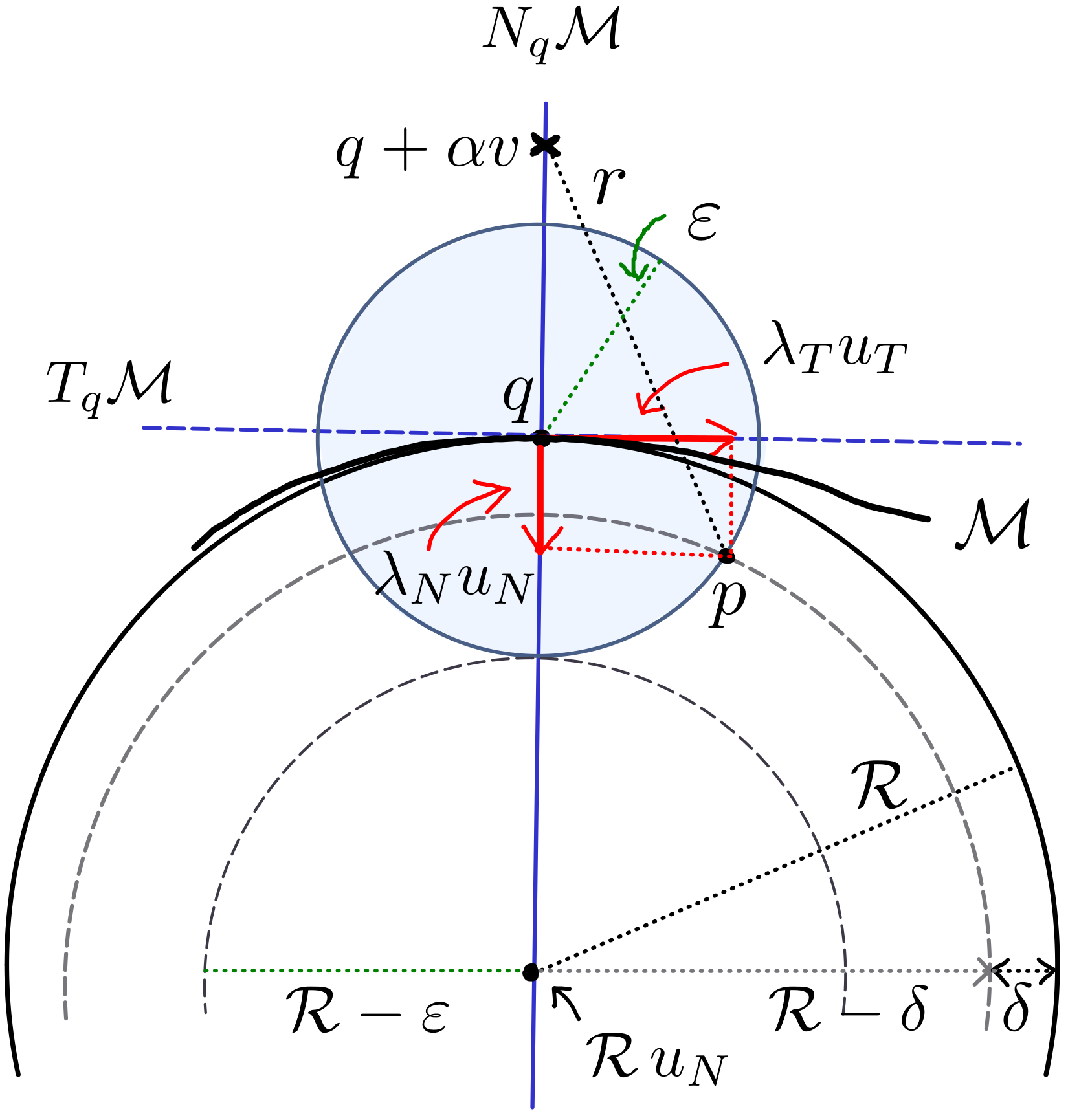}

\end{center}
\caption{Overview of the notation used in the proof of Lemma  \ref{lemma:BoundOnTubularCoverForManifolds}.
} 
\label{fig:ProofManifoldCase1}
\end{figure}

The vector $p-q$ decomposes uniquely as
\[
p-q = \lambda_T u_T + \lambda_N u_N,
\]
with $u_T \in T_q\M$,  $u_N \in N_q\M$, $\|u_T\|= \|u_N\| = 1$, and $\lambda_T, \lambda_N \geq 0$ (see Figure \ref{fig:ProofManifoldCase1}).

Since $\|p-q\| \leq \varepsilon$, 
\begin{equation}\label{equation:NormPMinusQLessThanEpsilon}
\lambda_T^2 + \lambda_N^2  \leq \varepsilon^2.
\end{equation}

Thanks to \cite[Theorem 4.8 (12)]{Federer} (Theorem \ref{Fed4.8.12}), the sets $\M$ and \\$B(q  + \reach\cdot u_N, \reach)^\circ$ do not intersect, and thus:
\[
p \notin B(q  + \reach \cdot u_N, \reach- \delta)^\circ. 
\]
Hence,
$(p - q  - \reach \cdot u_N ) ^2 \geq (\reach- \delta)^2$.
Applying the decomposition of $p-q$ we obtain
\[
\left( \lambda_T u_T +  (\lambda_N  - \reach)  u_N \right)^2 \geq (\reach- \delta)^2, 
\]
which implies that
\[
  \lambda_T^2   +  ( \reach -\lambda_N )^2 \geq (\reach- \delta)^2.
\]
Combining this result with Equation \eqref{equation:NormPMinusQLessThanEpsilon} implies that
\[
\varepsilon^2 -  \lambda_N^2   +  ( \reach -\lambda_N )^2 \geq (\reach- \delta)^2,
\]
which can be rewritten as
\begin{equation}\label{equation:BoundOnLambda_N}
2 \reach \lambda_N \leq \reach^2 + \varepsilon^2 - (\reach- \delta)^2.
\end{equation}
Choose a vector $v \in \Nor(q, \M)$ with $\|v\|= 1$, and let $\alpha \geq 0$. Then, 
\begin{align}
\left( p - (q + \alpha v)  \right)^2 &= \left( ( \lambda_N u_N - \alpha v ) +  \lambda_T u_T \right)^2
\nonumber
\\
& = \left(  \lambda_N u_N - \alpha v   \right)^2 + \left(\lambda_T u_T \right)^2 
\nonumber
\\
& \leq ( \lambda_N  + \alpha )^2 + \lambda_T^2  
\nonumber
\\
& \leq  ( \lambda_N  + \alpha )^2 + \varepsilon^2 - \lambda_N^2 
\tag{by \eqref{equation:NormPMinusQLessThanEpsilon}} 
\\
& = \alpha^2 + 2 \alpha \lambda_N + \varepsilon^2.
\nonumber
\end{align}
Using inequality \eqref{equation:BoundOnLambda_N} to substitute $2 \lambda_N$,  we further obtain:
\[
\left( p - (q + \alpha v)  \right)^2 \leq  \alpha^2 + \frac{\alpha}{\reach} \left( \reach^2 + \varepsilon^2 - (\reach- \delta)^2 \right)   + \varepsilon^2.
\]

Thus, if 
\[r^2 \geq    \alpha^2 + \frac{\alpha}{\reach} \left( \reach^2 + \varepsilon^2 - (\reach- \delta)^2 \right)   + \varepsilon^2  ,\]
then the point $q+ \alpha v$ lies in $B(p, r) \subseteq P^{\boxplus r}$. 
Since this inclusion holds for any $q \in \M$ and $v \in N_q\M$ {with $\|v\| = 1$}, 
$\M^{\boxplus \alpha} \subseteq P ^{\boxplus r}$. 
\end{proof}

As in Proposition \ref{theorem:HomotopyNoiselessPositiveReach}, we now derive a bound on $\varepsilon$.
\DeformRetractsTheoremForManifolds*
The bound is illustrated in Figure \ref{fig:graphCcr}.

\begin{proof}
We combine the bound from Lemma \ref{lemma:BoundOnTubularCoverForManifolds} with the conditions of Theorem \ref{theorem:geometric_argument}. More precisely,  
combining Equations \eqref{equation:R2TooSmallToCNormalLineS} 
and \eqref{equation:ConditionTubularCoverForManifold}
yields the following sufficient condition for  $L \cap \left(P^{\boxplus r} 
\right)$ to be connected:
\begin{equation}\label{equation:BoundOnrForManifolds}
{ \alpha^2 + \frac{\alpha}{\reach} \left( \reach^2 + \varepsilon^2 - (\reach- \delta)^2 \right)   + \varepsilon^2   \leq   \, r^2  \leq \,  (\reach -  \delta)^2 - (\reach - \alpha)^2.}
\end{equation}
{The inequality between leftmost and rightmost members of \eqref{equation:BoundOnrForManifolds},
 which needs to be satisfied for a non-empty range of values for $r$ to exist, can be rearranged as: }
\[
0  \geq \varepsilon^2 - \left(\reach-\delta\right)^2 + \reach^2 + \alpha \, \frac{1}{\reach}\left(\varepsilon^2 - \reach^2- \left(\reach-\delta\right)^2\right) +2\alpha^2.
\]
Using the abc-formula for quadratic equations, the above inequality is satisfied if $\alpha \in  \left[ \alpha_{\min} , \alpha_{\max} \right] $, with 
\begin{align} 
\alpha_{\min} =  \frac{1}{4} \left(\frac{(\reach-\delta)^2 +\reach^2 -\varepsilon ^2}{\reach} - \sqrt{ \Delta }\right), 
\alpha_{\max} =  
\frac{1}{4} \left(\frac{(\reach-\delta)^2 +\reach^2 -\varepsilon ^2}{\reach} +\sqrt{ \Delta }\right), 
\label{eq:DefAlphaminMax} 
\end{align} 
where the discriminant $\Delta$ is
\[
\Delta = \frac{1}{\reach^2}\left(\varepsilon^2 - \left(\reach-\delta\right)^2\right)^2 - 10\left(\varepsilon^2 - \left(\reach-\delta\right)^2\right) - 7\reach^2.
\]
The discriminant can be viewed as a polynomial in $y=\varepsilon^2 - \left(\reach-\delta\right)^2$. Solving $\Delta(y)=0$ with respect to $y$ yields $y= \reach^2\left(5\pm4\sqrt{2}\right)$. This in turn implies that $\Delta$ is non-negative if either $\varepsilon^2 - \left(\reach-\delta\right)^2 \leq \reach^2\left(5-4\sqrt{2}\right)$ or $\varepsilon^2 - \left(\reach-\delta\right)^2\geq \reach^2\left(5+4\sqrt{2}\right)$. Thanks to Assumption~\ref{assumption}, we are only interested in the case where $\varepsilon,\delta<\reach$, and thus we can ignore the second inequality. 
Hence the interval $\left[ \alpha_{\min} , \alpha_{\max} \right]$ is non-empty if 
\[\varepsilon^2 - \left(\reach-\delta\right)^2 \leq \left(5 - 4 \sqrt{2} \right) \reach^2.
\]
Substituting the bounds on $\alpha$ (Equation \eqref{eq:DefAlphaminMax}) into  Equations \eqref{equation:R2TooSmallToCNormalLineS} and \eqref{equation:ConditionTubularCoverForManifold} yields 
\begin{align} \left(1 +  \frac{\alpha_{\min} }{\reach} \right) \varepsilon^2 + \alpha_{\min}^2 + \frac{\alpha_{\min}}{\reach} \left( \reach^2 - (\reach - \delta)^2\right) \leq r^2 &\leq  (\reach -  \delta)^2 - (\reach - \alpha_{\max})^2 .
\label{Bounds_r_For_Manifold_Case}
\end{align} 
\end{proof}

\begin{remark}
We restricted ourselves to the case where $\delta \leq \varepsilon$, because if $\delta>\varepsilon$, the fact that the set of positive reach is a manifold no longer helps. The geometric reason for this is that $p$ in Figure~\ref{fig:ProofManifoldCase1} may lie in $N_q \M$. 
\end{remark} 

\subsection{Tightness of the bounds on the sampling parameters} 
\label{section:optimality}
In this section we prove that the bounds provided in Section \ref{sec:bounds} are optimal in the following sense:

\counterexampleSet*

\counterexampleMfld*

To prove Propositions \ref{prop:counterexample_set} and \ref{prop:counterexample_mfld}, we construct the set $\Su$, the manifold $\M$, and the corresponding samples in Examples~\ref{example:set} and~\ref{example:mfld}, respectively. Due to rescaling it suffices to construct sets of reach $\rch(\Su)=\reach=1$.

\begin{remark}
  {For the proof of Proposition \ref{prop:counterexample_set}, we
    construct a set $\Su$ that is a subset of $\R^2$. For the
    proof of Proposition \ref{prop:counterexample_mfld}, the set $\M$
    is a surface in $\R^3$. Incidentally, both sets are
    two-dimensional.}  As mentioned in the introduction, we expect
  that better bounds than \eqref{equation:BoundOnR0} and
  \eqref{equation:BoundOnEpsilon2} can be obtained for one-dimensional
  sets {in $\R^d$ with $d \geq 2$}, i.e., curves, possibly with boundary.
\end{remark}

\begin{remark}\label{remark.WhenDeltaGreaterThanEpsilon}
When $\delta \geq \varepsilon$, which in Figure~\ref{fig:graphCcr} corresponds to the area above the diagonal $\delta=\varepsilon$,
the same bound is  optimal whether the set is assumed to be a manifold or not.
Indeed, in this case the union of annuli  $\Su$ in Example~\ref{example:set} can be replaced by a union of circles, namely the inner boundaries of the annuli.
Thus, the bound is tight for manifolds,
 including one-dimensional submanifolds in $\R^2$.
\end{remark}


\begin{remark} \label{Rem:DiscreteSample}
To simplify the analysis, the samples $P$ in our examples are continuous and therefore have an infinite number of points. However these samples can be approximated arbitrarily well by finite sets because they are compact.
\end{remark} 
\begin{SketchProof}[of Remark \ref{Rem:DiscreteSample}]
To pass to a finite sample, we first note that failing the bounds on the sampling parameters (in Propositions 5.2, 5.3, 6.4 and 6.5) is an open condition, i.e. for every $(\varepsilon, \delta)$ we can find an $(\varepsilon', \delta)$ with $\varepsilon' < \varepsilon$ such that  $(\varepsilon', \delta)$ still fail the bounds. To construct an example for a given $(\varepsilon, \delta)$ we take the example (Example  \ref{example:set}, \ref{example:mfld}, \ref{example:set_Riemann}, and \ref{example:mfld_Riemann} respectively) for $(\varepsilon', \delta)$ and take a subsample of $P$ that is so dense that the one-sided Hausdorff distance is $\varepsilon$.  Using the notation introduced in the Examples \ref{example:set}, \ref{example:mfld}, \ref{example:set_Riemann}, and \ref{example:mfld_Riemann} we can give a more precise description of the finite sample. For sets of positive reach the finite subsample can be chosen as follows: The circle $C_i$ should be densely subsampled such that the subsample contains $q_i$. The points $p_i$ and $\tilde{p}_i$ can remain as is. For the manifolds the finite subsample of can be chosen as follows: The trimmed torus $C_i$ should be densely subsampled such that the subsample contains $q_i$ and $\tilde{q}_i$. Also in this context, the points $p_i$ and $\tilde{p}_i$ can remain as is. Because the samples contain $p_i$ and $\tilde{p}_i$ and $q_i$ ($q_i$ and $\tilde{q}_i$ respectively) (most of) the spurious cycles we examined in the Examples \ref{example:set}, \ref{example:mfld}, \ref{example:set_Riemann}, and \ref{example:mfld_Riemann} remain the same. The only change that may occur for large $r$ in the proof of Proposition 5.5 is that for the interval $r \in [r_{i-1} ,r_i )$ spurious $2$-cycles may be interchanged for spurious $1$-cycles. Of course for small $r$ there are many more connected components and cycles because of the discrete approximation  than in the continuous examples.  With these finite subsamples $P$ one still finds that the homology of $P^{\boxplus r}$ is never the same as the underlying space.
\end{SketchProof}

\subsubsection{Sets of positive reach}\label{sec:TightnessSetsOfPosReach}

The construction of the set proving Proposition
\ref{prop:counterexample_set} goes as follows.

\begin{example}\label{example:set}
We define $\Su$ to be a 
union of annuli $A_i$ in
$\R^2$, each of which has inner radius $1$ and outer radius
$1+2 \varepsilon$. We lay the annuli in a row at distance at least 2 away from each other. 
{Due to this assumption, the reach of the set $\Su$ equals 1.} We number the annuli from $i=0$. Later we will see that the number of annuli that we need for the construction is finite. 

The sample $P$ consists of circles $C_i$ of radius $1+\varepsilon$ lying
in the middle of the annuli ($C_i \subseteq A_i$), and pairs of points
$\{p_i,\tilde{p}_i\}$.
Each pair $\{p_i,\tilde{p}_i\}$ lies
in the disk inside the annulus $A_i$, at a distance $\delta$ from
$A_i$, and the two points lie at a distance $2r_i$ from each other; see Figure \ref{fig:acute}, left. The bisector of $p_i$
and $\tilde{p}_i$ intersects the circle $C_i$ in two points. We let
$q_i$ be the intersection point that is closest to $p_i$ (and thus
$\tilde{p}_i$). We denote the circumradius of $p_i \tilde{p}_i
q_i$ by $R_i$ and note that $R_i \geq r_i$.
{Before explaining how we pick the sequence of $r_i$, we state a lemma which is
key for the construction:}

\begin{lemma}
  \label{lemma:acute}
  If $\varepsilon$ and $\delta$ fail to satisfy bound~\eqref{equation:BoundOnR0}, that is, $\varepsilon + \sqrt{2} \, \delta >  (\sqrt{2}  - 1),$ then
  \begin{itemize}
  \item the triangle $p_i \tilde{p}_i q_i$ is strictly acute;
  \item there exists a constant $c > 0$, depending only on $\delta$ and $\varepsilon$, such that $R_i - r_i \geq c \, r_i$.
  \end{itemize}
\end{lemma}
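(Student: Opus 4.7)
The plan is to work in coordinates adapted to a single annulus. Place the centre of $A_i$ at the origin $O$ and the bisector of $p_i\tilde p_i$ as the $x$-axis, with $p_i,\tilde p_i$ in the half-plane $x>0$. Setting $h=\sqrt{(1-\delta)^2-r_i^2}$, we get closed-form coordinates $p_i=(h,r_i)$, $\tilde p_i=(h,-r_i)$, $q_i=(1+\varepsilon,0)$ (with $q_i$ rather than $(-(1+\varepsilon),0)$ because $h>0$). The triangle is then isoceles with base $2r_i$ and altitude $1+\varepsilon-h$ from its apex $q_i$, and every remaining computation boils down to algebra in $h$ and $r_i$.

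For part~1, I will use the elementary fact that an isoceles triangle is strictly acute exactly when its apex angle is acute, which is equivalent to altitude $>$ half-base, i.e.\ $1+\varepsilon-h>r_i$. Both sides are positive (since $r_i\leq 1-\delta<1+\varepsilon$ and $h\geq 0$), so I can square and substitute $h^2=(1-\delta)^2-r_i^2$ to obtain the equivalent condition $f(r_i)>0$, where
\[
f(r)\;=\;2r^2-2(1+\varepsilon)r+(1+\varepsilon)^2-(1-\delta)^2.
\]
Its discriminant $8(1-\delta)^2-4(1+\varepsilon)^2$ is strictly negative precisely when $1+\varepsilon>\sqrt{2}\,(1-\delta)$, i.e.\ when the hypothesis $\varepsilon+\sqrt{2}\,\delta>\sqrt{2}-1$ holds. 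Hence $f>0$ on all of $\R$, so in particular $f(r_i)>0$ and the triangle is strictly acute.

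For part~2, by symmetry the circumcentre lies on the $x$-axis, and a direct calculation yields
\[
R_i=\frac{(1+\varepsilon-h)^2+r_i^2}{2(1+\varepsilon-h)},
\qquad
R_i-r_i=\frac{(1+\varepsilon-h-r_i)^2}{2(1+\varepsilon-h)}.
\]
The key algebraic observation is the factorisation $f(r)=(1+\varepsilon-r-h)(1+\varepsilon-r+h)$, which lets me rewrite $(1+\varepsilon-r_i-h)^2$ as $f(r_i)^2/(1+\varepsilon-r_i+h)^2$. Under the hypothesis, $f$ is strictly positive on $[0,1-\delta]$ and bounded below there by its global minimum $\tfrac{1}{2}(1+\varepsilon)^2-(1-\delta)^2>0$. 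The remaining factors $r_i$, $1+\varepsilon-h$, and $1+\varepsilon-r_i+h$ are all bounded above by explicit constants in $\varepsilon$ and $\delta$ (using $0\le h\le 1-\delta$ and $0<r_i\le 1-\delta$). Dividing $R_i-r_i$ by $r_i$ therefore produces a constant $c>0$ depending only on $\varepsilon$ and $\delta$ such that $R_i-r_i\geq c\,r_i$.

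I expect no real obstacle: the coordinate setup fixes everything in closed form, both claims reduce to the sign and size of the single quadratic $f$, and the hypothesis is exactly the negative-discriminant condition for $f$. The only bookkeeping point is that the degenerate limit $r_i\to 0$ is harmless, since $(R_i-r_i)/r_i\to+\infty$ there, so the infimum of $(R_i-r_i)/r_i$ over the admissible range of $r_i$ is attained away from $0$ and is therefore strictly positive.
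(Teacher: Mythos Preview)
Your proof is correct and follows essentially the same approach as the paper: both exploit that $p_i\tilde p_i q_i$ is isosceles and reduce strict acuteness to the single inequality $1+\varepsilon>\sqrt{2}\,(1-\delta)$, which is exactly the failure of~\eqref{equation:BoundOnR0}. The paper phrases this trigonometrically---bounding $\sin\angle z_iq_ip_i$ by $\frac{1-\delta}{1+\varepsilon}$ via the tangent line from $q_i$ to the circle $C_i'$ of radius $1-\delta$---while you phrase it algebraically via the discriminant of $f$; these are two presentations of the same fact. For the second item the paper is a bit slicker: it uses the law of sines $R_i=r_i/\sin\angle p_iq_i\tilde p_i$ together with the uniform angle bound $\angle p_iq_i\tilde p_i\le \tfrac{\pi}{2}-\varphi$ to read off $c=1/\cos\varphi-1$ directly, whereas your route through the factorisation $f(r)=(1+\varepsilon-r-h)(1+\varepsilon-r+h)$ and separate bounds on numerator and denominator arrives at the same conclusion with slightly more bookkeeping.
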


\begin{remark}
	Indeed, the triangle $p_i \tilde{p}_i q_i$ is per construction strictly acute if and only if $\varepsilon + \sqrt{2} \, \delta >  (\sqrt{2}  - 1)$.
\end{remark}
	
\begin{proof}
  The situation is illustrated in Figure \ref{fig:acute}, right. Let $z_i$ be the centre of $C_i$ and let
  $C'_i$ be the circle centred at $z_i$ with radius $1-\delta$. By
  construction, $C_i'$ passes through $p_i$ and $\tilde{p}_i$, while
  $C_i$ passes through $q_i$. Without loss of generality, we may
  assume that $p_i$ and $\tilde{p}_i$ lie on a vertical line, with
  $p_i$ above the segment $z_iq_i$ and $\tilde{p_i}$ below it.  Since $p_i\tilde{p}_i q_i$ is an isosceles triangle, it
  is acute if $\angle z_i q_i p_i < \frac{\pi}{4}$. The angle $\angle
  z_i q_i p_i$ is maximized when $p_i$ reaches the position $p^*_i$ on
  $C'_i$ --- in this position, the line through $q_i$ and $p_i$ is tangent to the
  circle $C'_i$. Using Condition~\eqref{equation:BoundOnR0}, we obtain that
  \[
  \sin \angle z_i q_i p_i \leq \sin \angle z_i q_i p^*_i = \frac{1-\delta}{1+\varepsilon} < \frac{1}{\sqrt{2}} = \sin \frac{\pi}{4}.
  \]
  Thus, $\angle z_i q_i p_i < \frac{\pi}{4}$ and therefore $p_i
  \tilde{p}_i q_i$ is acute. Because of the strict inequality in
  the above equation, we can find a small angle, say $\varphi = 2\left( \frac{\pi}{4} - \arcsin  \frac{1-\delta}{1+\varepsilon}  \right) > 0$, such that
  $\angle z_i q_i p_i \leq \frac{\pi}{4} - \frac{\varphi}{2}$. Since $R_i
  = \frac{r_i}{\sin \angle p_i q_i \tilde{p}_i }$, we deduce that
  \[
  R_i - r_i = \left( \frac{1}{\sin\angle p_i q_i \tilde{p}_i} - 1 \right) r_i \geq \left( \frac{1}{\cos \varphi} - 1 \right) r_i,
  \]
  which, after setting $c= \frac{1}{\cos \varphi} - 1$, proves the second item of the lemma.
\end{proof}

\begin{figure}[!h]
  \def\svgwidth{1.3\linewidth}
  \centering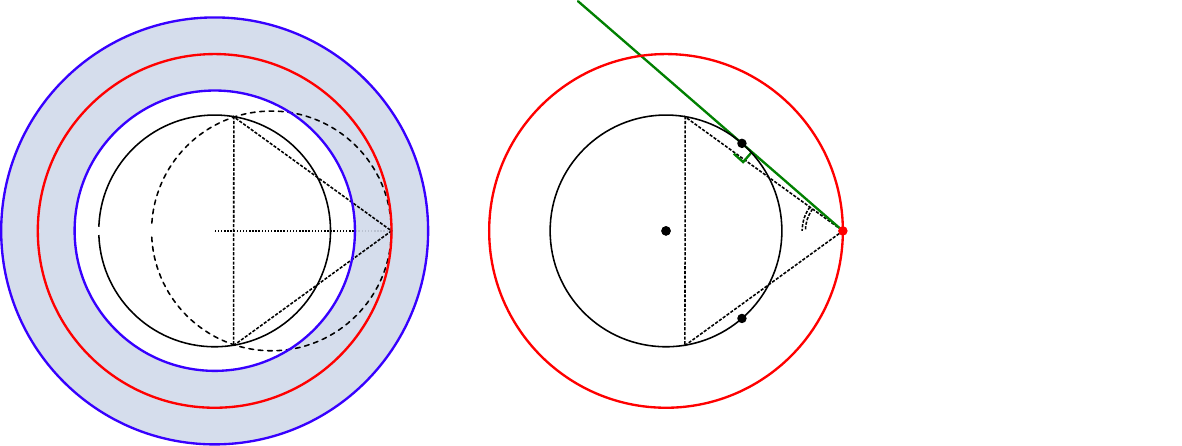
  \caption{Left: Each annulus $A_i$  is sampled by a circle $C_i$ and a pair of points $\{p_i,\tilde{p}_i\}$. Right: Notation for the proof of Lemma \ref{lemma:acute}. If
    Condition~\eqref{equation:BoundOnR0} 
		fails,
		then $\sin \angle
    z_iq_ip_i \leq \sin \angle
    z_iq_ip_i^* = \frac{1-\delta}{1+\varepsilon} < \frac{1}{\sqrt{2}}
    = \sin \frac{\pi}{4}$, and triangle $p_i \tilde{p}_i q_i$ is
    guaranteed to be acute.
    \label{fig:acute}}
\end{figure}

{We are now ready to} define the distance between each
pair of points $p_i$ and $\tilde{p}_i$ in an inductive manner.  We set $r_0 = \frac{\delta+
  \varepsilon}{2}$ and, for $i \geq 0$, 
	\[r_{i+1}=\begin{cases}
R_i, &\text{if }R_i < 1-\delta,\\
1-\delta, &\text{otherwise}.
\end{cases}\]
We stop the
sequence at the first value of $i$ such that $r_i = 1-\delta$.

Assume that $\varepsilon$ and $\delta$ fail to satisfy bound~\eqref{equation:BoundOnR0}. By Lemma~\ref{lemma:acute}, $r_{i+1} - r_{i}$ is lower bounded by a positive constant
that only depends on $\delta$ and $\varepsilon$,
\[
r_{i+1} - r_{i} = R_i-r_i\geq c \, r_i\geq c\, r_0.
\]
Hence, the sequence
of $r_i$ reaches the value $1-\delta$ in a finite number of steps.
Let $k$ be the index at which $r_k =
1-\delta$. Our constructed set $\Su$ consists of the finitely many annuli
$A_0 \cup A_1 \cup \ldots \cup A_k$ and our sample $P$ is defined as
$\bigcup_{0 \leq i \leq k} C_i \cup \{p_i,\tilde{p}_i \}$.
\end{example}

\begin{figure}[h!]
	\centering
	\begin{minipage}[b]{0.8\textwidth}
		\centering
		\includegraphics[width=0.99\textwidth]{pictures/CombinedA4label}
		\subcaption{For all $r < r_0$, the union of balls $(C_i \cup \{p_i,\tilde{p}_i\})^{\boxplus r} 
		$ has three connected components.}%
		\label{fig:tight_set_tiny_r}
	\end{minipage}
	\newline
	\begin{minipage}[b]{0.8\textwidth}
		\centering
		\includegraphics[width=0.99\textwidth]{pictures/Crit1label}
		\subcaption{At radius $r_1$, the cycle in the union of balls $(C_0 \cup \{p_0,\tilde{p}_0\})^{\boxplus r}$ at the annulus $A_0$ dies, while a cycle is created in the union of balls $(C_1 \cup \{p_1,\tilde{p}_1\})^{\boxplus r}  
		$ at the annulus $A_1$. }%
		\label{fig:tight_set_small_r}
	\end{minipage}
	\newline
	\begin{minipage}[b]{0.8\textwidth}  
		\centering 
		\includegraphics[width=0.99\textwidth]{pictures/Crit2label}
		\subcaption{At radius $r_2$, the cycle in the union of balls at the annulus $A_1$ dies, while a cycle is created in the union of balls at the annulus $A_2$.}%
		\label{fig:tight_set_large_r}
	\end{minipage}
	\newline
	\begin{minipage}[b]{0.38\textwidth}
		\centering
		\includegraphics[width=0.65\textwidth]{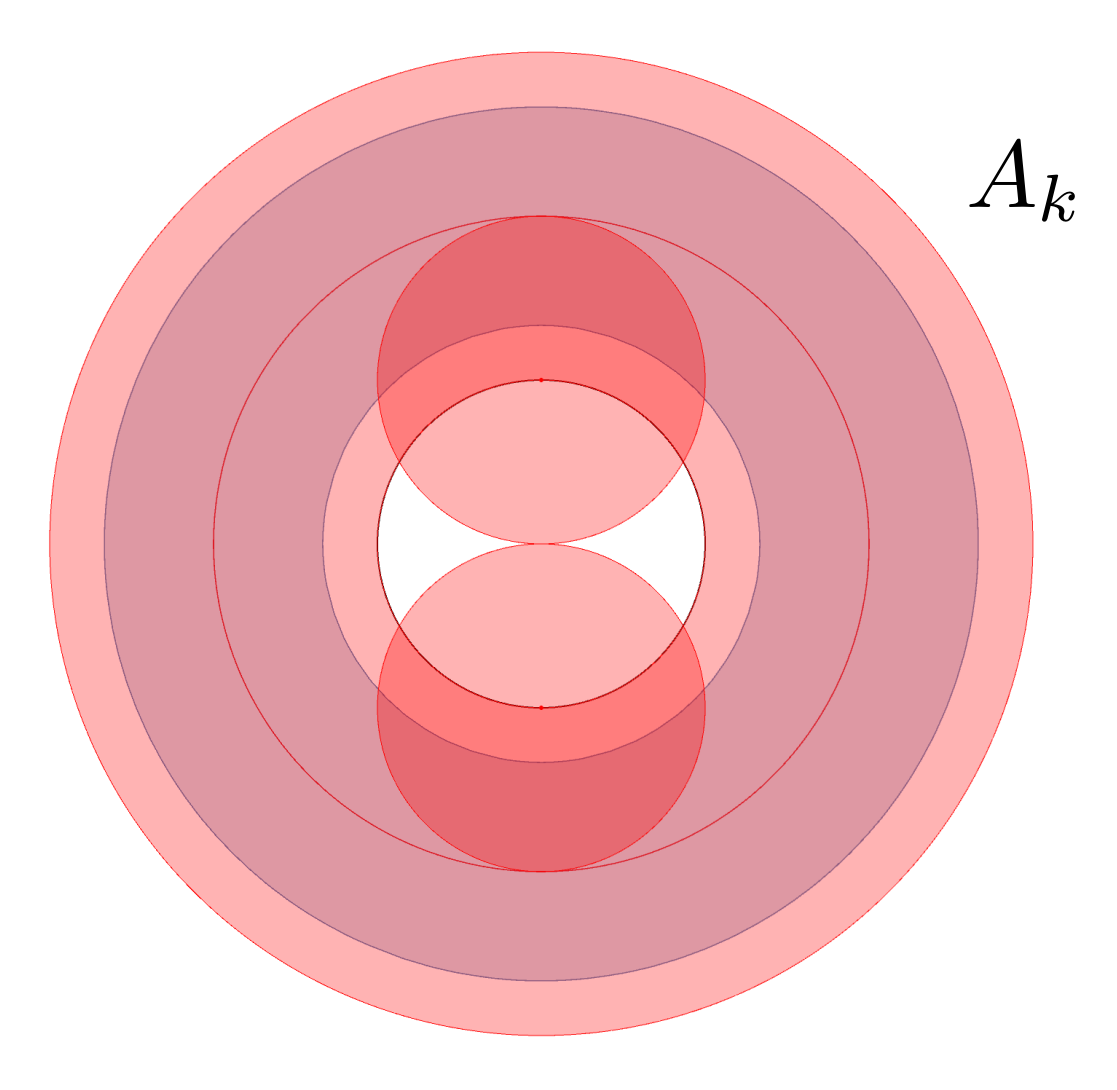}
		\subcaption{The set $(C_k \cup \{p_k,\tilde{p}_k\})^{\boxplus r}$ 
		at radius $r_k=1-\delta$. The two `gaps' are identical.}%
		\label{fig:A_k_1}
	\end{minipage}
	\quad
	\begin{minipage}[b]{0.38\textwidth}
		\centering
		\includegraphics[width=0.65\textwidth]{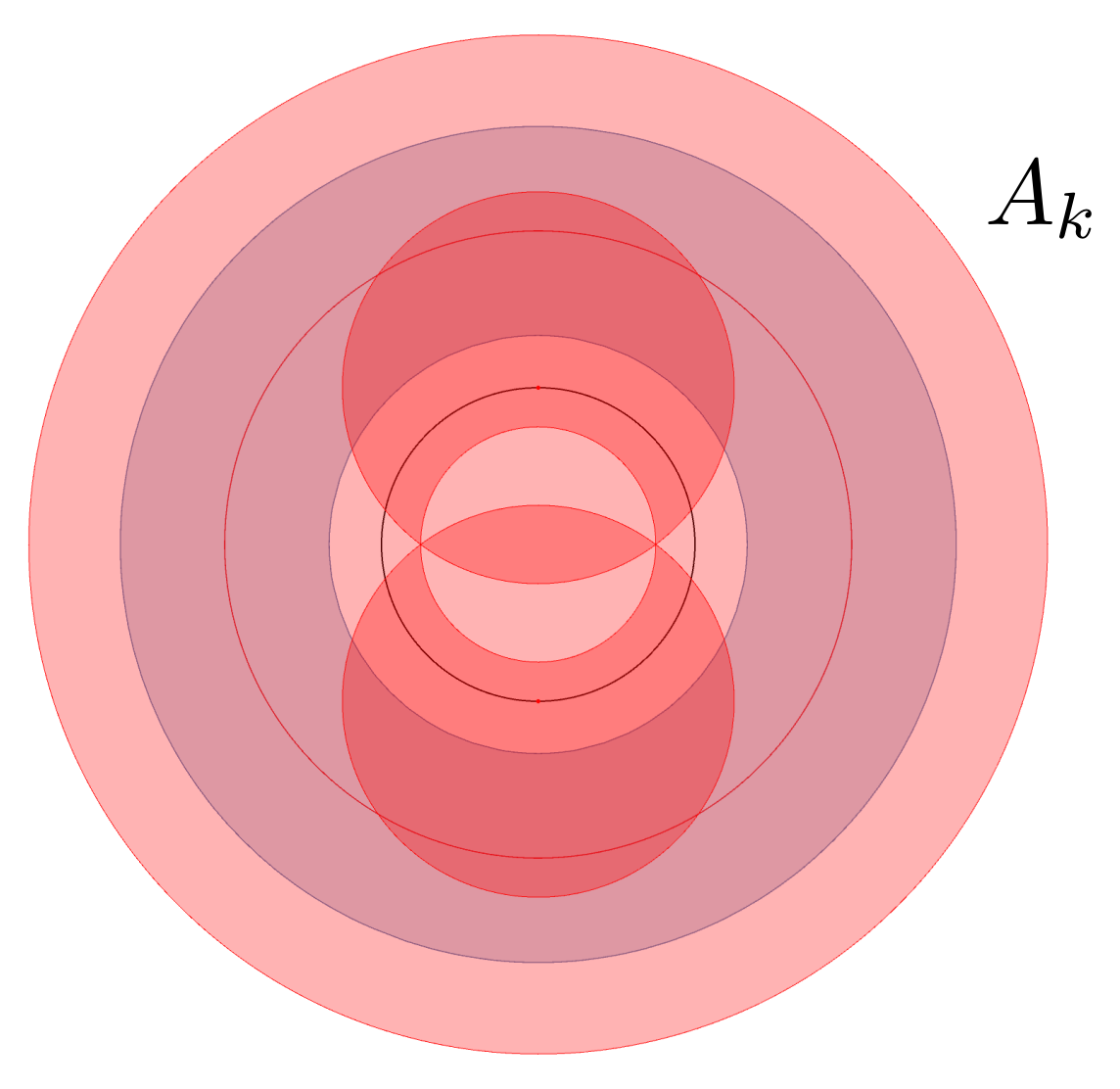}
		\subcaption{The two `gaps' of the set $(C_k \cup \{p_k,\tilde{p}_k\})^{\boxplus r}$
		disappear simultaneously.}%
		\label{fig:A_k_2}
	\end{minipage}
	\caption{ The changing homology of the set $P^{\boxplus r}$ 
	in the annuli $A_0, A_1, A_2$, and $A_k$. The set $\Su = A_0\cup \dots A_k$ is coloured light blue, the union of balls $P^{\boxplus r}$ 
	in pink. In black we depict the circles of radius $1-\delta$.}
	\label{fig:tight_set_nbhds}
\end{figure}

\begin{proof}[Proof of Proposition \ref{prop:counterexample_set}]
We show that for any $r\geq 0$, the union of balls $P^{\boxplus r}$ 
has different homotopy --- and even different homology --- than the set $\Su$. We first describe the development of the homotopy of the sets $(C_i \cup \{p_i,\tilde{p}_i\})^{\boxplus r}$ 
as $r$ increases:
\begin{itemize} 
\item For $r \in [0 ,r_0 )$, each set $(C_i \cup \{p_i,\tilde{p}_i\})^{\boxplus r}$ 
has three connected components, as illustrated in Figure \ref{fig:tight_set_tiny_r}. 
The three components merge into one at $r=r_0$, as the two balls $\{p_i\}^{\boxplus r}$ 
and $\{\tilde{p}_i\}^{\boxplus r}$
intersect the set $C_i ^{\boxplus r}$. 
\item For $r \in [r_{i}, r_{i+1} )$, the set $(C_i \cup \{p_i,\tilde{p}_i\})^{\boxplus r}$ has the homotopy type of two circles that share a point (also known as a wedge of two circles or a bouquet), as illustrated in Figures~\ref{fig:tight_set_small_r} and~\ref{fig:tight_set_large_r}. 
The smaller `gap' creating the additional cycle appears when $r=r_i$. 
Since, due to Lemma \ref{lemma:acute}, the triangle $p_i \tilde{p}_i q_i$ is acute, the `gap' persists until $r=R_i=r_{i+1}$. 
All sets $(C_j \cup \{p_j,\tilde{p}_j\})^{\boxplus r}$ with $j\neq i$ have the homotopy type of a circle. 
\item {At $r = r_k = 1-\delta$, all sets $(C_i \cup \{p_i,\tilde{p}_i\})^{\boxplus r}$ have the homotopy type of a circle but the last one, $(C_k \cup \{p_k,\tilde{p}_k\})^{\boxplus r}$, which has the homotopy type of two circles that share a point (see Figure \ref{fig:A_k_1}). 
Unlike the other cases, however, the `gaps' in the set $(C_k \cup \{p_k,\tilde{p}_k\})^{\boxplus r}$ are identical, and disappear simultaneously at $r = R_k \left(=\tfrac{(1+\varepsilon)^2+(1-\delta)^2}{2(1+\varepsilon)}\right)$ (Figure \ref{fig:A_k_2}). For larger $r$, the set $(C_k \cup \{p_k,\tilde{p}_k\}) ^{\boxplus r}$ is contractible.}
	\end{itemize} 
{Each annulus $A_i\subseteq \Su$ has the homotopy
	type of a circle, and thus the dimensions of the homologies of the set $\Su$ equal
	\[\dim\left(H_0(\Su)\right) = k+1, \qquad\dim\left(H_1(\Su)\right) = k+1. \] The dimensions of the homologies of the set $P ^{\boxplus r}$ are recorded in the table below.
	
	\begin{center}
      \begin{edtable}{tabular}{|c || c | c |c| }
			\hline
			~ & $r\in[0,r_0)$ & $r\in\left[r_0, R_k\right)$ & $r\geq R_k$\\ \hline
			$\dim\left(H_0\left(P ^{\boxplus r}\right)\right)$ & $3(k+1)$ & $k+1$ & $\leq k+1$\\ \hline
			$\dim\left(H_1\left(P ^{\boxplus r}\right)\right)$ & $k+1$ & $k+2$ & $\leq k$\\
			\hline
        \end{edtable}
	\end{center}
	
	One sees that the set $\Su$ never has the same homology as the union of balls $P ^{\boxplus r}$, and thus the two never have the same homotopy.
}
\end{proof}

\subsubsection{Manifolds}
\label{sec:optimality_mflds}
The construction of the manifold proving Proposition \ref{prop:counterexample_mfld} goes as follows:

\begin{example}\label{example:mfld}
  We define $\M$ to be a union of tori of revolution $T_i$ in $\R^3$. Each of these tori is the $1$-offset of a circle of radius $2$ in $\R^3$.
Put differently, each $T_i$ is --- up to Euclidean transformations --- the surface of revolution of a circle of radius $1$ in the $xz$-plane, centred at $(2,0,0)$, around the $z$-axis.
The set $T_i$ is illustrated in blue in Figures~\ref{fig:TightManifold1N} and~\ref{fig:self-centred}. 

We number the tori from $i=0$, and lay them out in a row at a distance at least $2$ apart from one another. Due to this assumption, the reach of $\M$ equals 1. Later we will see that the number of tori that we need for the construction is finite. 

The sample $P$ consists of sets $C_i$ which are tori with a part cut out, and pairs of points $\{p_i,\tilde{p}_i\}$ lying inside the hole of each torus $T_i$.
To construct each set $C_i$ we take the $\delta$-offset of
  $T_i$, keep the part that lies inside the solid
  torus bounded by $T_i$, and remove an  $\varepsilon$-neighbourhood of the circle obtained by revolving the point $(1,0,0)$ around the $z$-axis; see the red set in Figures~\ref{fig:TightManifold1N} and~\ref{fig:self-centred}. In other words, each $C_i$ is the set difference between the torus obtained by rotating the circle of radius $1-\delta$ centred in the $xz$-plane at $(2,0,0)$, and the open solid
  torus obtained by rotating the open disc of radius $\varepsilon$ centred in the
  $xz$-plane at $(1,0,0)$.

\begin{figure}[!h]
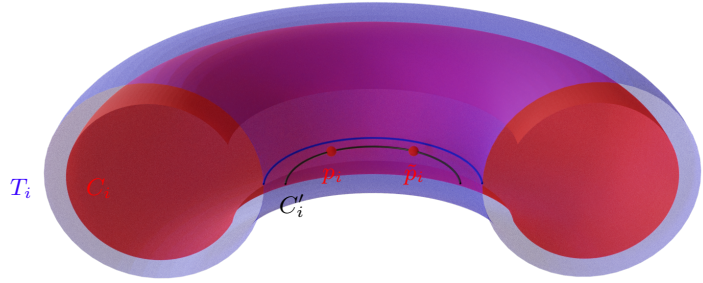

  	\centering
	\caption{
		The (half of the) torus $T_i$ depicted in blue; the sample --- the set $C_i$ and the points $p_i$ and $\tilde{p}_i$ --- in red.
		In black we indicate the circle $C'_{i}$. The closest point projection of this circle onto $\M$ is indicated in blue. 
	}
	\label{fig:TightManifold1N}
\end{figure}

 Let $C'_{i}$ be the circle found by revolving the point $(1-\delta,0,0)$ around the $z$-axis. Each pair of points, $p_i$ and $\tilde{p}_i$, lies on $C'_{i}$ at a distance $2 r_i$ from each other. Let {$q_i$ and $\tilde{q}_i$} be the two points in the intersection of the bisector of {$p_i$ and $\tilde{p}_i$} and the set $C_i$ that lie closest to  $p_i$ and $\tilde{p}_i$. Note that $q_i$ and $\tilde{q}_i$ lie on the boundary\footnote{Here we think of $C_i$ as a manifold with boundary. }  of $C_i$,
 and $\{ q_i,\tilde{q}_i \} = \pi_{C_i}\left(\frac{p_i + \tilde{p}_i}{2} \right) $. Denote the circumradius of the simplex $p_i \tilde{p}_i q_i \tilde{q}_i$ by $R_i$. 

 \begin{figure}[!h]
   \def\svgwidth{1.05\linewidth}
	\centering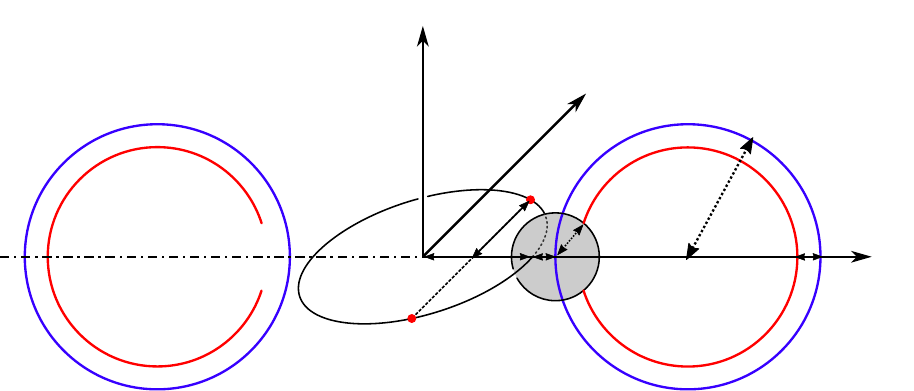
	\caption{The sets $T_i$, $C_i$ and $C'_i$ are obtained by rotating around the $z$-axis, respectively, the blue circles, the red arcs and the white point. 
}
	\label{fig:self-centred}
\end{figure}

As in Example \ref{example:set}, we define the distance $2r_i$ between each pair of points $p_i$ and $\tilde{p}_i$ inductively. We set the distance $r_0$ such that the balls $B(p_0, r)$ and $B(\tilde{p}_0,r)$ start to intersect each other at the same value of $r$ as the balls $B(q_0, r)$ and $B(\tilde{q}_0,r)$ start to intersect:
\[r_0 = \tfrac{1}{2}d\left(q_0, \tilde{q}_0\right). 
\]
We then define
	\[r_{i+1}=\begin{cases}
	R_i, &\text{if }R_i < 1-\delta,\\
	1-\delta, &\text{otherwise}.
	\end{cases}\]

We stop the sequence at the first value of $i$ such that $r_i = 1-\delta$.

Assume that $\varepsilon$ and $\delta$ fail to satisfy bound~\eqref{equation:BoundOnEpsilon2}. By Lemma
\ref{lemma:selfcentred}, $r_{i+1}$ is lower bounded by a positive constant
that only depends on $\delta$ and $\varepsilon$,
\[
r_{i+1}^2= R_i^2\geq r_i^2+ c^2\geq r_0^2 + i\cdot c^2.
\]
Hence, the sequence
of $r_i$ reaches the value $1-\delta$ in a finite number of steps.

Let $k$ be the index at which $r_k =
1-\delta$. Our constructed manifold $\M$ consists of the finitely many tori
$T_0 \cup T_1 \cup \ldots \cup T_k$, and our sample $P$ is defined as
$\bigcup_{0 \leq i \leq k} \left(C_i \cup \{p_i,\tilde{p}_i \}\right)$.
\end{example}

In the proof of Proposition \ref{prop:counterexample_set}, acuteness of triangles plays an essential role. In Lemma \ref{lemma:acute} we argue that if $\varepsilon$ and $\delta$ fail to satisfy Bound~\eqref{equation:BoundOnR0}, then any triangle $p_i\tilde{p}_i q_i$ is acute. Furthermore, a triangle is acute if and only if it contains its circumcentre. We generalize acuteness to simplices as follows:
\begin{definition}[Self-centred simplices, \cite{choudhary2020coxeter}] \label{Def:Self_centred}
	{A simplex is called (strictly) self-centred if it contains its circumcentre (in its interior).  }
\end{definition}
\begin{lemma} \label{lemma:selfcentred} 
If $\varepsilon$ and $\delta$ fail to satisfy bound~\eqref{equation:BoundOnEpsilon2}, {that is, $(1 - \delta)^2 - \varepsilon^2   <  4\sqrt{2}  - 5$, and $r_i$ satisfies 
	\begin{equation}\label{eq:lower_bound_distance_p_ptilde}
	2r_i\geq d(q_i,\tilde{q}_i),
\end{equation}}
then
\begin{itemize}
	\item the simplex $p_i \tilde{p}_i q_i \tilde{q}_i$ is strictly self-centred;
	\item there exists a constant $c > 0$, depending only on $\delta$ and $\varepsilon$, such that $R_i^2 \geq r_i^2+c^2$.
\end{itemize}
\end{lemma}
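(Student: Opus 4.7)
My plan is to exploit the two reflective symmetries of the configuration --- reflection across the bisector plane of $p_i\tilde p_i$ (which swaps $q_i\leftrightarrow\tilde q_i$) and reflection across the plane containing $C'_i$ (which swaps $p_i\leftrightarrow\tilde p_i$). These symmetries allow me to place $m:=\frac{1}{2}(p_i+\tilde p_i)$ at the origin with $p_i=(r_i,0,0)$, $\tilde p_i=(-r_i,0,0)$, $q_i=(0,a,h)$, $\tilde q_i=(0,a,-h)$, where $a=|m-m'|$ with $m':=\frac{1}{2}(q_i+\tilde q_i)$ and $h=\frac{1}{2}d(q_i,\tilde q_i)$. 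Both symmetries fix the circumcenter of the tetrahedron $p_i\tilde p_iq_i\tilde q_i$, so it lies on the $y$-axis at a point $(0,y_c,0)$; equating its squared distances to $p_i$ and to $q_i$ gives
\[
y_c \;=\; \frac{a^2+h^2-r_i^2}{2a}, \qquad R_i^2 \;=\; r_i^2+y_c^2.
\]
In particular, the second bullet of the lemma is equivalent to a positive lower bound on $y_c$, and by writing a general point of the tetrahedron in barycentric coordinates one sees that the first bullet (strict self-centredness) is equivalent to $0<y_c<a$, i.e.\ to the two strict inequalities $r_i^2<a^2+h^2$ and $h^2<a^2+r_i^2$.

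The hypothesis $2r_i\geq d(q_i,\tilde q_i)=2h$ immediately gives $r_i\geq h$, so the second inequality reduces to $a>0$, which is clear since $m\neq m'$. The substantive step is to establish the first inequality, which reads $|p_i-m|<d(m,C_i)$ because $q_i$ is a closest point of $C_i$ to $m$. To verify it I work inside the bisector plane, where $C_i$ appears as the arc of the circle of radius $1-\delta$ centred at $(2,0)$ obtained by removing the open disc of radius $\varepsilon$ centred at $(1,0)$, with endpoint $q_i=(x_0,0,z_0)$ determined by the two circle equations. Writing $m=((1-\delta)\cos\theta,0,0)$, where $\theta$ is the half-angle subtended by $p_i\tilde p_i$ at the centre of $C'_i$, a direct substitution reduces $|q_i-m|^2-r_i^2$ to the quadratic
\[
f(u) \;:=\; 2u^2+(v-4)u+(4-2v), \qquad u:=(1-\delta)\cos\theta,
\]
with $v:=1+(1-\delta)^2-\varepsilon^2$. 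Its discriminant equals $v^2+8v-16$, which is strictly negative precisely when $v<4\sqrt 2-4$, equivalently $(1-\delta)^2-\varepsilon^2<4\sqrt 2-5$ --- i.e.\ exactly when bound~\eqref{equation:BoundOnEpsilon2} fails. Under that hypothesis $f(u)>0$ for every admissible $u$, which yields the desired inequality and hence strict self-centredness.

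For the second bullet, the failure of \eqref{equation:BoundOnEpsilon2} is strict, so the discriminant of $f$ is bounded away from $0$ by a constant depending only on $\varepsilon$ and $\delta$, and $a$ is bounded above by the diameter of the configuration. Consequently $y_c$ admits a positive lower bound $c>0$ depending only on $\varepsilon,\delta$, giving $R_i^2-r_i^2=y_c^2\geq c^2$. The technical heart of the argument --- and the main obstacle --- is precisely the computation reducing $|q_i-m|^2-r_i^2$ to the quadratic $f(u)$, and the verification that the critical value of its discriminant produces exactly the constant $4\sqrt{2}-5$ appearing in \eqref{equation:BoundOnEpsilon2}.
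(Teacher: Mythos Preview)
Your argument is correct and leads to exactly the same quadratic as the paper's: with the paper's variable $\ell=\tfrac{\varepsilon^2-(1-\delta)^2+1}{2}$ one has $v=2-2\ell$, and your $f(u)=2u^2+(v-4)u+(4-2v)$ is precisely the paper's $Q(t)=2t^2-2(1+\ell)t+4\ell$ (with $u=t$). The discriminant condition $v<4\sqrt{2}-4$ is the paper's $\ell>3-2\sqrt{2}$, both equivalent to the failure of \eqref{equation:BoundOnEpsilon2}.

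The route you take differs from the paper in two presentational respects. First, the paper reduces strict self-centredness to the strict acuteness of the two isosceles triangles $p_i\tilde p_iq_i$ and $q_i\tilde q_ip_i$ (each triangle being acute iff its height exceeds half its base); you instead compute the circumcentre $(0,y_c,0)$ explicitly and use barycentric coordinates to obtain $0<y_c<a$. These yield the same two inequalities $r_i^2<a^2+h^2$ and $h^2<a^2+r_i^2$, so the reductions are equivalent. Second, for the uniform lower bound on $R_i^2-r_i^2$, the paper invokes compactness of the parameter interval $r_i\in[h,1-\delta]$, whereas you extract an explicit bound from the strictly negative discriminant (giving $\min f=-D/8>0$) together with the upper bound on $a$. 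Your argument is slightly more quantitative here; the paper's compactness argument is shorter but non-constructive. Either way, both proofs hinge on the same algebraic identity reducing $|q_i-m|^2-r_i^2$ to the quadratic whose discriminant produces the constant $4\sqrt{2}-5$.
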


\begin{proof}
  A key observation is that the simplex $p_i \tilde{p}_i q_i
  \tilde{q}_i$ is {(strictly)} self-centred if and only if {the triangles $p_i
  \tilde{p}_i q_i$ and $q_i
  \tilde{q}_i p_i$ are (strictly)} acute.

To see this, assume without loss of
  generality that the torus $T_i$ is centred at the origin and that the points $q_i$ and $\tilde{q}_i$ lie in the
  $xz$-plane and have positive $x$-coordinates{, as in Figure~\ref{fig:self-centred}}. The circumcentre of a
  simplex is the intersection of the bisectors of pairs of its
  vertices. The circumcentre of the simplex $p_i \tilde{p}_i q_i
  \tilde{q}_i$ thus lies on the $x$-axis; indeed, the $x$-axis is the
  intersection of the bisector of $p_i$ and $\tilde{p}_i$, and the
  bisector of $q_i$ and $\tilde{q}_i$.
  
  Hence, $p_i \tilde{p}_i
  q_i \tilde{q}_i$ is strictly self-centred if and only if its circumcentre
  lies on {the intersection of the interior of $p_i \tilde{p}_i
  	q_i \tilde{q}_i$ with the $x$-axis ---} the open line segment connecting the midpoint $m_i =
  \frac{p_i+\tilde{p}_i }{2}$ 
  of $p_i$ and $\tilde{p}_i$ with the
  midpoint $n_i = \frac{q_i+\tilde{q}_i }{2}$ of $q_i$ and
  $\tilde{q}_i$.
  This happens precisely when the circumcentre of
  triangle $p_i \tilde{p}_i q_i$ (resp. $q_i \tilde{q}_i p_i$) lies on the open segment connecting
  $m_i$ to $q_i$ (resp. $n_i$ to $p_i$), in other words, when both triangles $p_i \tilde{p}_i q_i$ and $q_i \tilde{q}_i p_i$ are
  strictly acute. We illustrate the two extreme cases in Figure \ref{fig:self-centred-extreme-cases}.
  
  {We prove the fact that the two triangles are indeed strictly acute in Claim~\ref{claim:triangle1_acute} below.}

    \begin{figure} 
      \centering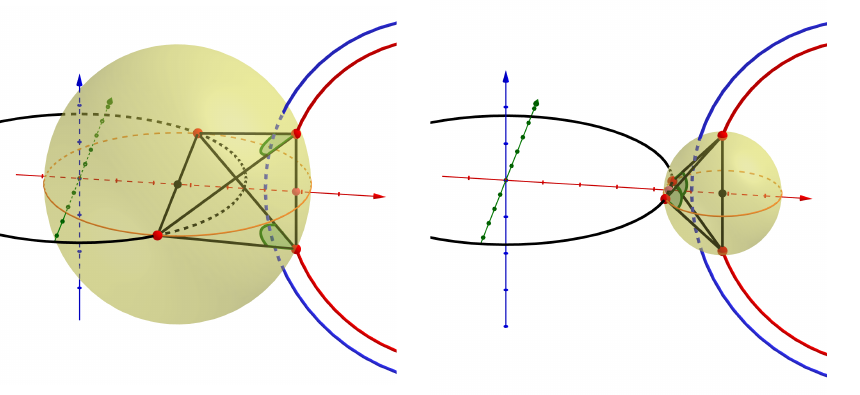
      \caption{ {When both triangles $p_i\tilde{p}_i q_i$ and $q_i
        \tilde{q}_i p_i$ are strictly acute, the circumcentre $Z_i$ of
        tetrahedron $p_i \tilde{p}_i q_i \tilde{q}_i$ lies on the open
        segment connecting $m_i$ to $n_i$. 
        When the triangle $p_i
        \tilde{p}_i q_i$ becomes
        right-angled, $Z_i$ reaches $m_i$ (on the left). When the triangle $q_i
        \tilde{q}_i p_i$ becomes
        right-angled, $Z_i$ reaches $n_i$ (on the right).}
        \label{fig:self-centred-extreme-cases}
      }
    \end{figure}
  
    Recall that both the circumcentre of the simplex $p_i \tilde{p}_i q_i \tilde{q}_i$ and the point $m_i$ lie on the $x$-axis. Let $u$ be the $x$-coordinate of the circumcentre. We have shown that, for all distances $r_i\in\left[\tfrac{1}{2}d(q_i,\tilde{q}_i), 1-\delta\right]$ defining the position of the points $p_i$ and $\tilde{p}_i$, the circumcentre lies further away from the origin than the midpoint $m_i$. 
    That is, $u-\norm{m_i}>0$. 
    Since $\left[\tfrac{1}{2}d(q_i,\tilde{q}_i), 1-\delta\right]$ is compact, there exists a constant $c$ such that
    \[u-\norm{m_i}\geq c.\]
    The triangle with vertices $p_i, m_i$, and the circumcentre is right-angled, with edge lengths $r_i, u-\norm{m_i}$, and the hypotenuse $R_i$. Thus,
    \[
    R_i^2 = r_i^2+{(u-\norm{m_i})}^2\geq r_i^2+c^2.
    \]
\end{proof}

\begin{claim}\label{claim:triangle1_acute}
	The triangles $p_i \tilde{p}_i q_i$ and $q_i \tilde{q}_i p_i$ are
	strictly acute, under the assumptions of Lemma \ref{lemma:selfcentred}. 
\end{claim}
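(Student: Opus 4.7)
My plan is to reduce the claim to planar algebra in a convenient coordinate system. I would place $T_i$ at the origin with the $z$-axis as its axis of revolution, and exploit the rotational symmetry to assume the midpoint $m_i = \tfrac{1}{2}(p_i+\tilde{p}_i)$ lies on the positive $x$-axis. Writing $m := \|m_i\|$, this gives $p_i = (m, r_i, 0)$, $\tilde{p}_i = (m, -r_i, 0)$, and $m^2 = (1-\delta)^2 - r_i^2$ (since $p_i \in C_i'$), and the bisector of $p_i\tilde{p}_i$ is the plane $\{y=0\}$. In this plane, restricted to $x>0$, $C_i$ is the arc of the circle $(x-2)^2+z^2=(1-\delta)^2$ with the open disc $(x-1)^2+z^2<\varepsilon^2$ removed; the two closest points of this arc to $m_i$ are the endpoints $q_i = (x_0, 0, z_0)$ and $\tilde{q}_i = (x_0, 0, -z_0)$. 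Solving the two circle equations gives $x_0 = (3-A)/2$ and $z_0^2 = (1-\delta)^2 - (1+A)^2/4$, where $A := (1-\delta)^2 - \varepsilon^2$. The standing hypothesis $2r_i \geq d(q_i,\tilde{q}_i) = 2z_0$ becomes $r_i \geq z_0$, equivalently $m \in \bigl[0, \tfrac{1+A}{2}\bigr]$.

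By reflective symmetry (in $\{y=0\}$ for the first triangle, in $\{z=0\}$ for the second), both triangles are isosceles, and their common leg-length squared equals $(m-x_0)^2 + r_i^2 + z_0^2$. An isosceles triangle with base $2b$ and leg $\ell$ is strictly acute exactly when $\ell^2 > 2b^2$ (the apex angle is the only one that can fail to be acute), so strict acuteness of the two triangles is equivalent to the two inequalities
\[
(\mathrm{a}) \ \ r_i^2 < (m-x_0)^2 + z_0^2, \qquad (\mathrm{b}) \ \ z_0^2 < (m-x_0)^2 + r_i^2.
\]

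I would then verify (a) and (b) by direct substitution using $m^2 + r_i^2 = (1-\delta)^2$ and the identity $x_0^2 + z_0^2 = (1-\delta)^2 + 4x_0 - 4$. Inequality (a) simplifies to $x_0 > (2-m^2)/(2-m)$. The function $f(m) = (2-m^2)/(2-m)$ on $[0,1)$ has a unique interior maximum at $m^{\ast} = 2-\sqrt{2}$, where $f(m^{\ast}) = 4-2\sqrt{2}$. Since the hypothesis $(1-\delta)^2 - \varepsilon^2 < 4\sqrt{2}-5$ is equivalent to $x_0 > 4-2\sqrt{2}$, inequality (a) holds at $m^{\ast}$ (and hence on all of $\bigl[0,\tfrac{1+A}{2}\bigr]$) whenever $m^{\ast}$ lies in the admissible interval; otherwise $m^{\ast}$ is to the right of that interval and $f$ is increasing there, so it suffices to check (a) at the endpoint $m=\tfrac{1+A}{2}$, which collapses to $(A-1)^2>0$ and is automatic from $A<1$. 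The same algebra reduces (b) directly to $(A-1)^2>0$ independently of $m$, and is therefore automatic. Hence both triangles are strictly acute.

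The main obstacle is choosing the right single-parameter reduction ($A = (1-\delta)^2 - \varepsilon^2$) under which the two inequalities collapse cleanly, and carrying out the one-variable optimisation of $f$; once this is set up, the threshold $4\sqrt{2}-5$ emerges naturally and matches bound \eqref{equation:BoundOnEpsilon2} exactly, which is reassuring evidence that the tight constant is indeed captured by this construction.
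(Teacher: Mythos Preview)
Your proof is correct and follows essentially the same route as the paper: the same coordinate setup (your $m, x_0, z_0, A$ correspond to the paper's $t,\,1+\ell,\,h,\,1-2\ell$) and the same reduction of each isosceles triangle's strict acuteness to the comparison ``height $>$ half-base''. The only cosmetic differences are that the paper verifies your inequality~(a) by showing the quadratic $2t^2 - 2t(1+\ell) + 4\ell$ has negative discriminant (equivalent to your maximisation of $f(m)=(2-m^2)/(2-m)$), and dispatches your inequality~(b) in one line via $r_i \geq h$ rather than optimising in $m$ to reach $(A-1)^2>0$.
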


\begin{proof}

\begin{figure}[!h]
	\centering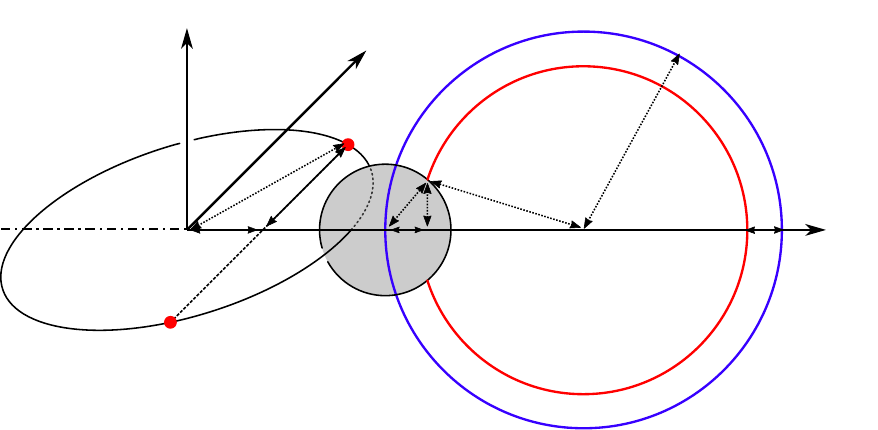
	\caption{Zoom-in of Figure \ref{fig:self-centred} with the notation used in the proof of Claim \ref{claim:triangle1_acute}.
	} 
	\label{fig:self-centred-zoom-in}
\end{figure}

Let $t\geq 0$ be the $x$-coordinate of $p_i$, and $\ell$ and $h$ define the $x$- and {$z$}-coordinates of $q_i$,
\[p_i = (t,r_i,0), \qquad {q_i = (1+\ell, 0, h)}.\]
Then $m_i = (t,0,0)$ and $n_i = (1+\ell,0,0)$. 	{We refer the reader to Figure \ref{fig:self-centred-zoom-in} for an overview of the notation.}

Due to the Pythagorean theorem,
\[\varepsilon^2 - \ell^2 = h^2 = (1-\delta)^2 - (1-\ell)^2,\]
and thus
\begin{equation*}
	\ell = \frac{\varepsilon^2 - \delta^2 + 2\delta}{2} \quad \text{and} \quad h = \sqrt{\varepsilon^2-\ell^2}.
\end{equation*}
Furthermore, due to Equation \eqref{eq:lower_bound_distance_p_ptilde}, 
\[r_i\geq \tfrac{1}{2}d(q_i,\tilde{q}_i) = h.\]
Note that the positions of both points $p_i$ and
$\tilde{p}_i$ on the circle $C_i'$ are completely determined
by $r_i$ (the $y$-coordinate of $p_i$). For the
purpose of the proof, we use
the $x$-coordinate $t= \sqrt{(1-\delta)^2 - r_i^2}$ of $p_i$ to parametrize the positions of
$p_i$ and $\tilde{p}_i$. Hence, showing that triangles $p_i
\tilde{p}_i q_i$ and $q_i \tilde{q}_i p_i$ are acute for all
$r_i \in [h,1-\delta]$ translates into showing that they
are acute for all $0\leq t \leq \sqrt{(1-\delta)^2 - h^2}=1-\ell$.

The triangle $p_i \tilde{p}_i q_i$ is isosceles. It is thus strictly acute if and only if its height, $\norm{m_i - q_i}$, is larger than half the length of its base, $\norm{m_i-p_i} = r_i$.
We obtain:
\begin{align}\label{eq:triangle1_inequality}
	&r_i^2 < \norm{m_i - q_i}^2\nonumber\\
	\iff&   (1-\delta)^2 - t^2 < (1+\ell-t)^2 + h^2\nonumber\\
	\iff & 0 < 2t^2 - 2t(1+\ell)+4\ell.
\end{align}
Let $Q(t) = 2t^2 - 2t(1+\ell)+4\ell$ be the quadratic form from the inequality \eqref{eq:triangle1_inequality}, and $\Delta$ be its reduced discriminant,
\[
\Delta = (1+\ell)^2 - 8 \ell = (\ell - 3 - 2\sqrt{2})(\ell - 3 + 2\sqrt{2}).
\]
The inequality \eqref{eq:triangle1_inequality} holds for all $t\in \left[0, 1-\ell\right]$ if and only if
\begin{itemize}
	\item either $\Delta<0$, and thus $Q(t)>0$ for all $t$, or
	\item $\Delta\geq 0$ and the interval $[t_1,t_2]\ni t$ for which $Q(t)\leq 0$, is disjoint from the interval $\left[0, 1-\ell\right]$.
\end{itemize}
$\Delta<0$ if and only if $3-2\sqrt{2}<\ell<3+2\sqrt{2}$. Substituting $2\ell = \varepsilon^2 - (1-\delta)^2+1$ translates into 
\[5-4\sqrt{2}<\varepsilon^2-(1-\delta)^2< 5+4\sqrt{2}.\]
The first inequality holds by assumption. The second follows from the fact that $0\leq \delta\leq\varepsilon< 1$.

In summary, our assumptions imply that $\Delta<0$, implying that $Q>0$, and thus the triangle $p_i \tilde{p}_i q_i$ is strictly acute.

Similarly, the triangle $q_i \tilde{q}_i p_i$ is isosceles, and is thus strictly acute if and only if its height, $\norm{n_i - p_i}$, is larger than half the length of its base, $\norm{n_i-q_i} = h$. This indeed holds, since
\[\norm{n_i - p_i}^2 = (1+\ell-t)^2 + r_i^2 \geq (1+\ell-t)^2 + h^2 > h^2.\]
\end{proof}

\begin{proof}[Proof of Proposition \ref{prop:counterexample_mfld}]	
We show that for any $r\geq 0$, the union of balls $P^{\boxplus r}$ has different homotopy than the manifold $\M$. To achieve this, it suffices to show that their homologies differ.
	
	The manifold $\M$ consists of $k+1$ tori, and thus the dimensions of the homologies of $\M$ equal
	\[\dim\left(H_0(\M)\right) = k+1, \qquad\dim\left(H_1(\M)\right) = 2(k+1), \qquad\dim\left(H_2(\M)\right) = k+1. \] 
	
	We first have a look at the second homology of the set $(C_i \cup
	\{p_i,\tilde{p}_i\})^{\boxplus r}$. For $r<r_0$, as well as $r\geq 1-\delta$, the second homology of the set $(C_i \cup	\{p_i,\tilde{p}_i\})^{\boxplus r}$ is trivial for each $i$. In the former case (see Figure \ref{fig:counterex_radius_too_small}), the set $(C_i )^{\boxplus r}$ has not yet `closed up' to form a (thickened) torus. In the latter case
, the inside of the torus  $(C_i)^{\boxplus r}$ gets filled in. 


The filling in of the torus kills both a $2$-cycle and a $1$-cycle at the same time. This action possibly also creates spurious $2$-cycles (see Remark \ref{remark:ExtraSpurious} below). Nevertheless, there is never more than one spurious $2$-cycle per torus, which kills a $1$-cycle that is present in that torus (in the underlying space $\M$). Hence the first and second Betti numbers of the sample and the underlying space do not match up.
 We stress that these events can only occur if $r\geq 1-\delta$ because the symmetry axis of the torus (the $z$ axis in Figure \ref{fig:Example2Cycle}) does not intersect $P^{\boxplus r}$ when $r< 1-\delta$.

Thus, $\M$ and $P^{\boxplus r}$ have different homology for $r\in [0, r_0)\cup [1-\delta, \infty)$.

\begin{figure}[!h]
\begin{center}
		  \includegraphics[width=0.75\textwidth]{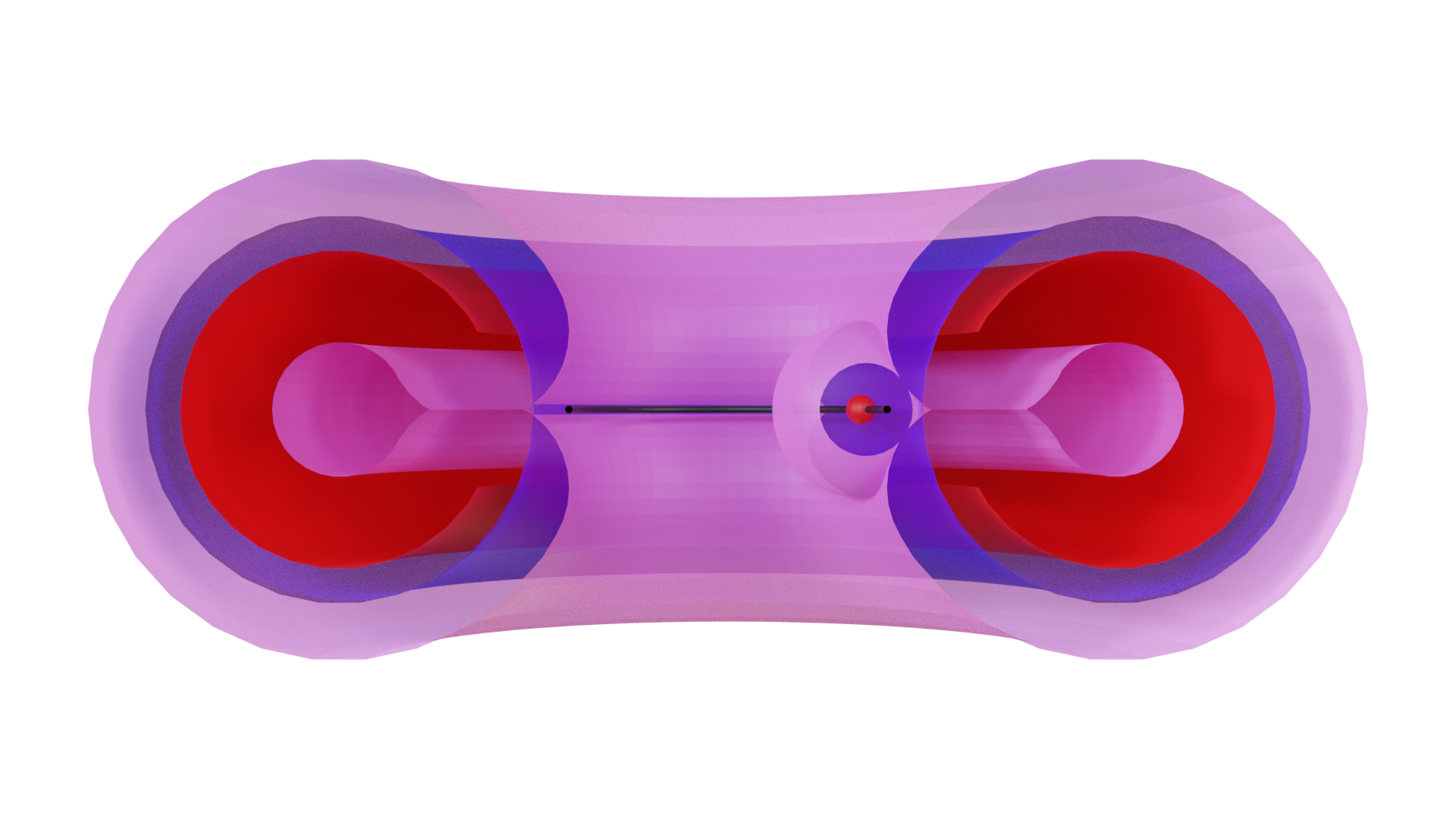} \\
      \includegraphics[width=0.5\textwidth]{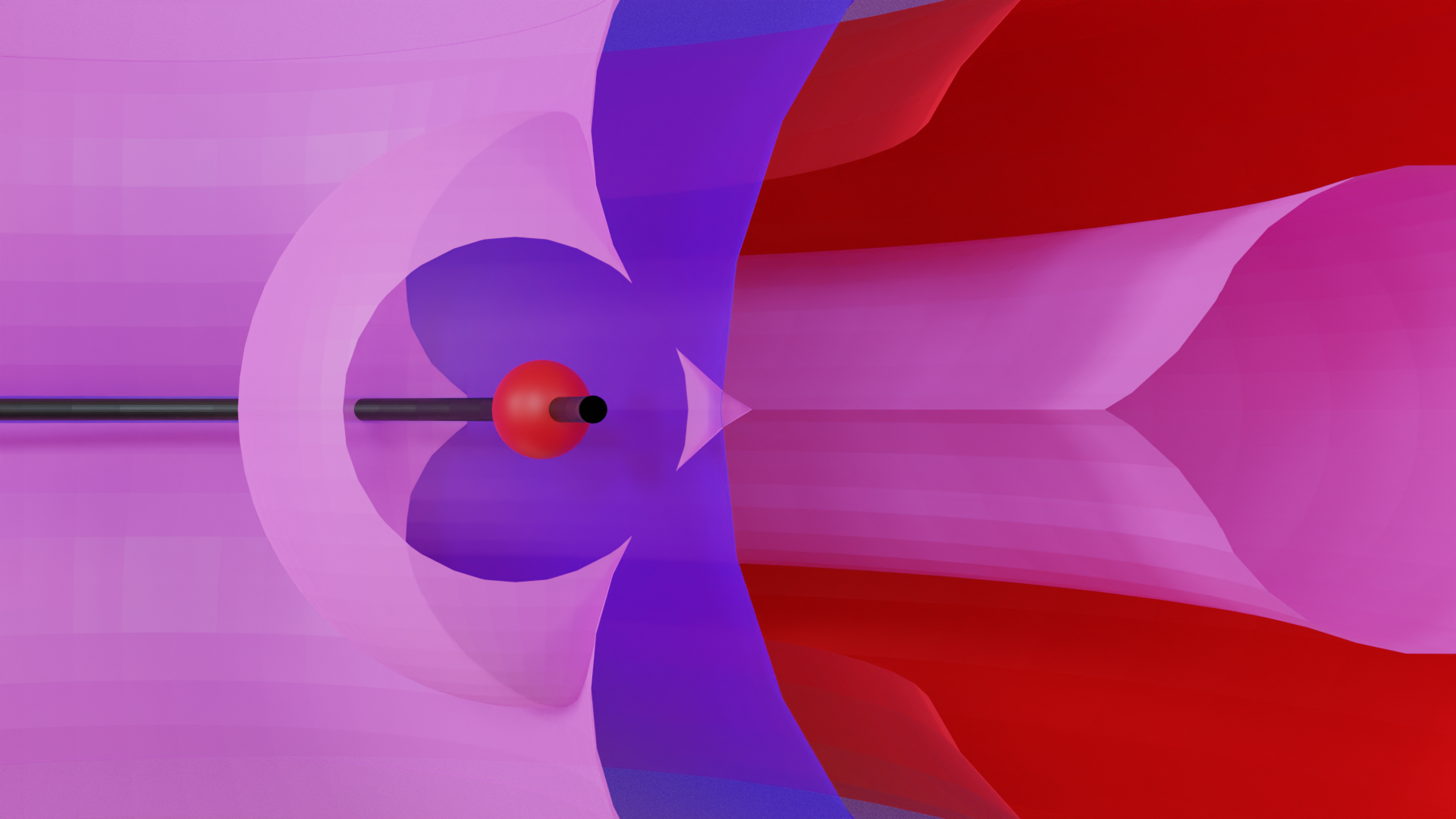}
\end{center}
\caption{The spurious 2-cycle that prevents the sample from having the same homology as the 
{manifold. The manifold $\M$ is depicted in blue, the sample $P$ in red, and the boundary of the thickening in pink. The cycle is clearly present in the zoomed-in image (bottom).}
}
\label{fig:Example2Cycle}
\end{figure}

	\begin{figure}[!h]
	  \begin{center}
\begingroup%
  \makeatletter%
  \providecommand\color[2][]{%
    \errmessage{(Inkscape) Color is used for the text in Inkscape, but the package 'color.sty' is not loaded}%
    \renewcommand\color[2][]{}%
  }%
  \providecommand\transparent[1]{%
    \errmessage{(Inkscape) Transparency is used (non-zero) for the text in Inkscape, but the package 'transparent.sty' is not loaded}%
    \renewcommand\transparent[1]{}%
  }%
  \providecommand\rotatebox[2]{#2}%
  \newcommand*\fsize{\dimexpr\f@size pt\relax}%
  \newcommand*\lineheight[1]{\fontsize{\fsize}{#1\fsize}\selectfont}%
  \ifx\svgwidth\undefined%
    \setlength{\unitlength}{359.75542971bp}%
    \ifx\svgscale\undefined%
      \relax%
    \else%
      \setlength{\unitlength}{\unitlength * \real{\svgscale}}%
    \fi%
  \else%
    \setlength{\unitlength}{\svgwidth}%
  \fi%
  \global\let\svgwidth\undefined%
  \global\let\svgscale\undefined%
  \makeatother%
  \begin{picture}(1,0.79275662)%
    \lineheight{1}%
    \setlength\tabcolsep{0pt}%
    \put(0,0){\includegraphics[width=\unitlength,page=1]{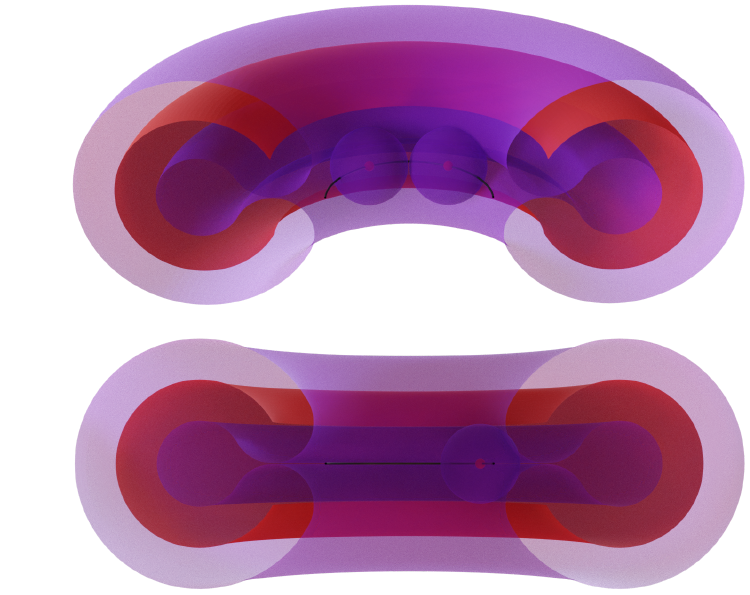}}%
    \put(0.13594174,0.61190122){\color[rgb]{1,0,0}\makebox(0,0)[lt]{\lineheight{1.25}\smash{\begin{tabular}[t]{l}$C_0$\end{tabular}}}}%
    \put(0.4765297,0.52948998){\color[rgb]{1,0,0}\makebox(0,0)[lt]{\lineheight{1.25}\smash{\begin{tabular}[t]{l}$p_0$\end{tabular}}}}%
    \put(0.58586952,0.53119844){\color[rgb]{1,0,0}\makebox(0,0)[lt]{\lineheight{1.25}\smash{\begin{tabular}[t]{l}$\tilde{p}_0$\end{tabular}}}}%
    \put(0.62762461,0.13388579){\color[rgb]{1,0,0}\makebox(0,0)[lt]{\lineheight{1.25}\smash{\begin{tabular}[t]{l}$p_0$\end{tabular}}}}%
    \put(-0.0025245,0.54356381){\color[rgb]{0.35294118,0.01568627,0.57254902}\makebox(0,0)[lt]{\lineheight{1.25}\smash{\begin{tabular}[t]{l}$(C_0)^{\boxplus r_0}$\end{tabular}}}}%
    \put(-0.0025245,0.16087457){\color[rgb]{0.35294118,0.01568627,0.57254902}\makebox(0,0)[lt]{\lineheight{1.25}\smash{\begin{tabular}[t]{l}$(C_0)^{\boxplus r_0}$\end{tabular}}}}%
    \put(0.13539229,0.24458781){\color[rgb]{1,0,0}\makebox(0,0)[lt]{\lineheight{1.25}\smash{\begin{tabular}[t]{l}$C_0$\end{tabular}}}}%
  \end{picture}%
\endgroup%

		\end{center}
		\caption{
			Two views of the situation at $r_0$. The sample $P$ in red and its thickening $P^{\boxplus r_0 } $ in purple. The balls $B(p_0, r_0)$ and $B(\tilde{p}_0,r_0)$ touch precisely and the thickened torus $(C_0)^{\boxplus r_0 } 
$ `closes up' and generates 2-homology.
		}
		\label{fig:counterex_radius_too_small}
	\end{figure}

		For $r \in [r_{i-1} ,r_i )$, the set $(C_i \cup \{p_i, \tilde{p}_i\}) ^{\boxplus r}$ has the homotopy type of a torus with either at least a circle or 
	 a single 2-sphere attached,
	 depending on whether the radius $r$ is smaller or larger than the circumradius of the triangle $p_i \tilde{p}_iq_i$. 
		The smaller `gap' creating the additional 1-, and later 2-cycle appears when $r=r_i$. Since, due to Lemma \ref{lemma:selfcentred}, the simplex $p_i \tilde{p}_i q_i \tilde{q}_i$ is self-centred, the gap persists until $r=R_i=r_{i+1}$.
		
		All sets $(C_j \cup \{p_j,\tilde{p}_j\}) ^{\boxplus r}$ with $j\neq i$ have the homotopy type of a torus. Thus, for $r\in [r_0,1-\delta)$,	
\begin{align} 
\dim\left(H_1\left(P ^{\boxplus r}\right)\right) + \dim\left(H_2\left(P ^{\boxplus r} \right)\right) =  3(k+1)+1.
\label{ineq:HomologyThickSample}
\end{align}

In contrast, $\dim\left(H_1(\M)\right) +\dim\left(H_2(\M)\right) = 3(k+1),$ and thus the manifold $\M$ and the union of balls $P^{\boxplus r}$ have different homology also for $r\in [r_0,1-\delta)$.
\end{proof}

\begin{remark}\label{remark:ExtraSpurious} 
In the description of the spurious cycles we focused on the ones that lie near the circumcentre of the simplex $p_i \tilde{p}_i q_i \tilde{q}_i$. However, there are more spurious $2$-cycles to consider. 

Let us denote the mirror images of the points $q_i$ and $\tilde{q}_i$ in the $yz$ plane of Figure \ref{fig:self-centred} by $q_i'$ and $\tilde{q}_i'$.
Then these spurious $2$-cycles are located near the circumcentre of the simplex $p_i \tilde{p}_i q_i' \tilde{q}_i'$.  
The creation of each such spurious $2$-cycle kills a $1$-cycle --- exactly in the same way that the ``mirrored'' spurious $2$-cycle close to the simplex $p_i \tilde{p}_i q_i \tilde{q}_i$ does. The $1$-cycle that is killed matches the $1$-cycle in the torus that would persist after the torus is filled in.  
\end{remark}

\begin{figure}[ht]
  \begin{subfigure}[t]{0.99\textwidth}
    \def\svgwidth{1.1\linewidth}
    \centering\input{pictures/Annuli_row_0_annotated.pdf_tex}
\subcaption{At first, the balls around the points $p_i$ and $\tilde{p}_i$ do not intersect the thickening of the set $C_i$, and thus the number of connected components of the thickening (in pink) of $P$ is different from the number of components of the manifold. }
\end{subfigure}
\newline
\begin{subfigure}[t]{0.99\textwidth}
  \def\svgwidth{1.1\linewidth}
    \centering\input{pictures/Annuli_row_1_annotated.pdf_tex}
    \subcaption{Then we create a (or possibly multiple) spurious cycle(s) for the first torus in the sequence (on the left).  }
\end{subfigure}
\newline
\begin{subfigure}[t]{0.99\textwidth}
  \def\svgwidth{1.1\linewidth}
  \centering\input{pictures/Annuli_row_2_annotated.pdf_tex}
  \subcaption{{By the time the spurious cycles at the first torus have disappeared, others have been created at the second torus.} This process is then repeated for all tori in the sequence as $r$ increases.} 
\end{subfigure}
\begin{subfigure}[t]{0.99\textwidth}
  \def\svgwidth{1.1\linewidth}
  \centering
\begingroup%
  \makeatletter%
  \providecommand\color[2][]{%
    \errmessage{(Inkscape) Color is used for the text in Inkscape, but the package 'color.sty' is not loaded}%
    \renewcommand\color[2][]{}%
  }%
  \providecommand\transparent[1]{%
    \errmessage{(Inkscape) Transparency is used (non-zero) for the text in Inkscape, but the package 'transparent.sty' is not loaded}%
    \renewcommand\transparent[1]{}%
  }%
  \providecommand\rotatebox[2]{#2}%
  \newcommand*\fsize{\dimexpr\f@size pt\relax}%
  \newcommand*\lineheight[1]{\fontsize{\fsize}{#1\fsize}\selectfont}%
  \ifx\svgwidth\undefined%
    \setlength{\unitlength}{467.8673903bp}%
    \ifx\svgscale\undefined%
      \relax%
    \else%
      \setlength{\unitlength}{\unitlength * \real{\svgscale}}%
    \fi%
  \else%
    \setlength{\unitlength}{\svgwidth}%
  \fi%
  \global\let\svgwidth\undefined%
  \global\let\svgscale\undefined%
  \makeatother%
  \begin{picture}(1,0.2042638)%
    \lineheight{1}%
    \setlength\tabcolsep{0pt}%
    \put(0,0){\includegraphics[width=\unitlength,page=1]{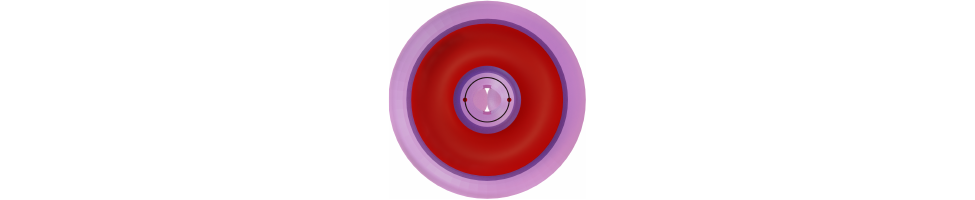}}%
    \put(0.480964,0.09426225){\makebox(0,0)[lt]{\lineheight{1.25}\smash{\begin{tabular}[t]{l}\pairf  k\end{tabular}}}}%
    \put(0.50655718,0.09426225){\makebox(0,0)[lt]{\lineheight{1.25}\smash{\begin{tabular}[t]{l}\pairs k\end{tabular}}}}%
    \put(0.37766471,0.16019473){\makebox(0,0)[lt]{\lineheight{1.25}\smash{\begin{tabular}[t]{l}$T_k$\end{tabular}}}}%
    \put(0.34143732,0.04855102){\makebox(0,0)[lt]{\lineheight{1.25}\smash{\begin{tabular}[t]{l}$C_k^{\boxplus r}$\end{tabular}}}}%
    \put(0.34143732,0.10614304){\makebox(0,0)[lt]{\lineheight{1.25}\smash{\begin{tabular}[t]{l}$C_k$\end{tabular}}}}%
    \put(0,0){\includegraphics[width=\unitlength,page=2]{Annuli_row_3_annotated.pdf}}%
  \end{picture}%
\endgroup%
 
  \subcaption{The points $p_k$ and $\tilde{p}_k$ lie in the symmetry plane for the final torus $T_k$ in the row of tori. }
\end{subfigure}
\caption{The construction for manifolds imitates the construction for general sets of positive reach as much as possible. The manifold $\M$ is depicted in blue, the sample $P$ in red, and the thickening in pink. We only display the part of objects below a horizontal clipping plane.
}
\label{Fig:CycleManifold2full}
\end{figure}  

\section{Subsets of Riemannian manifolds}
\label{sec:Riemannian_setting}

In this section we extend our analysis from the Euclidean setting to Riemannian manifolds with bounded curvature. We assume that the author is familiar with the basics of Riemannian geometry. We will be using results from comparison theory, which, for the convenience of the reader, we recall in Appendix~\ref{sec:RecapToponogov}.

\subsection{Definitions and settings}
\label{sec:Riemannian_definitions}

Before we can state and prove our homotopy reconstruction result, we need to generalize Federer's notions and results (Section~\ref{sec:Euclidean_definitions}) to Riemannian manifolds. This includes an appropriate definition of the reach, as well as a generalization of Federer's Theorem 4.8(12) (Theorem~\ref{Fed4.8.12}) and an extension of the normal cone. 

Throughout this section we will be working with the distance function, the closest point projection, and the medial axis. To this end, let $\Su$ be a closed non-empty subset of an ambient manifold $\N$. We denote the \emph{distance function} to $\Su$ by
\begin{equation}\label{eq:distance_function}
	\rho_{\Su} : \N  \to \mathbb{R}, \qquad \rho_{\Su} (q) := \inf \{ d_\N(q,p)  \mid p \in \Su \}.
\end{equation}
We write $\pi_\Su (q)$ for the set of points $p\in \Su$ such that $d_\N(q,p)= \rho_{\Su} (q)$, and call the set $\pi_\Su$ \emph{the closest point projection} of $q$ onto $\Su$. The \emph{medial axis} of $\Su$ is the set of those points in $\Su$ whose closest point projection consists of more than one point:
\begin{equation}\label{eq:medial_axis_Riemannian}
	\textrm{ax}_{\N}(\Su) : = \{ q \in \N \mid \operatorname{Card} (\pi_\Su(q)) > 1 \},
\end{equation}
where $\operatorname{Card}(A)$ denotes the cardinality of the set $A$.

\subparagraph{The cut locus reach.}
Generalizations of the reach have been studied before; however, none of the existing definitions fit our purpose. We thus introduce a new variant of the reach that is optimal in our setting. In addition, we discuss the various definitions of the reach in Appendix~\ref{sec:reach_history}.

Our variant of the reach is based on the cut locus.
The cut locus (see for example~\cite{berger2003panoramic}) is commonly defined for a single point --- say $p$ --- in a Riemannian manifold, and consists of those points in the manifold, for which there is no unique geodesic to $p$.

\SetCutLocus* 

Observe that {the cut locus contains the medial axis, that is, }$\textrm{ax}_{\N} (\Su) \subseteq \operatorname{cl}_{\N}(\Su)$.

\ReachCutLocus*

\subparagraph{The key tool: the flow}
In \cite{albano2013singular}, the authors extend the result of \cite{LIEUTIERhomotopytype}, 
namely that  any open bounded subset of Euclidean space has the same homotopy type as its medial axis, to the more general situation of an open bounded subset $\Omega$ of a Riemannian manifold.  
By using some  tools of non-smooth analysis, namely the properties of semi-concave functions,
 as well as some Riemannian geometry, they  generalize the result of \cite{LIEUTIERhomotopytype} while providing a shorter proof. However, the underlying idea in both \cite{albano2013singular} and \cite{LIEUTIERhomotopytype} is the same, which is to use the flow: given an open bounded subset $\Omega$ of a Riemannian manifold, the flow $\Phi :\Omega \times [0, \infty) \rightarrow \Omega$ is induced by a generalized gradient of the distance function on its boundary $\partial \Omega$. It is continuous, and realizes a homotopy equivalence --- more precisely, a weak deformation retraction --- between the set $\Omega$
and the cut locus of its boundary $\partial \Omega$. We refer the reader to Appendix \ref{sec:summary_albano} for more details.

We consider {the flow defined on the} complement of a closed subset $\Su\subseteq \N$:
	\begin{equation}\label{eq:the_flow}
		\Phi_{\Su}: \left( \N \setminus \Su \right) \times [0, \infty) \rightarrow \N \setminus \Su, \qquad (p,t) \mapsto  \Phi_{\Su}(p,t).
	\end{equation} 
Roughly speaking, the flow follows the steepest ascent of the distance function. The precise definition of the flow $\Phi_{\Su}$ is extensive and described in detail in Appendix~\ref{sec:singular_set_chracterization}. We refer the reader to Equation~\eqref{eq:RightDerivativeIsProjOnSuperGradient} for an explicit formula, and note that, thanks to Lemmas 3.4 and 3.5 as well as proof of Theorem 5.3 in \cite{albano2013singular}, $\Phi_{\Su}$ is locally Lipchitz in $p$
and $1$-Lipschitz in $t$.
 
In the next two lemmas we leverage the properties of the flow $\Phi_{\Su}$ to trace minimizing geodesics and define a deformation retract onto the set $\Su$. 
The next two lemmas show that near a set of positive cut locus reach the flow  $\Phi_{\Su}$ goes along geodesics, which in turn yields a generalization of part of Federer's Theorem 4.8(12) (Theorem~\ref{Fed4.8.12}), and that the flow induces a deformation retract on a set of positive reach. 
The proofs of the lemmas are given in Appendix~\ref{sec:singular_set_chracterization}.
\begin{restatable}{lemma}{TrajectoryAreMinimisingGeodesics}\label{lemma:TrajectoryAreMinimisingGeodesics}
Let $0<\rho <  \rchcl_{\N}(\Su) $. Then for any point $p\in\Su^{\boxplus \rho} \setminus \Su$, and any parameter $t\in\rho - \rho_{\Su}(p)$, there is a unique minimizing geodesic from the point $\Phi_{\Su}(p, t)$ to $\Su$. Moreover,
\begin{equation*}
	\pi_{\Su} (\Phi_{\Su}(p, t) ) =  \pi_{\Su} (p) ,
\end{equation*}
and the minimizing geodesic from $\Phi_{\Su}(p, t)$ to $\Su$ is the concatenation of the minimizing geodesic from $p$ to $\Su$
with the trajectory $\Phi_{\Su}(p, [0,t])$.	
\end{restatable}

\begin{corollary}\label{corollary:NestedBalls} 
Let $\Su\subseteq\N$ be a closed set. Pick a point $p \in \N $ satisfying $0< \rho_{\Su}(p) < \rchcl_{\N}(\Su)$, and define
 \[
 \zeta := \rchcl_{\N}(\Su) - \rho_{\Su}(p). 
 \]
The domain of the flow $\Phi_{\Su}$ can be extended to negative values of $t$, namely $ t \in  [-\rho_{\Su}, \,  \zeta]$. We denote this extension by $\overline{\Phi_{\Su}}$, and define it via the geodesic segment extending the geodesic
from $p$ to $\pi_{\Su}(p)$.
For every point $y$ on this segment, we have $\pi_\Su (y)=\pi_\Su(p)$. 
Because the balls centered at $y$ with radius $d(y , \pi_\Su(p) )$ are nested, we have in particular that 
\[
B\left(p, \, \rho_{\Su}(p)\right)^\circ \subseteq  B\left( \Phi_{\Su}(p, \zeta) , \, \rchcl_{\N}(\Su) \right)^\circ \subseteq \N \setminus \Su.
\]
\end{corollary}

The complement of the open offset of $\Su$ is defined as
\begin{equation}\label{eq:ComplementOffset}
\CplOffset^\rho(\Su) := \left\{ p \in \N  \mid \rho_\Su(p) \geq \rho \right\}.
\end{equation}

\begin{restatable}{lemma}{reversedFlow}\label{label:reversedFlow}
Let $0<\rho < \rho'< \rchcl_{\N}(\Su)$. Then
\begin{equation}\label{eq:ReachReversed}
\rchcl_{\N}(\CplOffset^{\rho'}(\Su) ) \geq \rho',
\end{equation}
and the homotopy $\mathcal{H} : \Su^{\boxplus \rho} \times [0, \rho] \rightarrow \Su^{\boxplus \rho}$, defined by 
\begin{equation}\label{eq:HomotopyReversedFlow}
\mathcal{H}(p, t) := 
\begin{cases}
p &  \text{if}\quad p \in \Su, \\
\Phi_{\CplOffset^{\rho'}(\Su) } \left(p, \max(t, \rho_{\Su}(p) ) \right)& \text{if}\quad p \notin \Su,
\end{cases}
\end{equation}
realizes a deformation retract from the thickening $\Su^{\boxplus \rho}$ to the set $\Su$ where the trajectory of each point is a minimizing geodesic to $\Su$.
\end{restatable}

\subparagraph{The normal cone.}
Finally, we extend the normal cone and the tangent cone to the
  Riemannian setting, keeping for both the same notation as in the
  Euclidean case, that is, omitting the reference to the Riemannian
  manifold $\N$.

\begin{definition}[Normal cone]\label{definition:NormalCone}
For a point $q\in \Su$, we define the \emph{normal cone} to $\Su$ at $q$ by
\begin{equation*}
\Nor(q, \Su) := \left\{ \lambda v \in \Tan(q,\N) \mid \lambda \geq 0 \text{ and } \exists\, p \in \N \setminus \Su 
\text{ with }  p =  \exp_q (v) \text{ and }  |v| = \rho_{\Su} (p)  \right\}.
\end{equation*}
\end{definition}
\begin{remark}\label{remark:NormalConeIncludedInNormalCone}
Note that in Definition~\ref{definition:NormalCone} one has $q = \pi_{\Su} (p)$. Moreover, we can easily check that $\Nor(q, \Su)$ is a subset of the dual cone of the tangent cone, which can be derived from the Definition \ref{def:4.3and4.4Fed}  for Euclidean space as
\[
\Tan(q, \Su) := \Tan (0 , \exp_q^{-1} ( \Su \cap B(q,\rho)) 
\]
where $\rho>0$ is smaller than the injectivity radius of $\N$.

The reverse inclusion holds as well but we do not need it there.
 \end{remark}

\begin{lemma}\label{lemma:SetCutLocus2}
Given a closed subset $\Su \subseteq \N$ such that $\rchcl_{\N}(\Su)>0 $, a point 
 $q \in \Su$ and a vector $v\in \Nor(q, \Su)$ with $ |v|=1$, then for any $\lambda < \rchcl_{\N}(\Su)$, the curve
 $\exp_q ([0, \lambda v])$ is the unique minimizing geodesic connecting the point $\exp_q (\lambda  v)$ with the set $\Su$. 
\end{lemma}
\begin{proof}
With Definition \ref{definition:NormalCone} of the normal cone $\Nor(q, \Su)$, 
the proof  follows from Lemma \ref{lemma:TrajectoryAreMinimisingGeodesics}.
\end{proof}

\subparagraph{Settings.}
In the Riemannian setting, we let $B(p,r) = \{ x \in \N \mid
  d_\N(p,x) \leq r \}$ designate a geodesic ball of $\N$ and
  $X^{\boxplus r} = \bigcup_{x \in X } B(x, r)$ designate a union of
  geodesic balls. With all the necessary notions in place, we recall
the setting we assume for the remainder of
Section~\ref{sec:Riemannian_setting}:

	\begin{tcolorbox} 
\assumptionRiemannianSetting*
\end{tcolorbox} 

\subsection{The geometric argument}\label{sec:Riemannian_geometric_argument}
In this section we show that if the union of {(geodesic)} balls $P^{\boxplus r}  = \bigcup_{p \in P } B(p, r)$ 
covers a sufficiently large neighbourhood of $\Su$ and the parameter $r$ is not too big, $P^{\boxplus r} $  
deformation-retracts to $\Su$. 

\begin{theorem}\label{theorem:geometric_argument_riemannian}
	Assume that a parameter $\alpha>0$ is small enough, so that the $\alpha$-neighbourhood $\Su^{\boxplus \alpha }$ 
of the set $\Su$ is contained in the union of balls $P ^{\boxplus r }$.
In other words,
	\begin{align}  
	\Su^{\boxplus \alpha } 
	\subseteq P^{\boxplus r} 
	.
	\label{eq:TubeRiem} 
	\end{align} 
Define 
\begin{equation}\label{equation:Def_f_k}
	f_\curvlowbnd(\reach, \delta, \alpha) \defunder{=} 
	\begin{cases}
				\frac{1}{\sqrt{\curvlowbnd}}  \arccos  \frac{ \cos \sqrt{\curvlowbnd} \: (\reach - \delta)}{ \cos \sqrt{\curvlowbnd} \: (\reach - \alpha) }     & \text{if} \quad \curvlowbnd> 0,
		 \\
		\sqrt{ (\reach - \delta)^2  - (\reach - \alpha)^2}     & \text{if} \quad \curvlowbnd = 0, \\
		\frac{1}{\sqrt{\curvlowbndn}}  \arccosh  \frac{ \cosh \sqrt{\curvlowbndn} \: (\reach - \delta)}{ \cosh \sqrt{\curvlowbndn} \: (\reach - \alpha) }     & \text{if} \quad \curvlowbnd< 0. 
	\end{cases}
\end{equation}
Moreover, for any point $q\in\Su$ and any vector $v \in \Nor(q, \Su)$, let $\gamma_{q,v} (t)$ be the (arc length parametrized) geodesic  emanating from $q$ in the direction $v$, and write 
\[ L \defunder{=}  \{ \gamma_{q,v} (t) \mid t \in [0, \reach)\} .\]  

If
\begin{equation}\label{equation:R2TooSmallToCNormalLineS_riemannian}
r <  f_\curvlowbnd(\reach, \delta, \alpha),
\end{equation}
then 
the intersection $L \cap  (P^{\boxplus r} )$ 
is a connected geodesic segment. 

Furthermore, $P^{\boxplus r} $ 
deformation-retracts onto $\Su$ along the closest point projection.
\end{theorem}

\begin{proof}[Proof of Theorem \ref{theorem:geometric_argument_riemannian}]

\begin{figure}[h!]
	\centering
	\begin{subfigure}[b]{0.45\textwidth}
		\centering
		\includegraphics[width=\textwidth]{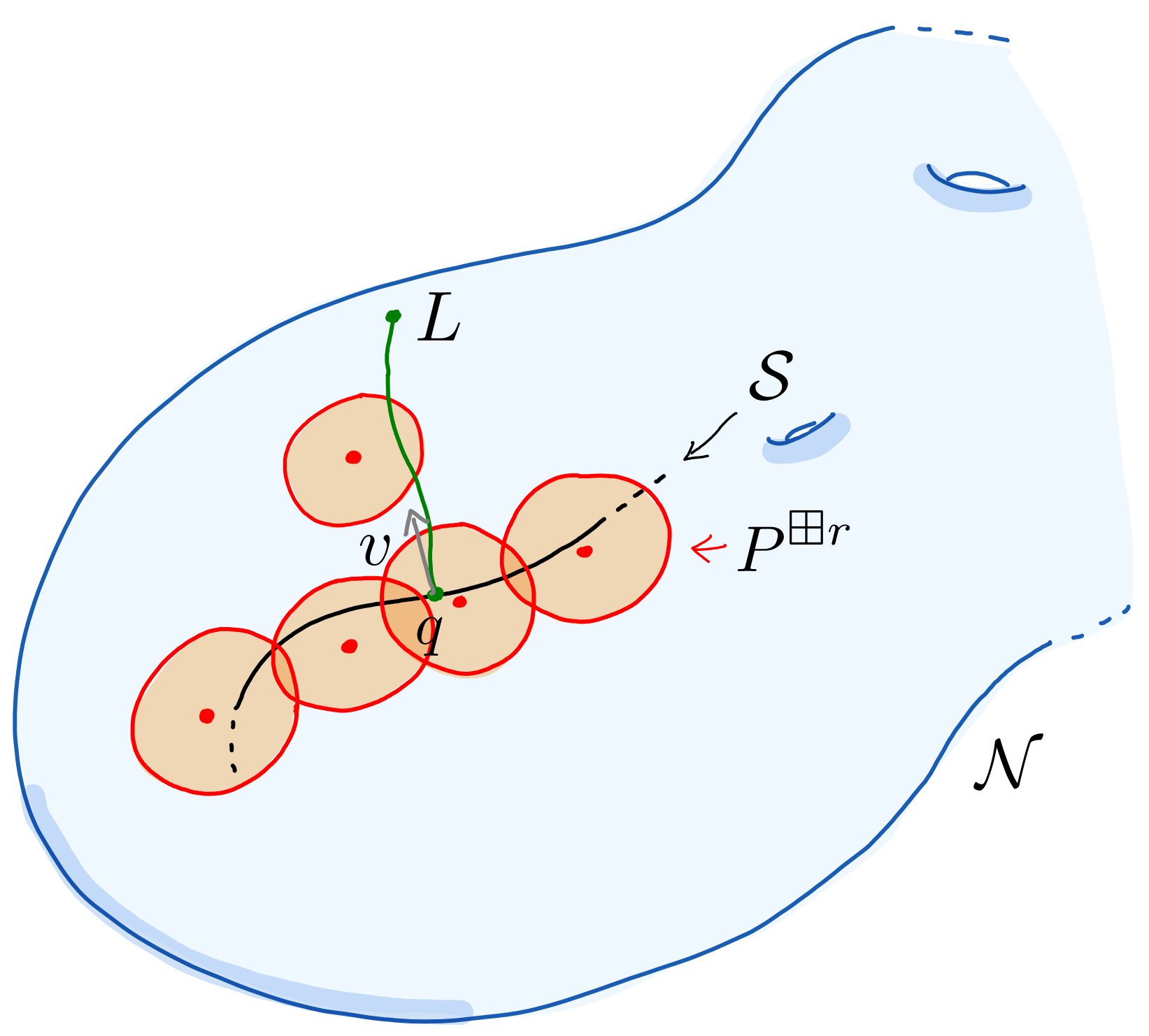}
		\subcaption{\small
			The intersection of the segment $L$ and the thickening $P^{\boxplus r}$ is not connected.
		}
		\label{fig:geometric_a}
	\end{subfigure}
	\hfill
	\begin{subfigure}[b]{0.45\textwidth}
		\centering 
		\includegraphics[width=\textwidth]{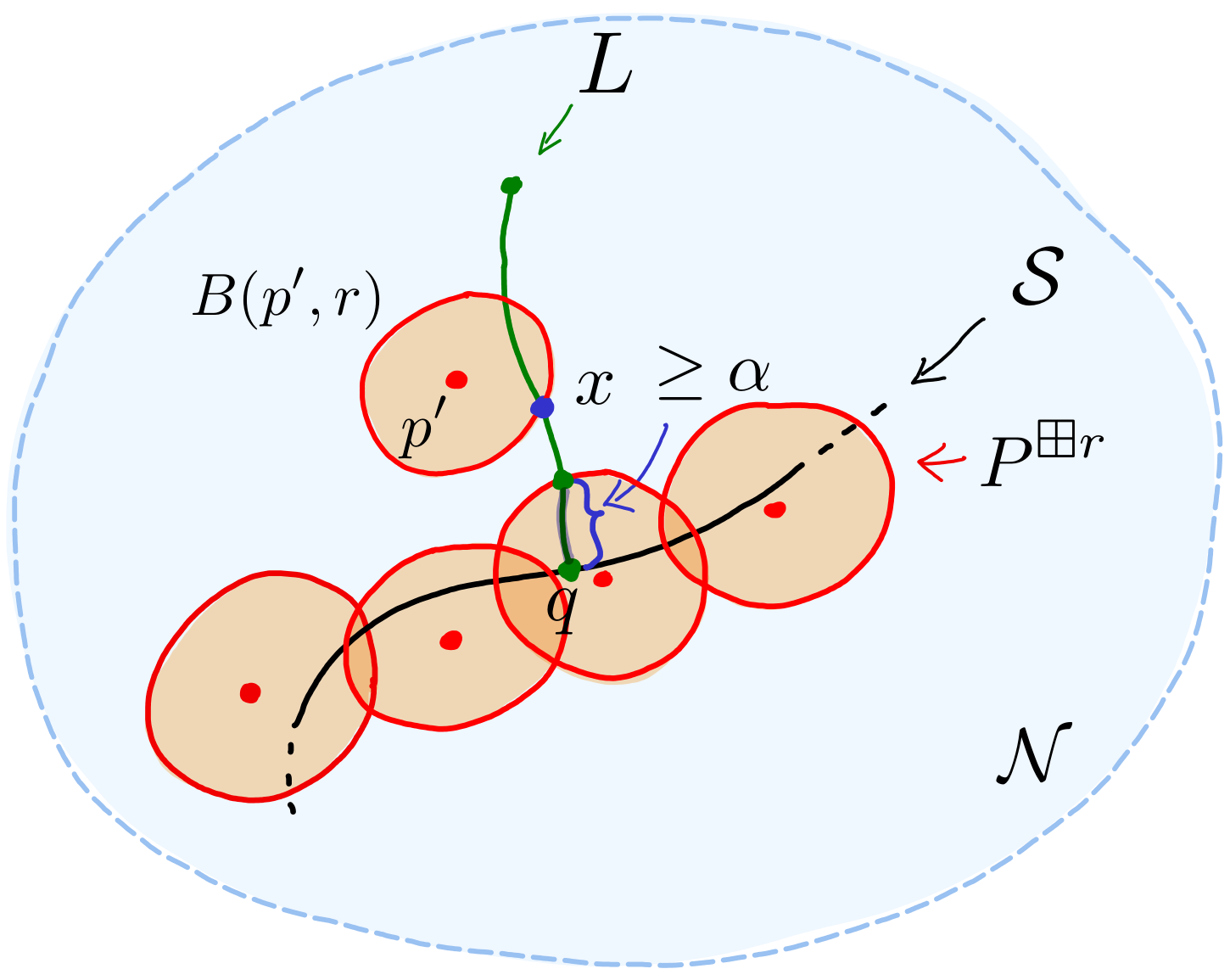}
		\subcaption{\small
			A close-up: the point $x$ lies on a different connected component of $L\cap P^{\boxplus r}$ than the point $q$, and thus the distance between $x$ and $q$ is at least $\alpha$.
		}
		\label{fig:geometric_b}
	\end{subfigure}
	\vskip\baselineskip
	\begin{subfigure}[b]{0.45\textwidth}    
		\centering 
		\includegraphics[width=\textwidth]{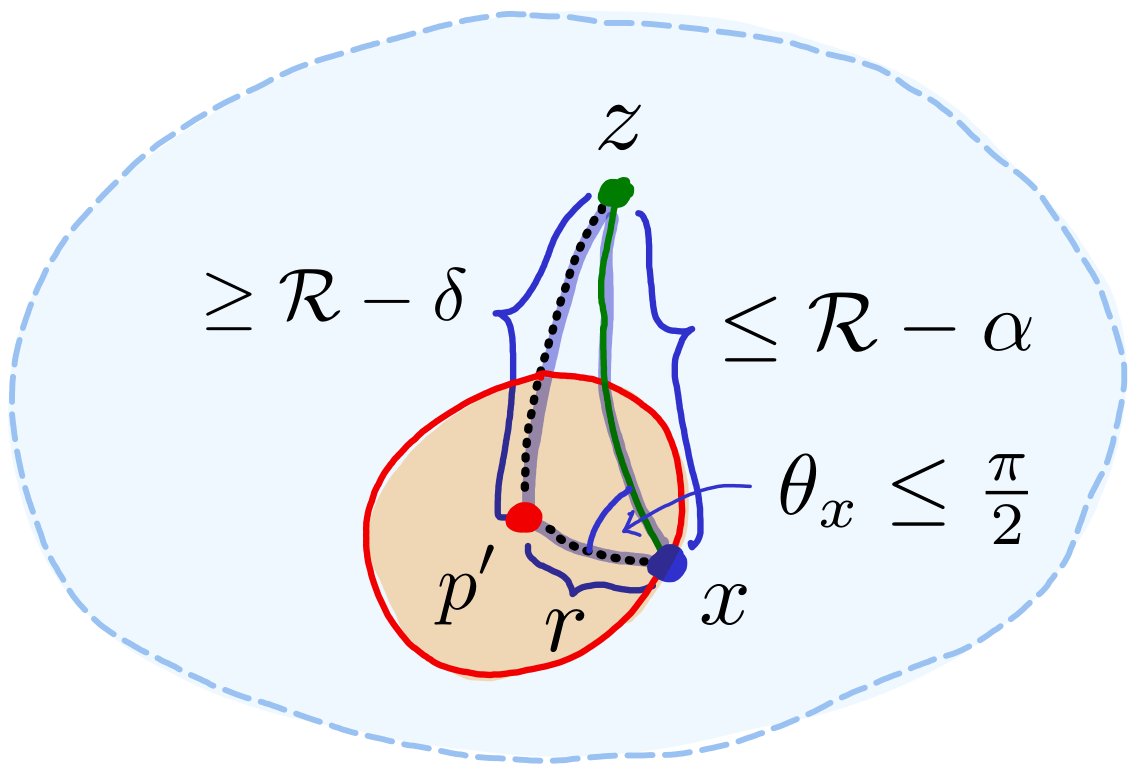}
		\subcaption{\small
			A close-up of the triangle $p'xz$.	
		}
		\label{fig:geometric_c}
	\end{subfigure}

	\caption{\small A pictorial overview of the proof. The blue shaded region represents a part of the manifold $\N$, the black line a part of the set $\Su$. The union of balls $P^{\boxplus r}$ 
		is coloured orange, and the segment $L$ green.
	} 
	\label{fig:OverviewProofRiemannian}
\end{figure}

We prove the claim by contradiction. 
	
Assume that there exists a point $q \in \Su$ and a vector $v \in \Nor(q, \Su)$, with $\|v\|=1$,
such that the intersection 
of $P^{\boxplus r}$ 
with the geodesic segment $L$ 
consists of several connected components (as illustrated in Figure \ref{fig:geometric_a}).
Thanks to Equation~\eqref{eq:TubeRiem}, the connected component that contains the point $q$ has length at least $\alpha$.
Let $x$ be first point along $L$, seen from $q$, lying inside a connected component of $ \left(P^{\boxplus r} 
\right) \cap L$ that does not contain $q$. 
Then $x$ lies at the intersection of the geodesic segment $L$ and the boundary of a ball $B(p', r)$, with $p'\in P$ (as illustrated in Figure~\ref{fig:geometric_b}).
	
Consider the `endpoint' $z=  \gamma_{q,v} (\reach)$ of the segment $L$. The distances between $z$ and the points $x$ and $p'$ satisfy (as in Figure \ref{fig:geometric_c}):
\begin{align}
d_\N(x, z) &\leq \reach - \alpha  \nonumber   
\\
d_\N(x, p') & =  r     \label{eq:BounfOnXPPrime},  \\
d_\N(z, p') &\geq \reach - \delta.  \nonumber 
\end{align}

Consider the geodesic triangle $p'xz$, and let $\theta_x$ denote its angle at $x$. 
By the Gauss Lemma~\cite[Lemma 3.5]{do1992riemannian}, the geodesic from $p'$ to $x$ is orthogonal, at the point $x$,  to the boundary of the geodesic ball $B(p',r)$. 
Due to the definition of $x$, 
the intersection of the ball $B(p', r)$ with  the segment of $L$ between $x$ and $q$
 is empty, and the angle $\theta_x$ satisfies
\begin{equation*}
\theta_x \leq \frac{\pi}{2}.
\end{equation*}

Thus, $\cos \theta_x \geq 0$. We now apply
Alexandrov-Toponogov distance comparison theorem (Theorem \ref{thm:toponogov_comparison_theorem_distance})
 and the law of cosines (Proposition \ref{proposition:CosineLaws}) to the triangle $p'xz$.

If $\curvlowbnd>0$, Theorem~\ref{theorem:BonnetSchoenbergMyers} bounds the diameter of the manifold $\N$ by $\reach \leq \operatorname{diam} ( \N ) \leq \frac{\pi}{\sqrt{\curvlowbnd} }$. We obtain:
\[
\cos \sqrt{\curvlowbnd} \: d_\N(z, p') \geq  \cos \sqrt{\curvlowbnd} \:d_\N(x, z) \:   \cos \sqrt{\curvlowbnd} \:d_\N(x, p').
\]

If $\curvlowbnd=0$,
\[
d_\N(z, p') ^2 \leq   d_\N(x, z)^2 + d_\N(x, p') ^2.
\]
If $\curvlowbnd<0$, we obtain
\[
\cosh \sqrt{\curvlowbndn} \: d_\N(z, p')  \leq  \cosh \sqrt{\curvlowbndn} \: d_\N(x, z) \:   \cosh \sqrt{\curvlowbndn} \: d_\N(x, p')  .
\]
Finally, inserting inequalities 
 \eqref{eq:BounfOnXPPrime} 
  yields:
\begin{equation}   \label{eq:ConstraintOnR}
\begin{rcases}
\cos \sqrt{\curvlowbnd} \: (\reach - \delta) \geq  \cos \sqrt{\curvlowbnd} \: (\reach - \alpha) \:   \cos \sqrt{\curvlowbnd} \:r,&\text{if}\quad \curvlowbnd>0, \quad 
\\
(\reach - \delta)^2 \leq   (\reach - \alpha)^2 + r^2,  &\text{if}\quad \curvlowbnd=0,   \quad   \\
\cosh \sqrt{\curvlowbndn} \: (\reach - \delta)  \leq  \cosh \sqrt{\curvlowbndn} \: (\reach - \alpha) \:   \cosh \sqrt{\curvlowbndn} \: r ,  &\text{if}\quad \curvlowbnd<0. \quad  
\end{rcases}
\end{equation}

Observe that the inequality \eqref{equation:R2TooSmallToCNormalLineS_riemannian} is precisely the negation of 
\eqref{eq:ConstraintOnR}, which gives the contradiction.

{We have proven the claim, namely that the intersection $L \cap  (P^{\boxplus r} )$
is a connected geodesic segment.

At last we turn our attention to the definition of $f_\curvlowbnd(\reach, \delta, \alpha)$ (Equation~\eqref{equation:Def_f_k}). Observe that, in each of the three cases, if $\reach- \alpha >0$, 
	then $ f_\curvlowbnd(\reach, \delta, \alpha)  < \reach - \delta$.
Equation~\eqref{equation:R2TooSmallToCNormalLineS_riemannian} {then}
implies that $\delta + r < \reach$, and thus
\begin{equation}\label{eq:POplusrInSOplusR}
P^{\boxplus r}  
\subseteq \left(\Su^{\boxplus \reach}\right) ^\circ . 
\end{equation}

Since $P^{\boxplus r}$ is a closed set, Equation~\eqref{eq:POplusrInSOplusR} implies that there exists a number $\rho <  \reach \leq \rchcl_{\N} (\Su)$ such that 
\begin{equation*}
P^{\boxplus r}  
\subseteq \Su^{\boxplus \rho}. 
\end{equation*}

We now apply Lemma~\ref{label:reversedFlow} with $\rho <  \rho' < \reach \leq \rchcl_{\N} (\Su)$. Consider the homotopy $\mathcal{H}$ from Equation~\eqref{eq:HomotopyReversedFlow}, and its restriction to the set $P^{\boxplus r} \times [0,  \rho]$.

Recall that $\rho_{\Su}$ denotes the distance function, defined by Equation~\eqref{eq:distance_function}. Given a point $p\in P^{\boxplus r}$, by Definition \ref{definition:NormalCone} of the normal cone, there is a point $q\in \Su$ and 
 a vector $v \in \Nor(q, \Su)$, with $\|v\|=1$, such that $p= \exp_q (\rho_{\Su} (p)\cdot v )$. 
 The image of $[0, \rho_{\Su} (p)  ]$
 under the exponential map $t \mapsto \exp_q t v$ is the unique minimizing geodesic from $p$ to $\Su$. Due to the claim, 
 this minimizing geodesic --- which is contained in the segment $L$ --- is also contained in the thickening $P^{\boxplus r}$.
%
 This implies that:
 \begin{equation*}
\forall p\in P^{\boxplus r} ,\, \forall t \in [0, \rho],\, \mathcal{H}(p,t) \in P^{\boxplus r}.
\end{equation*}
Thus, the restriction of the map $\mathcal{H}$ to the set $P^{\boxplus r} \times [0,  \rho]$ 
  is a deformation retract from the union of balls $P^{\boxplus r}$  
to the set~$\Su$.	
}

%
%
%

\end{proof}

\subsection{Bounds on the sampling parameters}
\label{sec:boundsRiemannian} 
In this section we extend Section \ref{sec:bounds} to subsets of Riemannian manifolds.

\subsubsection{Subsets of Riemannian manifolds with positive cut locus reach}
\label{sec:bounds_Riemannian_sets}
%

\HomotopyPositiveReachRiemannian*
%

\begin{proof}
{We begin by noting that the bound in the case where $\curvlowbnd = 0$ equals the bound in Proposition~\ref{theorem:HomotopyNoiselessPositiveReach}, and is deduced by the same analysis. In the following we thus only consider the cases $\curvlowbnd <0$ and $\curvlowbnd > 0$. }

We combine the bound from Lemma \ref{lem:Bounds_alpha_sets_pos_reach},
which, thanks to Remark \ref{remark:OffsetISAddiiveInGeodesicSpaces}, applies in the Riemannian context,
 with the conditions of Theorem~\ref{theorem:geometric_argument_riemannian}. 
More precisely,  
substituting $\alpha= r -\varepsilon $ in Equation \eqref{equation:R2TooSmallToCNormalLineS_riemannian} { yields 
\begin{align} 
\cos (\sqrt{\curvlowbnd} r) 
\geq
\frac{\cos (\sqrt{\curvlowbnd} (\reach- \delta) )}{\cos \sqrt{\curvlowbnd} (\reach+\varepsilon - r)} ,
& & \text{if } \curvlowbnd>0, \label{eq:IntervalForRSphericalCase}
\\
\cosh (\sqrt{\curvlowbndn} r) \leq \frac{\cosh (\sqrt{\curvlowbndn} (\reach- \delta) )}{\cosh \sqrt{\curvlowbndn} (\reach+\varepsilon - r)},
& & \text{if } \curvlowbnd<0. \label{eq:IntervalForRHyperbolicCase}
\end{align} 
These inequalities can be rearranged (using the product rule for the (hyperbolic) cosine) into
\begin{align} 
\frac{1}{2} \left( \cos (\sqrt{\curvlowbnd} (\reach +\varepsilon) ) +  \cos \sqrt{\curvlowbnd} (\reach+\varepsilon - 2r) \right ) & 
\geq 
\cos (\sqrt{\curvlowbnd} (\reach- \delta) ) ,
\tag{if $\curvlowbnd>0$}
\\
\frac{1}{2} \left( \cosh (\sqrt{\curvlowbndn}  (\reach +\varepsilon) ) + \cosh \sqrt{\curvlowbndn} (\reach+\varepsilon - 2r)  \right) &\leq  \cosh (\sqrt{\curvlowbndn} (\reach- \delta) ) ,
\tag{if $\curvlowbnd<0$}
\end{align} 
or 
\begin{align} 
\cos \left(\sqrt{\curvlowbnd} (\reach+\varepsilon - 2r)\right)  & 
\geq
2\cos (\sqrt{\curvlowbnd} (\reach- \delta) ) - \cos (\sqrt{\curvlowbnd} (\reach +\varepsilon) )  ,
\tag{if $\curvlowbnd>0$}
\\
 \cosh \left(\sqrt{\curvlowbndn} (\reach+\varepsilon - 2r)\right)  & \leq  2\cosh (\sqrt{\curvlowbndn} (\reach- \delta) )- \cosh (\sqrt{\curvlowbndn}  (\reach +\varepsilon) )  .
\tag{if $\curvlowbnd<0$}
\end{align} 
Because the left hand side of the inequality above is upper bounded by $1$ if $\curvlowbnd$ is positive, and is likewise lower bounded by $1$ if $\curvlowbnd$ is negative, we find \eqref{equation:BoundOnR0Riemannian}. The interval in which one may choose $r$ also follows immediately from the inequality above. It is clear that the interval in question is symmetric around $\frac{1}{2}(\reach+\varepsilon)$ (assuming \eqref{equation:BoundOnR0Riemannian} is satisfied).

} 
\end{proof} 

\subsubsection{Submanifolds of Riemannian manifolds with positive reach}
\label{sec:bounds_Riemannian_mflds}
In this section we extend Lemma \ref{lemma:BoundOnTubularCoverForManifolds} to the Riemannian setting. {In other words, 
we show that the bounds from Proposition \ref{theorem:HomotopyPositiveReachRiemannian} can be improved further if the set of positive reach {$\Su$ is a submanifold $\M$ of $\N$}.}

{Unlike the proof of Lemma \ref{lemma:BoundOnTubularCoverForManifolds}, which was rather algebraic,} the proof {of its extension }in the Riemannian setting --- Lemma~\ref{lemma:BoundOnTubularCoverForManifolds_Riemann} --- is purely geometrical and involves the Toponogov comparison theorem (see Appendix~\ref{sec:RecapToponogov} for an overview of results). 
In fact, a part of the proof of Lemma~\ref{lemma:BoundOnTubularCoverForManifolds_Riemann}
 can also be seen as an alternative proof of Lemma~\ref{lemma:BoundOnTubularCoverForManifolds}.  


\begin{lemma}\label{lemma:BoundOnTubularCoverForManifolds_Riemann}
Suppose that $\M \subseteq 
P^{\boxplus \varepsilon} 
\subseteq \N$ for some
$\varepsilon \geq 0$. Then, for any $r\geq\alpha\geq 0$ satisfying 
{
		\begin{equation}\label{equation:ConditionTubularCoverForManifold_Riemann}
		r \geq r_m,
		\end{equation}
} with $r_m$ defined via 
\begin{align} 
\cos \left (\sqrt{\curvlowbnd} r_m  \right) 
=& \frac{1}{\sin \left(  \sqrt{\curvlowbnd} \reach  \right)}  \left[ \sin \left( \sqrt{\curvlowbnd} \left(\reach+\alpha\right)   \right) \cos \left( \sqrt{\curvlowbnd} \varepsilon \right) - \right.\nonumber\\
& \qquad\qquad\qquad\qquad - \left.\sin \left( \sqrt{\curvlowbnd} \alpha   \right) \cos \left( \sqrt{\curvlowbnd} (\reach -\delta) \right) \right],
\tag{if $\curvlowbnd >0$}
\\
r_m^2 =& \;\alpha^2 + \varepsilon^2 + \frac{\alpha}{\reach} \left ( \reach^2+ \varepsilon^2- (\reach-\delta)^2 
\right),  
\tag{if $\curvlowbnd =0$}
\\
\cosh \left (\sqrt{\curvlowbndn} r_m  \right) 
=& \frac{1}{\sinh \left(  \sqrt{\curvlowbndn} \reach  \right)}  \left[ \sinh \left( \sqrt{\curvlowbndn} \left(\reach+\alpha\right)   \right) \cosh \left( \sqrt{\curvlowbndn} \varepsilon \right)\right.
\nonumber
\\
&\left.- \sinh \left( \sqrt{\curvlowbndn} \alpha   \right) \cosh \left( \sqrt{\curvlowbndn} (\reach -\delta) \right) \right]. 
\tag{if $\curvlowbnd <0$}
\\
\tag{\ref{eqdef:rm}}
\end{align} 
		the $\alpha$-neighbourhood $\bigcup_{q\in \M}  B(q,\alpha)={\M^{\boxplus \alpha}}$ of $\M$ is contained in the union of balls\\ $\bigcup_{p\in P} B(p,r)={P^{\boxplus r}}$. That is,
		\[
		\M^{\boxplus \alpha}\subseteq P^{\boxplus r}.
		\]
\end{lemma} 

\begin{proof} 
Given a point $q \in \M \subseteq \N$, the tangent {space} 
$T_q\M$ and the normal {space}
 $N_q\M$ are orthogonal vector spaces satisfying $T_q\M \times N_q\M = T_q \N$.
The normal cone $\Nor(q, \M)$, as defined in Definition~\ref{definition:NormalCone},
 is a subset of $N_q\M$ (see also Remark~\ref{remark:NormalConeIncludedInNormalCone}).

Since $\M \subseteq 
P^{\boxplus \varepsilon}$, the intersection $P\cap B(q,\varepsilon)$ is non-empty. Let $p \in P\cap B(q,\varepsilon)$.


\begin{figure}[h!]
	\centering
	\begin{subfigure}[b]{0.45\textwidth}
		\centering
		\includegraphics[width=\textwidth]{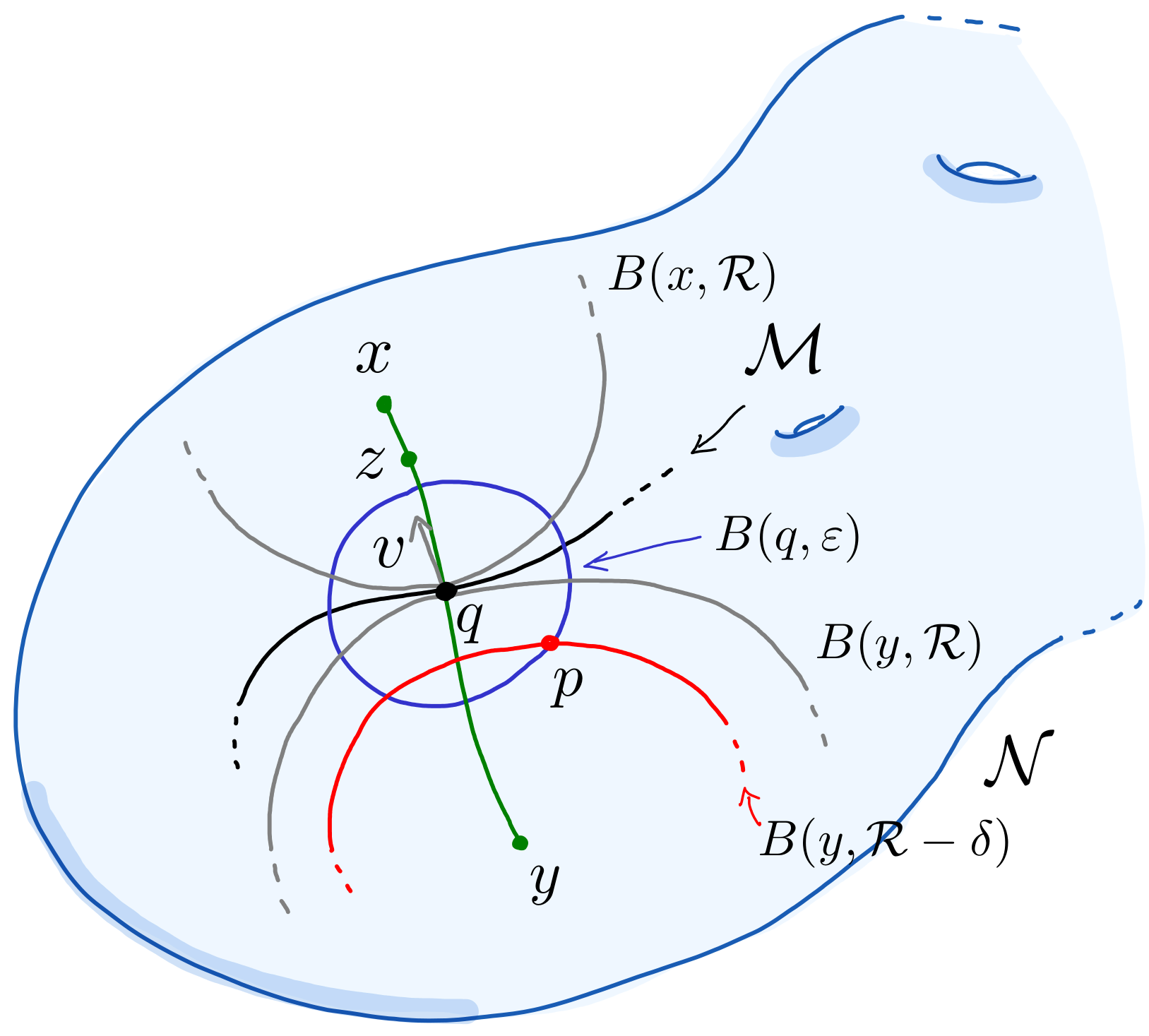}
		\subcaption{\small
		}
		\label{fig:two_comparison_triangles_1}
	\end{subfigure}
	\hfill
	\begin{subfigure}[b]{0.45\textwidth}
		\centering 
		\includegraphics[width=\textwidth]{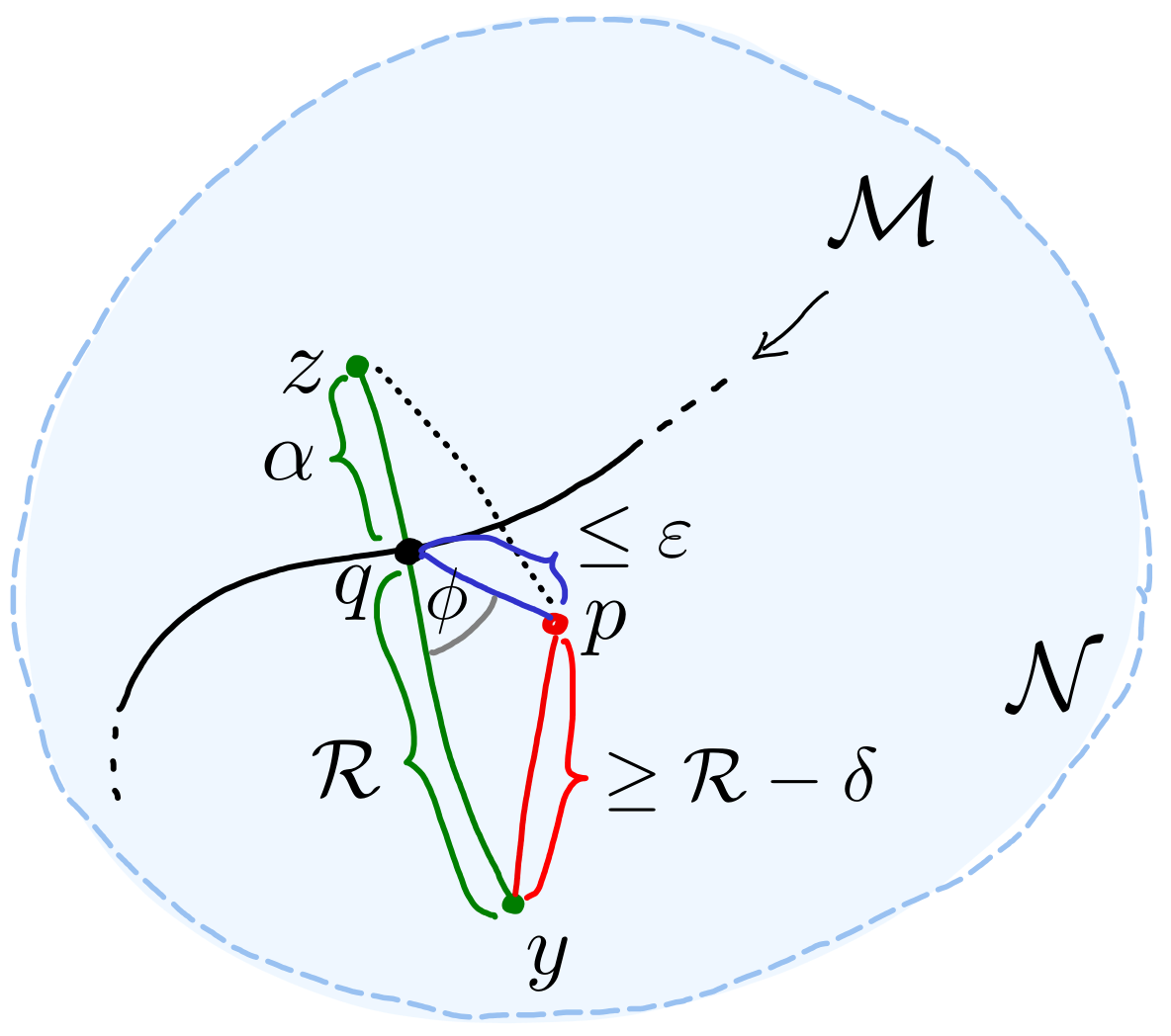}
		\subcaption{\small
		}
		\label{fig:two_comparison_triangles_2}
	\end{subfigure}
	
	\caption{Overview of the notation used in the proof of Lemma~\ref{lemma:BoundOnTubularCoverForManifolds_Riemann}. On the right we see the two comparison triangles $ypq$ and $zpq$.
	} 
	\label{fig:two_comparison_triangles}\label{fig:ProofManifoldCase1_Riemann}
\end{figure}

Further, let {
 $v \in \Nor(q, \M) \subseteq N_q\M\subseteq T_q \N$ }. {Denote the geodesic emanating from the point $q$ in direction $v$ by $\gamma_{q,v}$ and write $\gamma_{q,v} (\reach)=x$
  and $\gamma_{q,-v} (\reach)=y$ }. Thanks to Lemma~\ref{lemma:SetCutLocus2}, 
the manifold $\M$ and the ball $B(
x, \reach)^\circ$ {(resp. $B(
	y, \reach)^\circ$)} do not intersect, and thus
\begin{align}
p \notin B(x
, \reach- \delta)^\circ, \quad \text{as well as} \quad p \notin B(y
, \reach- \delta)^\circ.
\label{eq:p_not_in_Breach_Riemann}
\end{align}
{Finally, write $\gamma_{q,v} (\alpha)=z$. Our goal is to upper bound the distance between the point $p$ and the point $z$. To this end, we}  consider two geodesic triangles: 
$y
p q$ and 
$z
p q$. We sketch the situation in Figure~\ref{fig:ProofManifoldCase1_Riemann}.
The lengths of their edges satisfy: 
\[d_\N(y, q) =\reach, \qquad d_\N(y,p)\geq \reach -\delta, \qquad d_\N(p,q)\leq\varepsilon, \qquad d_\N(z, q) = \alpha.\]

{In the remainder of the proof, we use the terminology and results from comparison theory, which we summarize in Appendix~\ref{sec:RecapToponogov}.
We first determine a lower bound $\phi_\ell$ on the angle $\phi = \angle y
q p$, by applying 
Alexandrov-Toponogov angle comparison theorem (Theorem~\ref{thm:toponogov_comparison_theorem_angle})
to the triangle $y
p q $. 
}

Having established a bound on the angle $\phi$, 
{we 
apply Alexandrov-Toponogov distance comparison Theorem (Theorem~\ref{thm:toponogov_comparison_theorem_distance}) to the triangle $qpz$.
Since $\angle zqp =  \pi - \phi$,
Theorem~\ref{thm:toponogov_comparison_theorem_distance} gives us an upper bound on the length of the edge $pz$,
which we denote by $r_m$. }
We stress that $\pi- \phi_\ell$ is an upper bound on the angle $\angle z
q p$. 

  The length of the closing edge of a hinge
  in a space form is monotone in the lengths of the edges and the angle of the hinge,
  since, in the case when $\curvlowbnd>0$, 
 we can assume that $\reach \leq \frac{\pi}{2 \sqrt{\curvlowbnd}}$.

 Thus, $r_m$ is upper bounded by the closing edge of the hinge with edge lengths $\alpha$ and $\varepsilon$, and the angle $\pi- \phi_\ell$. 
Using the  law of cosines for 
{space forms} (Proposition \ref{proposition:CosineLaws}) we deduce that $r_m$ satisfies
\begin{align} 
\cos \left (\sqrt{\curvlowbnd} r_m  \right) 
=& \cos \left( \sqrt{\curvlowbnd} \alpha  \right) 
\cos \left( \sqrt{\curvlowbnd} \varepsilon \right)
\nonumber
\\
&- \frac{\sin \left( \sqrt{\curvlowbnd} \alpha   \right) }{ \sin \left(  \sqrt{\curvlowbnd} \reach  \right)} 
\left(
\cos \left( \sqrt{\curvlowbnd} (\reach -\delta) \right) - \cos \left(  \sqrt{\curvlowbnd} \reach  \right) \cos \left( \sqrt{\curvlowbnd} \varepsilon \right)
 \right), \nonumber \\
 =&\frac{1}{\sin \left(  \sqrt{\curvlowbnd} \reach  \right)}  \left[ \sin \left( \sqrt{\curvlowbnd} \left(\reach+\alpha\right)   \right) \cos \left( \sqrt{\curvlowbnd} \varepsilon \right) \right.\nonumber\\
 & \qquad\qquad\qquad\qquad-\left. \sin \left( \sqrt{\curvlowbnd} \alpha   \right) \cos \left( \sqrt{\curvlowbnd} (\reach -\delta) \right) \right],
\tag{if $\curvlowbnd >0$}
\\
r_m^2 =&\; \alpha^2 + \varepsilon^2 + \frac{\alpha}{\reach} \left ( \reach^2+ \varepsilon^2- (\reach-\delta)^2 
\right),
\tag{if $\curvlowbnd =0$}
\\
\cosh \left (\sqrt{\curvlowbndn} r_m  \right) 
=& \cosh \left( \sqrt{\curvlowbndn} \alpha   \right) 
\cosh \left( \sqrt{\curvlowbndn} \varepsilon \right)
\nonumber
\\
&- \frac{\sinh \left(\sqrt{\curvlowbndn} \alpha \right) }{ \sinh \left(\sqrt{\curvlowbndn} \reach  \right)} 
\left[
\cosh \left( \sqrt{\curvlowbndn} (\reach -\delta)  \right) \right.\nonumber\\
& \qquad\qquad\qquad - \left. \cosh \left( \sqrt{\curvlowbndn} \reach \right) \cosh \left( \sqrt{\curvlowbndn} \varepsilon \right)
 \right]
 \nonumber
 \\
 =&\frac{1}{\sinh \left(  \sqrt{\curvlowbndn} \reach  \right)}  \left[ \sinh \left( \sqrt{\curvlowbndn} \left(\reach+\alpha\right)   \right) \cosh \left( \sqrt{\curvlowbndn} \varepsilon \right)\right.
 \nonumber
 \\
  &\left.- \sinh \left( \sqrt{\curvlowbndn} \alpha   \right) \cosh \left( \sqrt{\curvlowbndn} (\reach -\delta) \right) \right] . 
\tag{if $\curvlowbnd <0$}
\\
\label{eqdef:rm} 
\end{align} 
\end{proof}
	
We are now ready to generalize Proposition \ref{theorem:DeformRetractsTheoremForManifolds}. The philosophy of the proof is the same as in Proposition~\ref{theorem:HomotopyPositiveReachRiemannian} --- we combine the conditions of Theorem \ref{theorem:geometric_argument_riemannian} and the bounds of Lemma \ref{lemma:BoundOnTubularCoverForManifolds_Riemann}. However, the involvement of trigonometric functions makes the analysis significantly more complicated.  	

\DeformRetractsTheoremForManifoldsRiemann*

\begin{proof}	
	%
	%
	We combine Theorem~\ref{theorem:geometric_argument_riemannian} and Lemma~\ref{lemma:BoundOnTubularCoverForManifolds_Riemann}. To improve the readability of the formulas, we write $\tilde x = \sqrt{\curvlowbndn} x$.
		
	At first, we assume that $\curvlowbnd>0$.
	Combining Equations \eqref{equation:R2TooSmallToCNormalLineS_riemannian} 
	and \eqref{equation:ConditionTubularCoverForManifold_Riemann} yields:
		\begin{align}
			& \frac{1}{\sin \tilde{\reach}}  \left[ \sin  \left(\tilde{\reach}+\tilde{\alpha}\right) \cos \tilde{\varepsilon} - \sin \tilde{\alpha} \cos  (\tilde{\reach}-\tilde{\delta}) \right]
			\geq \cos  \tilde{r}_m   
			\geq \frac{ \cos \left(\tilde{\reach}-\tilde{\delta}\right)}{ \cos \left(\tilde{\reach}-\tilde{\alpha}\right)}.
			\label{equation:BoundOnrForManifolds_RiemannN}
		\end{align}
		By multiplying both sides of the inequality by $\sin \tilde{\reach}\cos \left(\tilde{\reach}-\tilde{\alpha}\right) $ and subtracting $\sin \tilde{\reach}\cos \left(\tilde{\reach}-\tilde{\delta}\right)$, we obtain
		\begin{equation}
			0\leq \cos \tilde{\varepsilon} \left[\cos \left(\tilde{\reach}-\tilde{\alpha}\right) \sin \left(\tilde{\reach}+\tilde{\alpha}\right)\right] \label{eq:expression_at_varepsion}
			- \cos \left(\tilde{\reach}-\tilde{\delta}\right)\left[\cos \left(\tilde{\reach}-\tilde{\alpha}\right)\sin \tilde{\alpha} + \sin \tilde{\reach}\right].
		\end{equation}
		Observe that the terms are neatly divided: We have one term with $\tilde{\varepsilon}$, followed by an expression involving $\tilde{\reach}$ and $\tilde{\alpha}$ in square brackets, and one term with $\tilde{\reach} - \tilde{\delta}$, followed again by an expression involving $\tilde{\reach}$ and $\tilde{\alpha}$ in square brackets.
		
		Using the standard sum and double angle formulas for trigonometric functions, we transform the terms in the square brackets into
		
		\begin{equation*}
			\cos \left(\tilde{\reach}-\tilde{\alpha}\right) \sin \left(\tilde{\reach}+\tilde{\alpha}\right) = \frac{1}{2} \left[ \sin 2\tilde{\alpha} + \sin 2\tilde{\reach} \right]
		\end{equation*}
		and
		\begin{equation*}
			\cos \left(\tilde{\reach}-\tilde{\alpha}\right)\sin \tilde{\alpha} + \sin \tilde{\reach}=\frac{1}{2}\left[ \sin 2\tilde{\alpha}\cos \tilde{\reach} + \sin \tilde{\reach}\left(3-\cos 2\tilde{\alpha}\right)\right].
		\end{equation*}
		
		
		With this reformulation we extracted all expressions with $\tilde{\alpha}$.
		Denoting $x := 2\tilde{\alpha}$ we finally rearrange the inequality \eqref{eq:expression_at_varepsion} in terms of $\cos x$ and $\sin x$:
		
		\begin{equation*}
			0\leq \sin x \left[\cos \tilde{\varepsilon}  - \cos \left(\tilde{\reach}-\tilde{\delta}\right) \cos \tilde{\reach}\right] + \cos x  \left[ \sin \tilde{\reach}\cos \left(\tilde{\reach}-\tilde{\delta}\right) \right]+\cos \tilde{\varepsilon}\sin 2\tilde{\reach}-3\sin \tilde{\reach}\cos \left(\tilde{\reach}-\tilde{\delta} \right).
		\end{equation*}
		
		Recall that we want to determine conditions on~$\varepsilon$ and~$\delta$ (in terms of~$\reach$), under which there exists a value~$x\in [0,\pi]$ that satisfies the inequality above. To this end, let  
		\begin{align*}
			&A:=\cos \tilde{\varepsilon}  - \cos \left(\tilde{\reach}-\tilde{\delta}\right) \cos \tilde{\reach},\\
			&B:=\sin \tilde{\reach}\cos \left(\tilde{\reach}-\tilde{\delta}\right) , \\
			&C:=\cos \tilde{\varepsilon}\sin 2\tilde{\reach}-3\sin \tilde{\reach}\cos \left(\tilde{\reach}-\tilde{\delta} \right),
		\end{align*}
		denote the three terms of the right hand side of the inequality. Observe that
		\begin{align} 
		A&>0, & B&>0, &\text{and}&& C&<0.
		\label{eq:SignsOfABC}
		\end{align} 
		
		 Indeed, the inequalities 
		 \[0\leq \tilde{\reach}-\tilde{\delta} \leq\tilde{\reach}\leq \pi/2 \quad\text{and}\quad \varepsilon<\tilde{\reach}\]
		 imply
		 \[0\leq \cos \tilde{\reach} \leq \cos \left(\tilde{\reach}-\tilde{\delta} \right)\quad\text{and}\quad \cos \tilde{\reach} <\cos\varepsilon,\]
		 and thus
		 \begin{align*}
		 	A&=\cos \tilde{\varepsilon}  - \cos \left(\tilde{\reach}-\tilde{\delta}\right) \cos \tilde{\reach} > \cos \tilde{\reach} \left(1-\cos \left(\tilde{\reach}-\tilde{\delta}\right)\right)\geq 0,\\
		 	\tfrac{C}{\sin \tilde{\reach}} &= 2\cos \tilde{\varepsilon}\cos \tilde{\reach}-3\cos \left(\tilde{\reach}-\tilde{\delta} \right)<\cos \tilde{\reach}\left(2\cos \tilde{\varepsilon}-3\right)<0.
		 \end{align*}
		Define
		\[f:[0,\pi]\to \R, \qquad f(x) = A\sin x + B\cos x + C.\]


{Our goal is to determine if there exists a
  point~$x\in[0,\pi]$ with $f(x)\geq 0$. Consider $x_0 \in [0,\pi]$
  such that $\cos x_0 = \frac{B}{\sqrt{A^2 + B^2}}$ and $\sin x_0 = \frac{A}{\sqrt{A^2 + B^2}}$. With this definition we can rewrite $f(x) \geq 0$ as
  \begin{align}
  f(x) = \sqrt{A^2 + B^2} \left( \cos(x-x_0) + \frac{C}{\sqrt{A^2 + B^2}} \right) \geq 0 . 
	\label{eq:OtherFprmForF} 
  \end{align} 
We see that $f$ has only one global maximum value in the interval $[0,\pi]$, namely at $x_0$. Hence, there exists a point $x \in [0,\pi]$
  with $f(x) \geq 0$ if and only if the global maximum $x_0$ of $f$ satisfies $f(x_0) \geq 0$. This, in turns, translates into
  \[
  1 + \frac{C}{\sqrt{A^2 + B^2}} \geq 0,
  \]
  that is, $-C \leq \sqrt{A^2 + B^2}$. {Thanks to \eqref{eq:SignsOfABC}, this can be rewritten as $ C^2\leq A^2+B^2$. }
}

Plugging in the values of $A$, $B$, and $C$ yields	
\begin{align}		
\left(2\cos \tilde{\varepsilon}\cos \tilde{\reach}-3\cos \left(\tilde{\reach}-\tilde{\delta} \right)\right)^2\leq  \left( \frac{\cos \tilde{\varepsilon}  - \cos \left(\tilde{\reach}-\tilde{\delta}\right) \cos \tilde{\reach}}{\sin \tilde{\reach}} \right)^2+ \cos^2 \left(\tilde{\reach}-\tilde{\delta}\right).
\tag{\ref{equation:BoundOnEpsilon2_Riemann}}
	\end{align}
	
	When $\curvlowbnd=0$, the proof reduces to calculations identical to those in the proof of Proposition~\ref{theorem:DeformRetractsTheoremForManifolds}. We refrain from repeating them here, and refer the reader back to the proof.

	Finally, we assume that $\curvlowbnd<0$. Our procedure is identical to the treatment of the case where $\curvlowbnd>0$.
	Combining Equations \eqref{equation:R2TooSmallToCNormalLineS_riemannian} 
	and \eqref{equation:ConditionTubularCoverForManifold_Riemann} yields:
		\begin{align}
			\frac{1}{\sinh \tilde{\reach}}  \left[ \sinh  \left(\tilde{\reach}+\tilde{\alpha}\right) \cosh \tilde{\varepsilon} - \sinh \tilde{\alpha} \cosh  (\tilde{\reach}-\tilde{\delta}) \right]
			\leq \cosh  \tilde{r}_m   
			\leq \frac{ \cosh \left(\tilde{\reach}-\tilde{\delta}\right)}{ \cosh \left(\tilde{\reach}-\tilde{\alpha}\right)}.
			\label{equation:BoundOnrForManifolds_RiemannHyperbolicN}
		\end{align}
		By multiplying both sides of the inequality by $\sinh \tilde{\reach}\cosh \left(\tilde{\reach}-\tilde{\alpha}\right) $ and subtracting\\ $\sinh \tilde{\reach}\cosh \left(\tilde{\reach}-\tilde{\delta}\right)$, we obtain
		\begin{align}
			0\geq \cosh \tilde{\varepsilon} \left[\cosh \left(\tilde{\reach}-\tilde{\alpha}\right) \sinh \left(\tilde{\reach}+\tilde{\alpha}\right)\right] \label{eq:expression_at_varepsion2}
			- \cosh \left(\tilde{\reach}-\tilde{\delta}\right)\left[\cosh \left(\tilde{\reach}-\tilde{\alpha}\right)\sinh \tilde{\alpha} + \sinh \tilde{\reach}\right].
		\end{align}
		We transform the terms in the square brackets into
		
		\begin{align*}
			\cosh \left(\tilde{\reach}-\tilde{\alpha}\right) \sinh \left(\tilde{\reach}+\tilde{\alpha}\right) = \frac{1}{2} \left[ \sinh 2\tilde{\alpha} + \sinh 2\tilde{\reach} \right]
		\end{align*}
		and
		\begin{align*}
			\cosh \left(\tilde{\reach}-\tilde{\alpha}\right)\sinh \tilde{\alpha} + \sinh \tilde{\reach}=\frac{1}{2}\left[ \sinh 2\tilde{\alpha}\cosh \tilde{\reach} + \sinh \tilde{\reach}\left(3-\cosh 2\tilde{\alpha}\right)\right].
		\end{align*}
		
		Denoting $x := 2\tilde{\alpha}$ we finally rearrange the inequality \eqref{eq:expression_at_varepsion2} into a polynomial in $\cosh x$ and $\sinh x$:
		
		\begin{align*}
			0\geq& \sinh x \left[\cosh \tilde{\varepsilon}  - \cosh \left(\tilde{\reach}-\tilde{\delta}\right) \cosh \tilde{\reach}\right] + \cosh x  \left[ \cosh \left(\tilde{\reach}-\tilde{\delta}\right) \sinh \tilde{\reach}\right]\\
			&\quad +\cosh \tilde{\varepsilon}\sinh 2\tilde{\reach}-3\sinh \tilde{\reach}\cosh \left(\tilde{\reach}-\tilde{\delta} \right).
		\end{align*}
		
		Recall that we want to determine conditions on~$\varepsilon$ and~$\delta$ (in terms of~$\reach$), under which there exists a value~$x\geq 0$ that satisfies the inequality above. To this end, let  
		\begin{align*}
			&A_h:=\cosh \tilde{\varepsilon}  - \cosh \left(\tilde{\reach}-\tilde{\delta}\right) \cosh \tilde{\reach},\\
			&B_h:=\sinh \tilde{\reach}\cosh \left(\tilde{\reach}-\tilde{\delta}\right) , \\
			&C_h:=\cosh \tilde{\varepsilon}\sinh 2\tilde{\reach}-3\sinh \tilde{\reach}\cosh \left(\tilde{\reach}-\tilde{\delta} \right),
		\end{align*}
		denote the three terms of the right hand side of the inequality. Observe that
		\[A_h<0, \qquad B_h>0, \qquad\text{and}\qquad A_h+B_h>0.\]
		
		Indeed, the inequality $ 0\leq \tilde{\varepsilon}< \tilde{\reach}$
		implies $\cosh\tilde{\varepsilon}<\cosh \tilde{\reach}$,
		and thus
\begin{align*}
A_h&=\cosh \tilde{\varepsilon}  - \cosh \left(\tilde{\reach}-\tilde{\delta}\right) \cosh \tilde{\reach} < \cosh \tilde{\reach} \left(1-\cosh \left(\tilde{\reach}-\tilde{\delta}\right)\right)\leq 0,
\\
A_h+B_h&=\cosh\tilde{\varepsilon}  - e^{-\tilde{\reach}}\cosh \left(\tilde{\reach}-\tilde{\delta} \right)\geq \cosh\tilde{\varepsilon}  - e^{-\tilde{\reach}}\cosh\tilde{\reach} = \cosh\tilde{\varepsilon}  - \tfrac{1}{2}\left(1+e^{-2\tilde{\reach}}\right)>0. 
\end{align*}
		
		Define
		\[g:[0,\infty)\to \R, \qquad g(x) = A_h\sinh x + B_h\cosh x + C_h.\]
		
		%
		%
		%
		%
		  %
		

Because $A_h+B_h > 0$, we have that $B_h^2 - A_h^2 >0$. 
We now define $\rho$ to be the positive solution of the equations
\[ 
\rho^2 = B_h^2 - A_h^2.  
\] 
Because $\cosh^2(x) - \sinh^2 (x)=1$ and  $0\leq  -A_h< B_h$, there exists an $x_0$ such that $-A_h = \rho \sinh x_0$ and $B_h = \rho \cosh x_0$. Indeed, $x_0$ is given by $x_0 = \arctan (-A_h/B_h)$.    
Using the sum formula for $\cosh (a -b)$ we get the condition
\begin{equation} \label{eq:OtherFprmForG}
\frac{1}{\rho} g(x) = \cosh (x - x_0) + \frac{C_h}{\rho} \leq 0. 
\end{equation} 
Since the minimum of $t \mapsto \cosh t$ is $1$, this condition reduces to $ \frac{C_h}{\rho} \leq -1$, or equivalently 
\begin{align} 
C_h \leq  -\sqrt{B_h^2 - A_h^2 }. 
\label{eq:BoundCH}
\end{align} 
It is not difficult to recover the interval where $g(x)\leq 0$ from \eqref{eq:OtherFprmForG}.
It is convenient to reformulate \eqref{eq:BoundCH} as
	\begin{align*}		
	C_h\leq 0 \qquad\text{and}\qquad  B_h^2\leq A_h^2+C_h^2.
	\end{align*}
	In terms of $\tilde{\varepsilon}, \tilde{\delta}$, and $\tilde{\reach}$, these inequalities are equivalent to
	\begin{align}
		2 \cosh \tilde{\varepsilon}\cosh \tilde{\reach}&\leq 3\cosh \left(\tilde{\reach}-\tilde{\delta}\right) \qquad\text{and}\qquad
		\nonumber
		\\
		 \cosh^2 \left(\tilde{\reach}-\tilde{\delta}\right)&\leq\left(\frac{\cosh \tilde{\varepsilon}  - \cosh \left(\tilde{\reach}-\tilde{\delta}\right) \cosh \tilde{\reach}}{\sinh \tilde{\reach}}\right)^2+ \left(2\cosh \tilde{\varepsilon}\cosh \tilde{\reach}-3\cosh \left(\tilde{\reach}-\tilde{\delta} \right)\right)^2.
		\tag{\ref{equation:BoundOnEpsilon2_RiemannHyper}}
	\end{align}
\end{proof}

\subsection{Tightness of the bounds on the sampling parameters}
\label{sec:Riemannian_tightness}

We now prove that the bounds in the Riemannian setting are also tight in the following sense:

\begin{proposition}\label{prop:counterexample_set_Riemann}
Let $\curvlowbnd\in \R$. 
Assume that the one-sided Hausdorff distances $\varepsilon$ and $\delta$ fail to satisfy bound~\eqref{equation:BoundOnR0Riemannian}. Then there exists a manifold $\N$ (namely a space form) of dimension $d\geq 2$ whose sectional curvatures satisfy $K\geq \curvlowbnd$ (in fact $K = \curvlowbnd$), a subset $\Su\subseteq\N$ of positive (cut locus) reach $\reach$, and a sample $P$ that satisfy Universal Assumption~\ref{assumption:RiemannianSetting}, while the homology of the union of balls $P^{\boxplus r}$ 
does not equal the homology of $\Su$ for any $r$.
\end{proposition}

\begin{proposition}\label{prop:counterexample_mfld_Riemann}
Let $\curvlowbnd\in \R$.
Assume moreover that the one-sided Hausdorff distances $\varepsilon$ and $\delta$ fail to satisfy bound~\eqref{equation:BoundOnEpsilon2_Riemann}, and $\delta \leq \varepsilon$. Then there exists a manifold $\N$ (namely a space form\footnote{In the case of positive curvature we need a space with multiple connected components, that is, a number of spheres. See Remark \ref{Remark:MultipleConnectedComponents} for a more extensive discussion.}) of dimension $d\geq 3$ whose sectional curvatures satisfy $K\geq \curvlowbnd$ (in fact $K = \curvlowbnd$), a submanifold $\M \subseteq \N$ of positive (cut locus) reach, and a sample $P$ that satisfy Universal Assumption~\ref{assumption:RiemannianSetting}, while the homology of the union of balls $P^{\boxplus r}$
does not equal the homology of $\M$ for any~$r$.
\end{proposition} 

As in Section~\ref{sec:Euclidean_tightness}, we prove Propositions \ref{prop:counterexample_set_Riemann} and \ref{prop:counterexample_mfld_Riemann} by an explicit construction.
We construct the set $\Su$, the manifold $\M$, and the corresponding samples in Examples~\ref{example:set_Riemann} and~\ref{example:mfld_Riemann}, respectively. 

\begin{remark}\label{Remark:MultipleConnectedComponents}
As in Section~\ref{sec:Euclidean_tightness}, our construction involves a large (but finite) number of annuli or tori in a space form. If the curvature of the space form is positive then its volume is finite, such as in Figure \ref{fig:SampleWithLargeReachInSphere}. To be able to accommodate all the annuli, resp. tori, in our space form, we have to assume that it consists of multiple connected components. 
\end{remark} 

Instead of resorting to multiple connected components one could also weaken the statement as follows: 
\begin{proposition}
Let $\curvlowbnd\in \R$. 
Assume that the one-sided Hausdorff distances $\varepsilon$ and $\delta$ fail to satisfy bound~\eqref{equation:BoundOnR0Riemannian} (\eqref{equation:BoundOnEpsilon2_Riemann} respectively). Then there exists \emph{no} $r$ such that
\begin{itemize} 
\item for any manifold $\N$ of dimension $d\geq 2$ ($d \geq 3$ respectively), whose sectional curvatures satisfy $K\geq \curvlowbnd$, 
\item for any a subset $\Su\subseteq\N$ of positive cut locus reach $\reach$ (for every manifold $\M \subseteq \N$ of positive cut locus reach $\reach$, respectively), and 
\item for any sample $P$ that satisfies Universal Assumption~\ref{assumption:RiemannianSetting}
\end{itemize} 
the homology of the union of balls $P^{\boxplus r}$ equals the homology of $\Su$ ($\M$ respectively).
\end{proposition} 

\begin{figure}[!h]
		\begin{center}
		  \includegraphics[width=0.4\textwidth]{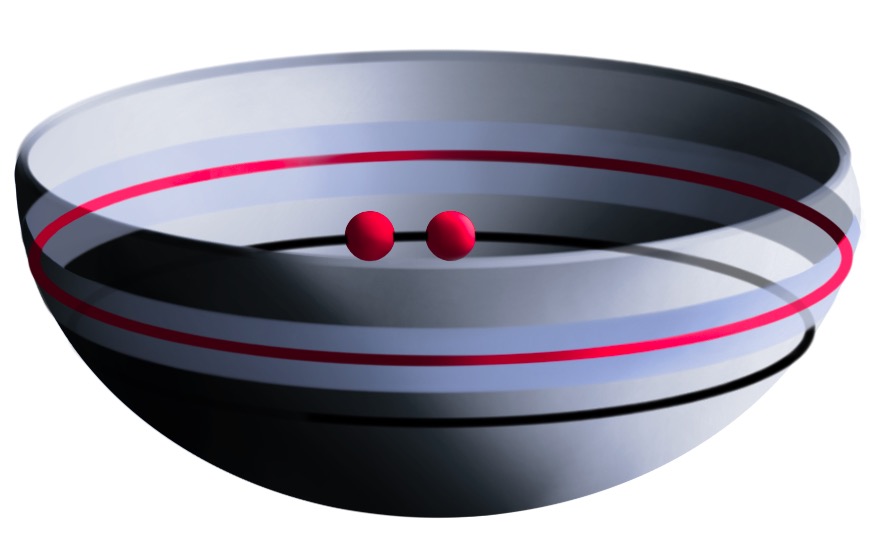}
		\end{center}
		\caption{
		An illustration of an annulus (in blue) on the sphere (in gray, we depict only half the sphere for the visualization) as well as the sample $P$ (in red). It is clear from the figure that the annulus takes up so much space that placing another one in the same sphere is impossible. 
		}
		\label{fig:SampleWithLargeReachInSphere}
	\end{figure}

\subsubsection{Subsets of Riemannian manifolds with positive reach} 
\label{sec:optimality_mflds_Riemann}
The construction of example in a space form 
of non-zero curvature, with which we
 prove Proposition \ref{prop:counterexample_set_Riemann}, generalizes the construction in Euclidean space (Example~\ref{example:set}) quite directly, see Figure~\ref{fig:sequenceOfAnnuliOnTheSphere}. 

\begin{example}\label{example:set_Riemann}
We choose $\N$ to be a two-dimensional space form of curvature $\curvlowbnd$, which we denote by $\mathbb{H}^2 (\curvlowbnd)$. The set $\Su$ is a union of annuli, where by an annulus $A_i$ we mean a set $A_i= B(z_i,\reach+{2}\varepsilon) \setminus B(z_i,\reach)^{\circ}  \subseteq \mathbb{H}^2 (\curvlowbnd)$. 
We call the point $z_i\in \mathbb{H}^2 (\curvlowbnd)$ the centre of the annulus. We assume the annuli lie at a distance at least $2\reach$ away from each other. 

The sample consists of a geodesic circle $C_i= \partial B(z_i,\reach+\varepsilon)$ and two points $\{p_i, \tilde{p}_i \} \subseteq \partial B(z_i,\reach-\delta)$ that are separated by a distance $2r_i$ for each annulus $A_i$. We provide an explicit definition for the parameter $r_i$ shortly.

We recall that bisectors in $\mathbb{H}^2 (\curvlowbnd)$ are geodesics. Thus, the bisector of the points $p_i$ and $\tilde{p}_i$ intersects the circle $C_i$ in two points. We let
$q_i$ be the intersection point that is the closest to $p_i$ (and thus $\tilde{p}_i$).

Consider the triangle $p_i \tilde{p}_i q_i$. We denote its circumradius by $R_i$ and note that $R_i \geq r_i$. 
Finally, we define the distance $2r_i$ between each pair of points $p_i$ and $\tilde{p}_i$: We set the distance $r_0$ to be
\[r_0 = \tfrac{1}{2}d\left(q_0, \tilde{q}_0\right), 
\]
and define
\[r_{i+1}=
\begin{cases}
R_i, &\text{if }R_i < 1-\delta,\\
	1-\delta, &\text{otherwise}.
	\end{cases}
\]
\end{example}

\begin{figure}[!h]
		\begin{center}
		  \includegraphics[width=0.4\textwidth]{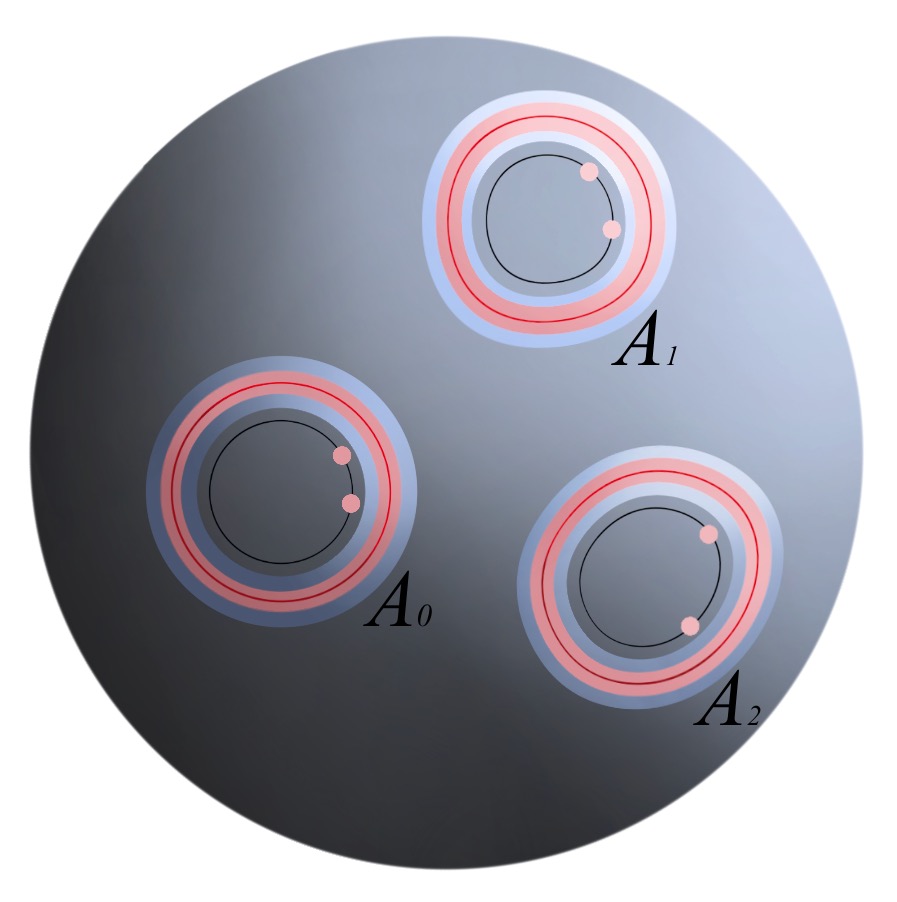}
		\end{center}
		\caption{
		A sequence of annuli on a space form. 
		}
		\label{fig:sequenceOfAnnuliOnTheSphere}
	\end{figure}

\begin{figure}[!h]
		\begin{center}
		  \includegraphics[width=0.5\textwidth]{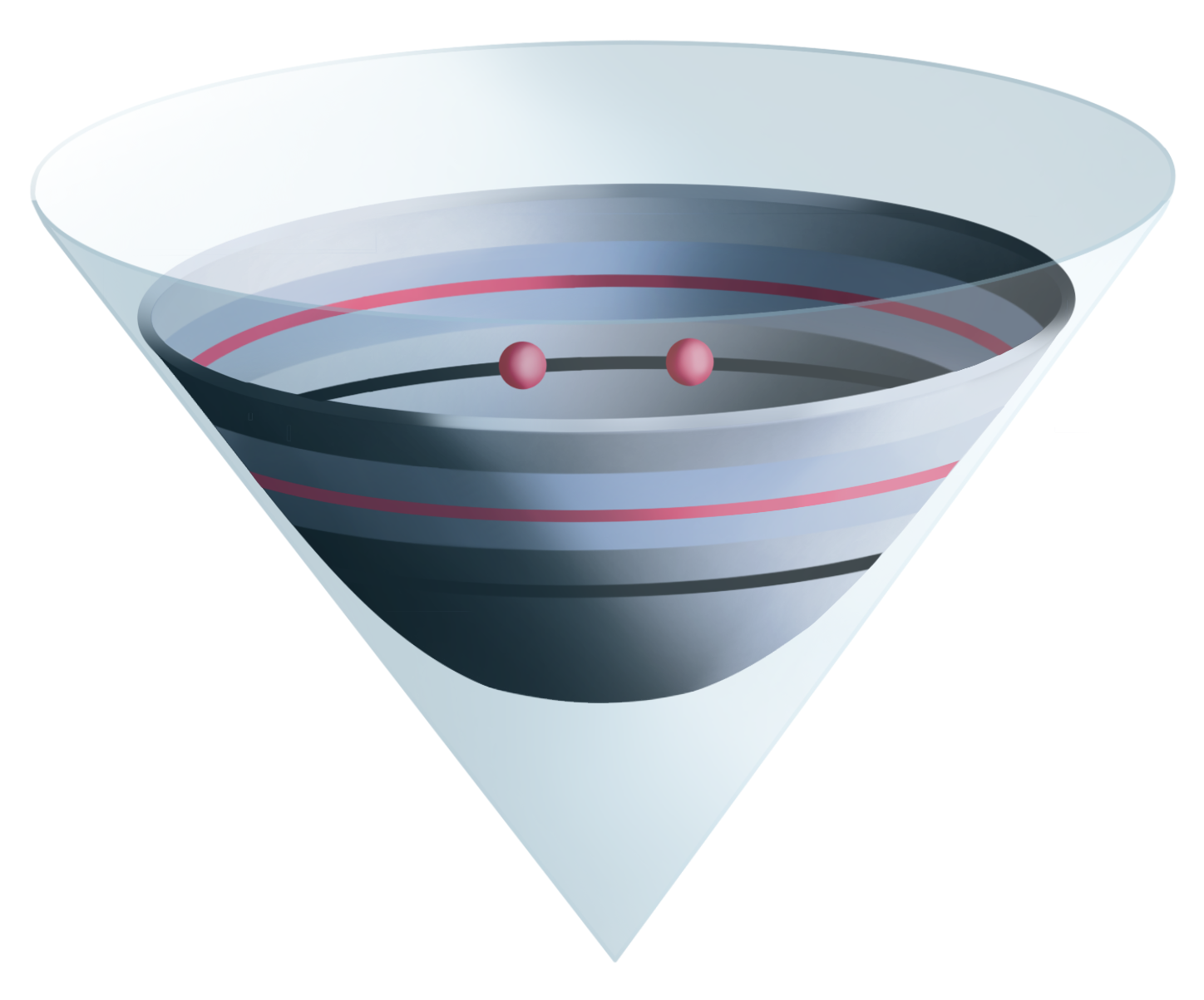} \\
		\end{center}
		\caption{
		A single annulus with sample in hyperbolic space, visualized using the Minkowski or hyperboloid model \cite{thurston2014three}. The limiting cone of the hyperboloid is included in transparent light blue. 
		}
		\label{fig:sequenceOfAnnuliOnTheHyperboloid}
	\end{figure}	

Next, we prove the generalization of Lemma~\ref{lemma:acute}.
\begin{lemma}
  \label{lemma:acute_Riemannian}
  If $\varepsilon$ and $\delta$ fail to satisfy the bound~\eqref{equation:BoundOnR0Riemannian}, then, for any $r_i \in [0,1-\delta]$,
  \begin{itemize}
  \item the triangle $p_i \tilde{p}_i q_i$ is strictly self-centred;
  \item there exists a constant $c > 0$, depending only on $\delta$, $\varepsilon$, and $\curvlowbnd$, such that $R_i - r_i \geq c \, r_i$.
  \end{itemize}
\end{lemma}
\begin{proof}
We observe that the triangle $p_i \tilde{p}_i q_i$ is self-centred for a sufficiently small value of $r_i$ (see Figure~\ref{fig:acute2}). 

Recall that the point $z_i$ is the centre of $C_i$, and let $C'_i$ be the geodesic circle centred at $z_i$ with radius $\reach-\delta$. That is, 
\[C'_i = \partial B(z_i,\reach- \delta).\] 
By construction, the circle $C_i'$ contains the points $p_i$ and $\tilde{p}_i$, while the circle $C_i$ contains the point $q_i$. Because the circumcentre of a triangle in a two-dimensional space lies on the bisector of any two of its vertices, the circumcentre of the triangle $p_i \tilde{p}_i q_i$ lies on the geodesic that contains the points $q_i$ and $z_i$ --- the bisector of $p_i$ and $\tilde{p}_i$. By definition, the midpoint $\mu_i$ of the segment connecting $p_i$ and $\tilde{p}_i$ lies on the bisector of $p_i$ and $\tilde{p}_i$.

We observe that the triangle $p_i \tilde{p}_i q_i$ is (strictly) self-centred if and only if the distance $d(q_i, \mu_i)$ is (strictly) longer than the circumradius of $p_i \tilde{p}_i q_i$. 
In other words, the transition between self-centredness and non-self-centredness happens when $d(q_i, \mu_i)= d(p_i, \mu_i)= d(\tilde{p}_i, \mu_i)= r_i$.

We now consider the triangle $z_i q_i p_i$, which is right-angled. At the moment when the triangle $p_i \tilde{p}_i q_i$ transitions from self-centred to non-self-centred, the edge lengths of the triangle $z_i q_i p_i$ are $\reach -\delta$, $\reach+\varepsilon -r_i$, and $r_i$ (see Figure~\ref{fig:acute2}). Applying the law of cosines yields
\begin{align}
\cos (\sqrt{\curvlowbnd} (\reach - \delta) ) &= \cos ( \sqrt{\curvlowbnd} (\reach +\varepsilon - r_i) ) \cos (\sqrt{\curvlowbnd} r_i) 
\nonumber
\\
&= \frac{1}{2} \left(  \cos ( \sqrt{\curvlowbnd} (\reach +\varepsilon ) ) + \cos ( \sqrt{\curvlowbnd} (\reach +\varepsilon - 2 r_i) ) \right) ,
\tag{if $\curvlowbnd>0$} 
\\
\cosh (\sqrt{\curvlowbndn} (\reach - \delta) ) &= \cosh ( \sqrt{\curvlowbndn} (\reach +\varepsilon - r_i) ) \cosh ( \sqrt{\curvlowbndn} r_i) 
\nonumber
\\
&= \frac{1}{2} \left(  \cosh ( \sqrt{\curvlowbndn} (\reach +\varepsilon ) ) + \cosh ( \sqrt{\curvlowbndn} (\reach +\varepsilon - 2 r_i) ) \right) ,
\tag{if $\curvlowbnd<0$} 
\\
\label{eq:conditionsTransitionPosReachRiemannMan} 
\end{align} 
Thus, the triangle $p_i \tilde{p}_i q_i$ transitions between self-centred and non-self-centred if the above equations have a real solution. Finally, since for a sufficiently small value of $r_i$ the triangle $p_i \tilde{p}_i q_i$ is self-centred if there is no solution to the equation \eqref{eq:conditionsTransitionPosReachRiemannMan}, the triangle $p_i \tilde{p}_i q_i$ is self-centred for all $r_i \in [ 0, 1-\delta]$.
 
The solution to \eqref{eq:conditionsTransitionPosReachRiemannMan} being vacuous is equivalent to $\varepsilon$ and $\delta$ failing to satisfy the bound~\eqref{equation:BoundOnR0Riemannian}. This completes the proof of the first statement. 
We note that because the triangle $p_i \tilde{p}_i q_i$ is strictly self-centred, $R_i > r_i$ and the second statement of the lemma also follows. 
\end{proof}

\begin{figure}[!h]
  \centering\includegraphics[width=0.65\textwidth]{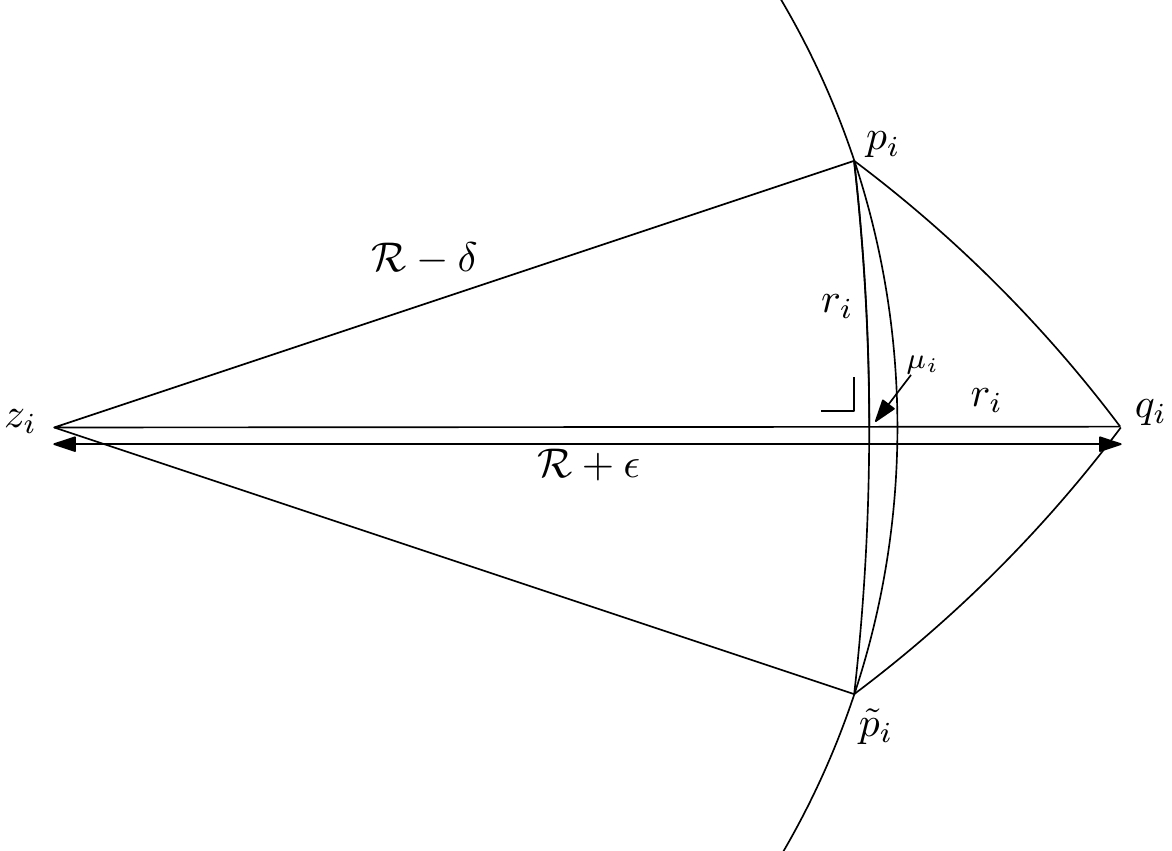}
  \caption{{ The figure illustrates the positive curvature case. }
    \label{fig:acute2}}
\end{figure}

\begin{proof}[Proof of Proposition \ref{prop:counterexample_set_Riemann}]  
The example has been set up in such a way that the transitions of the homology are precisely the same as in the Euclidean setting, and as we have described in the proof of Proposition \ref{prop:counterexample_set}. Hence, the set $\Su$ never has the same homology as the union of balls $\bigcup_{p \in P} B(p,r)$, and thus the two never have the same homotopy. 
\end{proof} 

\subsubsection{Submanifolds of Riemannian manifolds with positive reach}

The construction of the submanifold of $\mathbb{H}^d(\curvlowbnd)$ proving Proposition \ref{prop:counterexample_mfld_Riemann} generalizes the construction of Example~\ref{example:mfld}. 

\begin{example}\label{example:mfld_Riemann}
We choose $\N$ to be a three-dimensional space form of curvature $\curvlowbnd$, which we denote by $\mathbb{H}^3 (\curvlowbnd)$, and define $\M \subseteq \mathbb{H}^3 (\curvlowbnd)$ to be a union of tori $T_i$.
A geodesic circle $S^1(z,r)$ with centre $z$ and radius $r$ in a subspace $\mathbb{H}^2 (\curvlowbnd) \subseteq \mathbb{H}^3 (\curvlowbnd)$ is the boundary of a geodesic 2-ball (disk) $B(z,r) \subseteq \mathbb{H}^2 (\curvlowbnd)$. We write $H_S$ for the subspace $\mathbb{H}^2 (\curvlowbnd) \subseteq \mathbb{H}^3 (\curvlowbnd)$ that contains the circle $S^1(z,r)$. Whenever we want to express a circle in terms of the subspace $H_S$ it is lying in, we write $S^1(z,r, H_S)$. We refer to $H_S$ as the symmetry plane\footnote{We note that $H_S$ is indeed not a plane, but a totally geodesic subspace.}.   
Finally, each of the tori $T_i$ is a $\reach$-offset of the circle $S^1_i(z_i,2 \reach,H_S)$ --- a circle of radius $2 \reach$ in $\mathbb{H}^3 (\curvlowbnd)$. We refer to $z_i$ as the centre of the torus.

We number the tori from $i=0$, and 
we assume that their centres lie on a geodesic at a distance at least $2 \reach$ apart from one another, in such a way that they all share one symmetry plane $H_S$. Due to this assumption, the {cut locus} reach of $\M = \bigcup_i T_i$ equals $\reach$. 

The sample $P$ consists of sets $C_i$ which are tori with a part cut out, and pairs of points $\{p_i,\tilde{p}_i\}$ lying inside the hole of each torus $T_i$.
To construct each set $C_i$ we take the $\delta$-offset of the torus $T_i$, keep the part that lies inside the solid
  torus bounded by $T_i$, and remove an  $\varepsilon$-neighbourhood of the circle $S^1(z,\reach, H_S)$.

Each pair of points, $p_i$ and $\tilde{p}_i$, lies on the circle $S^1(z,\reach -\delta , H_S)$ at a distance $2 r_i$ from each other. 
Let {$q_i$ and $\tilde{q}_i$} be the two points in the intersection of the bisector of {$p_i$ and $\tilde{p}_i$} and the set $C_i$ that lie closest to  $p_i$ and $\tilde{p}_i$. Note that $q_i$ and $\tilde{q}_i$ lie on the boundary\footnote{Here we think of $C_i$ as a manifold with boundary. }  of $C_i$. We denote the circumradius of the simplex $p_i \tilde{p}_i q_i \tilde{q}_i$ by $R_i$. 

As in the Euclidean setting, we define the distance $2r_i$ between each pair of points $p_i$ and $\tilde{p}_i$ inductively. We set the distance $r_0$ to be:
\[r_0 = \tfrac{1}{2}d\left(q_0, \tilde{q}_0\right). 
\]
Moreover, we define
\[r_{i+1}=
\begin{cases}
R_i, &\text{if }R_i < 1-\delta,\\
	1-\delta, &\text{otherwise}.
	\end{cases}
\]
\end{example}

As before, we need a result on self-centredness of simplices. 

\begin{lemma} \label{lemma:selfcentredMan_Riemann} 
If $\varepsilon$ and $\delta$ fail to satisfy bound~\eqref{equation:BoundOnEpsilon2_Riemann}, and $r_i$ satisfies 
	\begin{align}\tag{\ref{eq:lower_bound_distance_p_ptilde}}
r_i &\leq 1-\delta,	&2r_i&\geq d(q_i,\tilde{q}_i),
\end{align}
then
\begin{itemize}
	\item the simplex $p_i \tilde{p}_i q_i \tilde{q}_i$ is strictly self-centred;
	\item there exists a constant $c > 0$, depending only on $\delta$, $\varepsilon$ and $\curvlowbnd$, such that $R_i  \geq r_i +c$.
\end{itemize}
\end{lemma}

\begin{figure}[!h]
  \centering\includegraphics[width=0.99\textwidth]{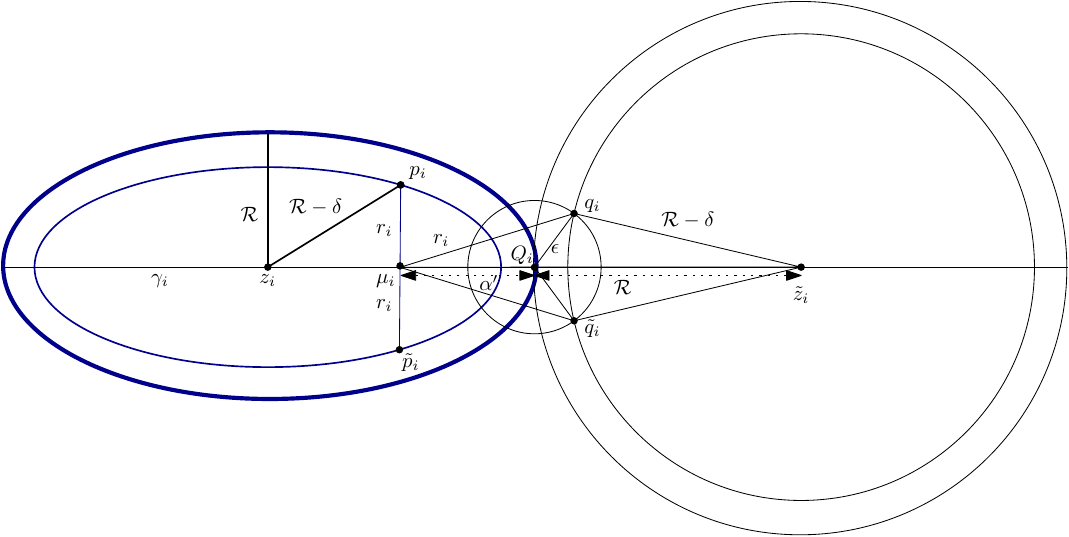}
  \caption{{The figure illustrates the transition between self-centred and non-self-centred simplices in the Euclidean case. The blue circles lie in $H_S$. }
    \label{fig:OptManRiem}}
\end{figure}

\begin{proof} 
By definition, the circumcentre of the simplex $p_i \tilde{p}_i q_i \tilde{q}_i$ lies on the bisector of $p_i$ and $\tilde{p}_i$ and the bisector of $q_i$ and $\tilde{q}_i$. Hence, the circumcentre of $p_i \tilde{p}_i q_i \tilde{q}_i$ lies on the geodesic $\gamma_i$ that contains the midpoint $\mu_i$ of $p_i, \tilde{p}_i$, and $z_i$.  For convenience we assume that $\gamma_i$ is arc length parametrized, that $\gamma_i(0) = z_i$, and that, for some parameter $t>0$, $\gamma_i ( [ 0 ,t])$ is the minimizing geodesic connecting the points $z_i$ and $\mu_i$. We write $Q= \gamma_i(\reach)$ and $\tilde{z}_i =\gamma_i( 2\reach) $. Finally, we denote the midpoint of $q_i$ and $\tilde{q}_i$ by $\tilde{\mu_i}$, and note that $\tilde{\mu}_i \in \gamma_i$. 

We start by noting that if $2r_i =d(q_i,\tilde{q}_i)$, the circumcentre of the simplex $p_i \tilde{p}_i q_i \tilde{q}_i$ is the midpoint of $\mu_i$ and $\tilde{\mu}_i$. Therefore, the simplex $p_i \tilde{p}_i q_i \tilde{q}_i$ is self-centred. We denote this midpoint by $\gamma_i(\tau)$. As we increase the value of the parameter $r_i$, the circumcentre of $p_i \tilde{p}_i q_i \tilde{q}_i$ moves along $\gamma_i$ in such a way that if we parametrize the movement by $\gamma_i(\tau')$, the parameter $\tau'$ decreases.

The transition between self-centredness and non-self-centredness of $p_i \tilde{p}_i q_i \tilde{q}_i$ takes place when the midpoint $\mu_i$ is the circumcentre of $p_i \tilde{p}_i q_i \tilde{q}_i$. This critical point is depicted in Figure~\ref{fig:OptManRiem}. 
In this case, the distance between the points $q_i$ and $\mu_i$ equals $d(q_i , \mu_i) = r_i$, and, by symmetry, $d(q_i , \mu_i) = \tilde{r}_i$).

Let $\alpha'= d(Q_i,\mu_i)$, and consider the two triangles $\mu_i Q_i q_i$ and $Q_iq_i \tilde{z}_i$ (or the symmetric triangles $\mu_i Q_i q_i$ and $Q_iq_i \tilde{z}_i$). Applying the law of cosines and using the fact that $\angle \mu_i Q_i q_i = \pi - \angle q_i Q_i \tilde{z}_i$ yields
\begin{align} 
\cos \left (\sqrt{\curvlowbnd} r_i  \right) 
=& \cos \left( \sqrt{\curvlowbnd} \alpha ' \right) 
\cos \left( \sqrt{\curvlowbnd} \varepsilon \right),
\nonumber
\\
&+ \frac{\sin \left( \sqrt{\curvlowbnd} \alpha '   \right) }{ \sin \left(  \sqrt{\curvlowbnd} \reach  \right)} 
\left(
\cos \left( \sqrt{\curvlowbnd} (\reach -\delta) \right) - \cos \left(  \sqrt{\curvlowbnd} \reach  \right) \cos \left( \sqrt{\curvlowbnd} \varepsilon \right)
 \right) 
\tag{if $\curvlowbnd >0$}
\\
\cosh \left (\sqrt{\curvlowbndn} r_i  \right) 
=& \cosh \left( \sqrt{\curvlowbndn} \alpha '  \right) 
\cosh \left( \sqrt{\curvlowbndn} \varepsilon \right)
\nonumber
\\
&+ \frac{\sinh \left(\sqrt{\curvlowbndn} \alpha '\right) }{ \sinh \left(\sqrt{\curvlowbndn} \reach  \right)} 
\left[
\cosh \left( \sqrt{\curvlowbndn} (\reach -\delta)  \right) \right.\nonumber\\
& \qquad\qquad\qquad\qquad - \left. \cosh \left( \sqrt{\curvlowbndn} \reach \right) \cosh \left( \sqrt{\curvlowbndn} \varepsilon \right)
 \right] ,
\tag{if $\curvlowbnd <0$}
\\
\label{eq:CritManRiem1} 
\end{align} 
Non-coincidentally, this expression is, up to relabeling of certain variables, the same as Equation~\eqref{eqdef:rm}.
 
On the other hand, applying the law of cosines to the triangle $p_i \mu_i z_i$ (or symmetrically to $\tilde{p}_i \mu_i z_i$) yields 
\begin{align} 
\cos \sqrt{\curvlowbnd} ( \reach - \delta) & = \cos \sqrt{\curvlowbnd} r_i \cos \sqrt{\curvlowbnd } (\reach - \alpha'), 
\tag{if $\curvlowbnd >0$} 
\\
\cosh \sqrt{\curvlowbndn} ( \reach - \delta) & = \cosh \sqrt{\curvlowbndn} r_i \cosh \sqrt{\curvlowbndn } (\reach - \alpha') .
\tag{if $\curvlowbnd <0$}
\\
\label{eq:ConditionThirdTriangle}
\end{align}

Combining Equations~\eqref{eq:CritManRiem1} and \eqref{eq:ConditionThirdTriangle} yields inequalities \eqref{equation:BoundOnrForManifolds_RiemannN} and \eqref{equation:BoundOnrForManifolds_RiemannHyperbolicN}, with the inequality replaced by an equality, and $\alpha$ replaced by $\alpha'$. A transition between self-centred and non-self-centred simplices can thus only take place if this equation has a real solution. The existence of this solution has been analyzed in the proof of Proposition~\ref{theorem:DeformRetractsTheoremForManifolds_Riemann}, leading to the inequalities \eqref{equation:BoundOnEpsilon2_Riemann} and \eqref{equation:BoundOnEpsilon2_RiemannHyper}.

In summary, the simplex $p_i \tilde{p}_i q_i \tilde{q}_i$ is self-centred if the conditions \eqref{eq:lower_bound_distance_p_ptilde} are satisfied. 


Since, for any $\varepsilon$ and $\delta$ failing inequalities \eqref{equation:BoundOnEpsilon2_Riemann} and \eqref{equation:BoundOnEpsilon2_RiemannHyper}, the simplex $p_i \tilde{p}_i q_i \tilde{q}_i$ is strictly self-centred for all $r \in \left[ \frac{d(q_i, \tilde{q}) }{2} ,1-\delta\right]$,  there is a lower bound on the difference between the length $2r_i$ of the edge $p_i \tilde{p}_i$ and the circumradius of the simplex $p_i \tilde{p}_i q_i \tilde{q}_i$. From this we deduce the second claim of the lemma.
\end{proof}

\begin{proof}[Proof of Proposition \ref{prop:counterexample_mfld_Riemann}]  
The example has been set up in such a way that the transitions of the homology are precisely the same as in the Euclidean setting, and as we have described in the proof of Proposition~\ref{prop:counterexample_mfld}. Once again, the manifold $\M$ never has the same homology as the union of balls $\bigcup_{p \in P} B(p,r)$, and thus the two never have the same homotopy. 
\end{proof} 

\clearpage

\section*{Appendix II: Additional material}
\section{The Toponogov comparison theorem and spaces of constant curvature}\label{sec:RecapToponogov}
We  rely on the Toponogov comparison theorems and the geometry of spaces of constant curvature. 
In this appendix we recall the results we use, however for the reader that is completely unfamiliar with the topic 
it may help to also take a look at the pedagogical overview in \cite{berger2003panoramic}. 

We use the notation $\mathbb{H}(\Lambda)$
for the complete, simply connected space of dimension $2$ with
constant sectional curvature $\Lambda$. A complete simply connected space with
constant sectional curvature is also called a \emph{space form}. Unless we state differently we assume a space of constant curvature to mean a space form.

The $2$-dimensional space of constant  curvature $\Lambda$  is, explicitly \cite[Theorem 39, pp. 228]{berger2003panoramic}:
\begin{equation}
\mathbb{H}(\Lambda) =
\begin{cases}
\frac{1}{\sqrt{-\Lambda}} \mathbb{HYP}^2 & \text{if} \quad \Lambda <  0 \\
\mathbb{E}^2 & \text{if}  \quad \Lambda = 0  \\
\frac{1}{\sqrt{\Lambda}} \mathbb{S}^2 & \text{if} \quad \Lambda >  0 .
\end{cases}
\end{equation}
where $ \mathbb{HYP}^2$, $\mathbb{E}^2$ and $ \mathbb{S}^2$ denote, respectively, the $2$-dimensional 
hyperbolic space, Euclidean space  and sphere.

We are now ready to make the following definitions.
\begin{definition}[Geodesic triangle] \label{def:GeodesicTriangle}
 A \emph{geodesic triangle} $ABC$ in a
  Riemannian manifold $\N$ consists of three minimizing geodesics
  connecting the three points $A,B,C$, sometimes also referred to as vertices. (We
  stress that a geodesic triangle does not include an interior.) 
\end{definition}
%

Complete Riemannian manifolds with positive lower bound on  sectional curvature have bounded diameter \cite[Theorem 62, pp. 266]{berger2003panoramic}:
\begin{theorem}[Bonnet-Schoenberg-Myers theorem]\label{theorem:BonnetSchoenbergMyers}
If a complete Riemannian manifold $\N$  has sectional curvature $K$ bounded below by a positive constant $\curvlowbnd$:
 \[
 0 < \curvlowbnd \leq K,
 \]
then it satisfies:
\begin{equation}\label{eq:CurvatureBoundsDiameter}
\operatorname{diam} ( \N ) \leq \frac{\pi}{\sqrt{\curvlowbnd} } .
\end{equation}
\end{theorem}
%

The next two theorems are adapted from \cite[Theorems IX.5.1 and  IX.5.2]{chavel2006riemannian}.
Since, unlike in  \cite{chavel2006riemannian}, our definition of geodesic triangles requires each edge to be a minimizing geodesic, 
and thanks to \eqref{eq:CurvatureBoundsDiameter},
 the statements in \cite{chavel2006riemannian} can be simplified.
\begin{theorem}[Alexandrov-Toponogov distance comparison theorem]\label{thm:toponogov_comparison_theorem_distance}
Let $\N$ be a  complete Riemannian manifold  with sectional curvatures  bounded below by  $ \curvlowbnd $.

 Let $ABC$ be a geodesic triangle in $\N$. Let us denote by $a$,$b$,and $c$ the respective lengths of sides $BC$,$CA$, and $AB$, and by $\alpha$ the angle at vertex $A$
  (see Figure \ref{fig:labelled.triangle}).
Then there exists a geodesic triangle  $A'B'C'$  in $\mathbb{H}(\curvlowbnd)$ such that sides $A'B'$ and $A'C'$ have respective lengths $c$ and $b$ and whose angle at $A'$
is $\alpha$. If $a'$ is the length of edge $B'C'$, then:
\[
    a \leq a'
\]
\end{theorem}

\begin{theorem}[Alexandrov-Toponogov angle comparison theorem]\label{thm:toponogov_comparison_theorem_angle}
Let $\N$ be a  complete Riemannian manifold  with sectional curvatures bounded below  by  $ \curvlowbnd $.

 Let $ABC$ be a geodesic triangle in $\N$. Let us denote by $a$,$b$,and $c$ the respective lengths of sides $BC$,$CA$, and $AB$, and by $\alpha$, $\beta$ and $\gamma$ 
 the respective angles  at vertex $A$, $B$, and $C$
  (see Figure \ref{fig:labelled.triangle}).

Then there exists a geodesic triangle  $A'B'C'$  in $\mathbb{H}(\curvlowbnd)$ such that sides $B'C'$,$C'A'$, and $A'B'$ have respective lengths $a$,$b$, and $c$,
and, if  $\alpha'$, $\beta'$ and $\gamma'$ are
 the respective angles  at vertex $A'$, $B'$, and $C'$, then:
\begin{eqnarray*}
    \alpha &\geq \alpha' \\
    \beta &\geq \beta' \\
    \gamma &\geq \gamma'.
\end{eqnarray*}
Unless $\curvlowbnd >0$,  and one of side lengths $a$,$b$, or $c$, is $\frac{\pi}{\sqrt{\curvlowbnd}}$, 
the triangle $A'B'C'$ is uniquely determined,
 up to isometries.
\end{theorem}

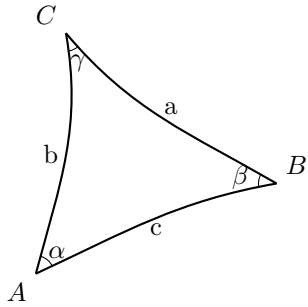
\begin{figure}[!htb]
\centerline{
\begin{tikzpicture}[scale=0.8]
  \coordinate [label={below left:$A$}] (A) at (0, 0);
  \coordinate [label={above left:$C$}] (C) at (0.5, 4);
  \coordinate [label={above right:$B$}] (B) at (4, 1.5);
	\draw[thick] (A) to [out=75,in=-80] (C);
 	\draw[thick] (A) to [out=25,in=190] (B);
	\draw[thick] (C) to [out=-50,in=150] (B);
	\draw[thin] (A) ++ (0.271892, 0.126785) arc (25:75:0.3);
	\draw[thin] (C) ++ (0.0520945,-0.295442) arc (-80:-50:0.3);
	\draw[thin] (B) ++ (-0.259808, 0.15) arc (150:190:0.3);
	\node at (0.25,2) {b};
	\node at (2,0.73) {c};
	\node at (2.25,2.75) {a};
  \node at (0.35, 0.35) {$\alpha$};
	\node at (3.4, 1.6) {$\beta$};
	\node at (0.7,3.5) {$\gamma$};
\end{tikzpicture}}
\caption{ Triangle with the standard symbols for angles and lengths.   }
\label{fig:labelled.triangle}
\end{figure}

\begin{remark}
The case, in Theorem \ref{thm:toponogov_comparison_theorem_angle}, 
where $\curvlowbnd >0$,  and one of side lengths $a$,$b$, or $c$, is $\frac{\pi}{\sqrt{\curvlowbnd}}$
can be ignored, in light  of \eqref{eq:CurvatureBoundsDiameter},
 if the sectional curvature is assumed  bounded below by some $\curvlowbnd' > \curvlowbnd$,
 where $\curvlowbnd'$ can be chosen arbitrarily close to $\curvlowbnd$.
\end{remark}

\begin{remark}
Propositions similar to Theorems  \ref{thm:toponogov_comparison_theorem_distance}
 and \ref{thm:toponogov_comparison_theorem_angle}
hold for manifolds with upper bounded sectional curvature, that imply reversed inequalities, but they require additional conditions,
in particular for the edge lengths to not exceed the injectivity radius.
\end{remark}

Theorem \ref{thm:toponogov_comparison_theorem_distance} 
will be combined with the law of cosines 
 for spaces of constant curvature.
\begin{proposition}[Law of cosines]\label{proposition:CosineLaws}
We consider a geodesic triangle  $ABC$  in $\mathbb{H}(\curvlowbnd)$.
We denote by $a =\operatorname{length} (BC)$, $b=\operatorname{length} (CA)$ 
and $c=\operatorname{length} (AB)$ the side lengths 
and by $\alpha$ the angle at vertex $A$,
as pictured on Figure \ref{fig:labelled.triangle}.

In the hyperbolic case, that is when $\Lambda<0$, then:
\[
\cosh \sqrt{| \Lambda |}  a = \cosh \sqrt{| \Lambda |} c\:   \cosh \sqrt{|\Lambda |} b
\,-\,  \sinh \sqrt{| \Lambda |}  c\:   \sinh \sqrt{| \Lambda |} b\: \cos \alpha
\]
In the Euclidean case,  that is when $\Lambda=0$, then:
\[
a^2 =  c^2 + b^2 - 2 \, c\, b\cos \alpha
\]

In the spherical case,  that is when $\Lambda>0$,  then:
\[
\cos \sqrt{\Lambda}  a = \cos \sqrt{\Lambda}  c\:   \cos \sqrt{\Lambda}  b
\,+\,  \sin \sqrt{\Lambda}  c\:   \sin \sqrt{\Lambda}  b\: \cos \alpha
\]

\end{proposition}

\SavedComment{ \section{Sets with reach larger than $\frac{\pi}{2 \sqrt{\curvlowbnd}}$ are geodesically convex} \label{app:LargeReachGeodesicallyConvex} 
 In this section we consider a subset $\Su \subset \N$ of a complete Riemannian  manifold $\N$ with
a positive lower bound  on the sectional curvatures and we show that when the reach of $\Su$ is large enough, 
$\Su$ is geodesically convex. More precisely:

  \begin{theorem}\label{theorem:ReachLargerThanHalfPiImpliesConvex}
 Let $\N$ be a  complete Riemannian manifold  with sectional curvatures  bounded below by  $ \curvlowbnd >0 $.
For any closed subset $\Su\subset \N$, if $\rchcl(\Su) >  \frac{\pi}{2 \sqrt{\curvlowbnd}}$,
then $\Su$ is geodesically convex.
 \end{theorem}

 We start with an easy lemma on 
spherical trigonometry.
  \begin{lemma}\label{lemma:ObtuseAngleInSourthHemisphere}
Let 
$A,B,C$ be a geodesic triangle on the unit $2$-sphere $\mathbb{S}^2$.
 If $d(A,C) \geq  d(A,B) > \pi/2$, then the angle at vertex $B$ is strictly larger than $\pi/2$, i.e.
 \[
 \angle ABC > \pi/2.
\]
 \end{lemma}
 \begin{proof}
One can visualize the conditions as follows: 
vertex $A$ is the north pole,
  vertices $B$ and $C$ lie strictly in the south hemisphere, and vertex $C$ is located at least as southerly as $B$.
The area of this triangle is strictly than the angle $\angle BAC$ at vertex $A$, because the volume in the northern hemisphere of the triangle already equals $\angle BAC$.     	
Because the integral of the Gauss curvature inside the triangle is equal to the area of the triangle, 
it follows that the sum of the $3$ angles (for example invoking Gauss-Bonnet theorem) 
is strictly greater than $\pi + \angle BAC$. 
Therefore one has:
\begin{equation}\label{eq:AngleABCPlusAngleBCAGreaterThanPi}
\angle ABC  +  \angle BCA  > \pi .
\end{equation}
Since $d(A,C) \geq  d(A,B)$ we have,
by spherical trigonometry or by 
 \cite[Second statement in Lemma IX.5.1]{chavel2006riemannian}, that
\begin{equation}\label{eq:AngleABCGreaterAngleBCA}
\angle ABC \geq   \angle BCA. 
\end{equation}
 The conjunction of \eqref{eq:AngleABCPlusAngleBCAGreaterThanPi}
  and \eqref{eq:AngleABCGreaterAngleBCA} gives us the claimed lower bound on $\angle abc$.
 \end{proof}

 \begin{lemma}\label{lemma:ComplementOfLargeSphereIsGeodesicallyConvex}
 Let $\N$ be a  complete Riemannian manifold  with sectional curvatures  bounded below by  $ \curvlowbnd >0 $.
For any $p \in \N$ and any $r>  \frac{\pi}{2 \sqrt{\curvlowbnd}}$, the  complement of the open geodesic ball
$\left( B(p,r)^\circ\right)^c  =\{q\in \N \mid d(q,p) \geq r \}$
is geodesically convex.
 \end{lemma}
 \begin{proof}
 We consider two points $q_1,q_2 \in \{q\in \N \mid d(q,p) \geq r \}$.
 We need to prove that any minimizing geodesic between $q_1$ and $q_2$ remains in $\{q\in \N \mid d(q,p) \geq r \}$.
 Note that, at this point, 
  there may be several minimizing geodesics between $q_1$ and $q_2$.
 
 Let us assume, to derive a	contradiction, that there is a point $y$ in the interior of a minimizing geodesic
  $\widetilde{q_1q_2}$ from $q_1$ to $q_2$
 such that $d(p, y) < r$. By the continuity of  $\widetilde{q_1q_2}$, and since $\frac{\pi}{2 \sqrt{\curvlowbnd}} < r  $, there  must exist
 a point $x$ in the interior of $\widetilde{q_1q_2}$ such that $\frac{\pi}{2 \sqrt{\curvlowbnd}} < d(p,x) < r \leq \max ( d(p,q_1), d(p,q_2) )$.
 
We know by Theorem \ref{thm:toponogov_comparison_theorem_angle}
that for each geodesic triangle $pxq_i$, with $i=1,2$, there exists 
  a geodesic triangle $p'x'q'_i$ 
   on the $2$-sphere with radius $1/\sqrt{\curvlowbnd}$, 
   with the same corresponding edge lengths as $pxq_i$.
 It follows that:
 \begin{align*}
 \frac{\pi}{2 \sqrt{\curvlowbnd}} &<  r \leq d(p',q'_i) \\
\frac{\pi}{2 \sqrt{\curvlowbnd}} &< d(p',x') < r . \\
 \end{align*}
Up to a scaling by  $\sqrt{\curvlowbnd}$,
Lemma \ref{lemma:ObtuseAngleInSourthHemisphere} applies to triangle $p'x'q'_i$,
 giving $\angle p'x'q'_i > \pi/2$,
which, by Theorem \ref{thm:toponogov_comparison_theorem_angle},
gives:
\begin{equation}\label{eq:AngleAtXIsObtuse}
 \angle pxq_i > \pi/2,\quad i=1,2
\end{equation}
But, since $x$ lies on the interior of the geodesic arc   $\widetilde{q_1q_2}$,
we must have $\angle pxq_1 + \angle pxq_2 = \pi$, which, with \eqref{eq:AngleAtXIsObtuse},
gives the contradiction.
 \end{proof}
 
 We are now able to prove Theorem \ref{theorem:ReachLargerThanHalfPiImpliesConvex}:
 
 \begin{proof}[Proof of Theorem \ref{theorem:ReachLargerThanHalfPiImpliesConvex}]
Since  $\Su^c$ is an open set,  it is the union of all open balls disjoint to it:
\begin{equation}\label{eq:ComplementIsUnionOfOuterBalls}
\Su^c = \bigcup_{\rho_{\Su}(p)>0} B(p,\,  \rho_{\Su}(p))^\circ
\end{equation}

For $ \delta = \rchcl(\Su) - \rho_{\Su}(p)$, one has by corollary \ref{corollary:NestedBalls}
\[
 B\left(p, \, \rho_{\Su}(p)\right)^\circ \subset  B\left( \Phi_{\Su}(p, \delta ) , \, \rchcl(\Su) \right)^\circ \subset \Su^c .
 \]
Therefore, we can restrict the union in \eqref{eq:ComplementIsUnionOfOuterBalls}
to balls of radius at least $\rchcl(\Su) $:
 \[
\Su^c = \bigcup_{\rho_{\Su}(p)\geq  \rchcl(\Su)} B(p, \, \rho_{\Su}(p))^\circ.
\]
As a consequence, $\Su$ is the intersection of complements of open balls of radius 
larger or equal to $\rchcl(\Su)$, and therefore 
strictly larger than $ \frac{\pi}{2 \sqrt{\curvlowbnd}}$.
Each such open ball complement, according to Lemma 
\ref{lemma:ComplementOfLargeSphereIsGeodesicallyConvex},
is geodesically convex,
and $\Su$, being the intersection of these geodesically convex sets, is geodesically convex.
 \end{proof}
}

\section{Bounds on the reach of submanifolds of Riemannian manifolds with positive lower bound on the curvature}\label{Sec:BoundsReachManCurv}

In Theorem \ref{theorem:geometric_argument_riemannian} we have used the Bonnet-Schoenberg-Myers theorem (Theorem~\ref{theorem:BonnetSchoenbergMyers}) to prove that the reach of a set in a Riemannian manifold with positive curvature $\curvlowbnd$ is upper bounded by $\frac{\pi}{\sqrt{\curvlowbnd}}$. In this section we improve this bound by a factor of two in the case where the set in question is a manifold. The proof adjusts the argument for the  Bonnet-Schoenberg-Myers theorem as given in \cite[Theorem 62]{berger2003panoramic}. For this, we need to recall notation and a result from~\cite[Section 6.2]{berger2003panoramic}. 


Let $[a,b]\subseteq\R$ be an interval in $\mathbb{R}$, and $t\in [a,b]$. Following Berger~\cite{berger2003panoramic}, we write $c_\alpha (t)= c(\alpha,t)$ for a family of curves neighbouring a geodesic $\gamma(t) = c_0(t)$. The infinitesimal displacement is denoted by 
\[ 
Y(t) = \frac{\partial c}{\partial \alpha} \bigg|_{\alpha=0}.  
\] 
We assume that the displacement is orthogonal to the geodesic. We denote the sectional curvature for the directions $v,w$ by $K (v,w)$, and write $\nabla_t$ for the covariant derivative. 
If the endpoints of $c_\alpha$ are fixed then~\cite[Equation (6.7)]{berger2003panoramic}
\begin{align}
\frac{\partial^2 \text{length} \, c_\alpha}{\partial \alpha^2}\bigg|_{\alpha=0}=  \int_a^b \left( \|\nabla_t Y(t) \| ^2 - K( \gamma'(t) , Y(t) ) \| Y(t)\|^2 \right)  \ud t .
\label{eq:LengthVariation}
\end{align}
If the endpoints are not fixed, Equation~\eqref{eq:LengthVariation} gains an additional term~\cite[Theorem II.4.3]{chavel2006riemannian}:
\begin{align}
\frac{\partial^2 \text{length} \, c_\alpha}{\partial \alpha^2}\bigg|_{\alpha=0}= \langle \nabla_\alpha \partial_\alpha c(\alpha , t)  \mid _{\alpha =0},  \gamma '(t) \rangle \mid^{t=b}_{t=a} +  \int_a^b \left( \|\nabla_t Y(t) \| ^2 - K( \gamma'(t) , Y(t) ) \| Y(t)\|^2 \right)  \ud t .
\label{eq:LengthVariationWithVarEndPoints}
\end{align}
Naturally, if the endpoints of $c_\alpha$ are not fixed, the first variation is 
 non-zero (see \cite[Theorem II.4.1]{chavel2006riemannian}) and, using that $\gamma$ is a geodesic, we have
\begin{align}
\frac{\partial \text{length} \, c_\alpha}{\partial \alpha}\bigg|_{\alpha=0}= \langle Y (t),  \gamma '(t) \rangle \mid^{t=b}_{t=a} .
\label{eq:FirstOrderLengthVariationWithVarEndPoints}
\end{align}

\begin{lemma} \label{lemma:ManifoldHaveReachLessThanPiOverTwo}
Suppose that the sectional curvatures of a Riemannian manifold $\N$ are lower bounded by $\curvlowbnd>0$. 
Let $\M \subseteq \N$ be a { $C^2$} submanifold of $\N$ of dimension and codimension at least one, with cut locus reach $\operatorname{cl}_{\N} (\M)>0$.  
Then, \[\rchcl_\N (\M) \leq \frac{\pi}{2 \sqrt{\curvlowbnd} }.\]
\end{lemma} 

\begin{figure}[!h]
  \centering\includegraphics[width=0.6\textwidth]{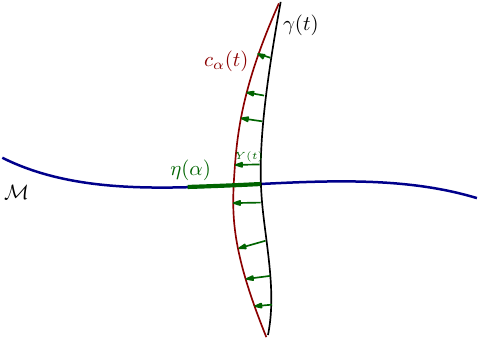}
  \caption{{The figure illustrates the notation used in Appendix \ref{Sec:BoundsReachManCurv}. The second order behaviour of  $c_\alpha(t)$ is determined by the vector $W$. However, $W$ is not indicated in the figure.}
    \label{fig:NotAppB}}
\end{figure}

\begin{proof} To derive a contradiction assume that $\rchcl_\N(\M) > \frac{\pi}{2 \sqrt{\curvlowbnd} }$.  For any point $p \in \N \setminus \M$ sufficiently close to $\M$, the minimizing geodesic from $p$ to $\M$ has a tangent vector that is normal to $\M$ at the endpoint of the geodesic. Let us call this endpoint $q\in  \M$, set $L = 2\rchcl_\N(\M)$, and parametrize the geodesic by a map
\[\gamma: \left[-\tfrac{L}{2},\tfrac{L}{2}\right]\to \N\]	
in such a way that $\gamma$ is arc length parametrized and $\gamma(0)=q$. We refer to Figure \ref{fig:NotAppB} for an overview of the notation used.  Furthermore, pick a tangent vector $Z \in T_q\M \subseteq T_q \N$. Due to the definition of $\gamma$, the vectors $Z$ and $\gamma'(0)$ are perpendicular.
 
%
%
%
As in the proof of \cite[Theorem 62]{berger2003panoramic}, we then consider the parallel transport of $Z$ along $\gamma$, which we denote by $Z(t)$. With $L = 2\rchcl_\N(\M)$, we define 
\[Y: \left[-\tfrac{L}{2},\tfrac{L}{2}\right]\to T\N, \qquad Y(t) = \cos  \left(\tfrac{\pi t}{L}\right) Z(t).\]

We choose the second order derivative of $c (\alpha,t)$ with respect to $\alpha$ as follows.
We write $\eta(\alpha)$ for the geodesic in $\M$ emanating from $q$ in the direction $Z$, i.e., $\eta(0)=q$ and $\eta'(0)=Z$. Next, we set 
\[W:= \nabla_{\alpha} \eta'(\alpha) \mid_{\alpha= 0}\]  
and, as with the vector $Z$, use parallel transport along $\gamma$ to extend the vector $W$ to a vector field $W(t)$ along the entire length of $\gamma$. Finally, we impose that 
\[\nabla_\alpha \partial_\alpha c(\alpha , t)  \mid _{\alpha =0} =  \cos   \left(\tfrac{\pi t}{L}\right) W(t).   \]

We stress that, since the vector $Z$ lies in the tangent space $T_q\M$ and due to the way the vector field $W(t)$ is defined, the members of the family of curves $c_\alpha$ arising from $Y(t)$ pass (up to second order) through $\M$. 
%
\SavedComment{
[I am not sure if the following is a good idea. It gives a nice geometric description, but it will be completely obscure for people that are not familiar with Fermi normal coordinates, which may be a large part of the target audience. If we would have more time I would perhaps recall something about Fermi normal coordinates, but at the moment I don't know if we have time. ]

If one prefers a completely explicit construction of $\alpha$ one can look at the following (which is equivalent up to second order).  Consider the Fermi normal coordinate neighbourhood \cite{FermiCollectedPapers, Manasse1963, Gray} of $\gamma$. In this coordinate neighbourhood we can translate $\eta(\tilde{\alpha})$ along $\gamma(t)$ to create a surface $S(\tilde{\alpha}, t)$. We can now pick $c(\alpha,t)$ to be $S(\cos (\pi t /L) \alpha, t)$. 
\hanka{I haven't heard of Fermi coordinates either :/ maybe I'd leave this for now...}
}
%
%
	
Finally, write $\psi_{\alpha} (t)$ for the restriction of  $c_{\alpha} (t)$ to the interval $\left[-\tfrac{L}{2},0\right]$, and $\tilde{\psi}_{\alpha} (t)$ for the restriction of  $c_{\alpha} (t)$ to the interval $\left[0,\tfrac{L}{2}\right]$.  
With this notation, and applying $L = 2\rchcl_\N(\M) > \frac{\pi}{\sqrt{\curvlowbnd} }$,  Equation~\eqref{eq:LengthVariationWithVarEndPoints} yields 
\begin{align}
\frac{\partial^2 \text{length} \, \psi_{\alpha} (t)}{\partial \alpha^2}\bigg|_{\alpha=0}&=   \langle \nabla_\alpha \partial_\alpha c (\alpha , t)  \mid _{\alpha =0},  \gamma '(t) \rangle \mid^{t=0}_{t=-L/2}
\nonumber
\\
& \phantom{=} + \int_{-L/2}^{0} \left( \frac{\pi^2 }{L^2} \sin^2 \left (\frac{\pi t}{L} \right) \| Z\|{^2}  - K( \gamma'(t) , Y(t) )\cos^2 \left (\frac{\pi t}{L} \right) \| Z\|{^2}  \right)  \ud t 
\nonumber
\\
&\leq \langle W,  \gamma '(0) \rangle
\nonumber
\\
& \phantom{=} +
\int_{-L/2}^{0} \left( \frac{\pi^2 }{L^2} \sin^2 \left (\frac{\pi t}{L} \right) \| Z\|{^2}  - \curvlowbnd \cos^2 \left (\frac{\pi t}{L} \right) \| Z\|{^2}  \right)  \ud t 
\nonumber 
\\ 
&=\langle W,  \gamma '(0) \rangle + \frac{L \| Z\|{^2} }{4}  \left( \frac{\pi^2 }{L^2}  - \curvlowbnd  \right ) 
\nonumber
\\
&<\langle W,  \gamma '(0) \rangle,
\nonumber
\end{align}
and 
\begin{align}
\frac{\partial^2 \text{length} \, \tilde{\psi}_{\alpha} (t)}{\partial \alpha^2}\bigg|_{\alpha=0}&=   \langle \nabla_\alpha \partial_\alpha c(\alpha , t)  \mid _{\alpha =0},  \gamma '(t) \rangle \mid^{t=L/2}_{t=0}
\nonumber
\\
& \phantom{=} + \int_{0}^{L/2} \left( \frac{\pi^2 }{L^2} \sin^2 \left (\frac{\pi t}{L} \right) \| Z\|{^2}  - K( \gamma'(t) , Y(t) )\cos^2 \left (\frac{\pi t}{L} \right) \| Z\|{^2}  \right)  \ud t 
\nonumber
\\
&\leq - \langle W,  \gamma '(0) \rangle
\nonumber
\\
& \phantom{=} +
\int_{0}^{L/2} \left( \frac{\pi^2 }{L^2} \sin^2 \left (\frac{\pi t}{L} \right) \| Z\|{^2}  - \curvlowbnd \cos^2 \left (\frac{\pi t}{L} \right) \| Z\|{^2}  \right)  \ud t 
\nonumber 
\\ 
&=- \langle W,  \gamma '(0) \rangle + \frac{L \| Z\|{^2} }{4}  \left( \frac{\pi^2 }{L^2}  - \curvlowbnd  \right ) 
\nonumber
\\
&<-\langle W,  \gamma '(0) \rangle.
\nonumber
\end{align}
Observe that, because $\langle Y (0),  \gamma '(0) \rangle =0$, the first order variation of the length is zero.
We conclude that the length of at least one of the curves $\psi_{\alpha} (t)$ and $\tilde{\psi}_{\alpha}(t)$ decreases in the second order as $\alpha$ increases.

At the same time, the paths $\psi_{\alpha} (t)$ and $\tilde{\psi}_{\alpha}(t)$ end and start, respectively, at the second order Taylor approximation of $\eta(\alpha)$.
Furthermore, the distance between $\eta (\alpha)$ and its second order Taylor approximation is zero up to second order. Hence, at least one of the paths $\{\gamma(t)| t\in \left[0,\pm \tfrac{L}{2}\right] \}$ is not the shortest geodesic to $\M$ --- contradicting our assumption.
\end{proof}

\begin{remark} 
The assumption that $\M$ is $C^2$ can be removed with some additional technical work. Indeed, it is known~\cite{lytchak2004geometry,lytchak2005almost} that submanifolds of positive reach are $C^{1,1}$, meaning that the tangent bundle is Lipschitz. And one can locally smoothen $C^{1,1}$ manifolds without (significantly) decreasing their reach.\footnote{Doing so globally is not as easy as one may expect and it will be reported on in a different publication. } We refer the reader to~\cite{hirsch2012differential} for an introduction to smoothing.
\end{remark}

\section{Alternative proofs} \label{app:Alternative_proof} 

\begin{Altproof}[of Theorem \ref{theorem:geometric_argument}]
    Let us prove that the set $(q + \Nor(q, \Su))\cap B(q,\reach) \cap
    P^{\boxplus r}$ is star-shaped with respect to $q$. For this,
    consider a point $x \in (q + \Nor(q, \Su))\cap B(q,\reach) \cap
    P^{\boxplus r}$ and let us prove that the segment $xq$ is also
    contained in $P^{\boxplus r}$. We consider two cases. First,
    suppose that $\|x-q\| \leq \alpha$. In that case, $xq \subseteq
    B(q,\alpha) \subseteq \Su^{\boxplus \alpha} \subseteq P^{\boxplus
      \alpha}$ and we are done. Second, suppose that $\|x-q\| >
    \alpha$ as illustrated on Figure \ref{fig:alternate-proof}.

  \begin{figure}[hbt]
  \def\svgwidth{0.90\linewidth}
  \centering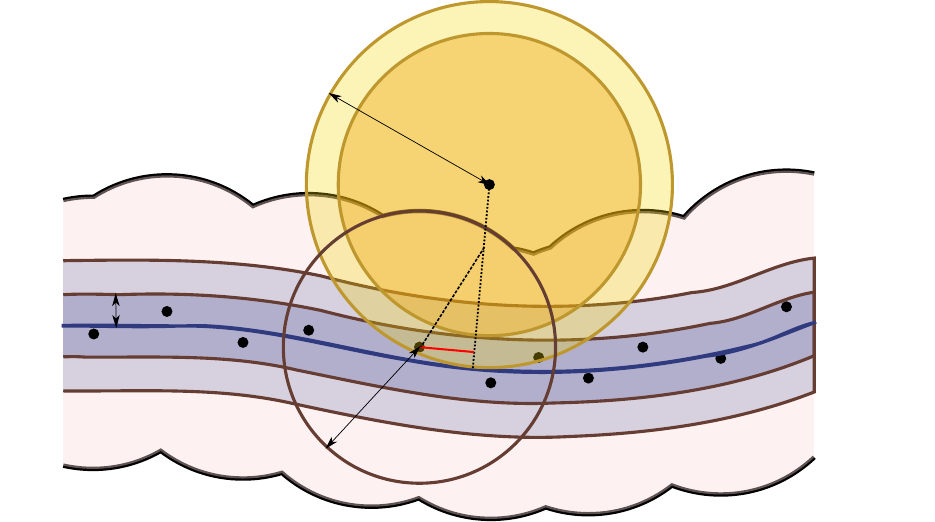
   \caption{For the alternate proof of Theorem \ref{theorem:geometric_argument}.
     \label{fig:alternate-proof}}
  \end{figure}

  In that case, $x \neq q$ and the half-line with origin at $q$ and
  passing through $x$ is well-defined.  Let $y$ be the point on this
  half-line whose distance to $q$ is $\alpha$.  Let $z$ be the point
  on this half-line whose distance to $q$ is $\reach$. Because $x \in
  B(q,\reach)$, we have that $x$ lies on the segment $qz$.  Let $p$ be
  any point of $P$ whose distance to $x$ is smaller than or equal to
  $r$. {It is this assumption that later gives $\|x-p\| \leq r$.} Let
  $p'$ be the projection of $p$ onto the straight-line passing through
  $q$ and $x$. We have that the five points $x$, $y$, $z$, $q$ and
  $p'$ are aligned and $y$ lies between $x$ and $q$. We claim that $y$
  also lies between $x$ and $p'$. The claim is clearly true if $q$
  lies between $x$ and $p'$. Let us assume that $q$ does not lie
  between $x$ and $p'$, in other words, let us assume that $p'$ is on
  the half-line with origin at $q$ and passing through $x$, as
    in Figure \ref{fig:alternate-proof}.  Let $\varphi$ be the
  internal angle of triangle $xpz$ at $x$.  The law of cosines gives:
  \begin{align} 
  \|z-p\|^2 = \|z-x\|^2 + \|x-p\|^2 - 2 \|z-x\| \|x-p\| \cos \varphi.
	\label{CosineRule} 
  \end{align} 
  By Theorem \ref{Fed4.8.12}, the interior of $B(z,\reach)$ does not
  intersect $\Su$ and because $P \subseteq \Su^{\boxplus \delta} 
	$, we
  have $\|z-p\| \geq \reach-\delta$. {By construction}, we have
  that $\|x-p\| \leq r$. Furthermore, $\|z-x\| = \|z-q\| - \|q-x\|
  \leq \reach -\alpha$. It follows that
  \begin{align}
    2 \|z-x\| \|x-p\| \cos \varphi &= \|z-x\|^2 + \|x-p\|^2 - \|z-p\|^2 
		\tag{reshuffling \eqref{CosineRule}}
		\\
    &\leq  (\reach-\alpha)^2 + r^2 - (\reach-\delta)^2
		\nonumber\\
    &\leq 0,
		\nonumber
  \end{align}
  showing that $\cos \varphi \leq 0$, or equivalently $\varphi \geq
  \frac{\pi}{2}$. Hence, $p'$ lies on the segment $qx$. Let us show
  that $\|q-p'\| \leq \alpha$. Because $p'$ belongs to the segment $qx \subseteq qz$, we have
  \begin{align*}
    \|q-p'\| &= \|q-z\| - \|p'-z\| \\
    &= \reach - \sqrt{\|z-p\|^2 - \|p'-p\|^2} \\
    &\leq  \reach - \sqrt{\|z-p\|^2 - \|x-p\|^2} \\
    &\leq \reach - \sqrt{(\reach - \delta)^2 - r^2} \\
    &\leq \alpha = \|q-y\|.
  \end{align*}
  Hence, $y$ lies between $x$ and $p'$. This shows that the distance
  to $p$ decreases as we move along the segment $xy$, starting from $x$
  and going toward $y$. It follows that $xy \subseteq B(p,r)
  \subseteq P^{\boxplus r}  
	$. Since $yq \subseteq B(q,\alpha) \subseteq
  \Su^{\boxplus \alpha}  
	\subseteq P^{\boxplus r} 
	$, we deduce that the
  whole segment $xq$ belongs to $P^{\boxplus r}$. 
	The proof is
    completed by using the same deformation retract argument as in the
    first version of the proof.
\end{Altproof}

As mentioned in the Remark \ref{remark:ExtendedInterval}, the interval \eqref{EQ:InvervalrSetPosReach} can be extended somewhat to 
  \begin{equation}
    \tag{\ref{equation:intervalForPositiveReach}}
    r \in  \left  [ \frac{1}{2} \left(\reach+\varepsilon  -  \sqrt{\Delta}\right),  
      \sqrt{\frac{1}{2} (\reach-\delta)^2 + \frac{1}{2} (\reach+\varepsilon) \sqrt{\Delta}}  \right  ].
  \end{equation}
This is a consequence of this slightly more complicated proof. 
\begin{Altproof}[
of Proposition \ref{theorem:DeformRetractsTheoremForManifolds}]
Combining the bound from Lemma
\ref{lem:Bounds_alpha_sets_pos_reach} with the conditions of Theorem
\ref{theorem:geometric_argument}, we obtain that the union of balls $P^{\boxplus r}$ 
deformation-retracts onto $\Su$ along the closest point
projection as soon as the following two inequalities are satisfied:
\begin{equation}
  \label{equation:ConditionTubularCoverForPositiveReach}
(\alpha + \varepsilon)^2 \leq r^2 \leq (\reach-\delta)^2 - (\reach-\alpha)^2.
\end{equation}
In particular, the inequality between leftmost and rightmost members,
which needs to be satisfied for a non-empty range of values for $r$ to exist, can be rearranged as:
\[
2\alpha^2 + 2\alpha(\varepsilon-\reach) - (\reach-\delta)^2 + \varepsilon^2 + \reach^2 \leq 0.
\]
Using the abc-formula for quadratic equations, the above inequality is
satisfied for all $\alpha \in [\alpha_{\min}, \alpha_{\max}]$, with
\begin{align} 
\alpha_{\min} &=  \frac{1}{2} \left(\reach-\varepsilon - \sqrt{ \Delta }\right),
\nonumber
\\
\alpha_{\max} &=  
\frac{1}{2} \left(\reach-\varepsilon + \sqrt{ \Delta }\right),
\nonumber
\end{align} 
where the discriminant is
\[
\Delta = 2(\reach-\delta)^2 - (\reach+\varepsilon)^2.
\]
The interval $[\alpha_{\min},\alpha_{\max}]$ is non-empty whenever the discriminant is positive, that is, whenever $\varepsilon
+ \sqrt{2} \, \delta \leq (\sqrt{2} - 1) \reach$, which we have assumed to be true. We thus deduce that for
all $r$ such that
\[
(\alpha_{\min} + \varepsilon)^2 \leq r^2 \leq (\reach-\delta)^2 - (\reach-\alpha_{\max})^2,
\]
or equivalently for all $r$ that satisfies (\ref{equation:intervalForPositiveReach}),
we can find $\alpha \in  [\alpha_{\min}, \alpha_{\max}]$ that satisfies the inequalities in (\ref{equation:ConditionTubularCoverForPositiveReach}). Hence, for that $\alpha$, the assumptions of Lemma \ref{lem:Bounds_alpha_sets_pos_reach} are satisfied, which in turn implies that Theorem \ref{theorem:geometric_argument} can be applied, allowing us to conclude the proof.
\end{Altproof}


\section{Previous work on the reach and medial axis in Riemannian manifolds}
\label{sec:reach_history}
The reach and medial axis in Riemannian manifolds have been studied intensely in the past by Kleinjohann \cite{kleinjohann1980convexity,kleinjohann1981nachste} and Bangert \cite{bangert1982sets}, see also~\cite{ReachSubmanifolds}.
We introduce Bangert's definition, which makes Kleinjohann's definition a little more precise. 
The \emph{unique projection point set} is the complement of the medial axis $\textrm{ax}_{\N} (\Su)$ (defined by Equation~\eqref{eq:medial_axis_Riemannian}). It is defined as
\[
\operatorname{Unp} (\Su) : = \{ q \in \N \mid \operatorname{Card} (\pi_\Su(q)) = 1 \},    
\]
where $\operatorname{Card}(A)$ denotes the cardinality of the set $A$. 
With this notation, Bangert defines the local feature size\footnote{Bangert follows Federer and writes $\rch(p,\Su)$ for the local feature size.} ($\lfs_\Su^B$) and the reach $\rch^{B}(\Su)$ as follows: 
\begin{definition}[{Bangert's reach \cite{bangert1982sets}}] 
	The \emph{local feature size} of a point $p \in \Su$ is defined as
	\[ 
	\lfs_\Su^B (p) := \sup \{ r \geq 0 \mid B(p,r) \subseteq \operatorname{Unp} (\Su) \}. 
	\]
	The \emph{reach} of the set $\Su$ is given by $\rch^{B}(\Su) = \inf_{ p \in \Su} \lfs_\Su^B (p)$. 
\end{definition} 

This definition is not sufficient for our purposes. 
Indeed, consider an example where $\N$ is a sphere, and $\Su$ is a point on it. In this setting, $\operatorname{Unp} (\Su)=\N$, and thus $\rch^B(\Su) =\infty$. As a result, the cut locus (Definition~\ref{definition:SetCutLocus}) is ignored, which is not possible in our setting. In particular, because the reach is infinite one would expect that the ball centred at the point itself has the homotopy type of a point for any radius. However, clearly, this ball has the homotopy type of the sphere itself once the radius of the ball is $\pi$ times the radius of sphere.

Recently, Boissonnat \emph{et al.}~\cite{ReachSubmanifolds} suggested to add the injectivity radius as a bound on the reach. However, this too is slightly suboptimal. This is illustrated by the example\footnote{This example was also used in \cite{ReachSubmanifolds}.} of the cylinder, that is the product of $\mathbb{R}$ and the circle $S$. With a slight abuse of notation we'll also refer to the symmetrically embedded circle in $S\times \mathbb{R}$ as $S$ as well. Let $P \subset S$ be a sample. For any sufficiently\footnote{It suffices for $S \subseteq P^{\boxplus r}$. } large radius $r$ the thickening $P^{\boxplus r}$ deformation retracts onto $S$. 
{Another (compact and slightly more sophisticated) example illustrating the suboptimality of including the injectivity radius in the definition of the reach is a subset of the flat torus $(\varepsilon \Sphere^1) \times \Sphere^1$, where $0< \varepsilon < 1$. The sectional curvature (or Gaussian curvature) of this flat torus is identically equal to zero and its injectivity radius is $\pi \varepsilon$.
Let $0 < \theta< \pi - \varepsilon$. We now consider the set 
\[ [-\theta , \theta]_{\Sphere^1}  := \{( \cos t , \sin t) \in \Sphere^1 \mid t \in [-\theta , \theta] \} .\] 
The cut locus reach of the set $\Su= (\varepsilon \Sphere^1) \times [-\theta , \theta]_{\Sphere^1} \subset (\varepsilon \Sphere^1) \times \Sphere^1$
  is $\pi - \theta$. The cut locus reach corresponds to the upper bound of the range $r$ for which the offset $P^{\boxplus r}$ of a sample $P\subset (\varepsilon \Sphere^1) \times [-\theta , \theta]_{\Sphere^1} $ deformation retracts onto $\Su$.
}

%
%

It therefore makes sense to use the cut locus reach in our context   
 and not the reach as defined in \cite{ReachSubmanifolds}. 
We should admit that in some contexts (in particular when triangulating submanifolds) it is convenient for the balls in the ambient manifold to be topological balls, 
which is not the case for large radii here. 
In~\cite{ReachSubmanifolds}, Bangert's reach is called the pre-reach.


\section{The cut locus is the singular set of the distance function}\label{section:CutLocusIsSingularSet}

{ This section in the appendix has been added for the convenience of the reader (and to maintain the anonymity of the authors), but the results are part of a separate {larger} project, the full details of which will be reported upon later. }

\subsection{generalized gradient of the distance function in  Riemannian manifolds}\label{sec:summary_albano}
In this section we recall some definitions and results from \cite[Section 4]{albano2013singular}.

We denote by $T^{\ast}_p\N$ the dual space\footnote{This duality refers to the duality of linear spaces not of cones. In particular this has nothing to do with the $\operatorname{Dual}$ as used by Federer. } 
of $T_p\N$. We use the natural map from the tangent space to its dual induced by the metric. In particular,  for $u\in T^{\ast}_p\N$,
$u^{\ast} \in T_p\N$ denotes its dual vector, defined by: $\forall v  \in T_p\N, \langle u^{\ast}, v \rangle = u(v)$.

We now recall a version of Hadamard's definition of differentiation \cite{duistermaat2004multidimensional} for manifolds.  
If $f:\N \rightarrow \R$ is a smooth function and $df \in T^{\ast}_p\N$ its differential at $p$, then:
\begin{equation}\label{eq:gradientRiemann}
\forall  w\in T_p\N,  \quad f(\exp_p(w)) - f(p) = \langle df(p)^{\ast} , w \rangle + o(|w|), \quad w \rightarrow 0,
\end{equation}
where $\langle\cdot,\cdot\rangle : T_p\N \times T_p\N \rightarrow \R$ denotes the Riemannian inner product 
at $p$ and $\exp$ the exponential map.

\begin{definition}[Superdifferential \cite{albano2013singular}]\label{definition:superDifferential}
Consider a (not necessarily smooth) function  $f:\N\rightarrow\R$.
We say that $v  \in T_p\N$ belongs to the {\em superdifferential} of $f$ at $p$, denoted $d^+f(p)$, if:
\begin{equation}\label{eq:GenetralizedGradientRiemann}
\forall  w\in T_p\N,  \quad f(\exp_p(w)) - f(p) \leq  \langle v, w \rangle + o(|w|), 
\end{equation}
as $w \rightarrow 0$.  
\end{definition}

%
\begin{remark}
By definition, $d^+f(p)$ is a convex closed subset of $T_p\N$.
Moreover, the superdifferential $d^+f(p)$ is uniformly bounded for every $p$ in some open set $U$ if and only if $f$ is uniformly Lipschitz in $U$.
\end{remark}
\begin{remark}
Note also that  $f$ is differentiable at $x$
if and only if $d^+f(p)$ is a single point, in which case $d^+f(p) = \{df(p)^{\ast}\}$.
\end{remark}

In this appendix we make the following global assumption and use the following abbreviated notation: 
 {
	\begin{tcolorbox} We assume the Riemannian manifold $\N$ is complete and at least $C^2$. If $\Su$ is a closed subset of $\N$ we write $\rho_{\Su} : \N \rightarrow \R$ for the distance to $\Su$:
\[
\rho_{\Su} (p) := d(p, \Su) 
\]
\end{tcolorbox}
}




In \cite{LIEUTIERhomotopytype} it was proven that any open bounded subset in Euclidean space has the same homotopy type as its medial axis. 
Albano \emph{et al.} \cite{albano2013singular} extended this result to any open bounded subset $\Omega$ of a Riemannian manifold.  
The proof (in \cite{albano2013singular}) made use of more sophisticated tools from non-smooth analysis \cite{Clarke1990}, namely the properties of semi-concave functions \cite{albano1999structural}, compared to the tools in \cite{LIEUTIERhomotopytype}. These techniques shortened the proof of \cite{LIEUTIERhomotopytype} as well as allowing the extension to the Riemannian setting.  

{However the formulation of \cite{albano2013singular} diverges quite significantly from the standard definition (in computational geometry and topology) of the set of interest and instead hacks back to Thom's work on singularity theory and in particular his results on the singularities of the cut locus \cite{thom1972cut}.   
}
In Albano \emph{et al.}'s main Theorem \cite[Theorem 5.3]{albano2013singular}, the medial axis  is replaced, in the Riemannian context, by the singular set of the distance function $\rho_{\partial \Omega}$ to the boundary $\partial \Omega$ of $\Omega$, set of points  where  $\rho_{\partial \Omega}$ is not differentiable. 
In the introduction \cite[Page 3]{albano2013singular} we read that
{\em ``the singular set of the distance function is closely related to the cut-locus of the boundary of $\Omega$''}, as part of the motivation for their work. 
However, no more formal assertion is given in \cite{albano2013singular}.
 
Our Theorem \ref{theorem:ClosestDirectionIsMinusGradient} below asserts that the singular set {\em is} the cut locus of  $\partial \Omega$. Although our initial motivation for this work was the homotopy learning result in the main body of this text we believe that our practical characterization of the singular set of the distance function will be of more general use in computational geometry and topology.

The main idea of the proof in \cite{albano2013singular} resembles the core of \cite{LIEUTIERhomotopytype} quite closely. More precisely, the authors build a continuous flow $\Phi :\Omega \times [0, \infty) \rightarrow \Omega$
 induced by a generalized gradient of $\rho_{\partial \Omega}$ as defined in Definition \ref{definition:superDifferential}.
 This flow is proven to realize a homotopy equivalence (more precisely a weak deformation retraction) between $\Omega$
 and the singular set of $\rho_{\partial \Omega}$. 
{The flow, in the setting of \cite{albano2013singular}, ``pushes'' points in the open set $\Omega$ inside $\Omega$ away from its boundary $\partial \Omega$.
 In our setting we consider a closed $\Su= \Omega^c$ and the same flow $\Phi_{\Su}: \Su^c \times [0, \infty) \rightarrow \Su^c$ ``pushes''
points in  $\Su^c$ away from $\Su$,  in the direction of steepest ascent of $\rho_{\Su}$. }
%
%

\subsection{Result: the characterization of the singular set}\label{sec:singular_set_chracterization}

Our main result is more general than just the characterization of the singular set. We give a geometric interpretation of the superdifferential. For this geometric interpretation we have to make the following definition. 

\begin{definition}[Directions of shortest paths to $\Su$]
Let $\Su$ be a closed subset of the complete Riemannian manifold $\N$. If $p$ is a point in $\N \setminus \Su$, then the {\em directions of shortest paths to $\Su$ at $p$},
denoted $\Gamma_{\Su}(p)$ is the subset of the unit sphere in $T_p\N$ of all directions of minimizing geodesics from $p$ to $\Su$, that is,
\begin{equation}\label{eq:DefinitionGamma}
\Gamma_{\Su}(p) :=  \frac{1}{\rho_{\Su}(p)} \exp_p^{-1}  \left(B(p, \rho_{\Su}(p) ) \cap \Su  \right) .
\end{equation}
\end{definition}

Having defined our geometric interpretation we are ready to state our main result on the superdifferential. 
\begin{theorem}\label{theorem:ClosestDirectionIsMinusGradient}
If $\Su$ is a closed subset of the complete Riemann manifold $\N$ and $p$ a point in $\N \setminus \Su$, then
\begin{equation}\label{eq:GenetralizedGradientIsMinusClosetsDirections}
d^+ \rho_{\Su} (p) =  - \Hull \left( \Gamma_{\Su}(p) \right),  
\end{equation}
where $\Hull(\cdot)$ denotes the convex hull.
\end{theorem}
Section \ref{section:proofOfClosestDirectionIsMinusGradient} is dedicated to the proof of Theorem \ref{theorem:ClosestDirectionIsMinusGradient}. 


{
The singular set of $\rho_{\partial \Omega}: \Omega \rightarrow \R$, where $\rho_{\partial \Omega}$ is as defined in $\cite{albano2013singular}$,
is the singular set of our $\rho_{\Su}: \Su^c \rightarrow \R$ in our setting. 
This singular set is by definition the set of points where $\rho_{\Su}$ is not differentiable, it corresponds to 
the set of points $p$  where $d^+ \rho_{\Su} (p)$ is a not a singleton, and thus, by Theorem 
\ref{theorem:ClosestDirectionIsMinusGradient}, is the set of points where $ \Gamma_{\Su}(p)$ is a not a singleton,
which in turn coincides with the set of point $p$ where there are more than one minimizing geodesics to $\Su$.
It follows that:
\begin{corollary}
The singular set of  both our distance function and the distance function as defined in \cite{albano2013singular}
  coincides with the cut locus of Definition \ref{definition:SetCutLocus}.
\end{corollary}
}

The rest of this section will discuss the consequences of this central result. 
In particular we are working towards characterizations of the medial axis and normal cones that imitate Federer's definitions in Euclidean space as closely as possible.

We now need to recall/introduce some notation. 
We write $\Phi_{\Su} :\N \setminus \Su \times [0, \infty) \rightarrow \N \setminus \Su$ for the outward directed  flow.  
As in \cite{albano2013singular}, the trajectory of a single point $p \in \N \setminus \Su$
is denoted by $\gamma: [0, \infty) \rightarrow \N \setminus \Su$, so that $\gamma(t) = \Phi_{\Su}(p,t)$ and in particular $\gamma(0) = p$. 


It is convenient for us to  give an explicit expression of the relation between the superdifferential and the (right) derivative of $t \mapsto \Phi_{\Su}(p, t)$ below.
 This relation  is only implicit in \cite[Theorem 4.4]{albano2013singular}.

\begin{lemma}\label{lemma:RightDerivativeOfFlowIsProjOnSupergradient}
For all $p  \in \N \setminus \Su$, $t\mapsto \Phi_{\Su}(p, t)$ is Lipschitz. In particular it is differentiable  for {\em almost all} $t\in [0, \infty)$  and
\begin{equation}\label{eq:DerivativeIsProjOnSuperGradientAE}
\frac{d}{dt}\Phi_{\Su}(p, t) = \pi_{d^+ \rho_{\Su} (\Phi_{\Su}(p, t) )} (0) \quad a.e.
\end{equation}
where $\pi_{d^+ \rho_{\Su} (p)} (0) :=  \argmin_{w \in d^+ \rho_{\Su} (p)} |w|$ is the orthogonal projection of $0$ on the convex set $d^+ \rho_{\Su} (p)$. 

Moreover,  for all $p  \in \N \setminus \Su$, the map $t\mapsto \Phi_{\Su}(p, t)$ is right differentiable for {\em all} $t\in [0, \infty)$ and:
\begin{equation}
\label{eq:RightDerivativeIsProjOnSuperGradient}
\frac{d}{dt^+}\Phi_{\Su}(p, t) = \pi_{d^+ \rho_{\Su} (\Phi_{\Su}(p, t) )} (0) 
\end{equation}
where $\frac{d}{dt^+}$ denotes the right derivative with respect to $t$.


\end{lemma}
\begin{proof}
 We follow \cite[Section 4]{albano2013singular} and denote the supergradient of $\rho_{\Su}$ at point $p$ by $C$, that is $C := d^+ \rho_{\Su} (p)$.
Theorem \cite[Theorem 4.4]{albano2013singular} tells 
us that $y\mapsto \gamma(t)$ is Lipschitz and a has a right derivative everywhere.
(Actually it is $1$-Lipschitz by Theorem \ref{theorem:ClosestDirectionIsMinusGradient}).

Note that Equations (4.5) and  (4.6) in  \cite[Theorem 4.4]{albano2013singular} give,
\begin{equation}\label{eq:GammaPrimeInC}
\gamma'(t)  \in C := d^+ \rho_{\Su} (p),
\end{equation}
{where  $\gamma' (t) := \frac{d}{dt^+} \gamma(t)=\frac{d}{dt^+}\Phi_{\Su}(p, t)$ is the right derivative of $t \mapsto \gamma(t)$,}
and
\begin{equation}\label{eq:DistanceDertivativeIsSquareNormGradient}
\frac{d}{dt^+} \rho_{\Su} (\gamma(t) )=   \left\langle \gamma'(t) , \gamma'(t)   \right\rangle 
\end{equation}
respectively in our setting.
Since $\rho_{\Su}$ is Lipschitz, one has:
\begin{equation}
\frac{d}{dt^+} \rho_{\Su} (\gamma(t) )= \frac{d}{du^+}_{\mid u=0}   \rho_{\Su} \left( \exp_{\gamma(t)}  u \gamma'(t) \right).
\label{eq:RightDerivativeRhoS}
\end{equation}
By writing out the definition of the super differential, we see that
\begin{align}
\forall  v \in C,
\forall  w\in T_p\N,  \quad \rho_\Su (\exp_p(w)) - \rho_\Su(p) \leq  \langle v, w \rangle + o(|w|).
\nonumber
\end{align}
In particular, taking $w= u \gamma'(t)$, we have
\begin{align}
\forall  v \in C,
\quad \rho_\Su (\exp_p(u  \gamma'(t))) - \rho_\Su(p) \leq  
u \langle v, \gamma'(t) \rangle + o(|u \gamma'(t)|).
\nonumber
\end{align}
Thanks to the definition of the right derivative we have that \[ \lim _{u\searrow 0} \frac{\rho_\Su (\exp_p(u  \gamma'(t))) - \rho_\Su(p)}{u} = \frac{d}{du^+}_{\mid u=0}   \rho_{\Su} \left( \exp_{\gamma(t)}  u \gamma'(t) \right) , \] 
so that together with \eqref{eq:RightDerivativeRhoS}, we find 
\begin{equation}
\forall v \in C, \:  \frac{d}{dt^+} \rho_{\Su} (\gamma(t) ) \leq \left\langle v, \gamma'(t) \right\rangle.
\nonumber
\end{equation}

Combining this with \eqref{eq:DistanceDertivativeIsSquareNormGradient} yields 
\begin{equation}
\forall v \in C,\:  \left\langle \gamma'(t) , \gamma'(t)   \right\rangle \leq  \left\langle v, \gamma'(t) \right\rangle , 
\nonumber
\end{equation}
which can be reshuffled into 
\begin{equation}
\forall v \in C,  \: \left\langle v - \gamma'(t) , \gamma'(t)   \right\rangle \geq 0.
\nonumber
\end{equation}
We finally rewrite this identity as
\begin{equation}
\forall v \in C, \:  \left\langle v  , v   \right\rangle  -    \left\langle  \gamma'(t) , \gamma'(t)   \right\rangle =   \left\langle  v - \gamma'(t) , v - \gamma'(t)   \right\rangle + 2 \left\langle v - \gamma'(t) , \gamma'(t)   \right\rangle \geq 0.
\nonumber
\end{equation}
Since $\gamma'(t) \in C$  by  \eqref{eq:GammaPrimeInC}, this shows that $\gamma'(t)$ is the point in $C$ closest to $0$ which proves 
\eqref{eq:RightDerivativeIsProjOnSuperGradient}.
Because $\gamma$ is Lipschitz, it is differentiable almost everywhere
and, when it is, its derivative is equal to its right derivative, which gives \eqref{eq:DerivativeIsProjOnSuperGradientAE}.
\end{proof}

\begin{remark}
Since $t \mapsto \Phi_{\Su}(p, t)$ is $1$-Lipschitz, it is differentiable almost everywhere and is the integral of its derivative.
\end{remark}

%

Following the flow of the distance function for time $\tau$  decreases the distance to $\Su$ with $\tau$
 if one stays outside the cut locus. More precisely we have,
\begin{lemma}\label{lemma:DerivativeOfDistanceIsOneIffNotInCutLocus}
For any $p  \in \N \setminus \Su$:

\begin{equation}\label{eq:DerivativeOfDistanceIsOneIffNotInCutLocus}
\frac{d}{dt^+} \rho_{\Su} \left( \Phi_{\Su}(p, t) \right) 
= \left| \frac{d}{dt^+} \Phi_{\Su}(p, t)  \right|^2
\begin{cases} 
= 1 \quad \text{if} \quad    \Phi_{\Su}(p, t) \notin \operatorname{cl}_{\N}(\Su) \\
< 1 \quad \text{if} \quad    \Phi_{\Su}(p, t) \in \operatorname{cl}_{\N}(\Su)
\end{cases}
\end{equation}

\end{lemma}
\begin{proof}
The first equality is exactly Equation (4.6) in  \cite[Theorem 4.4]{albano2013singular} rewritten in our notation and for our setting.
By definition of the cut locus, $\Phi_{\Su}(p, t) \in \operatorname{cl}_{\N}(\Su)$ if and only if $ \Gamma_{\Su}(\Phi_{\Su}(p, t))$ contains at least two points.
Since $ \Gamma_{\Su}(\Phi_{\Su}(p, t))$ is a subset of the unit sphere in $T_{\Phi_{\Su}(p, t)}\N$, we get that the projection of $0$ on the convex hull of 
$ \Gamma_{\Su}(\Phi_{\Su}(p, t))$ is strictly inside the unit ball if and only if $\Phi_{\Su}(p, t)$ is on the cut locus.
The second equality/inequality  follows then by  Theorem \ref{theorem:ClosestDirectionIsMinusGradient} 
and \eqref{eq:RightDerivativeIsProjOnSuperGradient} of Lemma \ref{lemma:RightDerivativeOfFlowIsProjOnSupergradient}.
\end{proof}

We also have the following result which is reminiscent of part of Theorem 4.8 (12) of \cite{Federer} and improves a result that is implicit in the work of Kleinjohann \cite{kleinjohann1981nachste}.    
\TrajectoryAreMinimisingGeodesics*
\begin{proof}
For $t \in [0, \rho - \rho_{\Su}(p)]$ one has $\rho_{\Su}(\Phi_{\Su}(p, t)) \leq \rho_{\Su}(p) + t \leq \rho$, by \eqref{eq:DerivativeOfDistanceIsOneIffNotInCutLocus}. 
Therefore $\Phi_{\Su}(p, t)$ is not in the cut locus, so that 
there is a unique minimizing geodesic from  $\Phi_{\Su}(p, t)$ to $\Su$.
Moreover, again thanks to \eqref{eq:DerivativeOfDistanceIsOneIffNotInCutLocus},
the length of $\Phi_{\Su}(p, [0,t])$ is $t$ and $\rho_{\Su}(\Phi_{\Su}(p, t)) =  \rho_{\Su}(p) + t $, 
 so that the length of the  concatenation of the minimizing geodesic from $p$ to $\Su$
with the trajectory $\Phi_{\Su}(p, [0,t])$ is $\rho_{\Su}(p) + t = \rho_{\Su}(\Phi_{\Su}(p, t))$,
 which proves the claim.
\end{proof}

We also improve Kleinjohann's result on `Dilatationen' \cite[Satz 3.2 and 3.3]{kleinjohann1981nachste}. For this we recall the following notation.  
The complement of the open offset of $\Su$ is denoted:
\begin{equation}
\CplOffset^\rho(\Su) := \left\{ p \in \N  \mid \rho_\Su(p) \geq \rho \right\}
\tag{\ref{eq:ComplementOffset}}
\end{equation}
\reversedFlow*
\begin{proof}
Since $\rho'< \rchcl_{\N}(\Su)$, Lemma \ref{lemma:TrajectoryAreMinimisingGeodesics} yields that for any $p \in \CplOffset^{\rho'}(\Su) \setminus \Su$, there is a minimizing geodesic from 
 $\Phi_{\Su}(p,  \rho' - \rho_{Su}(p) )$ to $ \pi_{\Su} (p)$.

Equation \eqref{eq:DerivativeOfDistanceIsOneIffNotInCutLocus} in Lemma \ref{lemma:DerivativeOfDistanceIsOneIffNotInCutLocus},
gives that for $t \in [0,  \rho' - \rho_{Su}(p) ]$, $\frac{d}{dt^+}_{\mid t=t'} \rho_{\Su} \left( \Phi_{\Su}(p, t) \right) =1$.
 It follows that $\rho_{\Su} ( \Phi_{\Su}(p, \rho' - \rho_{\Su}(p) ) ) = \rho'$, so that 
 $\Phi_{\Su}(p, \rho' - \rho_{Su}(p) ) \in \partial  \CplOffset^{\rho'}(\Su)$
  and $\rho_{\CplOffset^{\rho'}(\Su)} (p) \leq  \rho' - \rho_{\Su}(p)$. 
By the triangle inequality $\rho_{\Su} ( p) + \rho_{\CplOffset^{\rho'}(\Su)} (p) \geq \rho'$ and hence we get
\begin{align}
\forall p \in \CplOffset^{\rho'}(\Su) \setminus \Su, \: \rho_{\Su} ( p)  + \rho_{\CplOffset^{\rho'}(\Su)} (p) = \rho'.
\nonumber
\end{align}
In other words we have that $\rho_\Su$ is minus $\rho_{\CplOffset^{\rho'}(\Su)}$ up to a constant. It now follows fromt the definition of the supergradient that  
\begin{equation}
d^+  \rho_{\CplOffset^{\rho'}(\Su)} (p) = - d^+  \rho_{\Su}  (p).
\label{eq:SuperDiffEqualsMinusSuperDiffOffset}
\end{equation}

Because $d^+  \rho_{\Su}  (p)$ is a single point, \eqref{eq:SuperDiffEqualsMinusSuperDiffOffset} yields that $d^+  \rho_{\CplOffset^{\rho'}(\Su)} (p)$ is also a single point. This in turn implies that $\rho_{\CplOffset^{\rho'}(\Su)}$ is differentiable at $p$. 
Hence, by Theorem \ref{theorem:ClosestDirectionIsMinusGradient}, $p$ is not in the cut locus of
$\CplOffset^{\rho'}(\Su)$ which gives  \eqref{eq:ReachReversed}.

The second claim of the lemma and \eqref{eq:HomotopyReversedFlow} follow then by considering the 
flow associated to the set $\CplOffset^{\rho'}(\Su)$  (or its restriction of he flow to $\Su^{\boxplus \rho}$).  
The flow is continuous with respect to both $p$ and $t$ and sends the complement $\CplOffset^{\rho'}(\Su)^c$ of $\CplOffset^{\rho'}(\Su)$ to $\partial \Su$.
\end{proof}

%
%
%

\subsection{Proof of the geometric interpretation of the superdifferential 
}\label{section:proofOfClosestDirectionIsMinusGradient}

Before proving Theorem \ref{theorem:ClosestDirectionIsMinusGradient} we need a few definitions and lemmas.
However, we first make some simple observations. 
First observe that, for any $\rho$ such that $0< \rho \leq \rho_{\Su}(p)$, one has $ \rho_{\Su^{\boxplus \rho_{\Su}(p) - \rho}} (p) = \rho$ and
\begin{equation}\label{eq:DefinitionGamma2}
\Gamma_{\Su}(p) =  \frac{1}{\rho} \exp_p^{-1} \left( B(p, \rho)  \cap \Su^{\boxplus \rho_{\Su}(p) - \rho} \right) = \Gamma_{\Su^{\boxplus \rho_{\Su}(p) - \rho}}(p) .
\end{equation}
Secondly, because $ \rho_{\Su^{\boxplus \rho_{\Su}(p) - \rho}} (x) =  \rho_{\Su}(x)  - \rho_{\Su}(p) + \rho$, 
 one has 
\begin{equation}\label{eq:GradinetSameToOffset}
d^+ \rho_{\Su} (p) = d^+  \rho_{\Su^{\boxplus \rho_{\Su}(p) - \rho}} (p)
\end{equation}
for any $p$ in $\N \setminus \Su^{\boxplus \rho_{\Su}(p) - \rho}$.

The proof below exploits the fact that the exponential map is locally a diffeomorphism, in particular we have: 
\SavedComment{\color{red} [I think it would be better to write $\rho_C$ for the radius of the lemma below, because if you would quantify it you would get the strong convexity radius]}
\begin{lemma}\label{lemma:InverseExpIsLipschitz}
Let $\N$ be  a complete smooth (at least $C^2$) Riemann manifold. For any $p \in \N$, there and $\rho>0$ and $\lambda>0$ such that:
\begin{itemize}
\item[(1)]  $2 \rho$ is smaller than the injectivity radius of $\N$,
\item[(2)]  for any $p \in \N$, the maps $\exp_p$ and $\exp_p^{-1}$ restricted to $B(p,2 \rho)$ are  $\lambda$-Lipschitz,
\item[(3)]  for any $p \in \N$, the maps $d \exp_p$ and $(d \exp_p)^{-1}$ restricted to $B(p,2 \rho)$ are  uniformly continuous.
\end{itemize}
\end{lemma}
\begin{proof}
The injectivity radius of $\N$ is known to be positive. 
 Taking $2\rho$ smaller than the injectivity radius, we get that the exponential map is a $C^1$ diffeomorphism on  $B(p,2 \rho)$.
Because $B(p,2 \rho)$ is compact, conditions $(2)$ and $(3)$ follow.
\end{proof}

\begin{lemma}\label{lemma:DerivativeDistanceToPointAntUnifContDeriv}
Let $\N$ be  a  complete Riemann manifold  and $p,q \in \N$ such that  $0<d(p,q) \leq \rho$ where $\rho$ is defined in Lemma \ref{lemma:InverseExpIsLipschitz}.
Let $v \in T_p\N$ , be a unit vector such that $\{ \exp_p(tv) \mid t \in [0,d(p,q)] \}$ is the unique minimizing geodesic
between $p$ and $q$  i.e. {$d(p,q)v= \exp_p^{-1} (q)$}. Then:
\begin{equation}\label{eq:DistanceToOnePoint1}
\frac{d}{dt}_{\mid t=0} d(q, \exp_p (tw) ) = - \langle v, w \rangle
\end{equation}
In other words one has:
\begin{equation}\label{eq:DistanceToOnePoint2}
\left( d\, \rho_{\{q\}} (p) \right)^{\ast} = -v
\end{equation}
Moreover, there is $\eta>0$ such that the map
$t \mapsto  d(q, \exp_p (tw) )$ is $C^1$ on $(-\eta,\eta)$ and there is  a uniform modulus of continuity independent of $q$ on its derivative, formally:
\begin{align}
\forall p \in \N, \forall \epsilon>0, \exists \alpha >0 \mid \forall q \in B(p, \rho), \forall t_1,t_2 \in (-\eta,\eta), \nonumber \\
|t_1-t_2| < \alpha \Rightarrow  \left| \frac{d}{dt}_{\mid t= t_1} d(q, \exp_p (tw) ) - \frac{d}{dt}_{\mid t= t_2} d(q, \exp_p (tw) ) \right| < \epsilon  \label{eq:UniformContinuityDerivativeDistance}
\end{align}

such that  derivative of the function $t \mapsto  d(q, \exp_p (tw) )$ is 
uniformly continuous on $(-\eta, \eta)$,
\end{lemma}
\SavedComment{
\color{red} 
[Alternative for the first part: As was noted in\footnote{In \cite{FuelForTheFire} an integral is considered, but the integration can be simply omited.} Equation (3.24) of \cite{FuelForTheFire} we have the following:
If $d_\N(q,p)$ is less than half the injectivity radius and, if the upper bound on the sectional curvatures $\Lambda_u$ is positive, $d_\N(q,p)<\frac{\pi}{2 \sqrt{\Lambda_u}}$, then 
\[
d(q, \exp_p (tw) )^2 = |tw - \exp_p^{-1}(q) |^2 + \mathcal{O} (|tw |^2),  
\]
where we take the norm in the tangent space of $p$. Writing $v=\exp_p^{-1}(q)$, taking the square root and differentiating, yields \eqref{eq:DistanceToOnePoint1}.

The Rauch comparison theorem \cite[Theorem IX.2.3]{chavel2006riemannian} is a quantified version of the second part of the lemma, see also \cite[Section 3]{dyer2015riemannian}  ]
}
\begin{proof}
Because the exponential map preserves radial distances
\begin{equation}
 d(q, \exp_p (tw) )= | \exp_q^{-1} \exp_p (tw)  |,
\nonumber
 \end{equation}
so that
\begin{equation}
 d(q, \exp_p (tw) )^2 = \langle \exp_q^{-1} \exp_p (tw) , \exp_q^{-1} \exp_p (tw) \rangle.
\label{eq:distanceSq}
 \end{equation}
Since $t\mapsto \exp_{p} t v$ and $t\mapsto \exp_{p} t  (d \exp_q)_{p}^{-1} (-v)$ both parametrize the geodesic between $p$ and $q$, one also has 
\begin{equation}
q=\exp_{p} d(p,q) v \Rightarrow  p = \exp_{q}  d(p,q)  (d \exp_q)_{p}^{-1} (-v)
\label{eq:rewrite_p}
 \end{equation}
We find that 
\begin{equation} 
\exp_q^{-1} \exp_p (tw)_{\mid t=0}  =  \exp_q^{-1} (p) = (d \exp_q)_{p}^{-1} (-d(p,q) v ),
\label{eq:rewriteInvEXP}
\end{equation}
\SavedComment{\color{red} [With regard to notation I would be more inclined to put $|_{t=0}$ at the end, but I am not feeling very strongly about this either.]}
where the first equality is due to the fact that $\exp_p (0 w)=\exp_p (0)= p$ and the second equality follows from \eqref{eq:rewrite_p} and the fact that differentiation is linear.

Differentiating \eqref{eq:distanceSq} we get
\begin{align*}
\frac{d}{dt}_{\mid t=0} d(q, \exp_p (tw) )^2 &= \frac{d}{dt}_{\mid t=0} \langle   \exp_q^{-1} \exp_p (tw)  , \exp_q^{-1} \exp_p (tw) \rangle \\
&= 2 \langle  (d \exp_q)_{p}^{-1} (-d(p,q) v ), (d \exp_q)_{p}^{-1} w \rangle \\
&= 2 d(p,q)  \langle   -v , w \rangle ,
\end{align*}
\SavedComment{\color{red} [I don't get why you first shift $d(p,q)$ forward and then backwards again.]}
where the second equality holds thanks to $\frac{d}{dt}_{\mid t=0}  \exp_p (tw)  =w$ and \eqref{eq:rewriteInvEXP} and the last equality holds by applying Gauss Lemma \cite[Lemma 3.5 of Chapter 3]{do1992riemannian} to $v',w'$, 
with $v' =  (d \exp_q)_{p}^{-1} (-v )$  and $w' = (d \exp_q)_{p}^{-1} w$. We stress that with this notation one has $(-v) = (d \exp_q)_{p}(v')$
and $w =  (d \exp_q)_{p}(w')$.

\SavedComment{\Andre{I get the feeling that we could write the same proof in a simpler way..}}

For $0< \eta< \rho$  and $t\in(-\eta,\eta)$, $\exp_p (tw)$ remains in $B(p,2\rho)$.
Therefore, by Lemma \ref{lemma:InverseExpIsLipschitz}, $t \mapsto d(q, \exp_p (tw) )= | \exp_q^{-1} \exp_p (tw)  |$ is the composition of $C^1$ functions whose  modulii of continuity 
are independent of $q$ and  $\exp_p (tw)$ and this gives us \eqref{eq:UniformContinuityDerivativeDistance}.
\end{proof}

Following Federer's definition in \cite[Theorem 4.8 (2)]{Federer} we introduce the following in the Riemannian setting 
\begin{definition}[Set of closest points]
If $\Su$ is a closed subset of the complete Riemann manifold $\N$ and $p$ a point in $\N \setminus \Su$, the {\em set of closest points to $\Su$ at $p$},
denoted $\tilde{\Gamma}_{\Su}(p)$ is the set of points in $\Su$ closests to $p$:
\begin{equation}\label{eq:DefinitionGammaTilde}
\tilde{\Gamma}_{\Su}(p)=  \{ q \in \Su \mid d(p,q) =\rho_{\Su}(p) \}
\end{equation}
\end{definition}

We have the following semi-continuity of $\tilde{\Gamma}_{\Su}$:
\begin{lemma}\label{lemma:SemiContinuityGammaTilde}
If $\Su$ is a closed subset of a complete Riemann manifold $\N$, then
\begin{equation}\label{eq:SemiContinuityGammaTilde2}
\forall p\in   \N \setminus \Su, \:  \forall \varepsilon>0, \: \exists \alpha >0 \mid    B \left( p, \rho_{\Su}(p) +  \alpha  \right)  \cap  \Su  \subseteq  \tilde{\Gamma}_{\Su}(p)^{\boxplus  \varepsilon} 
 \end{equation}
and
\begin{equation}\label{eq:SemiContinuityGammaTilde}
\forall p\in   \N \setminus \Su, \:  \forall \varepsilon>0, \: \exists \alpha >0 \mid d(p',p) < \alpha \Rightarrow \tilde{\Gamma}_{\Su}(p') \subseteq  \tilde{\Gamma}_{\Su}(p)^{\boxplus \varepsilon} .
\end{equation}
\end{lemma}
\begin{proof}
Consider the sequence of sets
\begin{equation}
K_{n} = \left(   B \left( p, \rho_{\Su}(p) + \frac{1}{n}\right)  \cap  \Su  \right)  \setminus  \tilde{\Gamma}_{\Su}(p)^{\boxplus \circ  \varepsilon}
\nonumber
\end{equation}
where $\tilde{\Gamma}_{\Su}(p)^{\boxplus \circ  \varepsilon}$  denotes the ``open offset'' of $\tilde{\Gamma}_{\Su}(p)$, that is 
\[\tilde{\Gamma}_{\Su}(p)^{\boxplus \circ  \varepsilon} := \left \{   y \in \N \mid d \left( y,\tilde{\Gamma}_{\Su}(p) \right) < \varepsilon    \right \}.\]

The sets $K_n$ are compact sets and 
\begin{equation}
\bigcap_{n \in \Ninteger} K_n =  \left(   B \left( p, \rho_{\Su}(p) \right)  \cap  \Su  \right)   \setminus  \tilde{\Gamma}_{\Su}(p)^{\boxplus \circ  \varepsilon} = \emptyset .
\nonumber
\end{equation}
It follows from Cantor's intersection theorem \cite[Section 48]{munkrestopology} that for some $n$ one has $K_n = \emptyset$, so that taking $2 \alpha = \frac{1}{n}$ gives 
$\left(   B \left( p, \rho_{\Su}(p) + 2 \alpha  \right)  \cap  \Su  \right)   \setminus  \tilde{\Gamma}_{\Su}(p)^{\boxplus \circ  \varepsilon} = \emptyset$, 
that is,
\begin{equation}
    B \left( p, \rho_{\Su}(p) + 2 \alpha  \right)  \cap  \Su  \subseteq  \tilde{\Gamma}_{\Su}(p)^{\boxplus \circ  \varepsilon}.  
 \end{equation}
This already gives \eqref{eq:SemiContinuityGammaTilde2}.  
For any $p' \in B(p, \alpha)$, one has $\rho_{\Su} (p') \leq \rho_{\Su}(p) + \alpha$, which gives
\begin{equation}
\tilde{\Gamma}_{\Su}(p') =  B(p', \rho_{\Su} (p')) \cap \Su   \subseteq  B(p',  \rho_{\Su}(p) + \alpha) \cap \Su   \subseteq  B(p,  \rho_{\Su}(p) + 2 \alpha) \cap \Su  \subseteq  \tilde{\Gamma}_{\Su}(p)^{\boxplus \circ  \varepsilon} . 
 \end{equation}
This is precisely \eqref{eq:SemiContinuityGammaTilde}.
\end{proof}

To prove the main result we need a result from functional analysis.  It seems likely results similar to the following have been proven, however we have not been able to find a specific reference. 
The phrasing of the following elementary lemma is tailored to the proof of the next one.
\begin{lemma}\label{lemma:DerivativeOfInf}
Let $\eta >0$ and a set of continuous functions $F_X = \{f_x: (-\eta,\eta)\rightarrow \R \mid x \in X\}$ indexed by a set $X$
such that:
\begin{itemize}
\item[(a)] every $f_x \in F_X$ is $C^{1}$ smooth on  $(-\eta,\eta)$ and the family of derivatives function $\{f'_x \mid x \in X\}$ is uniformly equicontinuous, formally:
\begin{align}
\forall \varepsilon>0, \exists \alpha>0 \mid  \quad & \forall x \in X, \: \forall t_1,t_2 \in  (-\eta,\eta),  \nonumber \\ 
&|t_1-t_2|  < \alpha \Rightarrow \left| f'(t_2)- f'(t_1) \right| < \varepsilon,  \label{eq:UniformEquiContinuousDerivatives}
 \end{align}

\item[(b)]  the derivatives are uniformly bounded:
\begin{equation} \label{eq:UniformBoundDerivatives}
\forall x \in X, \forall t \in (-\eta,\eta),  f'_x(t) \in [-1,1].
 \end{equation}

\item[(c)]  We denote by $\MinAtZero \subseteq X$ the  set of indices of functions whose value at $0$ is minimal and we assume that $\MinAtZero$ is not empty: 
\begin{equation}
\MinAtZero :=  \left\{ y \in X \mid f_y(0) = \inf_{x\in X} f_x(0) \right\} \neq \emptyset,
\label{eq:ConditionC}
 \end{equation}

\item[(d)] we assume moreover that:
\begin{align}
\forall \varepsilon>0, &\: \exists \alpha>0 \mid  \forall x \in X \nonumber \\ 
&f_x(0) - \inf_{x'\in X} f_{x'}(0) < \alpha \Rightarrow \exists y\in \MinAtZero \mid \left| f'_x(0) -  f'_y(0) \right| < \varepsilon.\label{eq:WhenNearMinimumNearMinimumDerivative}
 \end{align}
 \end{itemize}
 
 Then the function $t \mapsto \inf_{x\in X} f_x(t)$ has a right derivative at $t=0$ and
\begin{equation}
\frac{d}{dt^+}_{\mid t=0} \: \inf_{x\in X} f_x(t)  = \inf_{x \in \MinAtZero} \:  \frac{d}{dt}_{\mid t=0}  f_x(t) .
\nonumber
 \end{equation}
\end{lemma}

\begin{proof}
We define the minimal function $f:(-\eta,\eta) \rightarrow \R$  by:
\begin{equation}
f(t) :=  \inf_{x\in X} f_x(t) 
 \end{equation}

We denote the uniform modulus of equicontinuity of \eqref{eq:UniformEquiContinuousDerivatives} by $\alpha  \mapsto \epsilon(\alpha)$, that is 
\begin{align}
\forall \alpha >0,  \quad & \forall x \in X, \: \forall t_1,t_2 \in  (-\eta,\eta),  \nonumber \\ 
&|t_1-t_2|  < \alpha  \Rightarrow \left| f'(t_2)- f'(t_1) \right| < \epsilon(\alpha),  
\label{eq:UniformEquiContinuousDerivatives2}
 \end{align}
where  $\alpha  \mapsto \epsilon(\alpha)$ is not decreasing and  $\lim_{\alpha \rightarrow 0} \epsilon(\alpha) = 0$.
We stress that we use the notation $\epsilon$ for the modulus of equicontinuity instead of the frequently used $\varepsilon$. 


For any $x\in X$, one has, by the fundamental theorem of calculus, that
\begin{equation}
f_x(t) = f_x(0) + \int_0^t f'_x(u) du = f_x(0) + f'_x(0) t + \int_0^t (f'_x(u) - f'_x(0)) du.
\nonumber
\end{equation}
This means that the first order Taylor expansion can be expressed with a uniform remainder, that is a remainder bounded by
$t \mapsto  \epsilon(t) t = o(t)$ which is independent of $x\in X$
\begin{equation}\label{eq:FirstOrderTaylorWithUniformRest}
\forall x \in X, \left| f_x(t) - \Big( f_x(0) + f'_x(0) t \Big) \right| <  \int_0^t  | f'_x(u) - f'_x(0) |  du <  \epsilon(t) t.
\end{equation}
Let us define
\begin{equation}\label{eq:DefinitionA}
a := \inf_{y \in \MinAtZero}  f'_y(0), 
 \end{equation}
and observe that, by assumption (b), one has $a\in [-1,1]$.


Because of this definition we have that, for any $\varepsilon>0$ there is $y\in \MinAtZero$ such that $f'_y(0) < a + \betaAp$.

Since $f(0) = f_y(0)$ (due to Assumption (c) or \eqref{eq:ConditionC}) and $f(t) \leq f_y(t)$ (by definition of the infimum),
{and since the remainder $t\mapsto \epsilon(t) t$ in \eqref{eq:FirstOrderTaylorWithUniformRest} is independent of $x$, } we get
\begin{equation}
f(t) - f(0) < (a+ \betaAp) t +  \epsilon(t) t.
\nonumber
 \end{equation}
Since this holds for any $\betaAp>0$, we get the following upper bound on $f(t) - f(0)$:
\begin{equation}\label{eq:UpperBoundRightDerivative}
f(t) - f(0) \leq a t +  \epsilon(t) t.
 \end{equation}

Using $f(0) =  \inf_{x'\in X} f_{x'}(0) $ in Assumption (d) or rather \eqref{eq:WhenNearMinimumNearMinimumDerivative} gives that 
\begin{align}
\forall \varepsilon>0, &\: \exists \alpha>0 \mid  \forall x \in X \nonumber \\ 
&f_x(0) -  f(0) < \alpha \Rightarrow \exists y\in \MinAtZero \mid \left| f'_x(0) -  f'_y(0) \right| < \varepsilon. 
\nonumber
 \end{align}
Changing notation this means that 
for any $\betaAp'>0$ there is $\alpha>0$ such that for any $x\in X$:
\begin{equation}\label{eq:WhenNearMinimumNearMinimumDerivative2}
f_x(0) - f(0) < \alpha \Rightarrow \exists y\in \MinAtZero \mid \: \left| f'_x(0) -  f'_y(0) \right| < \betaAp'.
 \end{equation}
We now assume that the quantity $a\in [-1,1]$, defined by \eqref{eq:DefinitionA}, is restricted to $a \in (-1,1]$.
We treat the cases 
\begin{itemize} 
\item[\textbf{(1)}] $f_x(0) - f(0) \geq \alpha$ and  
\item[\textbf{(2)}] $f_x(0) - f(0) < \alpha$,
\end{itemize} 
separately. 
Once we have treated these two cases we'll consider the special setting where $a=1$. 

In case \textbf{(1)} 
we use the Assumption (b), that is the uniform bound (by $1$) on the derivatives as described in \eqref{eq:UniformBoundDerivatives} to conclude that $|f_x(t) - f(0)| \leq |\int_0^t f'_x(\tau) \ud \tau|  \leq \int_0^t |f'_x(\tau)| \ud \tau \leq t$. Combining this with $f_x(0) - f(0) \geq \alpha$ (the assumption \textbf{(1)}) yields $f_x(t) - f(0) \geq \alpha - t $.
This means that for any $a>-1$ and in particular the $a$ defined by \eqref{eq:DefinitionA}, we have,
 \begin{equation}\label{eq:LowerBoundRightDerivativeFirstCase}
\forall t \in \left [0, \frac{\alpha}{1+a} \right), \:   f_x(t) - f(0) > a t.
 \end{equation}
In case \textbf{(2)}, that is when $f_x(0) - f(0) < \alpha$, combining  
\eqref{eq:DefinitionA}  and \eqref{eq:WhenNearMinimumNearMinimumDerivative2} yields $f'_x(0) \geq a - \betaAp'$. Using \eqref{eq:FirstOrderTaylorWithUniformRest} now gives
 \begin{equation}\label{eq:LoweBoundRightDerivativeSecondCase}
f_x(t) - f(0) \geq (a - \betaAp')t  - \epsilon(t) t.
 \end{equation}
The inequalities both \eqref{eq:LowerBoundRightDerivativeFirstCase} and \eqref{eq:LoweBoundRightDerivativeSecondCase} imply, or put differently in both cases we have, 
 \begin{equation}\label{eq:LoweBoundRightDerivativeFx}
\forall \betaAp'>0 , \exists \alpha'>0 \mid \forall x \in X, \: t\in [0, \alpha'] \Rightarrow  f_x(t) - f(0) \geq (a - \betaAp')t  - \epsilon(t) t
 \end{equation}
So that, since $f(t) = \inf_{x\in X} f_x(t)$:
 \begin{equation}\label{eq:LoweBoundRightDerivative1}
\forall \betaAp'>0 , \exists \alpha'>0 \mid t\in [0, \alpha'] \Rightarrow  f(t) - f(0) \geq (a - \betaAp')t  - \epsilon(t) t
 \end{equation}
 which gives:
\begin{equation}\label{eq:LoweBoundRightDerivative}
f(t) - f(0) \geq a t  - o(t)
 \end{equation}
{Up to this point we assumed $a>-1$. However, from Assumption (b) we know that, for any $x\in X$, $f_x$ is $1$-Lipschitz.
  Therefore, $f(t) =  \inf_{x\in X} f_x(t)$ is also $1$-Lipschitz, and we have that $\forall t\in [0, \eta), f(t)-f(0) \geq -t$. 
 It follows that \eqref{eq:LoweBoundRightDerivative} holds as well in the case $a=-1$.}

Combining \eqref{eq:UpperBoundRightDerivative} and \eqref{eq:LoweBoundRightDerivative} yields the desired expression for the right derivative 
 \begin{equation}
\frac{d}{dt^+}_{\mid t=0}  f(t) = a.
\nonumber
 \end{equation}
\end{proof}

Having established our preparatory result on right derivatives, we now concentrate on our core technical lemma.

\begin{lemma}\label{lemma:SemiContinuityGamma}
If $\Su$ is a closed subset of the complete Riemann manifold $\N$,
$p$ is a point in $\N \setminus \Su$, 
and $w$ is a unit vector $T_p\N$, then:
\begin{equation}\label{eq:RightDerivativeAlongW}
\frac{d}{dt^+}_{\mid t=0} \rho_{\Su}( \exp_p(t w) ) = \inf_{v \in \Gamma_{\Su}(p) } - \langle v, w \rangle =  \min_{v \in \Gamma_{\Su}(p) } - \langle v, w \rangle . 
\end{equation}
\end{lemma}

\begin{proof}
We first observe that we can localize the problem. More precisely we have the following: 
The distance $d( \exp_p(t w) , \Su)$ to $\Su$ and the distance $d( \exp_p(t w) , \Su^{\boxplus \rho_{\Su}(p) - \rho})$ to its offset differ by the constant $\rho_{\Su}(p) - \rho$. Thanks to \eqref{eq:DefinitionGamma2} the directions of shortest paths to $\Su$ remain the same. This means that for any $0< \rho \leq \rho_{\Su}(p)$, we can replace $\Su$ by $\Su' =\Su^{\boxplus \rho_{\Su}(p) - \rho}$ without loss of generality. 

In particular we can restrict ourselves to a neighbourhood that is smaller than $\rho$ where $\rho$ is as chosen in Lemma \ref{lemma:InverseExpIsLipschitz}.


Thanks to the semi-continuity of the directions of shortest paths to $\Su$, see \eqref{eq:SemiContinuityGammaTilde}, we have that for any $\varepsilon >0$ there is some $\alpha>0$ such that for any $t \in [0,\alpha]$
\begin{equation}
 \tilde{\Gamma}_{\Su'}( \exp_p(t w)) \subseteq  \tilde{\Gamma}_{\Su'}(p)^{\boxplus \varepsilon} . 
\end{equation}

We apply Lemma \ref{lemma:DerivativeOfInf} where, for $x\in  X := \tilde{\Gamma}_{\Su'}(p)^{\boxplus \varepsilon}$, we define:
\begin{equation}
f_x(t) := d( \exp_p(t w) , x), 
\end{equation}
and one has, for $t\in [0, \alpha]$:
\begin{equation}
f(t) := d\left( \exp_p(t w) , \Su' \right) = d\left( \exp_p(t w) ,   \tilde{\Gamma}_{\Su'}(p)^{\boxplus \varepsilon} \right) = \inf_{x\in X} f_x(t).
\end{equation}

We verify that this family of functions satisfies the conditions of Lemma  \ref{lemma:DerivativeOfInf}.

%

\begin{itemize} 
\item Condition  $(a)$ is given by  Lemma \ref{lemma:DerivativeDistanceToPointAntUnifContDeriv},
and the uniform equicontinuity specifically by  \eqref{eq:UniformContinuityDerivativeDistance}.
\item Condition $(b)$ follows from the fact that, since $w$ is a unit  vector,
 both $t\mapsto  \exp_p(t w) $ and   $y \mapsto d(y,x)$ are $1$-Lipschitz  functions.
\item Condition $(c)$ follows, since, because $\tilde{\Gamma}_{\Su'}(p)$ is not empty:
\begin{equation}
\MinAtZero = \{ x \in \tilde{\Gamma}_{\Su'}(p)^{\boxplus \varepsilon} \mid d(p,x) = \rho_{\Su'}(p) \} = \tilde{\Gamma}_{\Su'}(p) \neq \emptyset.
\nonumber
\end{equation}
\item Establishing Condition $(d)$ is significantly more involved.  
Thanks to Lemma \ref{lemma:DerivativeDistanceToPointAntUnifContDeriv} one has:

\begin{equation}
\frac{d}{dt}_{\mid t=0} d(x, \exp_p (tw) ) = - \langle v, w \rangle,
\tag{\ref{eq:DistanceToOnePoint1}} 
\end{equation}

where $v = \exp_p^{-1} (x)$. Because we localized the problem to a neighbourhood whose size is given in Lemma \ref{lemma:InverseExpIsLipschitz}, the map $\exp_p^{-1}$ is continuous in $B(p, 2 \rho)$. This means that 
there is $\varepsilon'>0$ such that 
\[
d(y,x) < \varepsilon' \Rightarrow  | \langle \exp_p^{-1} (x), w \rangle - \langle \exp_p^{-1} (y), w \rangle | < \varepsilon,
\] 
or, by \eqref{eq:DistanceToOnePoint1} 
\begin{equation}\label{eq:XtoDerivativeContinue}
d(x,y) < \varepsilon' \Rightarrow \left|   \frac{d}{dt}_{\mid t=0} d( \exp_p(t w) , x)    -        \frac{d}{dt}_{\mid t=0} d( \exp_p(t w) , y)       \right| < \varepsilon.
\end{equation}
We now use Lemma \ref{lemma:SemiContinuityGammaTilde} where \eqref{eq:SemiContinuityGammaTilde2} reads 
\begin{equation}\tag{\ref{eq:SemiContinuityGammaTilde2}} 
\forall p\in   \N \setminus \Su', \:  \forall \varepsilon'>0, \: \exists \alpha >0 \mid    B \left( p, \rho_{\Su}(p) +  \alpha  \right)  \cap  \Su  \subseteq  \tilde{\Gamma}_{\Su}(p)^{\boxplus  \varepsilon'} .
 \end{equation}
Since $\inf_{x'\in X} f_{x'}(0) = \rho_{\Su'}(p)$ and $\MinAtZero = \tilde{\Gamma}_{\Su'}(p)$, we get:
\begin{align}
\forall \varepsilon >0, \: \exists \alpha>0 \mid  \forall x \in X  & &\nonumber \\
f_x(0) - \rho_{\Su'}(p) < \alpha    
&\Rightarrow \exists y \in \tilde{\Gamma}_{\Su'}(p)\mid d(x,y)  < \varepsilon' & \text{by } \eqref{eq:SemiContinuityGammaTilde2},     \nonumber \\
&\Rightarrow \exists y \in \tilde{\Gamma}_{\Su'}(p)\mid \left| f'_x(0) -  f'_y(0) \right| < \varepsilon & \text{by }  \eqref{eq:XtoDerivativeContinue}.
 \end{align}
\end{itemize} 

We get as a result of Lemma  \ref{lemma:DerivativeOfInf} that
\begin{align*}
\frac{d}{dt^+}_{\mid t=0} d( \exp_p(t w) , \Su) &= \inf_{y \in \tilde{\Gamma}_{\Su}(p)} \frac{d}{dt}_{\mid t=0} d(y,  \exp_p(t w) ) &\\
&=  \inf_{v \in \Gamma_{\Su}(p) } - \langle v, w \rangle.  & \text{by } \eqref{eq:DistanceToOnePoint1}
\end{align*}
The ``$\inf$'' becomes a ``$\min$'' in \eqref{eq:RightDerivativeAlongW} because $\Gamma_{\Su}(p)$ is compact and therefore the minimum is attained.
\end{proof}

%

We can now finally establish the main result of this appendix. 


\begin{proof}[Proof of Theorem \ref{theorem:ClosestDirectionIsMinusGradient}]
By the definition of the superdifferential and in particular \eqref{eq:GenetralizedGradientRiemann}, one has:
\begin{align}
d^+ \rho_{\Su} (p) &= \left\{v \in T_p\N \middle|   \forall  w\in T_p\N,  \: \rho_{\Su} (\exp_p(w)) - \rho_{\Su}(p) \leq  \langle v, w \rangle + o(|w|)  \right\}
\nonumber   \\
 &=  \bigcap_{w\in T_p\N}  \left\{v \in T_p\N \middle|    \: \rho_{\Su}(\exp_p(w)) - \rho_{\Su}(p) \leq  \langle v, w \rangle + o(|w|)  \right\}   
\nonumber \\
 &=  \bigcap_{w\in T_p\N}  \left\{v \in T_p\N \middle |    \: \frac{d}{dt^+}_{\mid t=0} \rho_{\Su}( \exp_p(t w) )  \leq  \langle v, w \rangle )  \right\}   
\nonumber \\
&=  \bigcap_{w\in T_p\N}  \left\{v \in T_p\N \middle|    \:  \inf_{u \in \Gamma_{\Su}(p) } - \langle u, w \rangle  \leq  \langle v, w \rangle )  \right\}    \quad( \text{by Lemma } \ref{lemma:SemiContinuityGamma})
\nonumber 
\\
&=  \bigcap_{\substack{w\in T_p\N \\ |w|=1 }} \left\{v \in T_p\N \middle|   \:   \langle -v, w \rangle   \leq  \sup_{u \in \Gamma_{\Su}(p) } \langle u, w \rangle   \right\} . \label{SuperGRadientAsConvexHull}
\end{align}
Note that $\left\{-v \in T_p\N \mid   \:   \langle -v, w \rangle   \leq  \sup_{u \in \Gamma_{\Su}(p) } \langle u, w \rangle   \right\} $ is the intersection of all
half-spaces orthogonal to $w$, with $w$ pointing outward and containing $\Gamma_{\Su}(p)$. 
With this observation 
\eqref{SuperGRadientAsConvexHull} is precisely \eqref{eq:GenetralizedGradientIsMinusClosetsDirections}.

\end{proof}

\end{document}